\newif\ifplot
\numberwithin{equation}{section}
\DeclareMathOperator{\E}{\mathbb{E}}
\DeclareMathOperator{\argmax}{argmax}
\DeclareMathOperator{\argmin}{argmin}
\DeclareMathOperator{\Var}{Var}
\newcommand{\cont}[4]{\mathcal{C}_{#1,#2}^{#3}\left(#4\right)}
\newcommand{\contt}[4]{\tilde{\mathcal{C}}_{#1,#2}^{#3}\left(#4\right)}
\newcommand{\Pb}[1]{\ensuremath{\mathbb{P}\left(#1\right)}}
\newcommand{\abs}[1]{\left|#1\right|}
\newcommand{\ip}[2]{\left\langle #1, #2 \right\rangle}
\DeclareMathOperator{\sign}{sign}
\newcommand{\pkg}[1]{{\fontseries{b}\selectfont #1}} 
\let\proglang=\textsf
\let\code=\texttt
\newcommand{
			\resizebox{}{}{\input{tikz/.tex}}
}[3]{
			\resizebox{#2}{#3}{\input{tikz/#1.tex}}
}
\newtheorem{Theorem}{Theorem}
\newtheorem{Lemma}{Lemma}
\newtheorem{Corollary}{Corollary}
\newcommand{\Mc}{\mathcal{M}}
\newcommand{\Nc}{\mathcal{N}}
\newcommand{\Dc}{\mathcal{D}}
\newcommand{\Sc}{\mathcal{S}}
\newcommand{\Oc}{\mathcal{O}}
\newcommand{\Ic}{\mathcal{I}}
\newcommand{\Tc}{\mathcal{T}}
\newcommand{\R}{\mathbb{R}}
\newcommand{\fb}{\mathbf{f}}
\newcommand{\vb}{\mathbf{v}}
\newcommand{\Yb}{\mathbf{Y}}
\newcommand{\psib}{\boldsymbol{\psi}}
\newcommand{\phib}{\boldsymbol{\phi}}
\newcommand{\Thetab}{\boldsymbol{\Theta}}
\newcommand{\varepsilonb}{\boldsymbol{\varepsilon}}
\newcommand{\gammab}{\boldsymbol{\gamma}}
\newcommand{\Gammab}{\boldsymbol{\Gamma}}
\newcommand{\fl}{\underline{f}}
\newcommand{\Cl}{\underline{C}}
\title{Narrowest-Over-Threshold Change-point Detection}
\author[1]{Rafal Baranowski}
\author[1]{Yining Chen}
\author[1]{Piotr Fryzlewicz}
\affil[1]{Department of Statistics, London School of Economics and Political Science, Houghton Street, London, WC2A 2AE, UK.}
\begin{document}

\maketitle 

\begin{abstract}
We propose a new, generic and flexible methodology for nonparametric function estimation, in which we first estimate the number and locations of any features that may be present in the function, and then estimate the function parametrically between each pair of neighbouring detected features. Examples of features handled by our methodology include change-points in the piecewise-constant signal model, kinks in the piecewise-linear signal model, and other similar irregularities, which we also refer to as generalised change-points.

Our methodology works with only minor modifications across a range of generalised change-point scenarios, and we achieve such a high degree of generality by proposing and using a new multiple generalised change-point detection device, termed Narrowest-Over-Threshold (NOT). The key ingredient of NOT is its focus on the smallest local sections of the data on which the existence of a feature is suspected. Crucially, this adaptive localisation technique prevents NOT from considering subsamples containing two or more features, a key factor that ensures the general applicability of NOT.

For selected scenarios, we show the consistency and near-optimality of NOT in detecting the number and locations of generalised change-points. Furthermore, we propose to select NOT's threshold via the strengthened Schwarz Information Criterion (sSIC) and give theoretical justifications. The NOT estimators are easy to implement and rapid to compute: the entire threshold-indexed solution path can be computed in close-to-linear time. Importantly, the NOT approach is easy to extend by the user to tailor to their own needs. There is no single competitor, but we show that the performance of NOT matches or surpasses the state of the art in the scenarios tested. Our methodology is implemented in the R package \textbf{not}.
\newline

\noindent \textbf{keywords}: Break-point detection, knots, piecewise-polynomial, segmentation, splines.
\end{abstract}


\section{Introduction}
\label{Sec:introduction}

This paper considers the canonical univariate statistical model
\begin{align}
	\label{Eq:signal+noise1}
	Y_{t}=f_{t}+\varepsilon_{t}, \quad t=1,\ldots,T,
\end{align}
where the deterministic and unknown signal $f_t$ is believed to
display some 
regularity across the index $t$, and the stochastic noise 
$\varepsilon_t$ is exactly or approximately centred at zero. Despite
the simplicity of model (\ref{Eq:signal+noise1}), inferring
information about $f_t$ remains a task of fundamental importance in
modern applied statistics and data science. 
When the interest is in the detection
of ``features'' in $f_t$ such as jumps or kinks, then non-linear techniques are usually required.

If $f_t$ is modelled as
piecewise-constant and it is of interest to detect its change-points, several
techniques are available, and we only mention a selection. For Gaussian noise $\varepsilon_t$, both non-penalised and penalised least squares approaches are considered by \citet{YaoAu1989}. For specific choices of penalty functions, see e.g. \citet{Yao1988}, \citet{Lavielle2005} and \citet{Davis2006}. The Gaussianity assumption on $\varepsilon_t$ is relaxed to exponential family distributions in \citet{Lee1997}, \citet{Hawkins2001} and \citet{frick2014multiscale}. In particular, \citet{frick2014multiscale} also provide confidence intervals for the location of the estimated change-points. Often this penalty-type approach requires a computational cost of at least $O(T^2)$. However, there are exceptions, such as the Pruned Exact Linear Time method (PELT, \citet{killick2012optimal}), which achieves a linear computational cost, but requires the further assumption that change-points are separated by time intervals drawn independently from some probability distribution, a scenario in which considerations of statistical consistency are not generally possible. A nonparametric version of PELT is investigated by \citet{haynes2016computationally}. Another general approach is based on the idea of Binary Segmentation (BS; \citealp{vostrikova1981detection}), which can be viewed as a greedy approach with a limited computational cost. Its popular variants include the Circular Binary Segmentation (CBS; \citealp{olshen2004circular}) and the Wild Binary Segmentation (WBS; \citealp{fryzlewicz2014wild}). A selection of publications and software can be found in the online repository \emph{changepoint.info} maintained by \citet{cprepository}. 

More general change-point problems, in which $f_t$ is modelled as piecewise-parametric (not necessarily piecewise-constant) between ``knots'', the number and locations of which are unknown and need to be estimated, have attracted less interest in the literature and overwhelmingly focus on linear trend detection. Among them, we mention the approach based on the least squares principle and Wald-type tests by \citet{BaiPerron1998}, dynamic programming using the $L_0$ penalty \citep{MFL2017}, and trend filtering \citep{tibshirani2014adaptive, lin2016approximate}. Finally, we mention a related problem of jump regression, where the aim is to estimate the points of sharp cusps or discontinuities of a regression function. As investigated in, e.g., \citet{Wang1995} and \citet{XiaQiu2015}, it proceeds by estimating the locations of features nonparametrically via wavelets or local kernel smoothing. However, this not only requires the choice of some tuning parameters (e.g. scale or bandwidth) but also results in scale/bandwidth-dependent (and occasionally sub-optimal) rates for the  
estimated locations of such points.

The aim of this work is to propose a new, generic approach to the problem of
detecting an unknown number of ``features'' occurring at unknown locations
in $f_t$. By a feature, we mean a characteristic of $f_t$, 
occurring at a location $t_0$, that is detectable
by considering a sufficiently large subsample of data $Y_t$ around $t_0$.
Examples include: change-points in $f_t$ when it is modelled as piecewise-constant,
change-points in the first derivative when $f_t$ is modelled as piecewise-linear
and continuous, and discontinuities in $f_t$ or its first derivative when $f_t$ 
is modelled as piecewise-linear but without the continuity constraint. 
We will provide a precise description of the type of features we are interested
in later on. Moving beyond $f_t$ only, our approach will also permit 
the detection of similar features present in some distributional aspects of
$\varepsilon_t$, for example in its variance. Since all types of features we consider describe changes in a parametric description of $f_t$, we use the terms ``feature detection'' and ``change-­point detection'' interchangeably throughout the paper. Occasionally, for precision, we will be referring to change-point detection in the piecewise-constant model as the ``canonical'' change-point problem, while our general feature detection problem will sometimes be referred to as a ``generalised'' change-point problem.

Core to our approach is a particular blend of ``global'' and ``local''
treatment of the data $Y_t$ 
in the search for the multiple features that may be present in $f_t$,
a combination that gives our method a multiscale character.
At the first ``global'' stage, we randomly draw a number of subsamples  $(Y_s, Y_{s+1}, \ldots, Y_e)'$, where $1 \le s < e \le T$. On each 
subsample, we assume, possibly erroneously, that {\em only one}
feature is present and use a tailor-made contrast function derived 
(according to a universal recipe we provide later) from
the likelihood theory to find the most likely location of the feature.
We retain those subsamples for which the contrast {\em exceeds
	a certain user-specified threshold}, and discard the others. Amongst
the retained subsamples, we search for the one drawn on the
{\em narrowest} interval, i.e. one for which $e-s$ is the smallest:
it is this step that gives rise to the name {\em Narrowest-Over-Threshold}
(NOT) for our methodology. The focus on the narrowest interval
constitutes the ``local'' part of the method, and is a key ingredient
of our approach which ensures that with high probability, at
most one feature is present in the selected interval. This key observation
gives our methodology a general character and allows it
to be used, only with minor
modifications, in a wide range of scenarios, including those described
in the previous paragraph. Having detected the first feature, the algorithm then proceeds 
recursively to the left and to the right of it, and stops, on any current interval,
if no contrasts can be found that exceed the threshold.

Besides its generic character, other benefits of the
proposed methodology include low computational complexity, ease of 
implementation, accuracy in the detection of the feature locations, and the fact
that it enables parametric (and hence: interpretable) estimation of the signal on each section delimited by
a pair of neighbouring estimated features. Regarding the computational complexity,
the facts that only a limited number of data subsamples, $M$, need to be drawn (we provide
precise bounds later; with finitely many change-points, one can take $M=O(\log T)$ in general), and that typical contrasts are computable in linear time,
lead to a computational complexity of $O(MT)$ for the entire procedure. Moreover,
the entire threshold-indexed solution path can also be computed efficiently, in typically close-to-linear time, as observed from our numerical experiments. Regarding the estimation accuracy, in the scenarios we consider theoretically, our procedure yields near-optimal rates of convergence
for the estimators of feature locations.

Importantly, the flexible character of our methodology leaves it open to possible
extensions and modifications. Indeed, borrowing words from \citet{sweldens2000building}, who 
advocated ``building your own wavelets at home'',
we also view our proposal as flexible enough to enable the user to
``construct their own feature detector at home'', e.g. by proposing their own 
specialised contrast functions, or by data-adaptively choosing the most suitable 
contrast function from a pre-specified dictionary (which would lead to mixed-type
feature detection). Although these extensions are not covered in
the current work, we view this modularity and flexibility offered by our methodology
as an important aspect of our proposal.

On a broader level, our methodology promotes the idea of ``fitting simple models
on subsets of the data (the local aspect), and then aggregating the results to obtain the overall fit
(the global aspect)'', an idea also present in
the Wild Binary Segmentation method of \citet{fryzlewicz2014wild}. However, we emphasise
that the way the simple models (here: models containing {\em at most one} change-point or other
feature) are fitted in the NOT and WBS methods are entirely different
and have different aims. Unlike the WBS, the NOT methodology focuses on the
{\em narrowest} intervals of the data on which it is possible to locate the feature of interest.
It is this focus that enables NOT to extend well beyond mere change-point detection for a piecewise-constant
$f_t$, the latter being the sole focus of the WBS method. The lack of the narrowest-interval focus in the WBS and BS
methods means that they are not applicable to more general feature detection, and we explain the
mechanics of this phenomenon briefly in the following simple example.

Consider a continuous piecewise-linear signal that has two change-points in its first derivative:
\begin{align}
\label{Eq:motivating_example}
f_{t} = \begin{cases}
\frac{1}{350} t , & t=1,\ldots, 350,\\
1 , & t=351,\ldots, 650,\\
\frac{1001}{350} -\frac{1}{350} t , & t=651,\ldots, 1000.
\end{cases}
\end{align}
\begin{figure}[!ht]
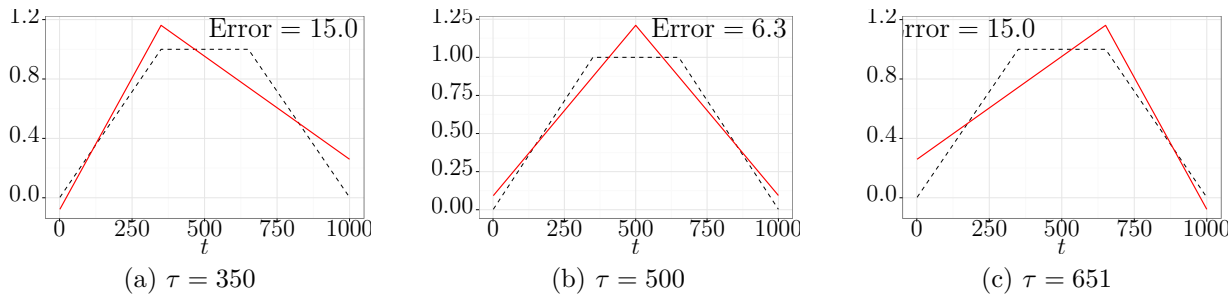

	\centering
	\null\hfill
	\subfloat[][$\tau=350$]{
			\resizebox{0.28\textwidth}{!}{\input{tikz/motivating_example_one_cpt_1.tex}}
\label{Fig:motivating_example_one_cpt_1}}
	\hfill
	\subfloat[][$\tau=500$]{
			\resizebox{0.28\textwidth}{!}{\input{tikz/motivating_example_one_cpt_2.tex}}
\label{Fig:motivating_example_one_cpt_2}}
	\hfill
	\subfloat[][$\tau=651$]{
			\resizebox{0.28\textwidth}{!}{\input{tikz/motivating_example_one_cpt_3.tex}}
\label{Fig:motivating_example_one_cpt_3}}
	\hfill\null
	\caption{\label{Fig:motivating_example_intro} 
		Best $\ell_2$ approximation of the true signal (dashed) via a triangular signal with a single change­-point, the location of which is fixed at the left change-­point (left panel), halfway between the true change-­points (middle panel) and at the right change-­point (right panel). Approximation errors (in terms of squared $\ell_2$ distance) are given in the top-right corners of the corresponding panels. }
\end{figure}

\noindent If we approximate $f_t$ using a piecewise-linear signal with only one change-point in its derivative, then the best approximation (in terms of minimising the $\ell_2$ distance) will result in an estimated change-point at $t=500$, which is away from the true ones at $t=350$ and $t=650$, as is illustrated in Figure~\ref{Fig:motivating_example_intro}.  Therefore, taking the entire sample of data starting at $s = 1$ and ending at $e = 1000$, and searching for one of its multiple change-points by fitting, via least squares, a triangular signal with a single change-­point, does not make sense. It is this issue that leads to the failure of the BS and WBS methods. On the other hand, NOT avoids this issue because of its unique feature of picking the {\em narrowest} intervals, which are likely to contain only one change-­point. To understand the mechanics of this key feature, imagine that now $f_t$ is observed with noise. Through its pursuit of the narrowest intervals, NOT will ensure that, with high probability, some suitably narrow intervals around the change-­points $t = 350$ and $t = 650$ are considered. More precisely, by construction, they will be {\em narrow enough to contain only one change-point each}, but wide enough for the designed contrast (see Section~\ref{Sec:scen:change_in_slope} for more on contrasts) to indicate the existence of the change-point within both of them. The designed contrast function will indicate the right location of the change-point (modulo the estimation error) if only one change-point is present in the data subsample considered, unlike in the situation described earlier in which multiple change-­points were included in the chosen interval. More details on this example are presented in Section~\ref{Sec:illustrative_example} of the online supplementary materials.

We note that this example is different from the canonical change-point detection problem (i.e. piecewise-constant signal with multiple change-points), where if we approximate the signal using a piecewise-constant function with only one change-point, the change-point of the fitted signal will always be among the true ones \citep{venkatraman1992consistency}. Since the latter property does not hold in most generalised change-point detection problems, this highlights the need for new methods with better localisation of the feature of interest, such as our NOT algorithm. In the final stages of preparing this manuscript, we learned that \citet{fang2016segmentation} independently considered a related shortest-interval idea in the context of the canonical change-point detection problem. However, they did not consider it as a springboard to more general feature detection problems, which is the key motivation behind NOT and its most valuable contribution.

To summarise, in the NOT approach, we propose a new ``modus operandi'' 
in statistical smoothing, by providing a novel, general, flexible framework for feature
detection and interpretable signal estimation. The procedure is fast, accurate, 
easy to code and to extend by the users to tailor to their own needs. Its implementation
is provided in the \proglang{R} package \pkg{not} \citep{baranowski2016notpackage}.

The remainder of this paper is organised as follows. In Section~\ref{Sec:method}, we give a more mathematical description of NOT. In particular, we consider NOT in four scenarios, each with a different form of structural change in the mean and/or variance. For the development of both theory and computation, in each scenario, we also introduce the tailor-made contrast function derived from the generalised likelihood ratio (GLR), which is used to detect features within each subsample. Theoretical properties of NOT, such as its consistency and convergence rates are also provided. In Section~\ref{Sec:NOTsSIC}, we propose to use NOT with the strengthened Schwarz Information Criterion (sSIC) and discuss its computational aspects. Section~\ref{Sec:noise} discusses possible extensions of NOT. A comprehensive simulation study is carried out in Section~\ref{Sec:simulation_study}, where we compare NOT with the state-of-art change-point detection tools. In Section~\ref{Sec:data}, we consider data examples of global temperature anomalies and London housing data. All proofs, as well as further discussion on computational aspects, additional simulations and real data example can be found in the online supplementary materials.


\section{The framework of NOT}
\label{Sec:method}

\subsection{Setup}
\label{Sec:setup}
To describe the main framework of NOT, we consider a simplified version of (\ref{Eq:signal+noise1}), where $\Yb = (Y_1,\ldots,Y_T)'$ is modelled through 
\begin{align}
\label{Eq:signal+noise}
 Y_{t}=f_{t}+\sigma_t \varepsilon_{t}, \quad t=1,\ldots,T,
\end{align}
where $f_{t}$ is the signal, and where $\sigma_t$ is the noise's standard deviation at time $t$. To facilitate the technical presentation of our results, in Sections \ref{Sec:method} and \ref{Sec:NOTsSIC}, we assume that $\varepsilon_{t} \stackrel{\mathrm{i.i.d.}}{\sim} \Nc(0,1)$. In Section~\ref{Sec:noise}, we extend our framework to dependent noise and other noise distributions. Numerical examples regarding all setups can be found in Section~\ref{Sec:simulation_study}.

We assume that $(f_t, \sigma_t)$ can be partitioned into $q+1$ segments, with $q$ unknown distinct change-points $0=\tau_{0} < \tau_{1} <\ldots < \tau_{q} <  \tau_{q+1} = T$. Here the value of $q$ is not pre-specified and can grow with $T$. For each $j = 1,\ldots, q+1$ and for $t = \tau_{j-1}+1,\ldots,\tau_j$, the structure of $(f_t,\sigma_t)$ is is modelled parametrically by a local (i.e. depending on $j$) real-valued $d$-dimensional parameter vector $\Thetab_j$ (with $\Thetab_j \neq \Thetab_{j-1}$), where $d$ is known and typically small. To fix ideas, in the following, we assume that each segment of $f_t$ and $\sigma_t$ follows a polynomial. In addition, we require the minimum distance between consecutive change-points to be $\ge d$ for the purpose of identifiability. (Otherwise, e.g. take $f_t$ to be  piecewise-linear with a known constant $\sigma_t$, in which case $d = 2$. If we had a segment of length 1, then we would not be able to define a line based on a single point.) In other words, $(f_t, \sigma_t)$ can be divided into $q$ different segments, each from the same parametric family of much simpler structure. Some commonly-encountered scenarios are listed below, where the following holds inside the $j$-th segment for each $j = 1,\ldots, q+1$:
\begin{enumerate}[label=(S\arabic*)]
	\item \label{Scen:change_in_mean} \textbf{Constant variance, piecewise-constant mean}:  
	
	$\sigma_t = \sigma_0$ and $f_t = \theta_j$ for $t = \tau_{j-1}+1,\ldots,\tau_j$. 
	
	\item  \label{Scen:change_in_slope}  \textbf{Constant variance, continuous and piecewise-linear mean}:
	
	$\sigma_t = \sigma_0$ and $f_t = \theta_{j,1} + \theta_{j,2} \; t $ for $t = \tau_{j-1}+1,\ldots,\tau_j$, with the additional constraint of 
	\[
	\theta_{j,1} + \theta_{j,2} \; \tau_{j} = \theta_{j+1,1} + \theta_{j+1,2} \;  \tau_{j} 
	\]
	for $j = 1,\ldots,q$.
		
	\item \label{Scen:change_in_mean_and_slope}  \textbf{Constant variance, piecewise-linear (but not necessarily continuous) mean}: 
	
	$\sigma_t = \sigma_0$ and $f_t = \theta_{j,1} + \theta_{j,2} \; t $ for $t = \tau_{j-1}+1,\ldots,\tau_j$.
	In addition,  $f_{\tau_j}+ \theta_{j,2} \neq f_{\tau_j+1}$ for $j = 1,\ldots,q$.
	
	\item \label{Scen:change_in_mean_and_variance} \textbf{Piecewise-constant variance,  piecewise-constant mean}: 
	
	$f_t = \theta_{j,1}$ and $\sigma_t = \theta_{j,2} > 0$ for $t = \tau_{j-1}+1,\ldots,\tau_j$.  
\end{enumerate}

Since $\sigma_0$ in \ref{Scen:change_in_mean}--\ref{Scen:change_in_mean_and_slope} acts as a nuisance parameter, in the rest of this manuscript, for simplicity we assume that its value is known. If it is unknown, then it can be estimated accurately using the Median Absolute Deviation (MAD) method \citep{hampel1974influence}.  More specifically, with i.i.d. Gaussian errors, the MAD estimator of $\sigma_0$ is defined as $\hat{\sigma} = \mathrm{Median} \{|Y_2-Y_1|,\ldots, |Y_T-Y_{T-1}|\}/\{\Phi^{-1}(3/4)\sqrt{2}\}$ in Scenario~\ref{Scen:change_in_mean}, and as $\hat{\sigma} = \mathrm{Median} \{|Y_1-2Y_2+Y_3|,\ldots, |Y_{T-2}-2Y_{T-1}+Y_T|\}/\{\Phi^{-1}(3/4)\sqrt{6}\}$ in Scenarios~\ref{Scen:change_in_slope} and \ref{Scen:change_in_mean_and_slope}. Here $\Phi^{-1}(\cdot)$ denotes the quantile function of the standard normal distribution. Note that the MAD estimator is robust to any change-points present in the underlying signal $f_t$, due to its combination of working with the differenced data, and its use of the median. Finally, we note that a different procedure is proposed to estimate $\sigma_0$ with dependent errors; see Section~\ref{Sec:dependentnoise} for more details.

Both the methodology and the theory developed below can readily be extended to handle more complicated cases in which the signal within the segments is non-linear (e.g. higher-order-polynomial, a case illustrated in Section~\ref{Sec:simulation_study}). In all of the above-listed scenarios, we focus on structure changes in the mean or the first two moments in the univariate setting. Nevertheless, our framework can be extended to handle multivariate observations, or other more complex structure changes such as autocovariance in time series. 

\subsection{Main idea}
\label{Sec:mainidea}
We now describe the main idea of NOT formally. In the first step, instead of directly using the entire data sample, we randomly extract subsamples, i.e. vectors $(Y_s, Y_{s+1},\ldots, Y_e)'$, where $(s,e)$ is drawn uniformly from the set of pairs of indices in $\{1,\ldots,T\} \times \{1,\ldots,T\}$ that satisfy $1 \le s < e \le T$ and $e - s > 2(d - 1)$. 
Let $\ell(Y_s,\ldots,Y_e; \Thetab)$ be the likelihood of $\Thetab$ given $(Y_s, \ldots, Y_e)'$.  We then compute the generalised log-likelihood ratio (GLR) statistic for all potential single change-points within the subsample and pick the maximum, that is,
\begin{align}
	\label{Eq:glr1}		
	\mathcal{R}_{s,e}^b(\Yb) &=  2 \log \bigg[\frac{\sup_{\Thetab^1, \Thetab^2}\big\{\ell(Y_s,\ldots,Y_b; \Thetab^1) \ell(Y_{b+1},\ldots,Y_e; \Thetab^2)\big\} }{\sup_{\Thetab}\ell(Y_s,\ldots,Y_e; \Thetab)}\bigg];  \\
	\notag 
	\mathcal{R}_{s,e}(\Yb) &= \max_{b \in \{s+d-1,\ldots,e-d\}} \mathcal{R}_{s,e}^b(\Yb).
\end{align}
If constraints are in place between $\Thetab_{j}$ and $\Thetab_{j+1}$ for any $j=1,\ldots,q$  (e.g. as in \ref{Scen:change_in_slope}), the supremum in the numerator of (\ref{Eq:glr1}) is taken over the set that only contains elements of form $\Thetab^1 \times \Thetab^2$ satisfying these constraints. Otherwise, as in \ref{Scen:change_in_mean}, \ref{Scen:change_in_mean_and_slope} and \ref{Scen:change_in_mean_and_variance}, \eqref{Eq:glr1} can be simplified to 
\begin{align*}
\mathcal{R}_{s,e}^b(\Yb) =  2 \log \bigg\{\frac{\sup_{\Thetab}\ell(Y_s,\ldots,Y_b; \Thetab) \sup_{\Thetab}\ell(Y_{b+1},\ldots,Y_e; \Thetab) }{\sup_{\Thetab}\ell(Y_s,\ldots,Y_e; \Thetab)}\bigg\}.
\end{align*}
The above procedure is repeated on $M$ randomly drawn pairs of integers $(s_1,e_1),\ldots,(s_M,e_M)$.

In the second step, we test all $\mathcal{R}_{s_m,e_m}(\Yb)$ for $m=1,\ldots,M$ against a given threshold $\zeta_{T}$. Among those significant ones, we pick the one corresponding to the interval $[s_{m^*}, e_{m^*}]$ that has the smallest length. Once a change-point is found in $[s_{m^*}, e_{m^*}]$ (i.e. $b^*$ that maximises $\mathcal{R}_{s_{m^*},e_{m^*}}^b(\Yb)$), the same procedure is then repeated recursively to the left and to the right of it, until no further significant GLRs can be found. Note that in each recursive step, one could reuse the previously drawn intervals, provided that they fall entirely within each current subsegment considered.

After the process of estimating the change-points is completed, one can estimate the signals within each segment using standard methods such as least squares or maximum likelihood. Note that the estimation of knot locations in spline regression can be viewed as a multiple change-point detection problem set in the context of polynomial segments that are continuously differentiable but have discontinuous higher order derivatives at the change-points between these segments; NOT can be used for this purpose. 

Admittedly, in our framework, one could also use a deterministic scheme (for example, that in \citet{RW2010}) to pick a sufficiently rich family of intervals for multiscale inference. However, one advantage of our approach is that through the use of randomness in drawing the intervals, we avoid having to make a subjective choice of a particular fixed design. In addition, if the number of intervals drawn later turns out to be insufficient, it is straightforward to add more intervals via our random scheme. Nevertheless, with a very large number drawn intervals, the difference in performance between the random and deterministic designs is likely to be minimal, an observation also made in \citet{fryzlewicz2014wild}. 

We end this section by remarking that \citet{CsorgoHorvath1997} present a thorough investigation of the problem of single change-point detection in the GLR framework and heuristically suggest binary segmentation as a possible device for extending this methodology to multiple change-point detection. However, as illustrated in our Section~\ref{Sec:introduction}, such an extension will only work correctly in the canonical change-point detection problem in Scenario~\ref{Scen:change_in_mean}. By contrast, our aim in introducing the NOT device is to enable the use of the GLR methodology in the problem of multiple change-point detection across a range of generalised change-point scenarios.

\subsection{Log-likelihood ratios and contrast functions}
\label{Sec:contrast_function}
In many applications, the GLR (\ref{Eq:glr1}) in NOT can be simplified with the help of ``contrast functions'' under the setting of Gaussian noise. More precisely, for every integer triple $(s,e,b)$ with $1 \leq s <  e \leq T$, our aim is to find $\mathcal{C}_{s,e}^b(\Yb)$ such that:
\begin{enumerate}[label=(\alph*)]
	\item $\argmax_{b} \mathcal{C}_{s,e}^b(\Yb) =  \argmax_{b} \mathcal{R}_{s,e}^b(\Yb)$,
	\item heuristically speaking, the value of  $ \mathcal{C}_{s,e}^b(\Yb)$ is relatively small if there is no change-point in $[s, e]$,
	\item the formulation of $\mathcal{C}_{s,e}^b(\Yb)$ mainly consists of taking inner products between the data and certain contrast vectors, which facilitates the development of both computation and theory, particularly if the contrast vectors can be taken to be mutually orthonormal.
\end{enumerate}
In the following, we give the contrast functions corresponding to \ref{Scen:change_in_mean}--\ref{Scen:change_in_mean_and_variance}. We note that this approach recovers the CUSUM statistic in \ref{Scen:change_in_mean}, which is popular in this canonical change-point detection setting. One can view the resulting statistics as generalisations of CUSUM to other scenarios.

\subsubsection{Scenario \ref{Scen:change_in_mean}}
Here $f_t$ is piecewise-constant. For any integer triple $(s,e,b)$ with $1 \leq s <  e \leq T$ and $s \le b \le e-1$, we define the contrast vector $\psib_{s,e}^{b} = \big(\psi_{s,e}^{b}(1),\ldots,\psi_{s,e}^{b}(T)\big)'$ as
\begin{align}
\label{Eq:basis_jump} 
\psi_{s,e}^{b}(t) = \begin{cases}
\sqrt{\frac{e-b}{l(b-s+1)}}, & t=s,\ldots,b\\
-\sqrt{\frac{b-s+1}{l(e-b)}}, & t=b+1,\ldots,e\\
0, & \mbox{otherwise},
\end{cases}
\end{align}
where $l=e-s+1$. Also, if $b \notin \{s,s+1,\ldots,e-1\}$, then we set $\psi_{s,e}^{b}(t) = 0$ for all $t$. As an illustration, plots of $\psib_{s,e}^{b}$ with different $(s,e,b)$ are shown in Figure~\ref{Fig:jump_basis_functions}. 

\begin{figure}[!htb]
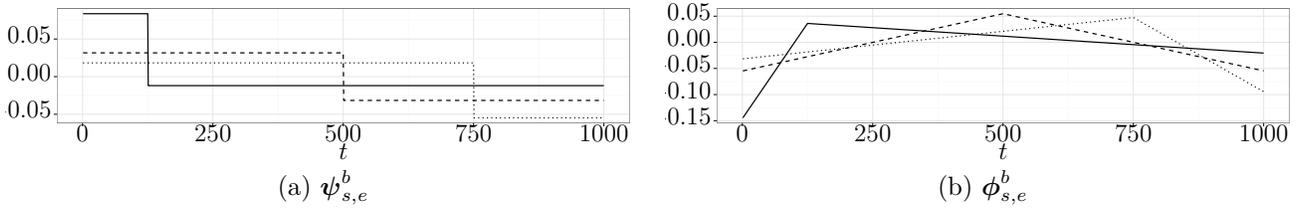

	\centering
	\null\hfill
	\subfloat[][$\psib_{s,e}^{b}$]{
			\resizebox{0.48\textwidth}{!}{\input{tikz/jump_basis_functions.tex}}
\label{Fig:jump_basis_functions}}
	\hfill
	\subfloat[][$\phib_{s,e}^{b}$]{
			\resizebox{0.48\textwidth}{!}{\input{tikz/kink_basis_functions.tex}}
\label{Fig:kink_basis_functions}}
	\hfill\null
	\caption{\label{Fig:jump_kink_vec_examples} Plots of $\psib_{s,e}^{b}$ and $\phib_{s,e}^{b}$ given by, respectively, \eqref{Eq:basis_jump} and \eqref{Eq:basis_function_kink} for $s=1$, $e=1000$ and several values of $b$. Solid line: $b=125$; dashed line: $b=500$; dotted line: $b=750$. }
\end{figure}

For any vector $\vb=(v_{1}, \ldots, v_{T})'$ we define the contrast function as $\cont{s}{e}{b}{\vb}= \abs{\ip{\vb}{ \psib_{s,e}^{b}}}$. Therefore, if $s \le b \le e-1$, then
\begin{align}
\label{Eq:contrast_const}
\cont{s}{e}{b}{\vb} =\abs{\sqrt{\frac{e-b}{l(b-s+1)}}\sum_{t=s}^{b}{v}_{t}-\sqrt{\frac{b-s+1}{l(e-b)}}\sum_{t=b+1}^{e}v_{t}}.
\end{align}
Otherwise, $\cont{s}{e}{b}{\vb} =0$. This recovers the well-known CUSUM statistic in the change-point detection literature. It can be shown that
$
	[\cont{s}{e}{b}{\Yb}]^2 = \sigma_0^2  \mathcal{R}_{s,e}^b(\Yb)
$
for every $(s,e,b)$ with $1 \le s \le b < e \le T$, thus $\cont{s}{e}{b}\cdot$ fulfills the aforementioned requirements for the contrast function. 

In addition, 
for any $1 \le s < e \le T$, we define the constant vector for the interval $[s,e]$ as
\begin{align*} 
	\mathbf{1}_{s,e}(t) = \begin{cases}	(e-s+1)^{-1/2}, & t=s,\ldots, e\\ 0, &\mbox{otherwise} \end{cases},
\end{align*}
and write $\mathbf{1}_{s,e} = \big(\mathbf{1}_{s,e}(1),\ldots,\mathbf{1}_{s,e}(T)\big)'$. Then it is easy to check that $\mathbf{1}_{s,e}$ and $\psib_{s,e}^{b}$ are orthonormal. This explains why the CUSUM is invariant to shifts in the mean.

\subsubsection{Scenario \ref{Scen:change_in_slope}}
\label{Sec:scen:change_in_slope}
Here $f_t$ is piecewise-linear and continuous. For any triple $(s,e,b)$ with $1 \leq s <  e \leq T$ and $s+1 \le b \le e - 1$, consider the contrast vector $\phib_{s,e}^{b} = \big(\phi_{s,e}^{b}(1),\ldots,\phi_{s,e}^{b}(T)\big)'$ with
\begin{align}
\label{Eq:basis_function_kink}
\phi_{s,e}^{b}(t) = \begin{cases} 
\alpha_{s,e}^{b}\beta_{s,e}^{b}  \Big[\big\{3 (b-s+1)+(e-b)-1\big\}t-\big\{b(e-s)+2s(b-s+1)\big\}\Big],   &t= s, \ldots, b \\
-\frac{\alpha_{s,e}^{b}}{\beta_{s,e}^{b}} \Big[\big\{3 (e-b)+(b-s+1)+1\big\}t-\big\{b(e-s) +2e(e-b+1)\big\}\Big],  & t= b+1,\ldots, e,\\
0, & \mbox{otherwise}.
\end{cases}
\end{align}
where $\alpha_{s,e}^{b} = \left(\frac{6}{l(l^{2}-1)(1+(e-b+1) (b-s+1)+(e-b) (b-s))}\right)^{1/2}$,	$\beta_{s,e}^{b} = \left(\frac{(e-b+1)(e-b)}{(b-s)(b-s+1)}\right)^{1/2}$ and $l = e-s+1$. If $b \notin \{s+1,\ldots,e-1\}$, then we set $\phi_{s,e}^{b}(t) = 0$ for all $t$. We illustrate the structure of $\phib_{s,e}^{b}$ in Figure~\ref{Fig:kink_basis_functions}. The contrast function is then defined as 
\begin{align}
\label{Eq:contrast_kink}
\cont{s}{e}{b}{\vb}=  \abs{\ip{\vb}{ \phib_{s,e}^{b}}}.
\end{align}

To explain the rationale behind $\phib_{s,e}^{b}$, we first define the ``linear'' vector for the interval $[s,e]$, $\gammab_{s,e}=\big(\gamma_{s,e}(1),\ldots,\gamma_{s,e}(T)\big)'$, as
\begin{align*}
\gamma_{s,e}(t) = \begin{cases}
\Big\{\frac{1}{12} (e-s+1) (e^2-2 e s+2 e+s^2-2 s)\Big\}^{-1/2} \big(t-\frac{e+s}{2}\big),  & t=s,\ldots,e\\
0, & \mbox{otherwise}
\end{cases}.
\end{align*} 
Then we have that  $\phib_{s,e}^{b}$ is orthonormal to both $\mathbf{1}_{s,e}$ and $\gammab_{s,e}$ (note that  $\gammab_{s,e}$ itself is orthonormal to $\mathbf{1}_{s,e}$). The orthonormality of the vectors $\mathbf{1}_{s,e}$, $\gammab_{s,e}$ and  $\phib_{s,e}^{b}$ is important in deriving the identity $\sigma_0^2 \mathcal{R}_{s,e}^b(\Yb) = \cont{s}{e}{b}{\Yb}^2$ below, and helps improve the numerical efficiency and stability in our implementation of NOT. In particular, it means that the contrast function is invariant to both mean shifts and slope shifts on a given interval. In fact, $\phib_{s,e}^{b}$ can be derived by (i) applying the Gram--Schmidt process on the following vector (linear with a kink at $b+1$ on $[s,e]$)
	\begin{align*}
	\tilde{\phi}_{s,e}^{b}(t) = \begin{cases} 
	t- b,   &t= b+1, \ldots, e \\
	0,      & \mbox{otherwise} \end{cases}
	\end{align*}
with respect to $\mathbf{1}_{s,e}$ and $\gammab_{s,e}$, and (ii) normalisation such that  $\|\cdot\|_{2}=1$. 	
Now write the restriction of $\vb$ on the interval $[s,e]$ as $\vb|_{[s,e]} = (0,\ldots,0,v_{s},\ldots,v_{e},0,\ldots,0)'$. Fix any $(s,e,b)$, given the restriction imposed on $\Thetab$ in \ref{Scen:change_in_slope}, the best approximation of $\Yb|_{[s,e]}$ (in the $\ell_2$ distance) with a single kink at $b$ is a linear combination of $\mathbf{1}_{s,e}$, $\gammab_{s,e}$ and  $\phib_{s,e}^{b}$ (all mutually orthonormal). Therefore,
	\begin{align*}
	 &\sigma_0^2 \mathcal{R}_{s,e}^b(\Yb) =\\
	 & \min_{a_0,a_1 \in \R} \| \Yb|_{[s,e]}- a_0 \mathbf{1}_{s,e} - a_1 \gammab_{s,e} \|_2^2 - \min_{a_0,a_1,a_2 \in \R} \| \Yb|_{[s,e]} - a_0 \mathbf{1}_{s,e} - a_1 \gammab_{s,e} - a_2 \phib_{s,e}^{b}\|_2^2 \\
	& =  \| \Yb|_{[s,e]} -  \langle \Yb,  \gammab_{s,e}\rangle  \gammab_{s,e} - \langle \Yb,  \mathbf{1}_{s,e} \rangle  \mathbf{1}_{s,e} \|^2 - \|	\Yb|_{[s,e]} -  \langle\Yb, \phib_{s,e}^{b}\rangle \phib_{s,e}^{b} - \langle \Yb,  \gammab_{s,e}\rangle  \gammab_{s,e} - \langle \Yb,  \mathbf{1}_{s,e} \rangle  \mathbf{1}_{s,e} \|^2 \\
	& = \langle\Yb, \phib_{s,e}^{b}\rangle ^2 = \cont{s}{e}{b}{\Yb}^2.
	\end{align*}
Thus the aforementioned requirements for the contrast function are satisfied.

\subsubsection{Scenario \ref{Scen:change_in_mean_and_slope}}
Here $f_{t}$ is a piecewise-linear but not necessarily continuous function. We use the following contrast function for any $s < b < e$:
\begin{align}
\label{Eq:basis_jump_plus_kink}
\cont{s}{e}{b}{\vb} =\left(\ip{\vb}{\psib_{s,e}^{b}}^2+\ip{\vb}{\gammab_{s,b}}^2 + \ip{\vb}{\gammab_{b+1,e}}^2 - \ip{\vb}{\gammab_{s,e}}^2\right)^{1/2}.
\end{align}
This construction is justified by noting that
\begin{align*}
\sigma_0^2 \mathcal{R}_{s,e}^b(\Yb)  &= \min_{a_0,a_1 \in \R} \| \Yb|_{[s,e]}- a_0 \mathbf{1}_{s,e} - a_1 \gammab_{s,e} \|_2^2 \\
& \quad- \bigg(\min_{a_0,a_1 \in \R} \| \Yb|_{[s,b]} - a_0 \mathbf{1}_{s,b} - a_1 \gammab_{s,b} \|_2^2 + \min_{a_0,a_1 \in \R} \| \Yb|_{[b+1,e]} - a_0 \mathbf{1}_{b+1,e} - a_1 \gammab_{b+1,e} \|_2^2\bigg)\\
& = \cont{s}{e}{b}{\Yb}^2,
\end{align*}
where we also used the orthonormality among $\mathbf{1}_{s,e}$, $\psib_{s,e}^{b}$, $\gammab_{s,b}$ and  $\gammab_{b+1,e}$ in the above derivation.

\subsubsection{Scenario \ref{Scen:change_in_mean_and_variance}}
Here both $f_t$ and $\sigma_{t}$ are piecewise-constant. For any $1 \leq s + 1 < b < e -1 \leq T$, we propose  
\begin{align}
\label{Eq:contrast_volatility}
\cont{s}{e}{b}{\Yb} =  (e-s+1)\log\left(\hat{\sigma}_{s,e}(\Yb)\right) - (b-s+1)\log\left(\hat{\sigma}_{s,b}(\Yb)\right) - (e-b)\log\left(\hat{\sigma}_{b+1,e}(\Yb)\right), 
\end{align}
where 
\[
\hat{\sigma}_{s,e}^2(\Yb) = \frac{1}{e-s+1}\sum_{t=s}^{e}\left(Y_{t}-\frac{1}{e-s+1}\sum_{t=s}^{e} Y_{t} \right)^2 = \ip{\Yb^2}{\mathbf{1}_{s,e}^2} - \ip{\Yb}{\mathbf{1}_{s,e}^2}^2.
\]
Otherwise, for $b\not\in \{s+2,\ldots, e-2\}$, we set $\cont{s}{e}{b}{\Yb}=0$.
In this Scenario, it is straightforward to verify that $\cont{s}{e}{b}{\Yb} = \mathcal{R}_{s,e}^b(\Yb)$. (N.B. $\mathbf{1}_{s,e}^2 \neq \mathbf{1}_{s,e}$ due to the normalising constant.) In practice, for numerical stability, we use $\log_\epsilon(\cdot) := \log\{\max(\cdot,\epsilon)\}$ instead of $\log(\cdot)$ in \eqref{Eq:contrast_volatility} with a small given $\epsilon>0$.

\subsection{The NOT algorithm}
\label{Sec:not_algorithm}

	Here we present a generic version of the NOT algorithm. Its pseudo-code can be found below.   The main ingredient of the NOT procedure is a contrast function $\cont{s}{e}{b}{\cdot}$, chosen by the user, depending on the assumed nature of change-points in the data, e.g. as exemplified by our scenarios \ref{Scen:change_in_mean}--\ref{Scen:change_in_mean_and_variance} above. In addition, some tuning parameters are needed: $\zeta_{T}>0$ is the threshold with respect to which the contrast should be tested, while $M$ is the number of the intervals drawn in the procedure. Guidance on the choice of $\zeta_{T}$ and $M$ is given in Section~\ref{Sec:NOTsSIC}.

\begin{algorithm}[!htb]
	\caption{NOT}
	\label{Alg:not_algorithm}

	\begin{algorithmic}
		\Require Data vector $\Yb=(Y_{1},\ldots, Y_{T})'$,  $F_{T}^{M}$ being a set of $M$ intervals, with each pair of start- and end- points drawn independently and uniformly from the set of pairs of indices in $\{1,\ldots,T\} \times \{1,\ldots,T\}$ that satisfy the conditions outlined at the beginning of Section~\ref{Sec:mainidea}, $\Sc=\emptyset$.
		\Ensure Set of estimated change-points $\Sc\subset\{1,\ldots,T\}$. \\
		
	    \vspace{0.01cm} \\
	    \hspace{-0.43cm} \textbf{To start the algorithm}: Call  \Call{NOT}{1, $T$, $\zeta_T$}\\

		\Procedure{NOT}{$s,e,\zeta_{T}$}	
		\If{$e-s < 1$} STOP
		\Else
		\State $\Mc_{s,e}:=\left\{m:[s_{m},e_{m}]\in F_{T}^{M}, [s_{m},e_{m}]\subset[s,e]\right\}$
		\If{$\Mc_{s,e}=\emptyset$} STOP
		\Else
		\State $\Oc_{s,e}:=\left\{m\in\Mc_{s,e}: \max_{s_{m}\le b \le e_{m}} \cont{s_m}{e_m}{b}{\Yb}  >  \zeta_{T}\right\}$
		\If{$\Oc_{s,e}=\emptyset$} STOP
		\Else
		\State $m^{*}:\in\argmin_{m\in\Oc_{s,e}}|e_{m}-s_{m}|$ 
		\State $b^{*}:=\argmax_{s_{m^*} \le b \le e_{m^*}} \cont{s_{m^*}}{e_{m^*}}{b}{\Yb}$
		\State $\Sc:=\Sc\cup\{b^{*}\}$
		\State \Call{NOT}{$s,b^{*},\zeta_{T}$}
		\State \Call{NOT}{$b^{*}+1,e,\zeta_{T}$}
		\EndIf
		\EndIf
		\EndIf
		\EndProcedure
	\end{algorithmic}
	
\end{algorithm}

To sum up, the input include the data vector $\mathbf{Y}$,  the set of $F_T^M$ that contains all randomly drawn sub-intervals for testing, and the global variable $\mathcal{S}$ for the set of estimated change-points initialised with $\mathcal{S} = \emptyset$. Then NOT is started recursively with $[s,e] = [1,T]$ and a given $\zeta_T$.

Here the entire set of $F_{T}^{M}$  that contains all random intervals is generated before we start running Algorithm~\ref{Alg:not_algorithm}. In this way, we are better able to control the computational complexity of the entire procedure. If we were to draw new intervals
each time after a change-point was detected, the computational complexity would
depend to a larger extent on the number of change-points. Furthermore, if we were to draw anew after each detection, we would likely be forfeiting some of the intervals drawn before, which would result in a waste of computational effort.


\subsection{Theoretical properties of NOT}
\label{Sec:theory}
In this section, we analyse the theoretical behaviour of the NOT algorithm in Scenarios \ref{Scen:change_in_mean} and \ref{Scen:change_in_slope}. We cover the case of infill asymptotics, which is standard in the literature on a posteriori change-point detection. An attractive feature of our methodology is that proofs for other scenarios can in principle be constructed ``at home'' by the user, by following the same generic proof strategy as the one we use for these two scenarios. 

First, we revisit the canonical change-point detection problem, \ref{Scen:change_in_mean}, where the signal vector $\fb=(f_{1},\ldots,f_{T})'$ is piecewise-constant.  Here $\sigma_0$ is assumed to be known. Otherwise, one can plug in the MAD estimator, described in Section~\ref{Sec:setup}, without affecting the correctness of our theory. For notational convenience, we set $\sigma_{0} = 1$. For other values of $\sigma_0$, our theorems are still valid with only minor adjustments to the constants therein. Explicit expressions for the constants are given in Section~\ref{Sec:proof_ext} of the online supplementary materials. 

\begin{Theorem}
\label{Thm:consistency_gaussian_const}
Suppose $Y_{t}$ follow model (\ref{Eq:signal+noise}) in Scenario \ref{Scen:change_in_mean}. Let $\delta_{T} = \min_{j=1,\ldots,q+1}(\tau_{j}-\tau_{j-1})$, $\Delta_j^\fb = |f_{\tau_{j}+1}-f_{\tau_{j}}|$, $\fl_{T} = \min_{j = 1,\ldots,q}\Delta_j^\fb$. Let $\hat{q}$ and $\hat{\tau}_{1},\ldots,\hat{\tau}_{\hat{q}}$ denote, respectively, the number and locations of change-points, sorted in increasing order, estimated by Algorithm~\ref{Alg:not_algorithm} with the contrast function given by \eqref{Eq:contrast_const}.  Then there exist constants $\Cl$, $C_1, C_2, C_3 > 0$ (not depending on $T$) such that given $\delta_{T}^{1/2} \fl_{T} \geq \Cl \sqrt{\log T}$, $C_1 \sqrt{\log T } \leq \zeta_{T} < C_2 \delta_{T}^{1/2} \fl_{T}$ and  $M \geq 36 T^2 \delta_T^{-2} \log (T^2 \delta_T^{-1})$, as $T \rightarrow \infty$,
\begin{align}
\label{Eq:consistency_gaussian_const}
\Pb{\hat{q}=q,\; \max_{j=1,\ldots,q} \Big(|\hat{\tau}_{j}-\tau_{j}|(\Delta_j^\fb)^2 \Big) \leq C_3 \log T} \rightarrow 1.
\end{align}
\end{Theorem}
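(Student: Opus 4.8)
The plan is to carry out the whole argument on a single ``good event'' $A_T$ on which two things hold: (i) the Gaussian noise is controlled uniformly across all admissible triples, and (ii) the random draw supplies, around every true change-point, a short interval on which detection is guaranteed. For (i), note that for each fixed $(s,e,b)$ the variable $\ip{\varepsilonb}{\psib_{s,e}^{b}}$ is standard normal because (with $\sigma_0=1$) $\psib_{s,e}^{b}$ is a unit vector; since there are at most $O(T^3)$ such triples, a Gaussian tail estimate and a union bound give $\Pb{\max_{s,e,b}\abs{\ip{\varepsilonb}{\psib_{s,e}^{b}}} > c\sqrt{\log T}}\to 0$ for a suitable $c$. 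For (ii), fix $\tau_j$ and consider intervals $[s,e]$ with $s$ in a window of length of order $\delta_T$ to the left of $\tau_j$ and $e$ in a window of length of order $\delta_T$ to the right, taken short enough (length $<\delta_T$) to contain only $\tau_j$ yet long enough that $\cont{s}{e}{\tau_j}{\fb}$ exceeds $\zeta_T$ by a margin; such a configuration is drawn with probability of order $\delta_T^2/T^2$ per trial, so the failure probability over $M$ draws is at most $\Exp{-cM\delta_T^2 T^{-2}}$, and the bound $M\ge 36\,T^2\delta_T^{-2}\log(T^2\delta_T^{-1})$ makes the union over the $q\le T/\delta_T$ change-points vanish.

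On $A_T$ I would then isolate the two single-interval facts that drive the recursion. First, a \emph{detection} fact: if the current interval contains a detectable change-point, then by (ii) some drawn sub-interval is over threshold, so $\Oc_{s,e}\neq\emptyset$; moreover, because a detectable single-change-point interval can be taken of length of order $\log T/\fl_T^2$, which the assumption $\delta_T^{1/2}\fl_T\ge \Cl\sqrt{\log T}$ (with $\Cl$ large) forces to be strictly below $\delta_T$, the \emph{narrowest} over-threshold interval $[s_{m^*},e_{m^*}]$ has length $<\delta_T$ and hence contains at most one true change-point. Second, a \emph{localisation} fact: on an interval containing exactly one change-point $\tau_j$, decompose $\Yb=\fb+\varepsilonb$ and use that $\cont{s}{e}{b}{\fb}$ is maximised at $b=\tau_j$ with a linear gap $\cont{s}{e}{\tau_j}{\fb}^2-\cont{s}{e}{b}{\fb}^2\ge c\,\abs{b-\tau_j}(\Delta_j^\fb)^2$; since being over threshold forces $\tau_j$ to sit a positive fraction away from both endpoints (so the gap constant does not degrade), the inequality $\cont{s_{m^*}}{e_{m^*}}{b^*}{\Yb}\ge \cont{s_{m^*}}{e_{m^*}}{\tau_j}{\Yb}$ combined with the uniform noise bound yields $\abs{b^*-\tau_j}(\Delta_j^\fb)^2\le C_3\log T$.

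With these in hand I would induct over the recursion tree, propagating the statement that each call of \textsc{NOT} either (a) meets no detectable change-point, in which case every contrast is governed purely by noise and stays below $\zeta_T$ (using $\zeta_T\ge C_1\sqrt{\log T}$ and $C_1>c$), so the branch correctly stops; or (b) meets at least one detectable change-point, in which case the detection and localisation facts guarantee a returned $b^*$ within $C_3\log T/(\Delta_j^\fb)^2$ of some true $\tau_j$, splitting the interval cleanly. The only delicacy is that splitting at $b^*$ may leave a short remnant of segment $j$ near a boundary of a sub-interval, of length of order $\log T/(\Delta_j^\fb)^2<\delta_T$; one checks that such a remnant cannot by itself push any contrast over $\zeta_T$, again from the noise bound and the threshold lower bound. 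Counting then gives that every true change-point is eventually isolated and detected while no change-point-free interval exceeds the threshold, producing $\hat q=q$ and the stated location rate simultaneously.

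The step I expect to be the main obstacle is pinning down the constants in the \emph{narrowest-interval} argument. One must simultaneously ensure that around each $\tau_j$ there is a drawn over-threshold interval strictly shorter than $\delta_T$, and that no shorter over-threshold interval can straddle two change-points. Balancing the detection length of order $\log T/\fl_T^2$ against the separation $\delta_T$ forces $\Cl$, $C_1$, and $C_2$ into a mutually compatible range, and keeping this balance stable through the recursion, where the effective jump sizes and segment lengths on sub-intervals differ slightly from the global quantities, is where the careful bookkeeping lies.
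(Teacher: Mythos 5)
Your overall architecture (a uniform Gaussian maximal bound via a union over $O(T^3)$ triples, an interval-drawing event with failure probability $\le q(1-c\delta_T^2T^{-2})^M$, the narrowest-over-threshold interval having length $<\delta_T$ and hence at most one change-point, and the final bookkeeping over the recursion tree) matches the paper's Steps One, Three, Four and Five closely, and those parts are sound. The genuine gap is in your localisation step. You propose to deduce $\abs{b^*-\tau_j}(\Delta_j^\fb)^2\le C_3\log T$ from the gap $\cont{s}{e}{\tau_j}{\fb}^2-\cont{s}{e}{b}{\fb}^2\gtrsim \abs{b-\tau_j}(\Delta_j^\fb)^2$ ``combined with the uniform noise bound'', where your good event only controls $\max_{s,e,b}\abs{\ip{\varepsilonb}{\psib_{s,e}^{b}}}\le c\sqrt{\log T}$. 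That is not enough. Expanding $\cont{s}{e}{b^*}{\Yb}^2\ge\cont{s}{e}{\tau_j}{\Yb}^2$ produces the cross term $2\ip{\varepsilonb}{\psib_{s,e}^{b^*}\ip{\fb}{\psib_{s,e}^{b^*}}-\psib_{s,e}^{\tau_j}\ip{\fb}{\psib_{s,e}^{\tau_j}}}$, and bounding it using only the individual CUSUM bound gives at best $2\lambda_T\bigl(\abs{\ip{\fb}{\psib_{s,e}^{b^*}}}+\abs{\ip{\fb}{\psib_{s,e}^{\tau_j}}}\bigr)\lesssim \lambda_T\sqrt{\eta}\,\Delta_j^\fb$ where $\eta$ is the distance from $\tau_j$ to the nearer endpoint. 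Setting this against the signal gap yields only $\abs{b^*-\tau_j}\lesssim\sqrt{\eta\log T}/\Delta_j^\fb$, i.e.\ a rate of order $\sqrt{\delta_T\log T}$ rather than $\log T/(\Delta_j^\fb)^2$ — the classical suboptimal rate of crude CUSUM arguments.

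The missing ingredient is a second maximal inequality, over the family of \emph{normalised difference vectors}: the paper's event $B_T$ controls
\begin{equation*}
\max_{j}\;\max_{s,e,b}\;
\frac{\bigl|\bigl\langle \psib_{s,e}^{b}\ip{\fb}{\psib_{s,e}^{b}}-\psib_{s,e}^{\tau_j}\ip{\fb}{\psib_{s,e}^{\tau_j}},\,\varepsilonb\bigr\rangle\bigr|}{\bigl\|\psib_{s,e}^{b}\ip{\fb}{\psib_{s,e}^{b}}-\psib_{s,e}^{\tau_j}\ip{\fb}{\psib_{s,e}^{\tau_j}}\bigr\|_2}\le\sqrt{8\log T},
\end{equation*}
which is again a supremum of standard normals over polynomially many terms. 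Combined with the exact identity $\|\psib_{s,e}^{b}\ip{\fb}{\psib_{s,e}^{b}}-\psib_{s,e}^{\tau_j}\ip{\fb}{\psib_{s,e}^{\tau_j}}\|_2^2=\cont{s}{e}{\tau_j}{\fb}^2-\cont{s}{e}{b}{\fb}^2=:\kappa$, the cross term is bounded by $2\kappa^{1/2}\lambda_T$, the inequality closes as $\kappa-2\kappa^{1/2}\lambda_T-\lambda_T^2>0$ whenever $\kappa>(\sqrt{2}+1)^2\lambda_T^2$, and the stated $\log T/(\Delta_j^\fb)^2$ rate follows. Your heuristic that ``the gap constant does not degrade because $\tau_j$ sits a positive fraction from both endpoints'' addresses the signal term but not this noise cross term; without adding an event of the $B_T$ type (or an equivalent device) your argument proves a strictly weaker theorem.
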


In the simplest case where we have finitely many change-points with $\delta_T \sim T$ and $\fl_{T} \sim 1$, then $\delta_{T}^{1/2} \fl_{T} \sim \sqrt{T}$ so the condition $\delta_{T}^{1/2} \fl_{T} \geq \Cl \sqrt{\log T}$ is always satisfied for a sufficiently large $T$. We need $M = O(\log T)$ many random intervals for consistent detection of all the change-points, which leads to a total computational cost of $O(T \log T)$ for the entire procedure. Furthermore,  $\max_{j=1,\ldots,q} \Big(|\hat{\tau}_{j}-\tau_{j}|\Big) = O_{p}(\log T)$, which trails the minimax rate of $O_p(1)$ by only a logarithmic factor. In addition, we note that the NOT procedure allows for $\delta_{T}^{1/2} \fl_{T}$, a quantity that characterises the difficulty level of the problem, to be of order $\sqrt{\log T}$. As argued in \cite{chan2013detection}, this is the smallest rate that permits change-point detection for any method from a minimax perspective.

Next, we revisit Scenario~\ref{Scen:change_in_slope}, in which the signal is piecewise-linear and continuous. Again, we set $\sigma_{0}=1$ for notational convenience. Explicit expressions of the constants in the following theorem can be found in Section~\ref{Sec:proof_of_consistency_theorem_slope} of the online supplementary materials.

\begin{Theorem}
\label{Thm:consistency_gaussian_linear}
Suppose $Y_{t}$ follow model (\ref{Eq:signal+noise}) in Scenario~\ref{Scen:change_in_slope}. Let $\delta_{T} = \min_{j=1,\ldots,q+1}(\tau_{j}-\tau_{j-1})$, $\Delta_j^\fb =  |2f_{\tau_{j}}-f_{\tau_{j}-1}-f_{\tau_{j}+1}|$, $\fl_{T} = \min_{j = 1,\ldots,q}\Delta_j^\fb$. Let $\hat{q}$ and $\hat{\tau}_{1},\ldots,\hat{\tau}_{\hat{q}}$ denote, respectively, the number and locations of change-points, sorted in increasing order, estimated by Algorithm~\ref{Alg:not_algorithm} with the contrast function given by \eqref{Eq:contrast_kink}.  Then there exist constants $\Cl, C_1, C_2, C_3 > 0$  (not depending on $T$)  such that given $\delta_{T}^{3/2} \fl_{T} \geq  \Cl \sqrt{\log T}$, $C_1 \sqrt{\log T } \leq \zeta_{T} < C_2 \delta_{T}^{3/2} \fl_{T}$ and $M \geq 36 T^2 \delta_T^{-2} \log (T^2 \delta_T^{-1})$, as $T \rightarrow \infty$,
\begin{align}
\label{Eq:consistency_gaussian_linear}
\Pb{\hat{q}=q,\; \max_{j=1,\ldots,q} \Big(|\hat{\tau}_{j}-\tau_{j}|(\Delta_j^\fb)^{2/3}
 \Big) \leq C_3 (\log T)^{1/3}} \rightarrow 1.
\end{align}
\end{Theorem}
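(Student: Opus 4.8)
The plan is to follow the same generic strategy used for Theorem~\ref{Thm:consistency_gaussian_const}, replacing the CUSUM vector $\psib_{s,e}^{b}$ by the kink contrast vector $\phib_{s,e}^{b}$ of \eqref{Eq:basis_function_kink} and tracking the different scaling this induces. I would condition on a ``good'' event $\mathcal{A}_T$ on which the conclusion becomes a deterministic statement about the recursion, and build $\mathcal{A}_T$ from two ingredients. First, since $\norm{\phib_{s,e}^{b}}_2=1$, each projection $\ip{\varepsilonb}{\phib_{s,e}^{b}}$ is standard Gaussian, so a union bound over the $O(T^3)$ admissible triples $(s,e,b)$ yields $\Pb{\sup_{s,e,b}\abs{\ip{\varepsilonb}{\phib_{s,e}^{b}}}>c_1\sqrt{\log T}}\to 0$; write $\lambda_T=c_1\sqrt{\log T}$ for this uniform noise bound. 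Second, call an interval ``good for $\tau_j$'' if it contains $\tau_j$, no other change-point, has length at most $\delta_T$ with $\tau_j$ comfortably interior (so the feature is strongly detectable yet only one kink is enclosed). The number of index pairs generating such an interval is $\gtrsim\delta_T^2$, so a single draw is good with probability $\gtrsim\delta_T^2/T^2$, and the probability that some $\tau_j$ is missed is at most $q\,(1-c\,\delta_T^2/T^2)^M$; the stated choice $M\ge 36\,T^2\delta_T^{-2}\log(T^2\delta_T^{-1})$ is exactly what drives this to $0$, which is where that condition enters.

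The deterministic core rests on understanding the signal profile $g_{s,e}(b):=\ip{\fb}{\phib_{s,e}^{b}}$, through which $\cont{s}{e}{b}{\Yb}=\abs{g_{s,e}(b)+\ip{\varepsilonb}{\phib_{s,e}^{b}}}$. The orthonormality of $\phib_{s,e}^{b}$ to $\mathbf{1}_{s,e}$ and $\gammab_{s,e}$ (Section~\ref{Sec:scen:change_in_slope}) immediately gives $g_{s,e}\equiv 0$ whenever $\fb$ is linear on $[s,e]$, i.e. whenever $[s,e]$ straddles no kink; combined with $\lambda_T\le\zeta_T$ this rules out detections on kink-free intervals. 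I then need three quantitative estimates, obtained by direct computation with \eqref{Eq:basis_function_kink}: a lower bound $\abs{g_{s,e}(\tau_j)}\gtrsim\Delta_j^{\fb}\,l^{3/2}$ on an interval of length $l$ carrying the single well-interior kink $\tau_j$; a drop-off bound of the form $g_{s,e}(\tau_j)^2-g_{s,e}(b)^2\gtrsim (\Delta_j^{\fb})^2\,l^{3/2}\,\abs{b-\tau_j}^{3/2}$; and an upper bound $\abs{g_{s,e}(b)}\lesssim\Delta^{\fb}\,m^{3/2}$ when the only enclosed kink lies at distance $m$ from the nearer endpoint. The exponent $3/2$ here (against $1/2$ for CUSUM) is precisely what converts the noise level $\lambda_T$ into the rate $\abs{\hat\tau_j-\tau_j}\lesssim(\log T)^{1/3}(\Delta_j^{\fb})^{-2/3}$.

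Given these estimates I would run the induction over the recursion tree as in the canonical case. On any call $\textsc{NOT}(s,e,\zeta_T)$ whose interval still contains an undetected kink well inside it, the corresponding good interval lies in $\Mc_{s,e}$ and, by the lower bound together with $\zeta_T<C_2\delta_T^{3/2}\fl_T$, lies in $\Oc_{s,e}$, so a detection occurs. The narrowest-over-threshold selection forces the chosen interval to have length at most $\delta_T$; since two kinks are $\ge\delta_T$ apart it then contains at most one kink, and being over threshold it contains at least one, hence exactly one, say $\tau_j$. Writing $D=\abs{g(\tau_j)}$ for the maximal signal on that interval, optimality of $b^*$ gives $D-\abs{g(b^*)}\le 2\lambda_T$, so $D^2-g(b^*)^2\lesssim\lambda_T D$; feeding this into the drop-off bound makes the interval length cancel and yields $(\Delta_j^{\fb})^2\abs{b^*-\tau_j}^3\lesssim\lambda_T^2\asymp\log T$, i.e. the claimed rate. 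Because $\delta_T^{3/2}\fl_T\ge\Cl\sqrt{\log T}$ with $\Cl$ large makes this error a small fraction of $\delta_T$, the split at $b^*$ leaves every remaining kink well inside a child interval and the induction proceeds; the boundary upper bound (with $m\lesssim(\log T)^{1/3}\fl_T^{-2/3}$ giving $\Delta^{\fb}m^{3/2}\asymp\sqrt{\log T}\le\zeta_T$) guarantees that kinks stranded near a current endpoint produce contrasts below $\zeta_T$, preventing over-detection and yielding $\hat q=q$.

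The main obstacle I anticipate is the signal analysis of the second paragraph: unlike the CUSUM vector, $\phib_{s,e}^{b}$ has the cumbersome closed form \eqref{Eq:basis_function_kink}, so extracting the correct $3/2$ powers and sharp constants in the lower, drop-off and boundary bounds \emph{uniformly} over interval lengths and kink positions, rather than merely for a centred kink, is the delicate and computation-heavy step. It is this step that ultimately pins down both the signal-strength requirement $\delta_T^{3/2}\fl_T\ge\Cl\sqrt{\log T}$ and the convergence rate in \eqref{Eq:consistency_gaussian_linear}.
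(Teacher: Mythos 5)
Your overall architecture matches the paper's: a uniform Gaussian bound over all contrasts, a combinatorial bound ensuring each kink receives a drawn interval of length at most $\delta_T$ with the kink well in its interior, the narrowest-over-threshold selection forcing exactly one kink per detection, and the three signal estimates (peak value, drop-off, boundary decay) with the $3/2$ scaling. However, there is a genuine gap in the step that produces the localisation rate, and it has two intertwined parts. First, your proposed drop-off bound $g_{s,e}(\tau_j)^2-g_{s,e}(b)^2\ge c\,(\Delta_j^\fb)^2\, l^{3/2}\,|b-\tau_j|^{3/2}$ is false when $\rho:=|b-\tau_j|$ is small relative to $l$: the left-hand side equals the residual sum of squares of the best continuous piecewise-linear fit with a kink at $b$ to a signal with a kink of second difference $\Delta_j^\fb$ at $\tau_j$, and taking the candidate fit that places the correct slope change at $b$ shows it is at most of order $(\Delta_j^\fb)^2\, l\,\rho^2$, which is strictly smaller than $(\Delta_j^\fb)^2\, l^{3/2}\,\rho^{3/2}$ for $\rho\ll l$. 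The paper's Lemmas~\ref{Lem:cusum_cp_kink_distance} and~\ref{Lem:cusum_cp_kink_distance_2} give the correct lower bounds, namely $c\,(\Delta_j^\fb)^2\min(\rho,\eta_L)^3$ in general and $c\,(\Delta_j^\fb)^2\,\min(\eta_L,\eta_R)\,\rho^2$ when $\rho$ is small relative to both endpoint distances.

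Second --- and this is why you cannot simply substitute a correct drop-off bound into your argument --- your noise comparison is too crude. From $b^*$ being the argmax you deduce $D-|g(b^*)|\le 2\lambda_T$ and hence $D^2-g(b^*)^2\le 4\lambda_T D$ with $D\asymp\Delta_j^\fb\,\delta_T^{3/2}$. Feeding the correct drop-off bounds into this yields only $\rho^2\le C\lambda_T\delta_T^{1/2}/\Delta_j^\fb$ (quadratic form) or $\rho^3\le C\lambda_T\delta_T^{3/2}/\Delta_j^\fb$ (cubic form), neither of which reduces to the claimed $C_3(\log T)^{1/3}(\Delta_j^\fb)^{-2/3}$ when $\delta_T$ is large. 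The paper avoids this by introducing a second maximal inequality (its event $B_T$ in Step Two): the noise is projected onto the \emph{normalised difference vectors} $\big(\phib_{s,e}^b\langle\fb,\phib_{s,e}^b\rangle-\phib_{s,e}^{\tau_j}\langle\fb,\phib_{s,e}^{\tau_j}\rangle\big)/\|\cdot\|_2$, each a fixed unit vector, so that the cross term in the expansion of $\cont{s}{e}{\tau_j}{\Yb}^2-\cont{s}{e}{b}{\Yb}^2$ is bounded by $2\lambda_T\kappa^{1/2}$ with $\kappa=\cont{s}{e}{\tau_j}{\fb}^2-\cont{s}{e}{b}{\fb}^2$, rather than by a multiple of $\lambda_T D$. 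Positivity of the difference then reduces to $\kappa>(\sqrt{2}+1)^2\lambda_T^2$, and the lower bound $\kappa\ge\frac{1}{63}\min(\rho,\eta_L)^3(\Delta_j^\fb)^2$ delivers $\rho\le C(\log T)^{1/3}(\Delta_j^\fb)^{-2/3}$ with no dependence on the interval length. You need to add this localised noise event to your good event and run the comparison at the level of squared contrasts; the remainder of your outline (over-threshold detection, single-kink localisation, termination via the boundary bound) then goes through essentially as you describe.
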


In the case in which we have finitely many change-points with $\delta_T \sim T$, we again need $M = O(\log T)$ random intervals for consistent estimation of all the change-points, leading to the total computational cost of $O(T \log T)$. In addition, when $\fl_{T} \sim T^{-1}$ (a case in which $f_t$ is bounded), our theory indicates that the resulting change-point detection rate is $O_p(T^{2/3} (\log T)^{1/3})$, which is different from the rate of $O_p(T^{2/3})$ derived by \citet{raimondo1998minimax} by only a logarithmic factor; moreover, under additional assumptions and with a more careful but restrictive choice of $\zeta_{T}$, this rate can be further improved to $O_p(T^{1/2} (\log T)^{1/2})$; see Section~\ref{Sec:sSIC_theory} and Lemma~\ref{lem:consistency_gaussian_linear} in the online supplementary materials for more details. Furthermore, we remark that in more general cases  (i.e. number of change-points increasing with $T$)  in Scenario~\ref{Scen:change_in_slope}, the difficulty level of the problem in Scenario~\ref{Scen:change_in_slope} can be charaterised by $\delta_{T}^{3/2} \fl_{T}$, a quantity analogous to $\delta_{T}^{1/2} \fl_{T}$ in the setting of \ref{Scen:change_in_mean}.

Finally, we emphasise again that in contrast, the WBS will fail to estimate change-point consistently in Scenario~\ref{Scen:change_in_slope}, for reasons described in Section~\ref{Sec:introduction}.

\section{NOT with the strengthened Schwarz Information Criterion (sSIC)}
\label{Sec:NOTsSIC}
\subsection{Motivation}
\label{Sec:NOTsSIC_motivation}
The success of Algorithm~\ref{Alg:not_algorithm} depends on the choice of the threshold $\zeta_T$. Although Theorem~\ref{Thm:consistency_gaussian_const} and Theorem~\ref{Thm:consistency_gaussian_linear} state that there exists $\zeta_T$ that guarantee consistent estimation of the change-points, this choice still typically depends on some unobserved quantities; furthermore, there are many more general scenarios where a theoretical optimal threshold might be difficult to derive.

Note that for a given $\Yb$ and $F_T^M$, each threshold $\zeta_T$ corresponds to a candidate model produced by NOT. Therefore, if we could produce a ``solution path'' of candidate models obtained from NOT along all possible thresholds, we could then try to select the best model along the solution path via minimising an information-based criterion. In this sense, here the task of selecting the best threshold is equivalent to selecting the best model.  

The idea of a ``solution path'' has also been widely used in high-dimensional statistics. See, for instance, the work of \citet{EHJT2004} for the lasso and \citet{TibshiraniTaylor2011} for the generalised lasso. However, since our NOT procedure does not have a convex objective function to optimise, the algorithm we developed in the following is different from those developed for the high-dimensional problems.

\subsection{The NOT solution path algorithm}
\label{Sec:solution_path_algorithm} 
Denote by $\Tc(\zeta_{T})=\{\hat{\tau}_{1}(\zeta_{T}), \ldots, \hat{\tau}_{\hat{q}(\zeta_T)}(\zeta_T)\}$ the locations of change-points estimated by Algorithm~\ref{Alg:not_algorithm} with threshold $\zeta_T$ 
and define the threshold-indexed solution path as the family of sets $\{\Tc(\zeta_{T})\}_{\zeta_T\geq 0}$. 
Note that this threshold-indexed solution path has the following important properties. First, being seen as the function $\zeta_{T}\mapsto\Tc(\zeta_{T})$, it changes its value only at discrete points, i.e. there exist $0= \zeta_{T}^{(0)} < \zeta_{T}^{(1)}< \ldots < \zeta_{T}^{(N)}$, such that  $\Tc(\zeta_{T}^{(i)}) \neq \Tc(\zeta_{T}^{(i+1)})$ for any $i=0,1,\ldots,N-1$, and $\Tc(\zeta_{T}) = \Tc(\zeta_{T}^{(i)})$ for any $\zeta_{T} \in [\zeta_{T}^{(i)},\zeta_{T}^{(i+1)})$; and second,  $T(\zeta_{T})= \emptyset$ for any $\zeta_{T} \geq \zeta_{T}^{(N)}$. 

However, the thresholds $\zeta_{T}^{(i)}$ are unknown and depend on the data, therefore naively applying Algorithm~\ref{Alg:not_algorithm} on a range of pre-specified thresholds typically does not recover the entire solution path. Moreover, from the computational point of view, repeated application of Algorithm~\ref{Alg:not_algorithm} to find the solution path is not optimal either, because intuitively one would expect the solutions for $\zeta_{T}^{(i+1)}$ and  $\zeta_{T}^{(i)}$  to be similar for most $i$.  
These issues are circumvented via our newly developed Algorithm~\ref{Alg:not_solution_path}, which is able to compute the entire threshold-indexed solution path quickly, thus facilitating the study of a data-driven approach to the choice of $\zeta_{T}$ in Section~\ref{Sec:sSIC}. The key idea of Algorithm~\ref{Alg:not_solution_path} is to make use of information from $\Tc(\zeta_{T}^{(i)})$ to compute both $\zeta_{T}^{(i+1)}$ and $\Tc(\zeta_{T}^{(i+1)})$ iteratively for every $i=0,\ldots,N-1$. The pseudo-code of Algorithm~\ref{Alg:not_solution_path}, as well as other relevant details, can be found in Section~\ref{Sec:solution_path_algorithm_details} of the online supplementary materials.

\subsection{Choice of $\zeta_{T}$ via the strengthened Schwarz Information Criterion (sSIC)}
\label{Sec:sSIC}

Suppose we have $\Tc(\zeta^{(1)}), \ldots,\Tc(\zeta^{(N)})$ that form the NOT solution path, i.e. the collection of candidate models produced by Algorithm~\ref{Alg:not_solution_path}. We propose to select $\Tc(\zeta^{(k)})$ that minimises the strengthened Schwarz Information Criterion (sSIC; \citet{LWZ1997}, \citet{fryzlewicz2014wild}) defined as follows. Let  $k=1,\ldots,N$, $\hat{q}_{k}=|\Tc(\zeta^{(k)}_T)|$ and $\hat{\Thetab}_{1}, \ldots, \hat{\Thetab}_{\hat{q}_{k}+1}$ be the maximum likelihood estimators of the segment parameters in model \eqref{Eq:signal+noise} with the estimated change-points $\hat{\tau}_{1}, \ldots, \hat{\tau}_{\hat{q}_{k}} \in \Tc(\zeta^{(k)}_T)$. Here for notational convenience, we have suppressed the dependence of $\hat{\tau}_{1}, \ldots, \hat{\tau}_{\hat{q}_{k}}$ on  $\zeta^{(k)}_T$. Further, denote by $n_{k}$ the total number of estimated parameters, including the number of free parameters in $\Thetab_{1}, \ldots, \Thetab_{\hat{q}_{k}+1}$ (N.B. this can be different from the dimensionality of each $\Thetab_{j}$ multiplied by the number of segments, as e.g. in \ref{Scen:change_in_slope}). Then the strengthened Schwarz Information Criterion (sSIC) is 
\begin{align}
\label{Eq:SIC} 
\mbox{sSIC}(k) = -2 \sum_{j=1}^{\hat{q}_{k}+1} \ell(Y_{\hat{\tau}_{j-1}+1},\ldots,Y_{\hat{\tau}_{j}}; \hat{\Thetab}_{j})+ n_{k} \log^\alpha(T),
\end{align}
for some pre-given $\alpha \ge 1$, with $\hat{\tau}_{0}=0$ and $\hat{\tau}_{\hat{q}_{k}+1}=T$. When $\alpha=1$, we recover the well-known Schwarz Information Criterion (SIC). 

One of the reasons we use sSIC here is to facilitate our theoretical development below. In fact, once we obtain the NOT solution path via Algorithm~\ref{Alg:not_solution_path}, other information criteria, such as MBIC \citep{ZhangSiegmund2007} or Minimum Description Length (MDL; \citet{Davis2006}), could conceivably be used for model (or equivalently, threshold) selection.

\subsection{Theoretical properties of NOT with the sSIC}
\label{Sec:sSIC_theory}
In this section, we analyse the theoretical behaviour of NOT with the sSIC in Scenarios \ref{Scen:change_in_mean} and \ref{Scen:change_in_slope}. Here we focus on the situation where the number of change-points $q$ is fixed (i.e. does not increase with $T$) and the spacings between consecutive change-points are large (i.e. $\sim T$). This is typical for the theoretical development of information-criteron-based approaches, and reflects the fact that such approaches tend to work better in practice for signals with a moderate number of change-points with large spacings between them. See also \citet{Yao1988}. Again, for notational convenience, we set $\sigma_0 = 1$. 
Our results below provide theoretical justifications for using NOT with the sSIC. In contrast to Algorithm~\ref{Alg:not_algorithm}, here one does not need to supply a threshold. 

\begin{Theorem}
	\label{Thm:consistency_gaussian_const_sSIC}
	Suppose $Y_{t}$ follow model (\ref{Eq:signal+noise}) in Scenario \ref{Scen:change_in_mean}. Let $\delta_{T} = \min_{j=1,\ldots,q+1}(\tau_{j}-\tau_{j-1})$, $\Delta_j^\fb = |f_{\tau_{j}+1}-f_{\tau_{j}}|$ and $\fl_{T} = \min_{j = 1,\ldots,q}\Delta_j^\fb$. Furthermore, assume that $q$ is fixed, $\delta_{T}/T \ge {\Cl}_1$, $\fl_{T} \ge {\Cl}_2$ and $\max_{t=1,\ldots,T} |f_t| \le \bar{C}$ for some ${\Cl}_1,{\Cl}_2, \bar{C} > 0$. Let $\hat{q}$ and $\hat{\tau}_{1},\ldots,\hat{\tau}_{\hat{q}}$ denote, respectively, the number and locations of change-points, sorted in increasing order, estimated by NOT (via Algorithm~\ref{Alg:not_solution_path}) with the contrast function given by \eqref{Eq:contrast_const} and $\zeta_T$ picked via the sSIC using $\alpha > 1$. Then there exists a constant $C$ (not depending on $T$) such that given $M \geq 36 {\Cl}_1^{-2} \log ({\Cl}_1^{-1} T)$, 
	\begin{align*}
	\Pb{\hat{q}=q,\; \max_{j=1,\ldots,q} |\hat{\tau}_{j}-\tau_{j}| \leq C \log T} \rightarrow 1,
	\end{align*}
	as $T \rightarrow \infty$.
\end{Theorem}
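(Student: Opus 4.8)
The plan is to turn the sSIC into a penalised residual sum of squares and then to pit the fit quality of a consistent model, which Theorem~\ref{Thm:consistency_gaussian_const} places on the solution path, against the penalty, thereby excluding both under- and over-estimation. For a candidate change-point set $\mathcal{B}$ let $P_{\mathcal{B}}$ denote the orthogonal projection onto the space of vectors constant on the segments induced by $\mathcal{B}$. Since $\sigma_0=1$ and the errors are Gaussian, $-2\sum_j\ell(\cdot;\hat{\Thetab}_j)=\norm{\Yb-P_{\mathcal{B}}\Yb}^2$ up to the additive constant $T\log(2\pi)$ free of $\mathcal{B}$, so selecting $\zeta_T$ by minimising sSIC along the path is equivalent to minimising $R(\mathcal{B}):=\norm{\Yb-P_{\mathcal{B}}\Yb}^2+(\abs{\mathcal{B}}+1)\log^\alpha T$. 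Writing $\Yb=\fb+\varepsilonb$ and using the idempotency of $P_{\mathcal{B}}$, I would record $\norm{\Yb-P_{\mathcal{B}}\Yb}^2=\norm{(I-P_{\mathcal{B}})\fb}^2+2\ip{(I-P_{\mathcal{B}})\fb}{\varepsilonb}+\norm{\varepsilonb}^2-\norm{P_{\mathcal{B}}\varepsilonb}^2$, where the first term is the approximation bias and the last two are the noise contributions.

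The first step is to exhibit a consistent model on the path. Under the present assumptions $\delta_T^{1/2}\fl_T$ is at least of order $\sqrt{T}$, so the admissible window $[C_1\sqrt{\log T},C_2\delta_T^{1/2}\fl_T)$ of Theorem~\ref{Thm:consistency_gaussian_const} is eventually non-empty, and the lower bound imposed on $M$ implies the interval-count requirement of that theorem once $\delta_T$ is of order $T$. For any fixed threshold in the window, Theorem~\ref{Thm:consistency_gaussian_const} then supplies, on an event $\mathcal{A}_T$ of probability tending to one, a path element $G$ with $\abs{G}=q$ and $\max_j\abs{\hat\tau_j-\tau_j}\le C_3\log T$. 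For such $G$ the bias is $O(\log T)$ (each location error is at most $C_3\log T$ and each jump at most $2\bar C$), so $\norm{\Yb-P_G\Yb}^2=\norm{\varepsilonb}^2+O_p(\log T)$. Because the selected set $\hat{\mathcal{B}}$ minimises $R$, it obeys $\norm{\Yb-P_{\hat{\mathcal{B}}}\Yb}^2\le\norm{\varepsilonb}^2+(q-\abs{\hat{\mathcal{B}}})\log^\alpha T+O_p(\log T)$, and the entire remaining task is to show this is impossible unless $\abs{\hat{\mathcal{B}}}=q$ and $\hat{\mathcal{B}}$ is accurate.

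The key technical ingredient is a uniform deviation event, to be established by Gaussian concentration together with a union bound over the at most $T^{m}$ subsets of each size $m$ and over all $m$, on which $\norm{P_{\mathcal{B}}\varepsilonb}^2\le C'(\abs{\mathcal{B}}+1)\log T$ and $\abs{\ip{(I-P_{\mathcal{B}})\fb}{\varepsilonb}}\le\norm{(I-P_{\mathcal{B}})\fb}\sqrt{C'(\abs{\mathcal{B}}+1)\log T}$ simultaneously for every $\mathcal{B}$. By the arithmetic--geometric inequality these bounds yield $\norm{\Yb-P_{\mathcal{B}}\Yb}^2-\norm{\varepsilonb}^2\ge\tfrac12\norm{(I-P_{\mathcal{B}})\fb}^2-C''(\abs{\mathcal{B}}+1)\log T$ for all $\mathcal{B}$ at once. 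For under-estimation, if $\abs{\hat{\mathcal{B}}}<q$ then some true change-point lies interior to a fitted segment; since segments have length at least $\Cl_1 T$ and jumps at least $\Cl_2$, the bias is at least of order $T$, and combined with the displayed lower bound this contradicts the inequality of the previous paragraph, whose right-hand side is only $O(\log^\alpha T)$. For over-estimation, $\abs{\hat{\mathcal{B}}}=m>q$ forces $(m-q)\log^\alpha T\le C''(m+1)\log T+O_p(\log T)$; but $m-q\ge(m+1)/(q+2)$ and $\alpha>1$ make the left side dominate term by term for large $T$, again a contradiction. This over-estimation bound is the main obstacle: one must control the RSS reduction caused by spurious change-points uniformly over a data-dependent, non-nested family of candidate sets of every size, which is exactly what the maximal inequality provides, and it is here that the strengthening $\alpha>1$ is indispensable, since plain SIC ($\alpha=1$) would leave the two sides of the same order.

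Having excluded $\abs{\hat{\mathcal{B}}}\ne q$, the case $m=q$ of the same inequality gives $\norm{(I-P_{\hat{\mathcal{B}}})\fb}^2=O_p(\log T)$. The final step converts this small bias into location accuracy: were the nearest estimate to a true $\tau_j$ at distance $h$, a constant fit across an estimate-free window straddling $\tau_j$ would incur bias at least of order $h\,\Cl_2^2$, whence $h=O(\log T)$ for every $j$; as there are exactly $q$ estimates and $q$ true points and (because $\delta_T$ is of order $T$) these $O(\log T)$ neighbourhoods are disjoint for large $T$, the estimates pair off with the true change-points and $\max_j\abs{\hat\tau_j-\tau_j}\le C\log T$. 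This holds on $\mathcal{A}_T$ intersected with the uniform-deviation event, which has probability tending to one, completing the proof.
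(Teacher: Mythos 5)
Your proposal is correct and reaches the same conclusion by the same three-part architecture as the paper (exhibit a good candidate on the solution path via Theorem~\ref{Thm:consistency_gaussian_const}, rule out $\hat q\neq q$ by comparing penalised residual sums of squares, then upgrade to location accuracy), but each part is driven by a genuinely different technical device. For under-estimation the paper invokes Lemma~3 of \citet{Yao1988} to show the residual variance exceeds $1+\delta$; you instead give a direct order-$T$ lower bound on the approximation bias, which is more self-contained — just make the pigeonhole step explicit (with $\hat q<q$ estimates and $q$ true change-points mutually separated by $\delta_T\ge \Cl_1 T$, some $\tau_j$ has every estimate at distance $\ge\delta_T/3$, so the constant fit on its containing segment incurs squared bias $\ge c\,\Cl_1\Cl_2^2\,T$). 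For over-estimation the paper avoids a large union bound by passing to a ``saturated oracle'' (adding all true change-points to the candidate set so that $\fb$ is constant on each refined segment and $\norm{P_{\mathcal{B}}\varepsilonb}^2$ decomposes into at most $\hat q_k+q+1$ terms of the form $\ip{\mathbf{1}_{s,e}}{\varepsilonb}^2$, controlled on an event over only $O(T^2)$ intervals); you instead prove a maximal inequality uniform over all $\binom{T}{m}\le T^m$ subsets of every size $m$ via $\chi^2_{m+1}$ concentration, which dispenses with the saturation trick at the cost of a heavier (but still convergent, since the tail $T^{-c(m+1)}$ beats the cardinality $T^m$ for $c>1$) union bound, and has the advantage of applying verbatim to the data-dependent $\hat{\mathcal{B}}$ and of generalising to scenarios where the projection does not decompose segment-wise. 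For location accuracy the paper constructs a ``near-saturated oracle'' replacing $\tau_{j^*}$ by $\tau_{j^*}\pm C\log T$ and compares fits; you extract $\norm{(I-P_{\hat{\mathcal{B}}})\fb}^2=O_p(\log T)$ directly from your maximal inequality and convert it into $\min_i\abs{\hat\tau_i-\tau_j}=O(\log T)$ by a local bias computation, which is arguably cleaner. Both routes are sound, and both isolate $\alpha>1$ as the point where the penalty gap $(m-q)\log^\alpha T$ must dominate the $O((m+1)\log T)$ stochastic reduction in residual sum of squares.
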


\begin{Theorem}
	\label{Thm:consistency_gaussian_linear_sSIC}
	Suppose $Y_{t}$ follow model (\ref{Eq:signal+noise}) in Scenario~\ref{Scen:change_in_slope}. Let $\delta_{T} = \min_{j=1,\ldots,q+1}(\tau_{j}-\tau_{j-1})$, $\Delta_j^\fb =  |2f_{\tau_{j}}-f_{\tau_{j}-1}-f_{\tau_{j}+1}|$, $\fl_{T} = \min_{j = 1,\ldots,q}\Delta_j^\fb$. Furthermore, assume that $q$ is fixed, $\delta_{T}/T \ge {\Cl}_1$, $\fl_{T} T \ge {\Cl}_2$ and $\max_{t=1,\ldots,T} |f_t| \le \bar{C}$ for some ${\Cl}_1,{\Cl}_2, \bar{C} > 0$. Let $\hat{q}$ and $\hat{\tau}_{1},\ldots,\hat{\tau}_{\hat{q}}$ denote, respectively, the number and locations of change-points, sorted in increasing order, estimated  by NOT (via Algorithm~\ref{Alg:not_solution_path}) with the contrast function given by \eqref{Eq:contrast_kink} and $\zeta_T$ picked via the sSIC using $\alpha > 1$.  Then there exists a constant $C$  (not depending on $T$)  such that given $M \geq 36 {\Cl}_1^{-2} \log ({\Cl}_1^{-1} T)$, 
	\begin{align*}
	\Pb{\hat{q}=q,\; \max_{j=1,\ldots,q} |\hat{\tau}_{j}-\tau_{j}| \leq C \sqrt{T\log T}} \rightarrow 1,
	\end{align*}
	as $T \rightarrow \infty$.
\end{Theorem}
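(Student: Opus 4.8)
The plan is to show that, with probability tending to one, the model selected from the NOT solution path (Algorithm~\ref{Alg:not_solution_path}) by minimising the sSIC \eqref{Eq:SIC} has exactly $q$ change-points, each within $C\sqrt{T\log T}$ of a true one. Throughout I would condition on a ``good event'' $\mathcal{A}_T$, of probability $\to 1$, on which two things hold. First, the drawn intervals in $F_T^M$ contain, around every true change-point $\tau_j$, at least one interval that is short relative to $\delta_T$ yet long enough for the kink contrast \eqref{Eq:contrast_kink} to exceed the relevant thresholds; the bound $M \geq 36\,{\Cl}_1^{-2}\log({\Cl}_1^{-1}T)$ together with $\delta_T/T \ge {\Cl}_1$ guarantees this by the same interval-sampling (coverage) argument used for Theorem~\ref{Thm:consistency_gaussian_linear}. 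Second, the Gaussian noise is uniformly controlled: $\max \abs{\ip{\varepsilonb}{\phib_{s,e}^{b}}} \le c_0\sqrt{\log T}$ over all sampled $(s,e)$ and all admissible $b$, which follows from a maximal inequality since each $\phib_{s,e}^{b}$ is unit-norm and only $O(MT)$ such projections arise, with $\log(MT)\asymp\log T$. Since the solution path is a finite family of $N$ nested-in-size candidate models, it then suffices to compare $\mathrm{sSIC}(k)$ across the path and rule out both underestimation ($\hat q_k<q$) and overestimation ($\hat q_k>q$), and finally to pin down the localisation. The overall architecture parallels the proof of Theorem~\ref{Thm:consistency_gaussian_const_sSIC}.

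First I would fix a reference model on the path. Because $\delta_T\sim T$ and $\fl_T\ge{\Cl}_2/T$, the admissible threshold window of Theorem~\ref{Thm:consistency_gaussian_linear}, namely $C_1\sqrt{\log T}\le\zeta_T<C_2\delta_T^{3/2}\fl_T$, is nonempty for large $T$ (its right endpoint is of order $T^{1/2}\gg\sqrt{\log T}$). Hence on $\mathcal{A}_T$ the path contains an index $k^\circ$ for which $\mathcal{T}(\zeta^{(k^\circ)})$ has exactly $q$ change-points, each within the refined distance $C\sqrt{T\log T}$ of a true one; this refined localisation is where Lemma~\ref{lem:consistency_gaussian_linear} is invoked. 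Its residual sum of squares differs from the oracle (true-location) fit only by noise fluctuation plus the small bias of a $\sqrt{T\log T}$ misplacement, so $\mathrm{sSIC}(k^\circ)$ serves as the benchmark against which all other candidates are measured.

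Next I would dispose of underestimation and overestimation. If $\hat q_k<q$, some true kink of size $\Delta_j^\fb\ge\fl_T$ on a segment of length $\asymp\delta_T$ is left unfitted; the squared kink contrast quantifies the resulting excess residual, which is of order $\delta_T^{3}(\fl_T)^2\gtrsim T$, swamping the $O(\log^\alpha T)$ difference in the penalty term $n_k\log^\alpha(T)$, so $\mathrm{sSIC}(k)>\mathrm{sSIC}(k^\circ)$. If $\hat q_k>q$, each spurious kink reduces the residual by at most its squared contrast, which on $\mathcal{A}_T$ is dominated by $\max\ip{\varepsilonb}{\phib_{s,e}^{b}}^2 = O(\log T)$; since in Scenario~\ref{Scen:change_in_slope} each extra change-point costs one additional free parameter, the net change in sSIC is at least $(\hat q_k-q)\log^\alpha T - O((\hat q_k-q)\log T)=(\hat q_k-q)\log T\bigl(\log^{\alpha-1}T-O(1)\bigr)>0$ for large $T$. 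This last step is precisely where the strengthening $\alpha>1$ is essential: an ordinary SIC penalty ($\alpha=1$) would not beat the $O(\log T)$ spurious gain. Together these two bounds force $\hat q_{\hat k}=q$ on $\mathcal{A}_T$.

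Finally, given $\hat q_{\hat k}=q$, all competing correct-cardinality models carry the same penalty, so the sSIC minimiser is the one of smallest residual sum of squares, and I would translate ``smallest residual'' into ``good localisation''. The main obstacle, and the crux of the whole argument, is the sharp localisation rate $\sqrt{T\log T}$, which is strictly better than the $T^{2/3}(\log T)^{1/3}$ read off from Theorem~\ref{Thm:consistency_gaussian_linear}. The sharper rate comes from comparing a candidate location $b$ with the truth $\tau_j$ through the \emph{local increment} of the contrast rather than its global maximum: the deterministic loss from misplacing the kink by $\eta=\abs{b-\tau_j}$ grows like $(\Delta_j^\fb)^2\eta^2\delta_T$, while the competing stochastic term is $\ip{\fb}{\phib}\,(Z(b)-Z(\tau_j))$ with $Z(\cdot)=\ip{\varepsilonb}{\phib_{\cdot}}$, whose increment has variance of order $(\eta/\delta_T)^2$ because neighbouring kink-contrast vectors are highly correlated; this yields a fluctuation of order $(\Delta_j^\fb)\delta_T^{1/2}\eta\sqrt{\log T}$. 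Balancing the two gives $\eta\asymp\sqrt{\log T}/(\Delta_j^\fb\,\delta_T^{1/2})\asymp\sqrt{T\log T}$ under $\delta_T\sim T$ and $\Delta_j^\fb\asymp T^{-1}$. Making this increment analysis uniform over the path and showing that any $\hat q=q$ model with a location off by more than $C\sqrt{T\log T}$ has residual exceeding that of $k^\circ$ by more than the admissible noise fluctuation is the delicate part; I would carry it out by the argument of Lemma~\ref{lem:consistency_gaussian_linear}, combine it with the two cardinality bounds above, and conclude on $\mathcal{A}_T$, whose probability tends to one.
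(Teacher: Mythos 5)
Your proposal follows essentially the same route as the paper's proof: a good reference candidate on the solution path obtained from the strengthened localisation lemma (Lemma~\ref{lem:consistency_gaussian_linear}), elimination of under- and over-estimation by comparing residual sums of squares against the $\log^\alpha T$ penalty with $\alpha>1$ doing the decisive work, and the final $\sqrt{T\log T}$ localisation from balancing the quadratic deterministic loss of a misplaced kink (of order $(\Delta_j^\fb)^2\delta_T\eta^2$, as in Lemma~\ref{Lem:cusum_cp_kink_distance_2}) against a noise cross-term of order $\sqrt{\kappa\log T}$. The one place your sketch glosses over a needed device is the overestimation step: since an overfitting candidate's segments may straddle true kinks, the claim that each spurious kink buys only $O(\log T)$ in residual is made rigorous in the paper by passing to a ``saturated oracle'' that adjoins all true change-points (and drops the continuity constraint), whose residual reduction is then controlled by $\chi^2_2$ projections of the noise.
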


For a discussion of the optimality of the rates obtained in Theorems~\ref{Thm:consistency_gaussian_const_sSIC} and \ref{Thm:consistency_gaussian_linear_sSIC} regarding the accuracy of the estimated change-point locations, see Section~\ref{Sec:theory}.

\subsection{Computational complexity}
\label{Sec:computational_complexity}
Here we elaborate on the computational complexity of Algorithms~\ref{Alg:not_algorithm} and \ref{Alg:not_solution_path}. For both algorithms, the task of computation can be divided into two main parts. First, we need to evaluate a chosen contrast function for all points in the $M$ randomly picked intervals with their endpoints in $\{1,\ldots,T\}$. In the second part, we find potential locations of the change-points for a single threshold $\zeta_{T}$ in the case of Algorithm~\ref{Alg:not_algorithm} and for all possible thresholds in the case of Algorithm~\ref{Alg:not_solution_path}.

Naturally, the computational complexity of the first part depends on the cost of computing the contrast function for a single interval. In all scenarios studied in this paper, this cost is linear in the length of the interval, i.e.  the cost of computing $\{ \cont{s}{e}{b}{\Yb} \}_{b=s}^{e-1}$ is $O(e-s+1)$. This is explained in detail in Section~\ref{Sec:contrast_fun_linear_time} of the online supplementary materials. The intervals drawn in the procedures have approximately $O(T)$ points on average, therefore the computational complexity of the first part of the computations is $O(MT)$ in a typical application. Importantly, as the calculations for one interval are completely independent of the calculations for another, it is straightforward to run these computations in parallel.  
In addition, for the second part, as mentioned in detail in the Section~\ref{Sec:solution_path_algorithm_details} of online supplementary materials, its computational complexity is typically less than $O(MT)$, thus bringing the total computational complexity of both Algorithm~\ref{Alg:not_algorithm} and Algorithm~\ref{Alg:not_solution_path} to $O(MT)$.

Figure~\ref{Fig:computation_time} shows execution times for the implementation of Algorithm~\ref{Alg:not_solution_path} available in the \proglang{R} package \pkg{not}, with the data $Y_{t}$, $t=1,\ldots, T$, being i.i.d. $\mathcal{N}(0,1)$. The running times appears to scale linearly both in $T$ (Figure~\ref{Fig:computation_time_sol_path_fixed_M}) and in $M$ (Figure~\ref{Fig:computation_time_sol_path_fixed_T}), which provides evidence that the computational complexity of Algorithm~\ref{Alg:not_solution_path} in this particular example is practically of order $O(MT)$. 

Finally, we remark that the memory complexity of Algorithm~\ref{Alg:not_solution_path} is also $O(MT)$, which combined with its low computational complexity implies that our approach can handle problems of size $T$ in the range of millions.

\begin{figure}[!ht]
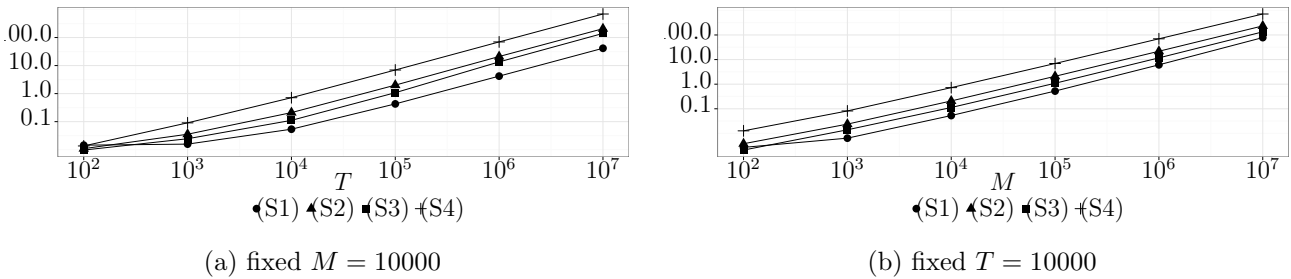

	\null\hfill
	\subfloat[][fixed $M=10000$]{
			\resizebox{0.48\textwidth}{!}{\input{tikz/computation_time_sol_path_fixed_M.tex}}
\label{Fig:computation_time_sol_path_fixed_M}}
	\hfill
	\subfloat[][fixed $T=10000$]{
			\resizebox{0.48\textwidth}{!}{\input{tikz/computation_time_sol_path_fixed_T.tex}}
\label{Fig:computation_time_sol_path_fixed_T}}
	\hfill\null
	\caption{\label{Fig:computation_time} Execution times (in seconds) for the implementation of Algorithm~\ref{Alg:not_solution_path} available in \proglang{R} package \pkg{not} \citep{baranowski2016notpackage}, for various feature detection problems with the data $Y_{t}$, $t=1,\ldots,T$ being i.i.d. $\Nc(0,1)$. In a single run, computations for the input of the algorithm are performed in parallel, using 8 virtual cores of an Intel Xeon 3.6 GHz CPU with 16 GB of RAM. The computation times are averaged over 10 runs in each case.}
\end{figure}

\subsection{Other practical considerations}
\subsubsection{Choice of $M$}
\label{Sec:parameter_choice_M}
As can be seen in Theorem~\ref{Thm:consistency_gaussian_const} and Theorem~\ref{Thm:consistency_gaussian_linear}, the minimum required value for $M$ typically grows with $T$ (i.e. for a fixed number of change-points, at $O(\log T)$). In practice, when the number of observations is of the order of thousands, we would recommend setting $M=10000$. 
 With this value of $M$, the implementation of Algorithm~\ref{Alg:not_algorithm} provided in the \proglang{R} \pkg{not} package \citep{baranowski2016notpackage} achieves the average computation time not longer than $2$ seconds in all examples in Section~\ref{Sec:simulation_study} using a single core of an Intel Xeon 3.6 GHz CPU. This can be accelerated further, as the \pkg{not} package allows for computing the contrast function over the intervals drawn in parallel using all available CPU cores.
However, caution must be exercised for signals with a large expected number of change-points, for which $M$ may need to be increased. For example, \citet{MFL2017} found that NOT with $M=10^5$ offered better practical performance on the change-point-rich signals they considered. 

\subsubsection{Early stopping for NOT with the sSIC}
Note that if the number of change-points in the data is expected to be rather moderate, then it may not be necessary to calculate sSIC for all $k$. In practice, solutions on the path corresponding to very small values of $\zeta_{T}$ contain many estimated change-points.
Such solutions are unlikely to minimise \eqref{Eq:SIC}. Therefore by considering $|\Tc(\zeta^{(k)}_T)|\leq q_{max}$ we could achieve some computational gains, without adversely impacting the overall performance of the methodology. As such, in all applications presented in this work we compute sSIC only for $k$ such that $|\Tc(\zeta^{(k)}_T)| \leq q_{max}$ with $q_{max}=25$.

\section{NOT with dependent or heavy-tailed noise}
\label{Sec:noise}
\subsection{NOT with dependent noise}
\label{Sec:dependentnoise}
When the errors $\varepsilon_t$ in model (\ref{Eq:signal+noise}) are dependent with $\mathbb{E}\varepsilon_t=0$ and $\mathrm{Var}(\varepsilon_t)=1$, the aforementioned NOT procedure can still be applied as a quasi-likelihood-type procedure. Conceivably, using NOT here would incur information loss. As is shown in Corollaries~\ref{Cor:consistency_gaussian_const} and~\ref{Cor:consistency_gaussian_linear} in Scenarios~\ref{Scen:change_in_mean} and \ref{Scen:change_in_slope}, NOT is still consistent if we replace the noise's i.i.d. assumption in Theorems~\ref{Thm:consistency_gaussian_const} and~\ref{Thm:consistency_gaussian_linear} by stationarity with short-memory. This new dependence assumption is satisfied by a large class of stationary time series models, including autoregressive moving average (ARMA) models. See also numerical examples in Section~\ref{Sec:sim_additional} of the online supplementary materials. Again we assume that $\sigma_0$ is known. However, if not, MAD-type estimators based on the simple differencing are no longer appropriate for dependent data. We comment on this issue after the corollaries.
\begin{Corollary}
	\label{Cor:consistency_gaussian_const}
	Suppose $Y_{t}$ follow model (\ref{Eq:signal+noise}) in Scenario \ref{Scen:change_in_mean}, but with $\{\varepsilon_t\}$ being a stationary short-memory Gaussian process, i.e. the auto-correlation function of $\{\varepsilon_t\}$, denoted by $\rho_k$ for any lag $k \in \mathbb{Z}$, satisfies $\sum_{k=-\infty}^{\infty} |\rho_k| < \infty$. Then, the conclusion of Theorem~\ref{Thm:consistency_gaussian_const} still holds (with different constants).
\end{Corollary}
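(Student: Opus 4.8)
The plan is to observe that the only place where the i.i.d.\ assumption enters the proof of Theorem~\ref{Thm:consistency_gaussian_const} is through a single uniform deviation bound on the noise, and that this bound survives essentially unchanged under the short-memory Gaussian assumption. Since the contrast function in Scenario~\ref{Scen:change_in_mean} is $\cont{s}{e}{b}{\Yb}=\abs{\ip{\Yb}{\psib_{s,e}^{b}}}$ and is linear in the data, we may split $\ip{\Yb}{\psib_{s,e}^{b}}=\ip{\fb}{\psib_{s,e}^{b}}+\ip{\varepsilonb}{\psib_{s,e}^{b}}$, where the first term is deterministic and unaffected by the noise distribution. Thus the entire stochastic content of the argument is carried by the quantities $\ip{\varepsilonb}{\psib_{s,e}^{b}}$, and the proof reduces to an event of the form $\max_{(s,e,b)}\abs{\ip{\varepsilonb}{\psib_{s,e}^{b}}}\le c\sqrt{\log T}$ holding with probability tending to one.

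First I would verify that this maximal bound continues to hold. Because $\varepsilonb$ is jointly Gaussian and each $\psib_{s,e}^{b}$ is a fixed vector with $\norm{\psib_{s,e}^{b}}_2=1$, the inner product $\ip{\varepsilonb}{\psib_{s,e}^{b}}$ is a mean-zero Gaussian random variable. Its variance is
\[
\Var\!\left(\ip{\varepsilonb}{\psib_{s,e}^{b}}\right)=\sum_{t,t'}\psi_{s,e}^{b}(t)\,\psi_{s,e}^{b}(t')\,\rho_{t-t'},
\]
which, by the absolute summability of $\{\rho_k\}$ together with the Cauchy--Schwarz inequality $\sum_t\abs{\psi_{s,e}^{b}(t)\psi_{s,e}^{b}(t+k)}\le\norm{\psib_{s,e}^{b}}_2^2=1$, is bounded in absolute value by $V:=\sum_{k=-\infty}^{\infty}\abs{\rho_k}<\infty$. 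Crucially, this variance bound is uniform over all triples $(s,e,b)$, precisely because every contrast vector is $\ell_2$-normalised.

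Given this uniform Gaussian-variance bound, the rest follows the i.i.d.\ argument verbatim. Each $\ip{\varepsilonb}{\psib_{s,e}^{b}}$ obeys the standard Gaussian tail bound with variance proxy $V$ in place of $1$, so a union bound over the $O(T^{3})$ admissible triples yields $\max_{(s,e,b)}\abs{\ip{\varepsilonb}{\psib_{s,e}^{b}}}\le c\sqrt{V\log T}$ with probability tending to one. Feeding this into the deterministic part of the proof of Theorem~\ref{Thm:consistency_gaussian_const} reproduces the conclusion, with the constants $\Cl,C_1,C_2,C_3$ rescaled by factors depending on $V$. Note that only the marginal Gaussian tails are needed; no independence among the inner products themselves is ever used, so dependence in $\varepsilonb$ causes no difficulty beyond inflating the variance.

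The main obstacle, such as it is, is largely bookkeeping: one must confirm that independence is genuinely confined to this single maximal-deviation step in the original proof, and does not re-enter through, say, a concentration argument for segment-wise averages. Since the CUSUM contrast is purely linear, all such quantities are themselves Gaussian inner products covered by the same variance bound, so no additional probabilistic input is required. The short-memory condition $\sum_k\abs{\rho_k}<\infty$ is exactly what is needed---and all that is needed---to keep the variance proxy finite and uniform.
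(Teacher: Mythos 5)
Your proposal is correct and follows essentially the same route as the paper: both reduce the problem to a uniform variance bound $\Var\big(\ip{\varepsilonb}{\psib_{s,e}^{b}}\big)\le\sum_{k}\abs{\rho_k}$ for the Gaussian inner products (the paper via the operator norm of the autocorrelation matrix, you via Cauchy--Schwarz on the shifted products — an equivalent computation), then rerun Steps One and Two with $\lambda_T$ inflated by $\big(\sum_k\abs{\rho_k}\big)^{1/2}$ and adjust the constants. Your closing remark that the event $B_T$ is also just a normalised Gaussian inner product, so the same bound applies, matches the paper's treatment exactly.
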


\begin{Corollary}
	\label{Cor:consistency_gaussian_linear}
	Suppose $Y_{t}$ follow model (\ref{Eq:signal+noise}) in Scenario \ref{Scen:change_in_slope}, but with $\{\varepsilon_t\}$ being a stationary short-memory Gaussian process. The conclusion of Theorem~\ref{Thm:consistency_gaussian_linear} holds (with different constants).
\end{Corollary}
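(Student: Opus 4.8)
The plan is to reuse the proof of Theorem~\ref{Thm:consistency_gaussian_linear} essentially verbatim, isolating the one place where the i.i.d. structure of $\varepsilonb$ enters and replacing it with an argument valid under short-memory dependence. Throughout, set $\sigma_0 = 1$ and write $\Yb = \fb + \varepsilonb$, so that by linearity of the inner product, $\ip{\Yb}{\phib_{s,e}^{b}} = \ip{\fb}{\phib_{s,e}^{b}} + \ip{\varepsilonb}{\phib_{s,e}^{b}}$ and hence $\cont{s}{e}{b}{\Yb}^2 = \big(\ip{\fb}{\phib_{s,e}^{b}} + \ip{\varepsilonb}{\phib_{s,e}^{b}}\big)^2$. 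The deterministic part $\ip{\fb}{\phib_{s,e}^{b}}$ is untouched by the change of noise model, so every signal-side estimate in the proof of Theorem~\ref{Thm:consistency_gaussian_linear} (that some narrow interval straddling each true $\tau_j$ produces a signal contrast of order $\delta_T^{3/2}\fl_T$, and that change-point-free intervals produce negligible signal contrast) carries over unchanged. Likewise, the combinatorial event that $F_T^M$ contains a suitably narrow interval around each $\tau_j$ depends only on the random-interval scheme, not on the noise, so the requirement on $M$ is inherited verbatim. What remains is to control the stochastic term uniformly over all admissible triples $(s,e,b)$.

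The key stochastic step is a uniform upper bound on $\max_{s,e,b} \abs{\ip{\varepsilonb}{\phib_{s,e}^{b}}}$. Since $\{\varepsilon_t\}$ is Gaussian, each $\ip{\varepsilonb}{\phib_{s,e}^{b}}$ is exactly a centred Gaussian with variance $(\phib_{s,e}^{b})' \Sigma\, \phib_{s,e}^{b}$, where $\Sigma = (\rho_{|t-t'|})_{t,t'=1}^{T}$ is the Toeplitz autocovariance matrix of the process. Because $\norm{\phib_{s,e}^{b}} = 1$, this variance is at most $\lambda_{\max}(\Sigma) = \norm{\Sigma}_{\mathrm{op}}$. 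The short-memory assumption $\sum_{k} \abs{\rho_k} < \infty$ is exactly what bounds this operator norm independently of $T$: by the standard relation between the spectral norm of a Toeplitz matrix and the supremum of the spectral density $g(\lambda) = \sum_{k} \rho_k e^{-\mathrm{i} k \lambda}$, one has $\norm{\Sigma}_{\mathrm{op}} \le \sup_{\lambda} g(\lambda) \le \sum_{k} \abs{\rho_k} =: V < \infty$. Hence every projected noise term is $\Nc(0, v_{s,e,b})$ with $v_{s,e,b} \le V$, and a Gaussian tail bound together with a union bound over the $O(T^3)$ possible triples $(s,e,b)$ yields $\max_{s,e,b}\abs{\ip{\varepsilonb}{\phib_{s,e}^{b}}} \le C\sqrt{V \log T}$ on an event of probability tending to one --- precisely the bound used in the i.i.d. case, up to the harmless factor $\sqrt{V}$.

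With this bound in hand, I would feed it into the remainder of the proof of Theorem~\ref{Thm:consistency_gaussian_linear}. Expanding $\cont{s}{e}{b}{\Yb}^2$, both the cross term $2\ip{\fb}{\phib_{s,e}^{b}}\ip{\varepsilonb}{\phib_{s,e}^{b}}$ and the pure-noise term $\ip{\varepsilonb}{\phib_{s,e}^{b}}^2$ are controlled by the single bound above, so no quadratic form in $\varepsilonb$ beyond $(\text{linear projection})^2$ ever appears. The no-false-positive step and the detection step then both go through with the constants $\Cl, C_1, C_2, C_3$ rescaled to absorb $\sqrt{V}$; this is exactly the meaning of ``holds with different constants'' in the statement, and it leaves the rates in \eqref{Eq:consistency_gaussian_linear} intact. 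Corollary~\ref{Cor:consistency_gaussian_const} is obtained identically, with $\phib_{s,e}^{b}$ replaced by the CUSUM vector $\psib_{s,e}^{b}$.

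The main obstacle, and the only genuinely new ingredient, is the uniform variance bound of the second paragraph; everything else is bookkeeping. Two points deserve care. First, the contrast vectors for Scenario~\ref{Scen:change_in_slope} are built (via Gram--Schmidt against $\mathbf{1}_{s,e}$ and $\gammab_{s,e}$) from a fixed, finite number of unit vectors, so differences and quadratic forms of projections of $\varepsilonb$ remain centred Gaussian with variance $O(V)$, and the bounded operator norm of $\Sigma$ still controls them. Second, wherever the i.i.d. proof exploits exact orthonormality of $\{\phib_{s,e}^{b}\}_b$ to bound the noise via a Bessel-type inequality, under dependence this is replaced by the cruder but sufficient bound through $\norm{\Sigma}_{\mathrm{op}}$, at the cost of a constant factor only --- again absorbed into the rescaled constants.
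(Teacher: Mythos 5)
Your proposal is correct and follows essentially the same route as the paper: the paper proves Corollary~\ref{Cor:consistency_gaussian_const} by bounding the variance of each projected noise term by the operator norm of the autocorrelation matrix, which it controls by $P=\sum_{k}|\rho_k|$ (via $\|\mathbf{P}_T\|_2 \le \sqrt{\|\mathbf{P}_T\|_1\|\mathbf{P}_T\|_\infty}$ rather than your spectral-density bound, but with the identical constant), then reruns the Bonferroni argument with $\lambda_T=\sqrt{8P\log T}$ and rescaled constants, and states that Corollary~\ref{Cor:consistency_gaussian_linear} follows by the same modification with $\psib_{s,e}^{b}$ replaced by $\phib_{s,e}^{b}$. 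Your treatment of the events $A_T$ and $B_T$ and of the unchanged combinatorial/signal-side steps matches the paper's.
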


In our theoretical development for the dependent noise setting, the smallest permitted threshold to be used in the NOT algorithm depends linearly on $\sigma_0(\sum_{k=-\infty}^{\infty} |\rho_k|)^{1/2}$. 
This quantity can also be viewed as a generalisation to the independent noise setting, where the threshold is proportional to $\sigma_0$ (since $\sum_{k=-\infty}^{\infty} |\rho_k| = 1$). More details of its derivation is provided in Section~\ref{Sec:proof_of_consistency_cor_const} of the online supplementary materials. 

This poses a few challenges in the practical application of NOT to signals with dependent noise: (i) the (pre-)estimation of the residuals $\varepsilon_t$; (ii) the estimation $\sigma_0$, if unknown; and (iii) the estimation of $\sigma_0(\sum_{k=-\infty}^{\infty} |\rho_k|)^{1/2}$. These problems are known to be difficult in time series analysis in general. Possible solutions are outlined below. 

For (i), we have had some success with the wavelet-based method of \citet{JohnstoneSilverman1997}, which was implemented in \proglang{R} package \pkg{wavethresh} \citep{Nason2016}; its advantages are that it is specifically designed for dependent noise and that, being based on nonlinear wavelet shrinkage, it is particularly suited for signals with irregularities, such as (generalised) change-points. Here the Haar wavelet transform of the data is appropriate in Scenario~\ref{Scen:change_in_mean}, while a transform with respect to any wavelet that annihilates linear functions is appropriate in Scenarios~\ref{Scen:change_in_slope} and \ref{Scen:change_in_mean_and_slope}. Once the empirical residuals are obtained from (i), we could then estimate $\sigma_0$ in (ii) by its sample version, and estimate $\sigma_0(\sum_{k=-\infty}^{\infty} |\rho_k|)^{1/2}$ in (iii) in a model-based way (e.g. using the autoregressive model with its order $p$ chosen by an information criterion).

\subsection{Extension of NOT to heavy-tailed noise}
NOT appears to be relatively robust under noise misspecification. As is demonstrated later in Section~\ref{Sec:simulation_study}, it offers reasonable estimates when the noise is non-Gaussian but the Gaussian contrast functions are used. We now discuss how its performance can be improved further in the presence of heavy-tailed noise. 

In Scenario \ref{Scen:change_in_mean}, we propose to apply the following new contrast function, defined for $\Yb$ and $1\leq s\leq b<e<T$ as 
\begin{align}
\label{Eq:contrast_const_ht}
\contt{s}{e}{b}{\Yb} = \ip{\mathcal{S}_{s,e}(\Yb)}{\psib_{s,e}^{b}}
\end{align}
in our NOT procedure. Here for any vector $\vb=(v_{1},\ldots,v_{T})'$, the $i$-component of $\mathcal{S}_{s,e}(\vb)$ is given by $\mathcal{S}_{s,e}(\vb)_{i} = \sign\left(v_{i} - (e-s+1)^{-1}\sum_{t=s}^{e}v_{t}\right)$ and $\psib_{s,e}^{b}$ is defined by \eqref{Eq:basis_jump}. (For certain noise distributions, subtracting the sample median of $\vb$ instead of the sample mean would appear more appropriate.) The rationale behind \eqref{Eq:contrast_const_ht} is to assign $Y_s - \bar{\Yb}_{s,e},\ldots, Y_e - \bar{\Yb}_{s,e}$ (i.e.  residuals for fitting a curve with no change-point on a given interval) into two classes ($\pm1$, i.e. a two-point distribution, thus with light tails) and apply the contrast function to their $\pm1$ labels. Empirical performance of NOT (via Algorithm~\ref{Alg:not_solution_path}) combined with \eqref{Eq:contrast_const_ht} and sSIC is also illustrated in Section~\ref{Sec:simulation_study}. 

\section{Simulation study}
\label{Sec:simulation_study}
\subsection{Settings}
\label{Sec:sim_settings}
We consider examples following \ref{Scen:change_in_mean}--\ref{Scen:change_in_mean_and_variance} introduced in Section~\ref{Sec:contrast_function}, as well as an extra example satisfying
\begin{enumerate}[label=(S\arabic*)]
	\setcounter{enumi}{4}
	\item  \label{Scen:change_in_mean_slope_and_quad} $\sigma_{t}=\sigma_{0}$ and $f_{t}$ is a piecewise-quadratic function of $t$.
\end{enumerate}
Calculations required to derive the contrast function in \ref{Scen:change_in_mean_slope_and_quad} are similar to those shown in Section~\ref{Sec:contrast_function} for \ref{Scen:change_in_mean_and_slope}; we omit them here.

We simulate data according to Equation~\eqref{Eq:signal+noise} using the test signals \ref{Model:teeth} \texttt{teeth}, \ref{Model:blocks} \texttt{blocks}, \ref{Model:wave1} \texttt{wave1}, \ref{Model:wave2} \texttt{wave2}, \ref{Model:mix} \texttt{mix}, \ref{Model:vol} \texttt{vol} and \ref{Model:quad} \texttt{quad}, with the noise following 
\begin{enumerate}
	\item i.i.d. $\Nc(0,1)$;
	\item i.i.d. $\Nc(0,2)$;
	\item i.i.d. scaled Laplace distribution with zero-mean and unit-variance;
	\item i.i.d. scaled Student-$t_{5}$ distribution with unit-variance;
	\item a stationary Gaussian AR(1) process of $\varphi = 0.3$, with zero-mean and unit-variance.
\end{enumerate}
A detailed specification can be found in Section~\ref{Sec:simulation_models} of the online supplementary materials. Figure~\ref{Fig:model_examples} shows the examples of the data generated from models \ref{Model:teeth}--\ref{Model:quad}, as well as the estimates produced by NOT in a typical run. 

\begin{figure}[!ht]
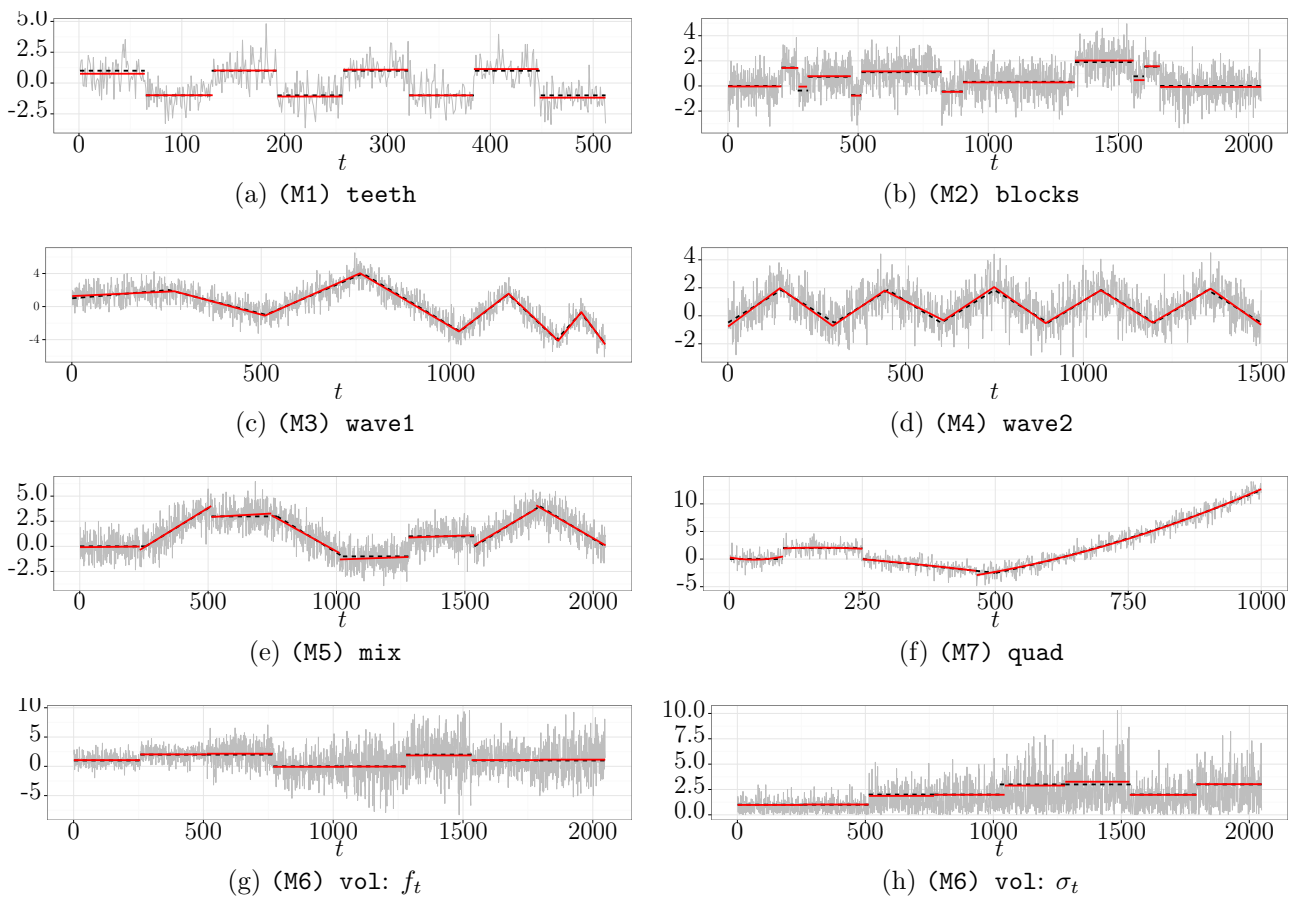

	\centering
	\subfloat[\texttt{(M1) teeth}]{
			\resizebox{0.48\textwidth}{!}{\input{tikz/teeth_example.tex}}
\label{Fig:teeth_example}}
	\subfloat[\texttt{(M2) blocks}]{
			\resizebox{0.48\textwidth}{!}{\input{tikz/blocks_example.tex}}
\label{Fig:blocks_example}}\\
	\subfloat[\texttt{(M3) wave1}]{
			\resizebox{0.48\textwidth}{!}{\input{tikz/wave1_example.tex}}
\label{Fig:wave1_example}}
	\subfloat[\texttt{(M4) wave2}]{
			\resizebox{0.48\textwidth}{!}{\input{tikz/wave2_example.tex}}
\label{Fig:wave2_example}}\\
	\subfloat[\texttt{(M5) mix}]{
			\resizebox{0.48\textwidth}{!}{\input{tikz/mix_example.tex}}
\label{Fig:mix_example}}
	\subfloat[\texttt{(M7) quad}]{
			\resizebox{0.48\textwidth}{!}{\input{tikz/quad_example.tex}}
\label{Fig:quad_example}}\\
	\subfloat[\texttt{(M6) vol}: $f_{t}$]{
			\resizebox{0.48\textwidth}{!}{\input{tikz/vol_example_mean.tex}}
\label{Fig:vol_example_mean}}
	\subfloat[\texttt{(M6) vol}: $\sigma_{t}$]{
			\resizebox{0.48\textwidth}{!}{\input{tikz/vol_example_volatility.tex}}
\label{Fig:vol_example_volatility}}\\
	\caption{Examples of data generated from simulation models studied in Section~\ref{Sec:simulation_models}.  Figure~\ref{Fig:teeth_example}-- \ref{Fig:vol_example_mean}: data series $Y_{t}$ (thin grey), true signal $f_{t}$ (dashed black), $\hat{f}_{t}$ being the least squares (LS) estimate of $f_{t}$ with the change-points estimated by NOT (thick red). Figure~\ref{Fig:vol_example_volatility}: centered data $|Y_{t}-\hat{f}_{t}|$ (thick grey), true standard deviation $\sigma_{t}$ (dashed black) and the estimated standard deviation $\hat{\sigma}_{t}$  between the change-points detected by NOT (thick red). \label{Fig:model_examples}}
\end{figure} 

\subsection{Estimators}
We apply Algorithm~\ref{Alg:not_solution_path} to compute the NOT solution path and pick the solution minimising the sSIC introduced in Section~\ref{Sec:sSIC} with $\alpha = 1$ (which is equivalent to SIC). In each simulated example, we use the contrast function designed to detect change-points in the scenario that the example follows, derived in Section~\ref{Sec:contrast_function} under the assumption that $\varepsilon_{t}$ is i.i.d. Gaussian. The resulting method is referred to simply as `NOT'. In addition, for Scenario \ref{Scen:change_in_mean} only, we also apply Algorithm~\ref{Alg:not_solution_path} combined with \eqref{Eq:contrast_const_ht} and SIC, which we call `NOT HT'. Here `HT' stands for `heavy tails'. The number of intervals drawn in the procedure and the maximum number of change-points for SIC are set to $M=10000$ and $q_{max}=25$, respectively.

We then compare the performance of NOT and NOT HT against the best competitors available on CRAN.
To the best of our knowledge, none of the competing packages can be applied in all of Scenarios \ref{Scen:change_in_mean}--\ref{Scen:change_in_mean_slope_and_quad}. 

For change-point detection in the mean, the selected competitors from CRAN are: \pkg{changepoint} \citep{killickeckley2014changepoint,  khe2016changepoint} implementing the PELT methodology proposed by \cite{killick2012optimal}, \pkg{changepoint.np} \citep{haynes2015changepointnp} implementing a nonparametric extension of the PELT methodology studied in \cite{haynes2016computationally}, \pkg{wbs} \citep{baranowski2015wbspackage} implementing the Wild Binary Segmentation proposed by \cite{fryzlewicz2014wild},
\pkg{ecp} \citep{jamesmatteson2014ecppackage} implementing the e.cp3o method proposed by \cite{james2015change}, \pkg{strucchange} \citep{zeileisetal2002strucchange} implementing the methodology of \cite{BaiPerron2003}, \pkg{Segmentor3IsBack} \citep{cleynenetal2013segmentor} implementing the technique proposed by \cite{rigaill2010pruned}, \pkg{nmcdr}  \citep{zoulancezhange2014nmcdr} implementing the NMCD methodology of \cite{zou2014nonparametric}, \pkg{stepR} \citep{hotzandsieling2016smuce} implementing the SMUCE method proposed by \cite{frick2014multiscale}, and \pkg{FDRSeg} \citep{LSA2017} implementing the FDRSeg method proposed by \citet{LMS2016}. We refer to the corresponding methods as, respectively, PELT, NP-PELT, WBS, e.cp3o, B{\&}P, S3IB, NMCD, SMUCE and FDRSeg. 

Note that e-cp3o, NMCD, NOT, PELT and NP-PELT can be also used for change-point detection in Scenario~\ref{Scen:change_in_mean_and_variance}, where change-points occur in the mean and variance of the data. In addition, for  Scenario~\ref{Scen:change_in_mean_and_variance}, we also include the SegNeigh method \citep{AugerLawrence1989} implemented in \pkg{changepoint}  \citep{killickeckley2014changepoint,  khe2016changepoint}. 

Only the B{\&}P method allows for change-point detection in piecewise-linear and piecewise-quadratic signals (in particular, the WBS is not suitable for these settings as described in Sections~\ref{Sec:introduction} and \ref{Sec:theory}), hence we also study the performance of the trend filtering methodology of \cite{kim2009ell_1} termed as TF hereafter, using the implementation available from the \proglang{R} package \pkg{genlasso} \citep{tibshirani2014genlasso}, to have a broader comparison. See also \cite{lin2016approximate}. The TF method aims to estimate a piecewise-polynomial signal from the data, not focusing on the change-point detection problem directly.  Let $\hat{f}_{t}^{(TF)}$ denote the TF estimate of the true signal $f_{t}$, then the TF estimates of the change-points in Scenario~\ref{Scen:change_in_slope} are defined as those $\tau$ for which $|2\hat{f}_{\tau}^{(TF)}-\hat{f}_{\tau-1}^{(TF)}-\hat{f}_{\tau+1}^{(TF)}|>\epsilon$, where $\epsilon>0$ is a very small number being the numerical tolerance level (more precisely, we set $\epsilon = 1.11\times 10^{-15}$ in our study). In the piecewise-quadratic case, the change-points are defined as those $\tau$ for which the third order differences $|\hat{f}_{\tau+2}^{(TF)}-3\hat{f}_{\tau+1}^{(TF)} + 3\hat{f}_{\tau}^{(TF)}-\hat{f}_{\tau-1}^{(TF)}|>\epsilon$. We note that both {B}\&{P} and TF require a substantial amount of computational resources, with {B}\&{P} being the slowest among all methods considered in this study. 

Finally, we remark that the tuning parameters for the competing methods are set to the values recommended by the corresponding \proglang{R} packages, and the \proglang{R} code for all simulations can be downloaded from our GitHub repository \citep{baranowski2016notcode}.

\subsection{Results}
\label{Sec:sim_res}
\begin{table}
\caption{Distribution of $\hat{q}-q$ for data generated according to \eqref{Eq:signal+noise} with the noise term $\varepsilon_{t}$ being i.i.d. $\mathcal{N}(0,1)$ for various choices of $f_{t}$ and $\sigma_{t}$ given in Section~\ref{Sec:simulation_models} of the online supplementary materials  and competing methods listed in Section~\ref{Sec:simulation_study}. Also, the average Mean-Square Error of the resulting estimate of the signal $f_{t}$, average Hausdorff distance $d_H$ given by \eqref{Eq:hausdorff_distance} and average computation time in seconds using a single core of an Intel Xeon 3.6 GHz CPU with 16 GB of RAM, all calculated over $100$ simulated data sets. Bold: methods with the largest empirical frequency of $\hat{q}-q=0$ or smallest average $d_{H}$ and those within $10\%$ of the highest, or, respectively, within $10\%$ of the lowest. \label{Table:sim_results_gaussian_sigma_1}} 

\centering
\footnotesize
\fbox{
	\begin{tabular}{*{11}{c|}c}
	&&\multicolumn{7}{c|}{$\hat{q}-q$}&&&\\
	Method & Model &$\leq-3$ & $-2$ & $-1$ & $0$ & $1$ & $2$ & $\geq 3$ & $\mbox{MSE}$ & $d_{H} \times 10^2$ & time\\
	\hline
	 B\&P & \multirow{11}{*}{\ref{Model:teeth}} & 70 & 8 & 1 & 21 & 0 & 0 & 0 & 0.703 & 11.39 & 0.27 \\ 
  e-cp3o &  & 0 & 0 & 0 & \textbf{100} & 0 & 0 & 0 & 0.052 & \textbf{0.48} & 2.32 \\ 
  FDRSeg &  & 0 & 0 & 0 & 78 & 16 & 4 & 2 & 0.085 & 1.39 & 0.16 \\ 
  NMCD &  & 0 & 0 & 0 & \textbf{96} & 4 & 0 & 0 & 0.093 & 0.76 & 1.38 \\ 
  NOT &  & 0 & 0 & 0 & \textbf{99} & 1 & 0 & 0 & 0.053 & 0.54 & 0.08 \\ 
  NOT HT &  & 0 & 0 & 0 & \textbf{99} & 1 & 0 & 0 & 0.055 & \textbf{0.51} & 0.1 \\ 
  NP-PELT &  & 0 & 0 & 0 & 86 & 11 & 2 & 1 & 0.068 & 0.85 & 0.03 \\ 
  PELT &  & 0 & 0 & 0 & \textbf{100} & 0 & 0 & 0 & 0.052 & \textbf{0.48} & 0 \\ 
  S3IB &  & 0 & 0 & 0 & \textbf{92} & 6 & 2 & 0 & 0.055 & 0.67 & 0.11 \\ 
  SMUCE &  & 0 & 0 & 0 & \textbf{100} & 0 & 0 & 0 & 0.083 & 0.57 & 0.22 \\ 
  WBS &  & 0 & 0 & 0 & \textbf{97} & 3 & 0 & 0 & 0.054 & 0.58 & 0.11 \\ 
  
	\hline
	 B\&P & \multirow{11}{*}{\ref{Model:blocks}} & 100 & 0 & 0 & 0 & 0 & 0 & 0 & 0.314 & 12.56 & 4.29 \\ 
  e-cp3o &  & 100 & 0 & 0 & 0 & 0 & 0 & 0 & 0.127 & 5.69 & 188.84 \\ 
  FDRSeg & & 0 & 1 & 33 & \textbf{52} & 10 & 3 & 1 & 0.03 & 1.82 & 2.43 \\ 
  NMCD &  & 0 & 5 & 64 & 31 & 0 & 0 & 0 & 0.035 & 1.82 & 4.92 \\ 
  NOT &  & 0 & 4 & 61 & 35 & 0 & 0 & 0 & 0.026 & 1.56 & 0.11 \\ 
  NOT HT &  & 2 & 8 & 54 & 28 & 8 & 0 & 0 & 0.033 & 2.08 & 0.23 \\ 
  NP-PELT &  & 0 & 0 & 27 & 44 & 15 & 9 & 5 & 0.029 & 2.13 & 0.49 \\ 
  PELT &  & 11 & 33 & 45 & 11 & 0 & 0 & 0 & 0.035 & 2.97 & 0.01 \\ 
  S3IB &  & 0 & 2 & 49 & \textbf{49} & 0 & 0 & 0 & 0.024 & \textbf{1.42} & 0.51 \\ 
  SMUCE &  & 59 & 36 & 5 & 0 & 0 & 0 & 0 & 0.069 & 3.44 & 0.03 \\ 
  WBS &  & 0 & 1 & 45 & \textbf{53} & 0 & 1 & 0 & 0.026 & \textbf{1.31} & 0.22 \\

	\hline
	 B\&P & \multirow{3}{*}{\ref{Model:wave1}} & 0 & 0 & 100 & 0 & 0 & 0 & 0 & 0.218 & 3.78 & 147.23 \\ 
  NOT &  & 0 & 0 & 0 & \textbf{99} & 1 & 0 & 0 & 0.015 & \textbf{0.99} & 0.63 \\ 
  TF &  & 0 & 0 & 0 & 0 & 0 & 0 & 100 & 0.019 & 8.33 & 63.98 \\ 
  
	\hline
	 B\&P & \multirow{3}{*}{\ref{Model:wave2}} & 0 & 1 & 3 & \textbf{96} & 0 & 0 & 0 & 0.072 & 2.59 & 168.12 \\ 
  NOT &  & 0 & 0 & 0 & \textbf{100} & 0 & 0 & 0 & 0.016 & \textbf{1.21} & 0.53 \\ 
  TF &  & 0 & 0 & 0 & 0 & 0 & 0 & 100 & 0.016 & 4.3 & 64.81 \\ 
  
	\hline
	 B\&P & \multirow{3}{*}{\ref{Model:mix}} & 0 & 0 & 0 & \textbf{100} & 0 & 0 & 0 & 0.02 & \textbf{2.42} & 382.96 \\ 
  NOT &  & 0 & 0 & 0 & \textbf{99} & 1 & 0 & 0 & 0.02 & \textbf{2.42} & 0.51 \\ 
  TF &  & 0 & 0 & 0 & 0 & 0 & 0 & 100 & 0.026 & 6.03 & 77.09 \\ 
  
	\hline
	 e-cp3o & \multirow{6}{*}{\ref{Model:vol}} & 94 & 3 & 0 & 3 & 0 & 0 & 0 & 0.378 & 16.83 & 11.35 \\ 
  NMCD &  & 0 & 0 & 7 & 83 & 8 & 2 & 0 & 0.057 & 2.54 & 4.8 \\ 
  NOT &  & 0 & 0 & 4 & \textbf{94} & 2 & 0 & 0 & 0.049 & \textbf{1.69} & 1.22 \\ 
  NP-PELT &  & 0 & 0 & 0 & 20 & 30 & 19 & 31 & 0.123 & 2.96 & 0.61 \\ 
  PELT &  & 9 & 15 & 28 & 48 & 0 & 0 & 0 & 0.074 & 8 & 0.02 \\ 
  SegNeigh &  & 0 & 0 & 8 & 60 & 17 & 10 & 5 & 0.054 & 2.5 & 38.02 \\ 
  
	\hline
	 B\&P & \multirow{3}{*}{\ref{Model:quad}} & 0 & 0 & 0 & \textbf{100} & 0 & 0 & 0 & 0.021 & \textbf{1.94} & 44.14 \\ 
  NOT &  & 0 & 0 & 0 & \textbf{100} & 0 & 0 & 0 & 0.02 & \textbf{1.78} & 0.31 \\ 
  TF &  & 0 & 0 & 0 & 0 & 0 & 0 & 100 & 0.049 & 23.33 & 59.56 \\ 
  
\end{tabular}}
\end{table}

\begin{table}
\caption{Distribution of $\hat{q}-q$ for data generated according to \eqref{Eq:signal+noise} with the noise term $\varepsilon_{t}$ being i.i.d. $(3/5)^{1/2}t_{5}$ for various choices of $f_{t}$ and $\sigma_{t}$ given in Section~\ref{Sec:simulation_models} of the online supplementary materials and competing methods listed in Section~\ref{Sec:simulation_study}. Also, the average Mean-Square Error of the resulting estimate of the signal $f_{t}$, average Hausdorff distance $d_H$ given by \eqref{Eq:hausdorff_distance} and average computation time in seconds using a single core of an Intel Xeon 3.6 GHz CPU with 16 GB of RAM, all calculated over $100$ simulated data sets. Bold: methods with the largest empirical frequency of $\hat{q}-q=0$ or smallest average $d_{H}$ and those within $10\%$ of the highest, or, respectively, within $10\%$ of the lowest. \label{Table:sim_results_student_sigma_1}} 	
	
\centering
\footnotesize
\fbox{
\begin{tabular}{*{11}{c|}c}
	&&\multicolumn{7}{c|}{$\hat{q}-q$}&&&\\
	Method & Model &$\leq-3$ & $-2$ & $-1$ & $0$ & $1$ & $2$ & $\geq 3$ & $\mbox{MSE}$ & $d_{H} \times 10^2$ & time\\
	\hline
	 B\&P & \multirow{11}{*}{\ref{Model:teeth}} & 65 & 12 & 0 & 23 & 0 & 0 & 0 & 0.67 & 10.76 & 0.26 \\ 
  e-cp3o &  & 0 & 0 & 0 & \textbf{100} & 0 & 0 & 0 & 0.044 & \textbf{0.39} & 2.22 \\ 
  NMCD &  & 0 & 0 & 0 & \textbf{94} & 6 & 0 & 0 & 0.092 & 0.81 & 1.31 \\ 
  FDRSeg &  & 0 & 0 & 0 & 6 & 7 & 10 & 77 & 0.11 & 4.47 & 0.05 \\    
  NOT &  & 0 & 0 & 0 & \textbf{94} & 5 & 1 & 0 & 0.046 & 0.57 & 0.08 \\ 
  NOT HT &  & 0 & 0 & 0 & \textbf{98} & 2 & 0 & 0 & 0.045 & 0.47 & 0.1 \\ 
  NP-PELT &  & 0 & 0 & 0 & 73 & 14 & 11 & 2 & 0.082 & 1.37 & 0.03 \\ 
  PELT &  & 0 & 0 & 0 & 63 & 6 & 16 & 15 & 0.092 & 1.68 & 0 \\ 
  S3IB &  & 0 & 0 & 0 & 54 & 7 & 20 & 19 & 0.096 & 1.84 & 0.11 \\ 
  SMUCE &  & 0 & 0 & 0 & 45 & 22 & 19 & 14 & 0.091 & 2.53 & 0.21 \\ 
  WBS &  & 0 & 0 & 0 & 44 & 3 & 28 & 25 & 0.105 & 2.44 & 0.11 \\ 
	\hline
	 B\&P & \multirow{11}{*}{\ref{Model:blocks}} & 100 & 0 & 0 & 0 & 0 & 0 & 0 & 0.302 & 11.98 & 4.28 \\ 
  e-cp3o &  & 100 & 0 & 0 & 0 & 0 & 0 & 0 & 0.126 & 5.87 & 197.26 \\ 
  FDRSeg &  & 0 & 0 & 0 & 0 & 0 & 1 & 99 & 0.044 & 6.98 & 1.44 \\   
  NMCD &  & 0 & 4 & 66 & \textbf{29} & 0 & 1 & 0 & 0.032 & \textbf{1.92} & 5.13 \\ 
  NOT &  & 2 & 16 & 33 & \textbf{31} & 14 & 3 & 1 & 0.032 & 4.09 & 0.11 \\ 
  NOT HT &  & 1 & 7 & 62 & \textbf{28} & 2 & 0 & 0 & 0.027 & \textbf{1.9} & 0.23 \\ 
  NP-PELT &  & 0 & 0 & 6 & 22 & 20 & 23 & 29 & 0.048 & 3.91 & 0.46 \\ 
  PELT &  & 0 & 3 & 16 & 19 & 20 & 12 & 30 & 0.066 & 3.98 & 0.01 \\ 
  S3IB &  & 29 & 10 & 26 & 20 & 4 & 11 & 0 & 0.065 & 4.38 & 0.49 \\ 
  SMUCE &  & 0 & 5 & 11 & 25 & 14 & 13 & 32 & 0.056 & 5.36 & 0.03 \\ 
  WBS &  & 0 & 3 & 15 & 11 & 21 & 15 & 35 & 0.067 & 4.7 & 0.22 \\ 
 
	\hline
	 B\&P & \multirow{3}{*}{\ref{Model:wave1}} & 0 & 0 & 100 & 0 & 0 & 0 & 0 & 0.217 & 3.63 & 149.51 \\ 
  NOT &  & 0 & 0 & 0 & \textbf{99} & 1 & 0 & 0 & 0.015 & \textbf{1} & 0.63 \\ 
  TF &  & 0 & 0 & 0 & 0 & 0 & 0 & 100 & 0.017 & 8.4 & 66.66 \\ 
  
	\hline
	 B\&P & \multirow{3}{*}{\ref{Model:wave2}} & 0 & 0 & 10 & \textbf{90} & 0 & 0 & 0 & 0.081 & 2.78 & 175.34 \\ 
  NOT &  & 0 & 0 & 0 & \textbf{94} & 5 & 1 & 0 & 0.019 & \textbf{1.51} & 0.54 \\ 
  TF &  & 0 & 0 & 0 & 0 & 0 & 0 & 100 & 0.017 & 4.44 & 68.33 \\ 
  
	\hline
	 B\&P & \multirow{3}{*}{\ref{Model:mix}} & 0 & 0 & 0 & \textbf{100} & 0 & 0 & 0 & 0.019 & \textbf{2.29} & 392 \\ 
  NOT &  & 0 & 0 & 0 & \textbf{96} & 4 & 0 & 0 & 0.019 & \textbf{2.33} & 0.53 \\ 
  TF &  & 0 & 0 & 0 & 0 & 0 & 0 & 100 & 0.026 & 6.01 & 80.41 \\ 
  
	\hline
	  e-cp3o & \multirow{6}{*}{\ref{Model:vol}} & 91 & 2 & 2 & 4 & 0 & 1 & 0 & 0.327 & 14.05 & 11.51 \\ 
  NMCD &  & 0 & 12 & 47 & \textbf{36} & 5 & 0 & 0 & 0.053 & 8.56 & 4.94 \\ 
  NOT &  & 0 & 4 & 17 & \textbf{35} & 25 & 12 & 7 & 0.08 & 6.1 & 1.26 \\ 
  NP-PELT &  & 0 & 0 & 2 & 9 & 22 & 19 & 48 & 0.205 & \textbf{5.1} & 0.66 \\ 
  PELT &  & 7 & 14 & 26 & \textbf{33} & 15 & 5 & 0 & 0.112 & 8.88 & 0.03 \\
  SegNeigh &  & 2 & 1 & 4 & 25 & 17 & 24 & 27 & 0.128 & \textbf{4.86} & 31.34 \\    
	\hline
	 B\&P & \multirow{3}{*}{\ref{Model:quad}} & 0 & 0 & 0 & \textbf{99} & 1 & 0 & 0 & 0.021 & \textbf{2.5} & 45.59 \\ 
  NOT &  & 0 & 0 & 8 & 79 & 11 & 2 & 0 & 0.03 & 4.28 & 0.32 \\ 
  TF &  & 0 & 0 & 0 & 0 & 0 & 0 & 100 & 0.05 & 23.32 & 62.79 \\ 
  	
\end{tabular}}

\end{table}

Here we only present the results under the setting where the noise is (a) i.i.d. standard normal in Table~\ref{Table:sim_results_gaussian_sigma_1}, and (d) i.i.d. scaled Student-$t_{5}$ in Table~\ref{Table:sim_results_student_sigma_1}. Additional results under the other above-mentioned noise settings can be found in Section~\ref{Sec:sim_additional} of the online supplementary materials. 

For each method, we show a frequency table for the distribution of $\hat{q}-q$, where $\hat{q}$ is the number of the estimated change-points and $q$ denotes the true number of change-points. We also report Monte-Carlo estimates of the Mean Squared Error of the estimated signal, given by $\mbox{MSE} = \mathbb{E} \Big\{\frac{1}{T}\sum_{t=1}^{T}(f_{t}-\hat{f}_{t})^2\Big\}$. For all methods but TF, $\hat{f}_{t}$ is calculated by finding the least squares (LS) approximation of the signal of the appropriate type depending on the true $f_{t}$, between each consecutive pair of estimated change­-points. For TF, $\hat{f}_{t}$ used in the definition of the MSE is the penalised least squares estimate of $f_{t}$ returned by the TF algorithm.  

To assess the performance of each method in terms of the accuracy of the estimated locations of the change-points, we also report estimates of the (scaled) Hausdorff distance defined as 
\begin{align}
	\label{Eq:hausdorff_distance} 
	d_{H} = T^{-1} \E  \max\left\{\max_{j=0,\ldots,q+1} \min_{k=0,\ldots,\hat{q}+1}|\tau_{j}-\hat{\tau}_{k}|, \max_{k=0,\ldots,\hat{q}+1} \min_{j=0,\ldots,q+1}|\hat{\tau}_{k}-\tau_{j}|\right\},
\end{align} 
where $0=\tau_{0} < \tau_{1} < \ldots \tau_{q} < \tau_{q+1}=T$ and $0=\hat{\tau}_{0} < \hat{\tau}_{1} < \ldots \hat{\tau}_{q} < \hat{\tau}_{q+1}=T$ denote, respectively, true and estimated locations of the change-points. From the definition above, it follows  that $0 \leq d_{H} \leq 1$. 
An estimator is regarded to perform well when its $d_{H}$ is close to $0$. However, $d_{H}$ would be large when the number of change-points is under-estimated or some of the estimated change-points are far away from the real ones. 

We find that in most of the simulated scenarios,  NOT is among the most competitive methods in terms of the estimation of the number of change-points, their locations, as well as the true signal. Importantly, it is very fast to compute, which gives it a particular advantage over its competitors in Scenarios \ref{Scen:change_in_slope}, \ref{Scen:change_in_mean_and_slope} and \ref{Scen:change_in_mean_slope_and_quad}. 
Finally, NOT with the contrast function derived under the assumption that the noise is i.i.d. Gaussian is relatively robust against the misspecification in $\varepsilon_{t}$, when the truth is either correlated or heavy-tailed.

\subsection{More on model misspecification and model selection}
\label{Sec:simulation_miss}

We have demonstrated that NOT is relatively robust against the misspecification in the distribution of $\varepsilon_{t}$, when the truth is either correlated or heavy-tailed. Now we investigate the case where the signal $f_t$ is misspecified. In particular, we focus on the misspecification of the degree of the polynomials between consecutive change-points. 

We simulate data according to \eqref{Eq:signal+noise} using the signal \ref{Model:smile} \texttt{smile} and noise of (a) i.i.d. $\Nc(0,1)$ and (b) i.i.d. $\Nc(0,2)$.  Here the true signal is piecewise-linear but not necessarily continuous (i.e. from Scenario \ref{Scen:change_in_mean_and_slope}). We test NOT with sSIC using contrast functions constructed from Scenarios \ref{Scen:change_in_mean}, \ref{Scen:change_in_mean_and_slope} and  \ref{Scen:change_in_mean_slope_and_quad}, where the estimators are denoted by $\mathrm{NOT}_0$, $\mathrm{NOT}_1$ and $\mathrm{NOT}_2$, respectively. Again we take $\alpha = 1$. Figure~\ref{Fig:model_miss} shows a typical realisation of the estimates produced by NOT with different contrast functions, while Table~\ref{Table:sim_results_miss} summarises the results.

For $\mathrm{NOT}_0$ (suitable for piecewise-constant signal), we see that unsurprisingly  $\mathrm{NOT}_0$ significantly overestimates the number of change-points $q$. This is due to the bias-variance tradeoff in the sSIC, where the bias term only approaches zero as the estimated number of change-points $\hat{q} \rightarrow \infty$. Nevertheless, we note that the set of change-point estimates from $\mathrm{NOT}_0$ typically includes the true change-points with jump, even though the construction of the contrast function (wrongly) assumes that the signal is piecewise-constant in the neighbourhood of these change-points. Furthermore, under the higher signal-to-noise ratio setting, $\mathrm{NOT}_2$, which is designed for piecewise-quadratic signal, is able to estimate the number of change-points $q$ correctly most of the time. However, since $\mathrm{NOT}_2$ is over-parameterised in this setting of Scenario \ref{Scen:change_in_mean_and_slope}, it tends to perform slightly worse than $\mathrm{NOT}_1$ in terms of both the MSE for the estimated signal, and the accuracy of the estimated locations of the change-points. Finally, under the lower signal-to-noise ratio setting,
$\mathrm{NOT}_2$ tends to underestimate the number of change-points, thanks to the bias-variance tradeoff in the sSIC. Nevertheless, as is illustrated in Figures~\ref{Fig:miss_not_2_2}, the estimated $f_t$ is quite close to the truth in terms of the $\ell_2$ distance. These findings suggest that NOT could still provide valuable insights in certain misspecified circumstances. 

In the same example, we also demonstrate that one could empirically select the degree of the polynomial for the NOT's contrast function via sSIC.  
Denote the sSIC scores corresponding to the estimates from $\mathrm{NOT}_0$, $\mathrm{NOT}_1$ and $\mathrm{NOT}_2$ by sSIC($\mathrm{NOT}_0$),  sSIC($\mathrm{NOT}_1$) and  sSIC($\mathrm{NOT}_2$) respectively. We propose to pick the estimator produced by  $\mathrm{NOT}_{i^*}$ with 
\[
i^* = \mathrm{argmin}_{i \in \{0,1,2\}} \mathrm{sSIC}(\mathrm{NOT}_i).
\]
As shown in Table~\ref{Table:sim_results_miss}, empirical results suggest that we are able to select the correct order of the polynomial for our NOT approach using sSIC, especially when the signal-to-noise ratio is high.

\begin{figure}[!ht]
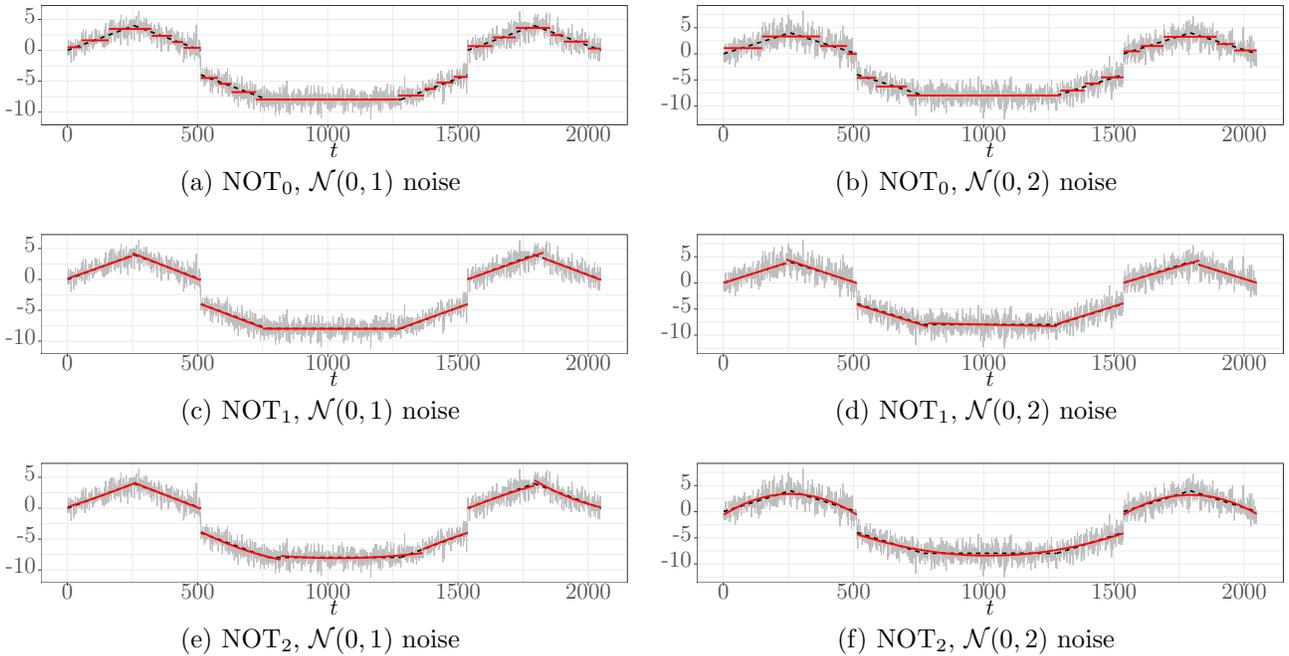

	\centering
	\subfloat[$\mathrm{NOT}_0$, $\Nc(0,1)$  noise]{
			\resizebox{0.48\textwidth}{!}{\input{tikz/miss_pcwsConstMean_1.tex}}
\label{Fig:miss_not_0_1}}
	\subfloat[$\mathrm{NOT}_0$, $\Nc(0,2)$ noise]{
			\resizebox{0.48\textwidth}{!}{\input{tikz/miss_pcwsConstMean_2.tex}}
\label{Fig:miss_not_0_2}}	
	\\
	\subfloat[$\mathrm{NOT}_1$, $\Nc(0,1)$ noise]{
			\resizebox{0.48\textwidth}{!}{\input{tikz/miss_pcwsLinMean_1.tex}}
\label{Fig:miss_not_1_1}}
	\subfloat[$\mathrm{NOT}_1$, $\Nc(0,2)$ noise]{
			\resizebox{0.48\textwidth}{!}{\input{tikz/miss_pcwsLinMean_2.tex}}
\label{Fig:miss_not_1_2}}
	\\
	\subfloat[$\mathrm{NOT}_2$, $\Nc(0,1)$ noise]{
			\resizebox{0.48\textwidth}{!}{\input{tikz/miss_pcwsQuadMean_1.tex}}
\label{Fig:miss_not_2_1}}
	\subfloat[$\mathrm{NOT}_2$, $\Nc(0,2)$ noise]{
			\resizebox{0.48\textwidth}{!}{\input{tikz/miss_pcwsQuadMean_2.tex}}
\label{Fig:miss_not_2_2}}	
	\\
	\caption{Typical realisation of the estimates produced by different NOTs, with data generated from \ref{Model:smile} \texttt{smile}. Figure~\ref{Fig:miss_not_0_1}-- \ref{Fig:miss_not_2_2}: data series $Y_{t}$ (thin grey), true signal $f_{t}$ (dashed black), $\hat{f}_{t}$ being the LS estimate of $f_{t}$ with the change-points estimated by NOT (thick red). Higher signal-to-noise ratio setting (with $\Nc(0,1)$ errors) in Figures~\ref{Fig:miss_not_0_1}, \ref{Fig:miss_not_1_1} and \ref{Fig:miss_not_2_1}; lower signal-to-noise ratio setting (with $\Nc(0,2)$ errors) in Figures~\ref{Fig:miss_not_0_2}, \ref{Fig:miss_not_1_2} and \ref{Fig:miss_not_2_2}. Here $\mathrm{NOT}_0$, $\mathrm{NOT}_1$ and $\mathrm{NOT}_2$ denote estimates from NOT with sSIC using contrast functions constructed from Scenarios \ref{Scen:change_in_mean}, \ref{Scen:change_in_mean_and_slope} and  \ref{Scen:change_in_mean_slope_and_quad}, respectively. \label{Fig:model_miss}}
\end{figure}

\begin{table}
\caption{Distribution of $\hat{q}-q$ obtained by $\mathrm{NOT}_0, \mathrm{NOT}_1, \mathrm{NOT}_2$ for data generated according to \eqref{Eq:signal+noise} with the signal \ref{Model:smile} and the noise $\varepsilon_{t} \stackrel{\mathrm{i.i.d.}}{\sim} \Nc(0,1)$ and $\Nc(0,2)$, the average Mean-Square Error of the resulting estimate of the signal over 100 simulations. The number of times each method selected by sSIC is also reported. 
\label{Table:sim_results_miss}} 	
	
\centering
\fbox{
\begin{tabular}{*{10}{c|}c}
	&&\multicolumn{7}{c|}{$\hat{q}-q$}&& Number of times\\
	Noise & Method & $\leq-3$ & $-2$ & $-1$ & $0$ & $1$ & $2$ & $\geq 3$ & $\mbox{MSE}$ & selected by sSIC \\
	\hline 
	& $\mathrm{NOT}_0$ & 0 & 0 & 0 & 0 & 0 & 0 & 100 & 0.120  & 0\\ 
	$\Nc(0,1)$ & $\mathrm{NOT}_1$   & 0 & 0 & 0 & \textbf{99} & 1 & 0 & 0 & \textbf{0.015}  & \textbf{100} \\ 
	& $\mathrm{NOT}_2$   & 0 & 4 & 18 & 78 & 0 & 0 & 0 & 0.024  & 0\\ \hline 
	& $\mathrm{NOT}_0$ & 0 & 0 & 0 & 0 & 0 & 0 & 100 & 0.188 & 0 \\ 
	$\Nc(0,2)$ & $\mathrm{NOT}_1$    & 0 & 0 & 0 & \textbf{100} & 0 & 0 & 0 & \textbf{0.032} & \textbf{94} \\ 
	& $\mathrm{NOT}_2$   & 57 & 23 & 14 & 6 & 0 & 0 & 0 & 0.078  & 6\\
\end{tabular}}

\end{table}

\section{Real data analysis}
\label{Sec:data}

\subsection{Temperature anomalies}
\label{Sec:temp_anomalies_analysis}

We analyse the GISS Surface Temperature anomalies data set available from \cite{gistemp2016anomdata}, consisting of monthly temperature anomalies recorded from January 1880 to June 2016. The anomaly here is defined as the difference between the average global temperature in a given month and the baseline value, being the average calculated for that time of the year over the 30-year period from 1951 to 1980; for more details see \cite{hansen2010global}. This and similar anomalies series are frequently studied in the literature with a particular focus on identifying change-points in the data, see e.g. \cite{ruggieri2013bayesian} or \cite{james2015change}. 

\begin{figure}[!ht]
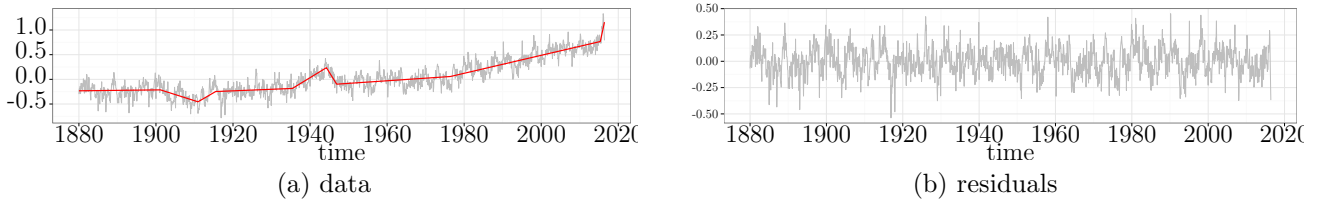

	\centering
	\subfloat[][data]{
			\resizebox{0.48\textwidth}{!}{\input{tikz/temp_anomalies_fitted.tex}}
\label{Fig:temp_anomalies_analysis_fitted}} \subfloat[][residuals]{	
			\resizebox{0.48\textwidth}{!}{\input{tikz/temp_anomalies_residuals.tex}}
\label{Fig:temp_anomalies_analysis_residuals}}\\
	\caption{Change-point analysis for the GISSTEMP data set introduced in Section~\ref{Sec:temp_anomalies_analysis}. Figure~\ref{Fig:temp_anomalies_analysis_fitted}: the data series $Y_{t}$ (thin grey) and $\hat{f}_{t}$ estimated using change-points returned by NOT (thick red). Figure~\ref{Fig:temp_anomalies_analysis_residuals}: residuals $\hat{\varepsilon}_{t}=Y_{t}-\hat{f}_{t}$. \label{Fig:temp_anomalies_analysis}}
\end{figure}

The plot of the data (Figure~\ref{Fig:temp_anomalies_analysis_fitted})  indicates the presence of a linear trend with several change-points in the temperature anomalies series. The corresponding changes are not abrupt, therefore we believe that Scenario \ref{Scen:change_in_slope} with change-points in the slope of the trend is the most appropriate here. To detect the locations of the change-points, we apply NOT (via Algorithm~\ref{Alg:not_solution_path}) with the contrast given by \eqref{Eq:contrast_kink}, combined with the SIC to determine the best model on the solution path. 

The NOT estimate of the piecewise-linear trend and the corresponding  empirical residuals are shown in Figure~\ref{Fig:temp_anomalies_analysis}. We identify 8 change-points located at the following dates: March 1901, December 1910, July 1915,  June 1935, April 1944,  December 1946,  June 1976 and May 2015.  Previous studies conducted on similar temperature anomalies series (observed at a yearly frequency and obtained from a different source), report change-points around 1910, 1945 and 1976 (see \cite{ruggieri2013bayesian} for an overview of a number of related analyses). In addition to the change-points around these dates, NOT identifies two periods, 1901--1915 and 1935--1946, where local deviations from the baseline. 
We also observe a long-lasting upward trend in the anomalies series starting in December 1946. Finally, NOT indicates that the slope of the trend is increasing, with the most recent change-point in May 2015. 

\subsection{UK House Price Index}
\label{Sec:house_price_index}

We analyse monthly percentage changes in the UK House Price Index (HPI), which provides an overall estimate of the changes in house prices across the UK. The data and a detailed description of how the index is calculated are available online from \cite{hpi2016data}. \cite{fryzlewicz2016tail}, who proposes a method for signal estimation and change-point detection in Scenario~\ref{Scen:change_in_mean}, used this data set to illustrate the performance of his methodology. We perform a similar analysis, assuming the more flexible Scenario~\ref{Scen:change_in_mean_and_variance}, allowing for changes both in the mean and the variance, which, we argue, leads to additional insights and better-interpretable estimates for this dataset.

\begin{figure}[!ht]
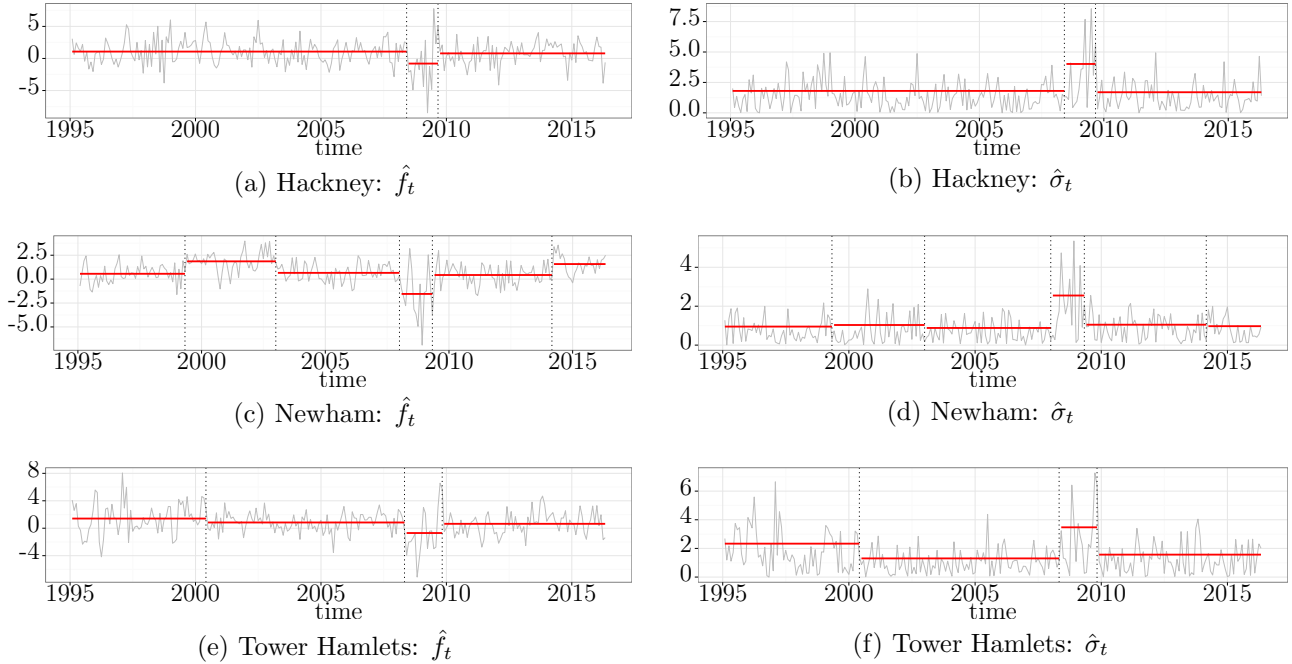

	\centering
	\subfloat[][Hackney: $\hat{f}_{t}$]{
			\resizebox{0.48\textwidth}{!}{\input{tikz/hpi_Hackney_mean.tex}}
\label{Fig:hpi_Hackney_mean}}
	\subfloat[][Hackney: $\hat{\sigma}_{t}$]{
			\resizebox{0.48\textwidth}{!}{\input{tikz/hpi_Hackney_vol.tex}}
\label{Fig:hpi_Hackney_vol}}\\
	\subfloat[][Newham: $\hat{f}_{t}$]{
			\resizebox{0.48\textwidth}{!}{\input{tikz/hpi_Newham_mean.tex}}
\label{Fig:hpi_Newham_mean}}
	\subfloat[][Newham: $\hat{\sigma}_{t}$]{
			\resizebox{0.48\textwidth}{!}{\input{tikz/hpi_Newham_vol.tex}}
\label{Fig:hpi_Newham_vol}}\\
	\subfloat[][Tower Hamlets: $\hat{f}_{t}$]{
			\resizebox{0.48\textwidth}{!}{\input{tikz/hpi_Tower_Hamlets_mean.tex}}
\label{Fig:hpi_Tower_Hamlets_mean}}
	\subfloat[][Tower Hamlets:  $\hat{\sigma}_{t}$]{
			\resizebox{0.48\textwidth}{!}{\input{tikz/hpi_Tower_Hamlets_vol.tex}}
\label{Fig:hpi_Tower_Hamlets_vol}}\\
	\caption{\label{Fig:hpi_analysis} Change-point analysis for the monthly percentage changes in the UK House Price Index from January 1995 to May 2016. Figure~\ref{Fig:hpi_Hackney_mean}, \ref{Fig:hpi_Newham_mean} and \ref{Fig:hpi_Tower_Hamlets_mean}: the monthly percentage changes $Y_{t}$ and the fitted piecewise-constant mean $\hat{f}_{t}$, between the change-points estimated with NOT. Figure~\ref{Fig:hpi_Hackney_vol}, \ref{Fig:hpi_Newham_vol} and \ref{Fig:hpi_Tower_Hamlets_vol}: $|Y_{t}-\hat{f}_{t}|$ and the fitted piecewise-constant standard deviation $\hat{\sigma}_{t}$, between the change-points estimated with NOT. }
\end{figure} 

As in \cite{fryzlewicz2016tail},  we analyse the percentage changes in the HPI for three London boroughs, namely Hackney, Newham and Tower Hamlets, all of which are located in East London. Hackney and Tower of Hamlets border on the City of London, a major business and financial district, with the latter being home to Canary Wharf, another important financial centre. On the other hand, Newham, located to the east of Hackney and Tower Hamlets, hosted the London 2012 Olympic Games which involved large-scale investment in that borough.

Figure~\ref{Fig:hpi_analysis} shows monthly percentage changes in HPI for the analysed boroughs and the corresponding NOT estimates, obtained using the contrast function \eqref{Eq:contrast_volatility}. As recommended in Section~\ref{Sec:sSIC} and \ref{Sec:parameter_choice_M}, we set the number of intervals drawn in the procedure to $M=10000$ and choose the threshold that minimises the SIC. For better comparability, NOT is applied with the same random seed for each data series.

In contrast to \cite{fryzlewicz2016tail}, whose TGUH method estimates at least 10 change-points in each HPI series, we detect just a few change-points in the data, facilitating the interpretation of the results. Furthermore, for all three boroughs, NOT estimates two change-points (one around March 2008 and one around September 2009) that could perhaps be linked to the 2008--2009 financial crisis and the concurrent collapse of the housing market. Estimated standard deviations for that period are much larger than the estimates corresponding to the other segments of piecewise-constancy, suggesting that the market is more volatile during 2008--2009, and thus in this example Scenario \ref{Scen:change_in_mean_and_variance} may be more relevant than \ref{Scen:change_in_mean} considered in \cite{fryzlewicz2016tail}. It is also interesting to observe that, with the exception of Tower Hamlets from January 1995 to April 2000 and the 2008--2009 financial crisis
for all boroughs, the estimated standard deviations appear to oscillate around a baseline level. 

The period of a larger volatility for Tower Hamlets in Figure~\ref{Fig:hpi_Tower_Hamlets_vol}, observed from January 1995 to April 2000, somewhat coincides with the developments in Canary Wharf,  which in the past was a dock complex closed in 1980. \cite{gordon2001resurrection} claims that the project of converting Canary Wharf into a business district ``was politically controversial and widely regarded as a planning disaster'' which  ``(in 1992) failed as a result of six factors: a recession in the London property market, competition from the City of London, poor transport links, few British tenants, complicated finances and developer overconfidence''. Over the 1995--2000 period, the situation in the London property reversed, which combined with a development of new public transport lines in Canary Wharf led to the success of the project. Indeed, according to \cite{gordon2001resurrection}, ``when the Jubilee underground line opened in 2000, Canary Wharf's resurrection was complete''.

\section*{Acknowledgements}
\label{Sec:acknowledgements}
We thank Paul Fearnhead for his helpful comments on an earlier draft, and on the implementation of our  \proglang{R} package. We also thank the associate editor and four anonymous referees for their comments and  suggestions. Piotr Fryzlewicz's work was supported by the Engineering and Physical Sciences Research Council grant No. EP/L014246/1. 

\newpage
\appendix
\section*{{Online supplementary materials for \\ `Narrowest-Over-Threshold Detection of Multiple Change-points and Change-point-like Features'}}

\vspace{0.5cm}
\noindent This document contains the following parts:
\begin{enumerate}
	\item[\textbf{A.}] Simulation models
	\item[\textbf{B.}] More details on the computational aspects of NOT and its solution path
	\item[\textbf{C.}] Additional simulation results
	\item[\textbf{D.}] Additional real data example: oil pirce
	\item[\textbf{E.}] Proofs
\end{enumerate}


\section{Simulation models}
\label{Sec:simulation_models}
\begin{enumerate}[label=(M\arabic*)]
	\item\label{Model:teeth} \texttt{teeth}: piecewise-constant $f_{t}$ (in Scenario~\ref{Scen:change_in_mean}), $T=512$, $q=7$ change-points at $\tau=64, 128, \ldots, 448$, with the corresponding jump sizes $-2,2,-2,\ldots,-2$, starting intercept $f_{1}=1$, $\sigma_{t}=1$ for $t=1,\ldots,T$.
	\item \label{Model:blocks} \texttt{blocks}: piecewise-constant $f_{t}$ (in Scenario~\ref{Scen:change_in_mean}), $T=2024$, $q=11$ change-points at $\tau=205, 267, 308, 472, 512, 820, 902, 1332, 1557, 1598, 1659$, with the corresponding jump sizes $1.464, -1.830,  1.098, -1.464,  1.830, -1.537,  0.768,  1.574, -1.135,  0.769, -1.537$, starting intercept $f_{1}=0$, $\sigma_{t}=1$ for $t=1,\ldots,T$. This signal is widely analysed in the literature, see e.g. \cite{fryzlewicz2014wild}. 
	\item\label{Model:wave1} \texttt{wave1}: piecewise-linear $f_{t}$ without jumps in the intercept (in Scenario~\ref{Scen:change_in_slope}), $T=1408$, $q=7$ change-points at $\tau=256, 512, 768, 1024, 1152, 1280, 1344$, with the corresponding changes in slopes $1 \cdot 2^{-6}, -2 \cdot 2^{-6}, 3 \cdot 2^{-6} \ldots, -7 \cdot 2^{-6}$, starting intercept $f_{1}=1$ and slope $f_{2}-f_{1}= 2^{-8}$, $\sigma_{t}=1$ for $t=1,\ldots,T$.
	\item\label{Model:wave2} \texttt{wave2}: piecewise-linear $f_{t}$ without jumps in the intercept (in Scenario~\ref{Scen:change_in_slope}), $T=1500$, $q=9$ change-points at $\tau=150, 300,\ldots, 1350$, with the corresponding changes in slopes $2^{-5}, -2^{-5}, 2^{-5}, \ldots,  -2^{-5}$, starting intercept $f_{1}=2^{-1}$ and slope $f_{2}-f_{1}= 2^{-6}$, $\sigma_{t}=1$ for $t=1,\ldots,T$.
	\item\label{Model:mix} \texttt{mix}: piecewise-linear $f_{t}$ with possible jumps at change-points (in Scenario~\ref{Scen:change_in_mean_and_slope}), length $T=2048$, $q=7$ change-points at $\tau=256, 512, \ldots, 1792$, with the corresponding sizes of jump $0, -1, 0, 0, 2, -1, 0$ and changes in the slope $2^{-6}, -2^{-6}, -2^{-6}, 2^{-6}, 0, 2^{-6}, -2^{-5}$, starting value for the intercept $f_{1}=0$ and slope $f_{2}-f_{1}=0$, $\sigma_{t}=1$ for $t=1,\ldots,T$.
	\item\label{Model:vol} \texttt{vol}: piecewise-constant $f_{t}$ and $\sigma_{t}$ (in Scenario~\ref{Scen:change_in_mean_and_variance}), $T=2048$, $q=7$ change-points at $\tau=256, 512, \ldots, 1792$ with the corresponding jumps in $f_t$ and $\sigma_t$ being $1,0,-2,0,2,-1,0$ and $0,1,0,1,0,-1,1$, respectively, initial values $f_{1}=\sigma_{1}=1$.
	\item\label{Model:quad} \texttt{quad}: piecewise-quadratic $f_{t}$ (in Scenario~\ref{Scen:change_in_mean_slope_and_quad}), $T=1000$, $q=3$ change-points at $\tau=100, 250, 500$, with the corresponding changes in the intercept $2, -2, 0$, in the slope $0, -10^{-1}, 10^{-1}$ and in the quadratic coefficient $0, 0,2  \times 10^{-5}$, the initial values $f_{1}=f_{2}-f_{1}=f_{3}-2f_{2}+f_{1}=0$, $\sigma_{t}=1$ for all $t=1,\ldots,T$.
	\item\label{Model:smile} \texttt{smile}: piecewise-linear $f_{t}$ with possible jumps at change-points (designed to test NOT under misspecification), $T=2048$, $q=6$ change-points at $\tau=256, 512, 768, 1280, 1536, 1792$, with the corresponding sizes of jump $0, -4, 0, 0, 4, 0$ and changes in the slope $-2^{-5}, 0, 2^{-6}$, $2^{-6}, 0, -2^{-5}$, starting value for the intercept $f_{1}=0$ and slope $f_{2}-f_{1}=2^{-6}$, $\sigma_{t}=1$ for $t=1,\ldots,T$.
\end{enumerate}

\section{More details on the compututational aspects of NOT and its solution path}
\subsection{Computing contrast functions in a linear time}
\label{Sec:contrast_fun_linear_time}
The practical performance (in terms of computational cost) of Algorithm~\ref{Alg:not_algorithm} relies on the fast computation of the contrast functions discussed in Section~\ref{Sec:contrast_function} on any given interval $[s,e]$. Here we show that in all scenarios listed in Section~\ref{Sec:contrast_function}, the cost of computing $\{ \cont{s}{e}{b}{\Yb} \}_{b=s}^{e-1}$ is $O(e-s+1)$.

Note that the key ingredients in $ \cont{s}{e}{b}{\Yb}$ under the different scenarios are functions of the inner products, i.e. $\ip{\Yb}{\psib_{s,e}^b}$, $\ip{\Yb}{\phib_{s,e}^b}$, $\ip{\Yb}{\gammab_{s,b}}$, $\ip{\Yb}{\gammab_{b+1,e}}$, $\ip{\Yb}{\mathbf{1}^2_{s,b}}$, $\ip{\Yb}{\mathbf{1}^2_{b+1,e}}$, $\ip{\Yb^2}{\mathbf{1}^2_{s,b}}$ and $\ip{\Yb^2}{\mathbf{1}^2_{b+1,e}}$ for $b = s,\ldots, e-1$. 
For a fixed interval $[s,e]$, by simple algebra, we observe that $\ip{\Yb}{\psib_{s,e}^b}$ and $\ip{\Yb}{\phib_{s,e}^b}$ can be decomposed as 
\begin{align*}
\ip{\Yb}{\psib_{s,e}^b} &=  \overleftarrow{a}_{\psib,b}\sum_{t=s}^{b}Y_{t}-\overrightarrow{a}_{\psib,b}\sum_{t=b+1}^{e}Y_{t} \\
&:=  \overleftarrow{a}_{\psib,b}\overleftarrow{\pi}_{b}^{(0)}(\Yb) -  \overrightarrow{a}_{\psib,b} \overrightarrow{\pi}_{b}^{(0)}(\Yb),\\
\ip{\Yb}{\phib_{s,e}^b} &=  \overleftarrow{a}_{\phib,b}^{(1)}\sum_{t=s}^{b}t Y_{t}-\overrightarrow{a}_{\phib,b}^{(1)}\sum_{t=b+1}^{e}t Y_{t} + \overleftarrow{a}_{\phib,b}^{(0)}\sum_{t=s}^{b}Y_{t}-\overrightarrow{a}_{\phib,b}^{(0)}\sum_{t=b+1}^{e}Y_{t} \\
&:=  \overleftarrow{a}_{\phib,b}^{(1)} \overleftarrow{\pi}_{b}^{(1)}(\Yb) -  \overrightarrow{a}_{\phib,b}^{(1)} \overrightarrow{\pi}_{b}^{(1)}(\Yb) + \overleftarrow{a}_{\phib,b}^{(0)}\overleftarrow{\pi}_{b}^{(0)}(\Yb) -  \overrightarrow{a}_{\phib,b}^{(0)} \overrightarrow{\pi}_{b}^{(0)}(\Yb),
\end{align*}
where $\overleftarrow{a}_{\psib,b}, \overrightarrow{a}_{\psib,b}, \overleftarrow{a}_{\phib,b}^{(1)}, \overrightarrow{a}_{\phib,b}^{(1)}, \overleftarrow{a}_{\phib,b}^{(0)}$ and $\overrightarrow{a}_{\phib,b}^{(0)}$ are scalars that do not depend on $\Yb$, and can all be computed at the cost of $O(1)$ using equations given in Section~\ref{Sec:contrast_function}. Here for notational convenience, we use overhead arrows to indicate whether a scalar or a function is associated with observations to the left of $b$ (i.e. $[s,b]$, using $\overleftarrow{\cdot}$) or to the right of  $b$ (i.e. $[b+1,e]$, using $\overrightarrow{\cdot}$). We also suppress their dependence on $s$ and $e$ in the notation. In addition, the following recursive formulae hold
\begin{align*}
\overleftarrow{\pi}_{b+1}^{(k)}(\Yb) &= \overleftarrow{\pi}_{b}^{(k)}(\Yb) + (b+1)^{k} Y_{b+1},\\
\overrightarrow{\pi}_{b}^{(k)}(\Yb) &= \overrightarrow{\pi}_{b+1}^{(k)}(\Yb) + (b+1)^{k} Y_{b+1},
\end{align*}
with $\overleftarrow{\pi}_{s}^{(k)}(\Yb) = \overrightarrow{\pi}_{e}^{(k)}(\Yb) = 0$ for $k=0,1$. Consequently, $\overleftarrow{\pi}_{b}^{(k)}(\Yb)$ and $\overrightarrow{\pi}_{b}^{(k)}(\Yb)$ for all $b\in\{s,\ldots,e-1\}$ and $k=0,1$ (thereby  $\ip{\Yb}{\psib_{s,e}^b}$ and $\ip{\Yb}{\phib_{s,e}^b}$) can be computed in a single pass through $Y_{s}, \ldots, Y_{e}$. Similar approach can be applied to the remaining inner products involved in the definitions of the contrast functions given in Section~\ref{Sec:contrast_function}, which demonstrates that in all these cases the computation of $\{ \cont{s}{e}{b}{\Yb} \}_{b=s}^{e-1}$ scales linearly with the number of observations.

\subsection{Details of the NOT solution path algorithm}
\label{Sec:solution_path_algorithm_details} 
As mentioned in Section~\ref{Sec:solution_path_algorithm} of the main paper, we have developed Algorithm~\ref{Alg:not_solution_path} that computes the entire threshold-indexed solution path $\{\Tc(\zeta_{T})\}_{\zeta_T\geq 0}$ quickly, and have implemented it in our \proglang{R} package \pkg{not}.  We now provide its detailed pseudo-code on the next page. 

The construction of Algorithm~\ref{Alg:not_solution_path} stems from the following two observations. First, for any fixed threshold $\zeta_{T}$, Algorithm~\ref{Alg:not_algorithm} implies a binary tree data structure that is constructed according to the order of the detection of each change-point. More specifically, in our implementation, each tree node $\mathtt{N}$ contains information on the location of the detected change-point $\mathtt{N.b}$ over the interval of interest, $[\mathtt{N.s}, \mathtt{N.e}]$, along with the  maximum achieved value of the contrast function over all intervals in $F_T^M$ that are subsets of $[\mathtt{N.s}, \mathtt{N.e}]$ (the largest value and its location are denoted by $\mathtt{N.c}$ and $\mathtt{N.b}$, respectively). Moreover, we define $\mathtt{N.Left}$ and $\mathtt{N.Right}$ pointing to the nodes of the next detected change-points in $[\mathtt{N.s}, \mathtt{N.b}]$ and $[\mathtt{N.b}+1, \mathtt{N.e}]$, respectively. We then treat the first detected change-point over $[1,T]$ as the root of the tree and construct its branches in a recursive fashion afterwards. Second, suppose that we have already constructed the tree for $\zeta_{T}$ with root $\mathtt{N_{r}}$. For $\zeta_{T}'>\zeta_{T}$, the new tree's root is unchanged if $\mathtt{N_{r}.c} > \zeta_{T}'$. This observation remains valid for $\mathtt{N_{r}.Left}$ and $\mathtt{N_{r}.Right}$ and all subsequent nodes. Therefore, a branch of the tree has to be reconstructed only if $\mathtt{N.c} \leq \zeta_{T}'$ for some node $\mathrm{N}$. In this way, the tree constructed for $\zeta_T$ can be used as a starting point to finding the tree corresponding to $\zeta_T'$, thus significantly reducing the computational time in comparison to constructing the tree from scratch. 

Next, we elaborate on the complexity of Algorithm~\ref{Alg:not_solution_path}. As explained previously, finding solutions of Algorithm~\ref{Alg:not_algorithm} for a single threshold $\zeta_T$ is equivalent to the construction of a binary tree, which can be performed with the \textproc{BuildBinaryTree} routine given in Algorithm~\ref{Alg:not_solution_path}. Computational cost of this operation is no larger than $O(M K_{\zeta_{T}})$, where $K_{\zeta_{T}}$ denotes the height of the constructed binary tree with the threshold $\zeta_T$. The computational complexity of finding the entire solution path using Algorithm~\ref{Alg:not_solution_path} is therefore (in the worst case) $O(M K N)$, where $N$ and $K$ are, respectively, the number of solutions and the maximum tree depth over the entire solution path. However, this is a rough estimate which assumes that for each threshold on the path the binary tree has a different root node, which, from our empirical experience, is highly unlikely to occur in practice. Typically, the consecutive trees on the path differ just slightly (see e.g. our next Section~\ref{Sec:illustrative_example}), which significantly reduces the amount of computation that Algorithm~\ref{Alg:not_solution_path} requires. As such,  we find that the computational complexity of Algorithm~\ref{Alg:not_solution_path} is more like $O(MT)$ in practice. 

\newpage
\begin{algorithm}[!hb]
	\caption{NOT solution path}
	\label{Alg:not_solution_path}
	\small
	\begin{algorithmic}
		
		\Require Intervals $[s_m,e_m]$ and
		\begin{align*}
		b_{m} := \argmax_{s_m \le b \le e_m} \cont{s_m}{e_m}{b}{\Yb}, \quad
		c_{m} := \cont{s_m}{e_m}{b_m}{\Yb}, \quad
		l_{m} := e_m-s_m+1
		\end{align*}
		for all $m\in F_{T}^{M}$.
		\Ensure Thresholds $0 = \zeta^{(1)}_T < \ldots < \zeta^{(N)}_T$ and sets of estimated change-points $\Tc(\zeta^{(1)}_T), \ldots,\Tc(\zeta^{(N)}_T)$.\\
		\vspace{0.1cm} \\
		\hspace{-0.42cm} \textbf{To start the algorithm}: Call \Call{SolutionPath}{}()\\

		\Procedure{BuildBinaryTree}{$s$, $e$, $\zeta_{T}$, $\mathtt{N}$}
		\State $\Mc_{s,e} :=$ set of those $m\in\{1,\ldots,M\}$ such that $[s_{m}, e_{m}] \subset [s,e]$
		\State $\Oc_{s,e}:=$ set of $m\in\Mc_{s,e}$ such that $c_{m}>\zeta_{T}$
		\If{$\Oc_{s,e}=\emptyset$} $\mathtt{N}=\mathtt{NULL}$
		\Else
		\State $k:=\mbox{any elements of } \argmin_{m\in\Oc_{s,e}} l_{m}$
		\State $\mathtt{N.b} := b_{k}$, $\mathtt{N.c} := c_{k}$, $\mathtt{N.Left} := \mathtt{NULL}$, $\mathtt{N.Right} := \mathtt{NULL}$
		\State \Call{BuildBinaryTree}{$s$, $\mathtt{N.b}$, $\zeta_{T}$, $\mathtt{N.Left}$}
		\State \Call{BuildBinaryTree}{$\mathtt{N.b}+1$, $e$, $\zeta_{T}$,  $\mathtt{N.Right}$}
		\EndIf
		\EndProcedure\\
		
		\Procedure{UpdateBinaryTree}{$s$, $e$, $\zeta_{T}$, $\mathtt{N}$}
		
		\If{$\mathtt{N.c} \leq \zeta_{T}$}
		\State \Call{BuildBinaryTree}{$s$, $e$, $\zeta_{T}$, $\mathtt{N}$}
		\Else
		
		\If{$\mathtt{N.Left} \neq \mathtt{NULL}$}
		\State \Call{UpdateBinaryTree}{$s$, $\mathtt{N.b}$, $\zeta_{T}$, $\mathtt{N.Left}$}
		\EndIf
		
		\If{$\mathtt{N.Right} \neq \mathtt{NULL}$}
		\State \Call{UpdateBinaryTree}{$\mathtt{N.b}+1$, $e$, $\zeta_{T}$,  $\mathtt{N.Right}$}
		\EndIf		
		
		\EndIf
		\EndProcedure\\
		
		\Procedure{SolutionPath}{$ $}
		\State Set $\mathtt{N_r} := \mathtt{NULL}$, $i:=1$, $\zeta_T^{(1)}:=0$
		\State \Call{BuildBinaryTree}{$1$, $T$, $\zeta_T^{(1)}$, $\mathtt{N_r}$}
		\While{$\mathtt{N_r} \neq \mathtt{NULL}$}
		\State $\Dc:= \{\mathtt{N_r} \mbox{ and all its children nodes} \}$
		\State $\Tc(\zeta^{(i)}_T):= \{\mathtt{N.b} | \mathtt{N} \in \Dc \}$  
		\State $\zeta^{(i+1)}_T:= \min_{\mathtt{N} \in \Dc} \{\mathtt{N.c} \}$
		\State \Call{UpdateBinaryTree}{$1$, $T$, $\zeta^{(i+1)}_T$, $\mathtt{N_r}$}
		\State $i:=i+1$
		\EndWhile
		
		\EndProcedure	
	\end{algorithmic}
\end{algorithm}
\newpage

\subsection{An illustrative example}
\label{Sec:illustrative_example}

In this part, we revisit the example shown in the Introduction of our paper, and provide a simple illustration of how Algorithm~\ref{Alg:not_algorithm} and Algorithm~\ref{Alg:not_solution_path} work on a simulated dataset. Figure~\ref{Fig:illustrative_example_contrasts_and_fitted} shows the generated data $\{Y_{t}\}_{t=1}^{1000}$ following Scenario \ref{Scen:change_in_slope}, where the signal $f_t$ is as in \eqref{Eq:motivating_example} and $\sigma_t = 0.05$. The contrast function \eqref{Eq:contrast_kink} is evaluated for $5$ intervals. We observe that the contrast function corresponding to  $[1,1000]$, being the longest interval here, attains its maximum at $b=490$, which is far from the true change-points located at $\tau=350$ and $\tau=650$. Furthermore,  $\max_{1\leq b \leq 1000}\cont{s}{e}{b}{\Yb}$ is much larger than the corresponding value for the other intervals considered in Table~\ref{Table:illustrative_example_contrasts}. However, thanks to the fact that we focus on the narrowest-over-threshold intervals, Algorithm~\ref{Alg:not_algorithm} (for any $\zeta_{T}\in(0.08, 0.83)$) picks at its first iteration an interval with exactly one change-point (depending on $\zeta_{T}$, it is either $[225, 450]$ or $[500, 750]$) and the maximum of the contrast function computed is close to one of the true change-points.

\begin{figure}[!ht]
	\centering
	\null\hfill
	\subfloat[][]{
			\resizebox{0.48\textwidth}{!}{\input{tikz/illustrative_example_contrasts.tex}}
\label{Fig:illustrative_example_contrasts}}
	\hfill
	\subfloat[][]{
			\resizebox{0.48\textwidth}{!}{\input{tikz/illustrative_example_fitted.tex}}
\label{Fig:illustrative_example_fitted}}\\
	\hfill\null
	\caption{\label{Fig:illustrative_example_contrasts_and_fitted} An application of the NOT methodology to $Y_{t}$ generated from model \eqref{Eq:signal+noise1} with the signal $f_{t}$ given by $\eqref{Eq:motivating_example}$ and i.i.d. $\varepsilon_{t}\sim\Nc(0,0.05^2)$. Figure~\ref{Fig:illustrative_example_contrasts}: contrast function $\cont{s}{e}{b}{\Yb}$ given by \eqref{Eq:contrast_kink} evaluated for all $b \in[s,e]$ and intervals $[s,e]$ specified in Table~\ref{Table:illustrative_example_contrasts}. For intervals containing one change-point, $\cont{s}{e}{b}{\Yb}$ attains its maximum at $b$ close to the change-point. When there are two change-points (black solid line), the maximum is far from both change-points, despite $\max_{s\leq b \leq e}\cont{s}{e}{b}{\Yb}$ being large. Figure~\ref{Fig:illustrative_example_fitted}:  observed $Y_{t}$ (thin grey), true signal (thick dashed black), signal estimated picking the change-point candidate based on the interval corresponding to the largest contrast function (dotted-dashed navy) and the \textit{narrowest-over-threshold} intervals (dashed red).}
\end{figure}

Figure~\ref{Fig:illustrative_example_sol_path} shows how Algorithm~\ref{Alg:not_solution_path} proceeds in the example presented in Figure~\ref{Fig:illustrative_example_contrasts_and_fitted}. At the initial stage that can be seen in Figure~\ref{Fig:illustrative_example_sol_path_1}, the threshold is set to $\zeta_{T}^{(1)}=0$ and $b=471$, the maximum of the contrast function computed for the shortest interval $[450, 550]$ is taken as the root of the binary tree. Then we construct its left and right branches by considering only those intervals specified in Table~\ref{Table:illustrative_example_contrasts} whose endpoints  $[s,e]\subset[1,471]$ and $[s,e]\subset[472,1000]$, respectively, and the procedure continues for the resulting nodes. Next, the node with the smallest value of the contrast function is determined ($b=746$) and the threshold is set to the corresponding minimum $\zeta_{T}^{(2)}=0.03$. This guarantees that as Algorithm~\ref{Alg:not_solution_path} proceeds, there will be at least one update in the binary tree. In our example, the $b=746$ node  is removed and, as the maximum for $[500, 750]\subset[472,1000]$ exceeds the threshold, the $b=651$ node is inserted its place. Subsequently, we identify the node with the smallest contrast again ($b=471$), update the threshold to $\zeta_{T}^{(3)}=0.07$ and reconstruct the entire tree, as $b=471$ in Figure~\ref{Fig:illustrative_example_sol_path_2} constitutes its root. Algorithm~\ref{Alg:not_solution_path} keeps running until the resulting tree shrinks to $\mathtt{NULL}$.  In this example,  the fourth solution on the path (Figure~\ref{Fig:illustrative_example_sol_path_4}) contains exactly two nodes being close to the true change-points. 

\begin{table}
\caption{\label{Table:illustrative_example_contrasts} Intervals considered in Figure~\ref{Fig:illustrative_example_contrasts} and corresponding maxima of the contrast function  $\cont{s}{e}{b}{\Yb}$ given by \eqref{Eq:contrast_kink}, all calculated for a sample path of $Y_{t}$, $t=1,\ldots,1000$ generated from model \eqref{Eq:signal+noise1} with the signal $f_{t}$ given by \eqref{Eq:motivating_example} and the noise $\varepsilon_{t}\sim\Nc(0,0.05^2)$.} 
\centering
\fbox{%
\begin{tabular}{*{5}{c}}
$s$ & $e$ & $e-s+1$ & $\argmax_{s\leq b \leq e}\cont{s}{e}{b}{\Yb}$ & $\max_{s\leq b \leq e}\cont{s}{e}{b}{\Yb}$ \\ 
\hline
   1 & 1000 & 1000 & 490 & 10.19 \\ 
   10 & 245 & 236 &  43 & 0.08 \\ 
  225 & 450 & 226 & 344 & 0.76 \\ 
  500 & 750 & 251 & 651 & 0.83 \\ 
  740 & 950 & 211 & 746 & 0.03 \\ 
  450 & 550 & 101 & 471 & 0.07 \\ 
  
\end{tabular}}
\end{table}

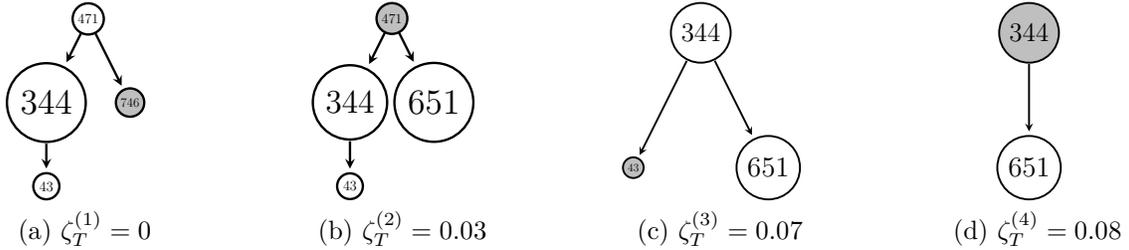
\begin{figure}[!htb]
	\centering
	\null\hfill
	\subfloat[\mbox{$\zeta_{T}^{(1)}=0$}]{\parbox{0.2\textwidth}{\centering
			\resizebox{!}{0.15\textwidth}{\begin{tikzpicture}[scale=3]
\tikzset{
	node/.style={circle,inner sep=1mm, draw,very thick,black,fill=white,text=black},
	smallestNode/.style={circle,inner sep=1mm, draw,very thick,black,fill=lightgray,text=black},
	arrow/.style={very thick,black,shorten >=2pt,-stealth},
}

	\node [node, scale=0.593829] (1) at (0.000000, 1.000000)	{471};
	\node [node, scale=1.500000] (2) at (-0.250000, 0.500000)	{344};
	\node [node, scale=0.607532] (3) at (-0.250000, 0.000000)	{43};
	\node [smallestNode, scale=0.539925] (4) at (0.250000, 0.500000)	{746};
	\path [arrow] (1) edge (2);
	\path [arrow] (1) edge (4);
	\path [arrow] (2) edge (3);

\end{tikzpicture}}
\label{Fig:illustrative_example_sol_path_1}}}
	\hfill
	\subfloat[\mbox{$\zeta_{T}^{(2)}=0.03$}]{\parbox{0.2\textwidth}{\centering
			\resizebox{!}{0.15\textwidth}{\begin{tikzpicture}[scale=3]
\tikzset{
	node/.style={circle,inner sep=1mm, draw,very thick,black,fill=white,text=black},
	smallestNode/.style={circle,inner sep=1mm, draw,very thick,black,fill=lightgray,text=black},
	arrow/.style={very thick,black,shorten >=2pt,-stealth},
}

	\node [smallestNode, scale=0.585590] (1) at (0.000000, 1.000000)	{471};
	\node [node, scale=1.412190] (2) at (-0.250000, 0.500000)	{344};
	\node [node, scale=0.598089] (3) at (-0.250000, 0.000000)	{43};
	\node [node, scale=1.500000] (4) at (0.250000, 0.500000)	{651};
	\path [arrow] (1) edge (2);
	\path [arrow] (1) edge (4);
	\path [arrow] (2) edge (3);

\end{tikzpicture}}
\label{Fig:illustrative_example_sol_path_2}}}
	\hfill
	\subfloat[\mbox{$\zeta_{T}^{(3)}=0.07$}]{\parbox{0.2\textwidth}{\centering
			\resizebox{!}{0.15\textwidth}{\begin{tikzpicture}[scale=3]
\tikzset{
	node/.style={circle,inner sep=1mm, draw,very thick,black,fill=white,text=black},
	smallestNode/.style={circle,inner sep=1mm, draw,very thick,black,fill=lightgray,text=black},
	arrow/.style={very thick,black,shorten >=2pt,-stealth},
}

	\node [node, scale=1.412190] (1) at (0.000000, 1.000000)	{344};
	\node [smallestNode, scale=0.598089] (2) at (-0.500000, 0.000000)	{43};
	\node [node, scale=1.500000] (3) at (0.500000, 0.000000)	{651};
	\path [arrow] (1) edge (2);
	\path [arrow] (1) edge (3);

\end{tikzpicture}}
\label{Fig:illustrative_example_sol_path_3}}}
	\hfill
	\subfloat[\mbox{$\zeta_{T}^{(4)}=0.08$}]{\parbox{0.2\textwidth}{\centering
			\resizebox{!}{0.15\textwidth}{\begin{tikzpicture}[scale=3]
\tikzset{
	node/.style={circle,inner sep=1mm, draw,very thick,black,fill=white,text=black},
	smallestNode/.style={circle,inner sep=1mm, draw,very thick,black,fill=lightgray,text=black},
	arrow/.style={very thick,black,shorten >=2pt,-stealth},
}

	\node [smallestNode, scale=1.412190] (1) at (0.000000, 1.000000)	{344};
	\node [node, scale=1.500000] (2) at (0.000000, 0.000000)	{651};
	\path [arrow] (1) edge (2);

\end{tikzpicture}}
\label{Fig:illustrative_example_sol_path_4}}}
	\hfill\null
	\caption{\label{Fig:illustrative_example_sol_path} First four segmentation trees obtained by Algorithm~\ref{Alg:not_solution_path} applied to a $Y_{1},\ldots, Y_{1000}$ presented in Figure~\ref{Fig:illustrative_example_contrasts_and_fitted}. The larger the node, the larger the corresponding value of  $\max_{s\leq b \leq e}\cont{s}{e}{b}{\Yb}$ given by \eqref{Eq:contrast_kink}. The grey nodes correspond to the smallest contrast function for each tree and are updated as Algorithm~\ref{Alg:not_solution_path} proceeds. }
\end{figure}

\section{Additional simulation results}
\label{Sec:sim_additional}
In addition to the results presented in Section~\ref{Sec:simulation_study}, here we present Tables \ref{Table:sim_results_gaussian_sigma__1.4142135623731}--\ref{Table:sim_results_ar_sigma_1} that  summarise the results for three different distributions of the noise $\varepsilon_{t}$, where (b) $\varepsilon_{t}\stackrel{i.i.d.}{\sim} \Nc(0,2)$, (c) $\varepsilon_{t}\stackrel{i.i.d.} \sim \mathrm{Laplace}(0,2^{-1/2})$ and (e) $\varepsilon_{t}$ follows zero-mean unit-variance Gaussian AR(1) with $\varphi = 0.3$.

\begin{table}
\caption{Distribution of $\hat{q}-q$ for data generated according to \eqref{Eq:signal+noise} with the noise term $\varepsilon_{t}$ being i.i.d. $\mathcal{N}(0,2)$ for various choices of $f_{t}$ and $\sigma_{t}$ given in Section~\ref{Sec:simulation_models} and competing methods listed in Section~\ref{Sec:simulation_study}. Also, the average Mean-Square Error of the resulting estimate of the signal $f_{t}$, average Hausdorff distance $d_H$ given by \eqref{Eq:hausdorff_distance} and average computation time in seconds using a single core of an Intel Xeon 3.6 GHz CPU with 16 GB of RAM, all calculated over $100$ simulated data sets. Bold: methods with the largest empirical frequency of $\hat{q}-q=0$ or smallest average $d_{H}$ and those within $10\%$ of the highest, or, respectively, within $10\%$ of the lowest. \label{Table:sim_results_gaussian_sigma__1.4142135623731}} 
\footnotesize
\centering
\fbox{
	\begin{tabular}{*{11}{c|}c}
	&&\multicolumn{7}{c|}{$\hat{q}-q$}&&&\\
	Method & Model &$\leq-3$ & $-2$ & $-1$ & $0$ & $1$ & $2$ & $\geq 3$ & $\mbox{MSE}$ & $d_{H} \times 10^2$ & time\\
	\hline
	 B\&P & \multirow{11}{*}{\ref{Model:teeth}} & 82 & 9 & 2 & 7 & 0 & 0 & 0 & 0.832 & 14.15 & 0.26 \\ 
  e-cp3o &  & 0 & 0 & 0 & \textbf{100} & 0 & 0 & 0 & 0.109 & \textbf{1.02} & 2.15 \\ 
  FDRSeg &  & 0 & 0 & 0 & 82 & 12 & 4 & 2 & 0.136 & 1.65 & 0.09 \\   
  NMCD &  & 0 & 0 & 0 & \textbf{98} & 2 & 0 & 0 & 0.149 & 1.43 & 1.28 \\ 
  NOT &  & 0 & 0 & 0 & \textbf{99} & 1 & 0 & 0 & 0.112 & \textbf{1.05} & 0.08 \\ 
  NOT HT &  & 0 & 0 & 0 & \textbf{97} & 3 & 0 & 0 & 0.127 & 1.35 & 0.09 \\ 
  NP-PELT &  & 0 & 0 & 0 & 73 & 24 & 2 & 1 & 0.131 & 1.43 & 0.04 \\ 
  PELT &  & 0 & 0 & 0 & \textbf{100} & 0 & 0 & 0 & 0.11 & \textbf{1.04} & 0 \\ 
  S3IB &  & 0 & 0 & 0 & \textbf{94} & 5 & 1 & 0 & 0.113 & 1.17 & 0.11 \\ 
  SMUCE &  & 0 & 1 & 15 & 84 & 0 & 0 & 0 & 0.192 & 2.23 & 0.23 \\ 
  WBS &  & 0 & 0 & 0 & \textbf{98} & 2 & 0 & 0 & 0.11 & \textbf{1.05} & 0.11 \\ 
 
	\hline
	 B\&P & \multirow{11}{*}{\ref{Model:blocks}} & 100 & 0 & 0 & 0 & 0 & 0 & 0 & 0.358 & 14.34 & 5.64 \\ 
  e-cp3o &  & 100 & 0 & 0 & 0 & 0 & 0 & 0 & 0.142 & 8.12 & 194.18 \\ 
  FDRSeg &  & 7 & 30 & 42 & 15 & 6 & 0 & 0 & 0.063 & \textbf{3.19} & 3.27 \\   
  NMCD &  & 37 & 31 & 26 & 5 & 1 & 0 & 0 & 0.073 & 4.02 & 5.06 \\ 
  NOT &  & 27 & 28 & 25 & 17 & 2 & 1 & 0 & 0.062 & \textbf{3.48} & 0.11 \\ 
  NOT HT &  & 42 & 27 & 23 & 7 & 1 & 0 & 0 & 0.076 & 4.23 & 0.23 \\ 
  NP-PELT &  & 1 & 12 & 26 & \textbf{25} & 17 & 16 & 3 & 0.067 & 3.91 & 0.54 \\ 
  PELT &  & 92 & 7 & 0 & 1 & 0 & 0 & 0 & 0.106 & 7.28 & 0.01 \\ 
  S3IB &  & 35 & 23 & 24 & 17 & 0 & 1 & 0 & 0.065 & 3.94 & 0.53 \\ 
  SMUCE &  & 100 & 0 & 0 & 0 & 0 & 0 & 0 & 0.139 & 5.72 & 0.04 \\ 
  WBS &  & 30 & 26 & 27 & 16 & 1 & 0 & 0 & 0.064 & 3.64 & 0.22 \\ 
   
	\hline
	 B\&P & \multirow{3}{*}{\ref{Model:wave1}} & 0 & 0 & 100 & 0 & 0 & 0 & 0 & 0.246 & 3.94 & 146.74 \\ 
  NOT &  & 0 & 0 & 0 & \textbf{99} & 1 & 0 & 0 & 0.032 & \textbf{1.47} & 0.54 \\ 
  TF &  & 0 & 0 & 0 & 0 & 0 & 0 & 100 & 0.032 & 8.42 & 63.71 \\ 
  
	\hline
	 B\&P & \multirow{3}{*}{\ref{Model:wave2}} & 16 & 55 & 28 & 1 & 0 & 0 & 0 & 0.336 & 6.48 & 167.31 \\ 
  NOT &  & 0 & 0 & 0 & \textbf{98} & 2 & 0 & 0 & 0.039 & \textbf{2.08} & 0.47 \\ 
  TF &  & 0 & 0 & 0 & 0 & 0 & 0 & 100 & 0.031 & 4.44 & 64.41 \\ 
  
	\hline
	 B\&P & \multirow{3}{*}{\ref{Model:mix}} & 0 & 0 & 8 & \textbf{92} & 0 & 0 & 0 & 0.044 & \textbf{3.31} & 380.84 \\ 
  NOT &  & 0 & 0 & 5 & \textbf{93} & 2 & 0 & 0 & 0.045 & \textbf{3.52} & 0.48 \\ 
  TF &  & 0 & 0 & 0 & 0 & 0 & 0 & 100 & 0.041 & 5.89 & 78.46 \\ 
  
	\hline
	 e-cp3o & \multirow{6}{*}{\ref{Model:vol}} & 95 & 2 & 0 & 3 & 0 & 0 & 0 & 0.372 & 16.55 & 11.67 \\ 
  NMCD &  & 0 & 0 & 15 & 79 & 6 & 0 & 0 & 0.058 & 3.35 & 4.78 \\ 
  NOT &  & 0 & 0 & 10 & \textbf{89} & 1 & 0 & 0 & 0.045 & \textbf{2.07} & 1.22 \\ 
  NP-PELT &  & 0 & 0 & 0 & 22 & 24 & 22 & 32 & 0.12 & 2.97 & 0.61 \\ 
  PELT &  & 11 & 15 & 28 & 44 & 2 & 0 & 0 & 0.075 & 7.83 & 0.02 \\ 
  SegNeigh &  & 0 & 0 & 8 & 60 & 17 & 10 & 5 & 0.054 & 2.5 & 38.05 \\   
  
	\hline
	 B\&P & \multirow{3}{*}{\ref{Model:quad}} & 0 & 0 & 35 & \textbf{65} & 0 & 0 & 0 & 0.066 & 6.47 & 44.26 \\ 
  NOT &  & 0 & 1 & 37 & \textbf{62} & 0 & 0 & 0 & 0.064 & \textbf{5.78} & 0.31 \\ 
  TF &  & 0 & 0 & 0 & 0 & 0 & 1 & 99 & 0.075 & 22.71 & 60.17 \\ 
  
\end{tabular}}
\end{table}

\begin{table}

\caption{Distribution of $\hat{q}-q$ for data generated according to \eqref{Eq:signal+noise} with the noise term $\varepsilon_{t}$ being i.i.d. $\mbox{Laplace}\left(0,(\sqrt{2})^{-1}\right)$ (N.B. $\Var(\varepsilon_{t})=1$ here) for various choices of $f_{t}$ and $\sigma_{t}$ given in Section~\ref{Sec:simulation_models} and competing methods listed in Section~\ref{Sec:simulation_study}. Also, the average Mean-Square Error of the resulting estimate of the signal $f_{t}$, average Hausdorff distance $d_H$ given by \eqref{Eq:hausdorff_distance} and average computation time in seconds using a single core of an Intel Xeon 3.6 GHz CPU with 16 GB of RAM, all calculated over $100$ simulated data sets. Bold: methods with the largest empirical frequency of $\hat{q}-q=0$ or smallest average $d_{H}$ and those within $10\%$ of the highest, or, respectively, within $10\%$ of the lowest. \label{Table:sim_results_laplace_sigma_1}} 
\footnotesize
\centering
\fbox{
	\begin{tabular}{*{11}{c|}c}
	&&\multicolumn{7}{c|}{$\hat{q}-q$}&&&\\
	Method & Model &$\leq-3$ & $-2$ & $-1$ & $0$ & $1$ & $2$ & $\geq 3$ & $\mbox{MSE}$ & $d_{H} \times 10^2$ & time\\
	\hline
	 B\&P & \multirow{11}{*}{\ref{Model:teeth}} & 76 & 4 & 1 & 19 & 0 & 0 & 0 & 0.745 & 13.04 & 0.25 \\ 
  e-cp3o &  & 0 & 0 & 0 & \textbf{100} & 0 & 0 & 0 & 0.097 & 0.87 & 2.13 \\ 
  FDRSeg &  & 0 & 0 & 0 & 5 & 4 & 6 & 85 & 0.199 & 4.78 & 0.13 \\   
  NMCD &  & 0 & 0 & 0 & \textbf{94} & 6 & 0 & 0 & 0.141 & 1.35 & 1.28 \\ 
  NOT &  & 0 & 1 & 0 & \textbf{95} & 3 & 1 & 0 & 0.107 & 1.19 & 0.08 \\ 
  NOT HT &  & 0 & 0 & 0 & \textbf{99} & 0 & 1 & 0 & 0.093 & \textbf{0.79} & 0.09 \\ 
  NP-PELT &  & 0 & 0 & 0 & 71 & 22 & 6 & 1 & 0.141 & 1.57 & 0.04 \\ 
  PELT &  & 0 & 0 & 0 & 69 & 13 & 14 & 4 & 0.145 & 1.4 & 0 \\ 
  S3IB &  & 0 & 1 & 0 & 76 & 10 & 9 & 4 & 0.136 & 1.47 & 0.11 \\ 
  SMUCE &  & 0 & 0 & 1 & 52 & 23 & 14 & 10 & 0.155 & 2.6 & 0.21 \\ 
  WBS &  & 0 & 0 & 0 & 64 & 4 & 23 & 9 & 0.151 & 1.91 & 0.11 \\ 
	\hline
	 B\&P & \multirow{11}{*}{\ref{Model:blocks}} & 100 & 0 & 0 & 0 & 0 & 0 & 0 & 0.311 & 12.55 & 5.36 \\ 
  e-cp3o &  & 100 & 0 & 0 & 0 & 0 & 0 & 0 & 0.147 & 9.1 & 191.73 \\ 
  FDRSeg &  & 0 & 0 & 0 & 0 & 0 & 0 & 100 & 0.1 & 7.96 & 3.06 \\   
  NMCD &  & 15 & 36 & 37 & 12 & 0 & 0 & 0 & 0.06 & \textbf{3.37} & 5.06 \\ 
  NOT &  & 51 & 21 & 17 & 9 & 2 & 0 & 0 & 0.079 & 4.8 & 0.11 \\ 
  NOT HT &  & 23 & 26 & 36 & 15 & 0 & 0 & 0 & 0.054 & \textbf{3.08} & 0.23 \\ 
  NP-PELT &  & 0 & 4 & 10 & 19 & 27 & 19 & 21 & 0.077 & 4.03 & 0.51 \\ 
  PELT &  & 20 & 21 & 19 & 14 & 14 & 6 & 6 & 0.108 & 5.02 & 0.01 \\ 
  S3IB &  & 88 & 8 & 2 & 2 & 0 & 0 & 0 & 0.13 & 10.22 & 0.5 \\ 
  SMUCE &  & 14 & 16 & 23 & \textbf{22} & 6 & 8 & 11 & 0.108 & 6.02 & 0.03 \\ 
  WBS &  & 21 & 12 & 12 & 15 & 15 & 10 & 15 & 0.104 & 4.98 & 0.22 \\  
   
	\hline
	 B\&P & \multirow{3}{*}{\ref{Model:wave1}} & 0 & 0 & 100 & 0 & 0 & 0 & 0 & 0.261 & 4.16 & 147.23 \\ 
  NOT &  & 0 & 0 & 1 & \textbf{96} & 1 & 1 & 1 & 0.037 & \textbf{1.89} & 0.52 \\ 
  TF &  & 0 & 0 & 0 & 0 & 0 & 0 & 100 & 0.035 & 8.42 & 64.08 \\ 
  
	\hline
	 B\&P & \multirow{3}{*}{\ref{Model:wave2}} & 16 & 44 & 37 & 3 & 0 & 0 & 0 & 0.323 & 6.27 & 171.88 \\ 
  NOT &  & 0 & 0 & 0 & \textbf{96} & 3 & 1 & 0 & 0.042 & \textbf{2.24} & 0.44 \\ 
  TF &  & 0 & 0 & 0 & 0 & 0 & 0 & 100 & 0.032 & 4.38 & 66.53 \\ 
  
	\hline
	 B\&P & \multirow{3}{*}{\ref{Model:mix}} & 0 & 1 & 6 & \textbf{93} & 0 & 0 & 0 & 0.045 & \textbf{3.44} & 384.72 \\ 
  NOT &  & 0 & 1 & 2 & \textbf{90} & 3 & 3 & 1 & 0.047 & \textbf{3.48} & 0.5 \\ 
  TF &  & 0 & 0 & 0 & 0 & 0 & 0 & 100 & 0.041 & 5.91 & 78.1 \\ 
  
	\hline
	 e-cp3o & \multirow{6}{*}{\ref{Model:vol}} & 96 & 3 & 1 & 0 & 0 & 0 & 0 & 0.481 & 17.95 & 11.91 \\ 
  NMCD &  & 1 & 28 & 38 & 30 & 2 & 0 & 1 & 0.098 & 9.45 & 4.83 \\ 
  NOT &  & 1 & 10 & 42 & \textbf{35} & 9 & 1 & 2 & 0.188 & 8.17 & 1.24 \\ 
  NP-PELT &  & 0 & 1 & 4 & 14 & 22 & 16 & 43 & 0.359 & \textbf{5.34} & 0.75 \\ 
  PELT &  & 22 & 22 & 35 & 17 & 3 & 1 & 0 & 0.215 & 12.8 & 0.03 \\ 
  SegNeigh &  & 1 & 1 & 13 & 24 & 27 & 20 & 14 & 0.183 & 6.41 & 38.29 \\   
	\hline
	 B\&P & \multirow{3}{*}{\ref{Model:quad}} & 0 & 0 & 41 & \textbf{59} & 0 & 0 & 0 & 0.066 & \textbf{5.93} & 44.19 \\ 
  NOT &  & 0 & 2 & 51 & 44 & 2 & 1 & 0 & 0.077 & 7.7 & 0.32 \\ 
  TF &  & 0 & 0 & 0 & 0 & 0 & 0 & 100 & 0.075 & 22.42 & 60.33 \\ 
  
\end{tabular}}
\end{table}

\begin{table}
\caption{Distribution of $\hat{q}-q$ for data generated according to \eqref{Eq:signal+noise} with the noise term $\varepsilon_{t}$ being a zero-mean unit-variance Gaussian AR(1) process with $\varphi = 0.3$ for various choices of $f_{t}$ and $\sigma_{t}$ given in Section~\ref{Sec:simulation_models} and competing methods listed in Section~\ref{Sec:simulation_study}. Also, the average Mean-Square Error of the resulting estimate of the signal $f_{t}$, average Hausdorff distance $d_H$ given by \eqref{Eq:hausdorff_distance} and average computation time in seconds using a single core of an Intel Xeon 3.6 GHz CPU with 16 GB of RAM, all calculated over $100$ simulated data sets. Bold: methods with the largest empirical frequency of $\hat{q}-q=0$ or smallest average $d_{H}$ and those within $10\%$ of the highest, or, respectively, within $10\%$ of the lowest. \label{Table:sim_results_ar_sigma_1}} 
\footnotesize
\centering
\fbox{
	\begin{tabular}{*{11}{c|}c}
	&&\multicolumn{7}{c|}{$\hat{q}-q$}&&&\\
	Method & Model &$\leq-3$ & $-2$ & $-1$ & $0$ & $1$ & $2$ & $\geq 3$ & $\mbox{MSE}$ & $d_{H} \times 10^2$ & time\\
	\hline
	 B\&P & \multirow{11}{*}{\ref{Model:teeth}} & 78 & 12 & 0 & 10 & 0 & 0 & 0 & 0.783 & 12.87 & 0.25 \\ 
  e-cp3o &  & 0 & 0 & 0 & \textbf{100} & 0 & 0 & 0 & 0.084 & \textbf{0.99} & 2.16 \\ 
  FDRSeg &  & 0 & 0 & 0 & 0 & 2 & 2 & 96 & 0.196 & 5.52 & 0.09 \\  
  NMCD &  & 0 & 0 & 0 & 71 & 18 & 10 & 1 & 0.138 & 1.88 & 1.60 \\ 
  NOT &  & 0 & 0 & 0 & 72 & 13 & 4 & 11 & 0.104 & 1.84 & 0.07 \\ 
  NOT HT &  & 0 & 0 & 0 & 84 & 9 & 4 & 3 & 0.099 & 1.51 & 0.08 \\ 
  NP-PELT &  & 0 & 0 & 0 & 40 & 34 & 16 & 10 & 0.122 & 2.38 & 0.03 \\ 
  PELT &  & 0 & 0 & 0 & 74 & 17 & 6 & 3 & 0.097 & 1.44 & 0.01 \\ 
  S3IB &  & 0 & 0 & 0 & 79 & 14 & 6 & 1 & 0.092 & 1.42 & 0.13 \\ 
  SMUCE &  & 0 & 0 & 0 & 55 & 35 & 5 & 5 & 0.123 & 2.43 & 0.17 \\ 
  WBS &  & 0 & 0 & 0 & 62 & 18 & 10 & 10 & 0.105 & 2.19 & 0.11 \\

	\hline
	 B\&P & \multirow{11}{*}{\ref{Model:blocks}} & 100 & 0 & 0 & 0 & 0 & 0 & 0 & 0.318 & 12.5 & 5.65 \\ 
  e-cp3o &  & 100 & 0 & 0 & 0 & 0 & 0 & 0 & 0.134 & 6.36 & 195.55 \\ 
  FDRSeg &  & 0 & 0 & 0 & 0 & 0 & 0 & 100 & 0.117 & 9.13 & 1.52 \\   
  NMCD &  & 0 & 12 & 43 & 37 & 8 & 0 & 0 & 0.052 & 2.63 & 6.43 \\  
  NOT &  & 2 & 9 & 35 & 28 & 16 & 6 & 4 & 0.048 & 2.99 & 0.11 \\ 
  NOT HT &  & 5 & 14 & 41 & 18 & 16 & 4 & 2 & 0.053 & 3.63 & 0.21 \\ 
  NP-PELT &  & 0 & 1 & 6 & 5 & 18 & 9 & 61 & 0.066 & 4.88 & 0.28 \\ 
  PELT &  & 1 & 6 & 25 & \textbf{48} & 14 & 2 & 4 & 0.046 & \textbf{2.24} & 0.01 \\ 
  S3IB &  & 14 & 26 & 36 & 23 & 1 & 0 & 0 & 0.05 & 3.21 & 0.53 \\ 
  SMUCE &  & 1 & 12 & 35 & 25 & 17 & 6 & 4 & 0.053 & 4.56 & 0.03 \\ 
  WBS &  & 1 & 9 & 36 & 32 & 11 & 7 & 4 & 0.047 & 2.9 & 0.21 \\

	\hline
	 B\&P & \multirow{3}{*}{\ref{Model:wave1}} & 0 & 0 & 92 & 8 & 0 & 0 & 0 & 0.244 & 4.43 & 145.12 \\ 
  NOT &  & 0 & 0 & 0 & \textbf{99} & 1 & 0 & 0 & 0.031 & \textbf{1.41} & 0.55 \\  
  TF &  & 0 & 0 & 0 & 0 & 0 & 0 & 100 & 0.453 & 9.08 & 69.77 \\ 
  
	\hline
	 B\&P & \multirow{3}{*}{\ref{Model:wave2}} & 0 & 3 & 19 & 78 & 0 & 0 & 0 & 0.127 & 3.36 & 174.04 \\ 
  NOT &  & 0 & 0 & 0 & \textbf{97} & 2 & 1 & 0 & 0.035 & \textbf{2} & 0.64 \\ 
  TF &  & 0 & 0 & 0 & 0 & 0 & 0 & 100 & 0.458 & 5 & 72.36 \\  
  
	\hline
	 B\&P & \multirow{3}{*}{\ref{Model:mix}} & 0 & 0 & 0 & \textbf{100} & 0 & 0 & 0 & 0.037 &  \textbf{3.05} & 383.09 \\ 
  NOT &  & 0 & 0 & 0 & \textbf{92} & 7 & 0 & 1 & 0.04 & \textbf{3.32} & 0.52 \\ 
  TF &  & 0 & 0 & 0 & 0 & 0 & 0 & 100 & 0.224 & 6.24 & 80.77 \\ 
  
	\hline
	 e-cp3o & \multirow{6}{*}{\ref{Model:vol}} & 78 & 13 & 3 & 4 & 2 & 0 & 0 & 0.368 & 15.08 & 11.75  \\ 
  NMCD &  & 0 & 0 & 7 & 33 & 30 & 15 & 15 & 0.167 & 5.23 & 4.76 \\ 
  NOT &  & 0 & 1 & 21 & \textbf{67} & 8 & 2 & 1 & 0.099 & 4.14 & 1.21 \\ 
  NP-PELT &  & 0 & 0 & 0 & 0 & 1 & 2 & 97 & 0.457 & 5.48 & 0.63 \\ 
  PELT &  & 11 & 9 & 32 & 43 & 5 & 0 & 0 & 0.107 & 8.18 & 0.02 \\ 
  SegNeigh &  & 0 & 0 & 7 & 31 & 33 & 18 & 11 & 0.125 & \textbf{3.6} & 42.11 \\ 
	\hline
	s B\&P & \multirow{3}{*}{\ref{Model:quad}} & 0 & 0 & 2 & \textbf{88} & 9 & 1 & 0 & 0.046 & 4.17 &  44.01 \\ 
  NOT &  & 0 & 0 & 2 & \textbf{85} & 12 & 1 & 0 & 0.046 & \textbf{3.46} & 0.31 \\ 
  TF &  & 0 & 0 & 0 & 0 & 0 & 0 & 100 & 0.115 & 24.44 & 60.11 \\ 
  
\end{tabular}}
\end{table}

\section{Additional real data example: OPEC Reference Basket oil price}
\label{Sec:oil_analysis}

We perform change-point analysis on the daily Organisation of the Petroleum Exporting Countries (OPEC) Reference Basket oil price from 1 January, 2003 to 15 July, 2016. The data were obtained from the OPEC database through the \proglang{R} package \pkg{Quandl} \citep{mctaggart2016quandl}. Instead of working with the raw price series, we analyse the log-returns series $Y_{t}=100 \log\left(P_{t}/P_{t-1}\right)$, where $P_{t}$ denotes the daily oil price.  One of the stylised facts of the financial time series data is that the autocorrelation of assets returns are weak, while squared returns tend to exhibit strong autocorrelation, which is the case for the oil price time series (see Figure~\ref{Fig:oil_log_returns_sq_acf}). This phenomenon can be possibly explained by the existence of the structural breaks in the mean and variance structure of the data series \citep{mikosch2004nonstationarities,fryzlewicz2006haar}. In this study, we apply NOT with the contrast function given by \eqref{Eq:contrast_volatility}, which is designed to detect changes in both the mean and the volatility, as in Scenario~\ref{Scen:change_in_mean_and_variance}. For comparison, we also report change-points detected with the NMCD method of \cite{zou2014nonparametric}. 

We apply Algorithm~\ref{Alg:not_solution_path} to compute the NOT solution path and choose the model achieving the lowest SIC given by \eqref{Eq:SIC}, setting the number of intervals drawn $M=10000$ and the maximum number of change-points $q_{max}=25$.  Computations for the solution path and model selection are performed using the \proglang{R} package \pkg{not} \citep{baranowski2016notpackage}. For the NMCD procedure, we use the \code{nmcd} routine from the \proglang{R} package \pkg{nmcdr} \citep{zoulancezhange2014nmcdr}, setting the maximum number of change-points to $q_{max}=25$ as well.

Figure~\ref{Fig:oil_analysis} illustrates the results of our analysis. The oil price time series and the locations of the change-points identified by NOT and NMCD can be seen in Figure~\ref{Fig:oil_price}. Both methods discover $7$ change-points, largely agreeing on their locations, in the sense that for $6$ out of $7$ features NOT detects, NMCD detects a change-point nearby. However, NMCD does not indicate any change-point around the first change-point identified by NOT on 29 April 2003. This date could potentially be related to the end of the 2003 invasion of Iraq, which initiated the upward trend in the oil price lasting almost ceaselessly until the beginning of the 2008--09 financial crisis. On the other hand, NMCD indicates two change-points in the first quarter of 2016, while NOT only finds one in that period.  Table~\ref{Table:oil_price_changepoints} lists the exact locations of the change-points detected by the two methods and the events that coincide with some of them.  Figure~\ref{Fig:oil_residuals_sq_acf} shows the autocorrelation function for the squared residuals obtained by subtracting the sample mean and dividing by the standard deviations from the data in each segment. It appears that there is little autocorrelation in the squares of the residuals,  suggesting that Scenario~\ref{Scen:change_in_mean_and_variance} fits the data in this example reasonably well.

\begin{figure}[!ht]
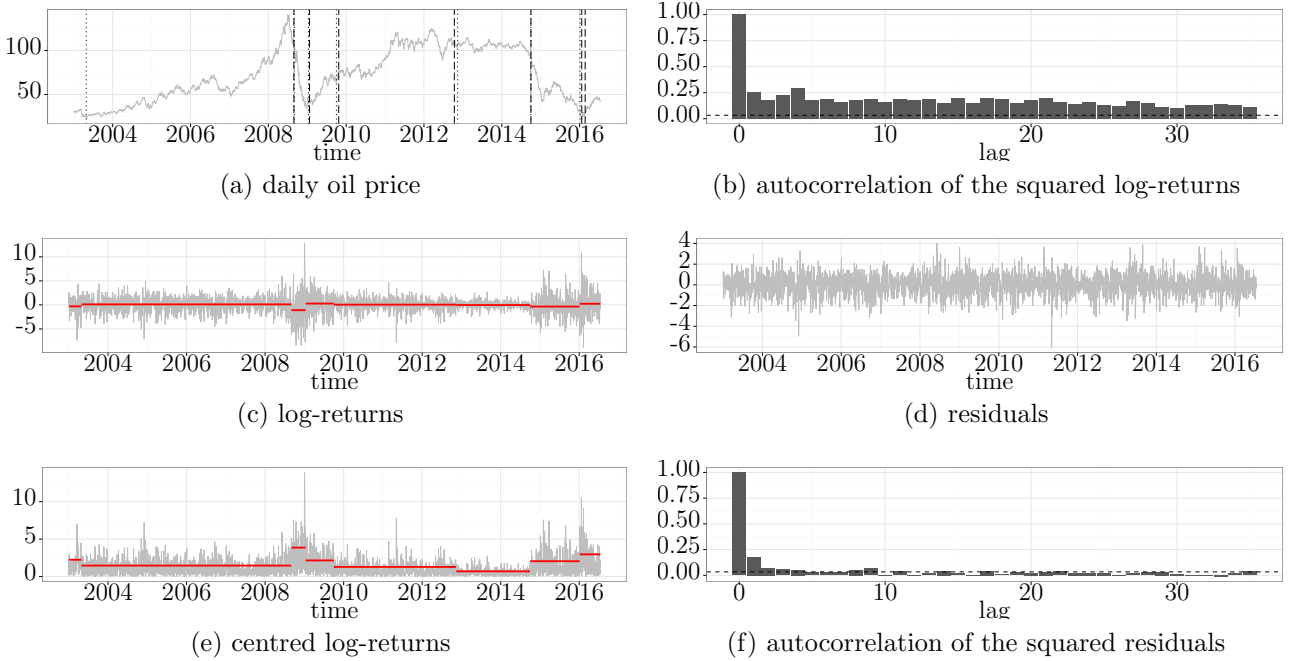

	\centering
	\subfloat[][daily oil price]{
			\resizebox{0.48\textwidth}{!}{\input{tikz/oil_price.tex}}
\label{Fig:oil_price}}
	\subfloat[][autocorrelation of the squared log-returns]{
			\resizebox{0.48\textwidth}{!}{\begin{tikzpicture}[x=1pt,y=1pt]
\definecolor{fillColor}{RGB}{255,255,255}
\path[use as bounding box,fill=fillColor,fill opacity=0.00] (0,0) rectangle (578.16,144.54);
\begin{scope}
\path[clip] (  0.00,  0.00) rectangle (578.16,144.54);
\definecolor{drawColor}{RGB}{255,255,255}
\definecolor{fillColor}{RGB}{255,255,255}

\path[draw=drawColor,line width= 0.6pt,line join=round,line cap=round,fill=fillColor] (  0.00, -0.00) rectangle (578.16,144.54);
\end{scope}
\begin{scope}
\path[clip] ( 43.93, 33.48) rectangle (572.16,138.54);
\definecolor{fillColor}{RGB}{255,255,255}

\path[fill=fillColor] ( 43.93, 33.48) rectangle (572.16,138.54);
\definecolor{drawColor}{gray}{0.98}

\path[draw=drawColor,line width= 0.6pt,line join=round] ( 43.93, 50.19) --
	(572.16, 50.19);

\path[draw=drawColor,line width= 0.6pt,line join=round] ( 43.93, 74.07) --
	(572.16, 74.07);

\path[draw=drawColor,line width= 0.6pt,line join=round] ( 43.93, 97.95) --
	(572.16, 97.95);

\path[draw=drawColor,line width= 0.6pt,line join=round] ( 43.93,121.83) --
	(572.16,121.83);

\path[draw=drawColor,line width= 0.6pt,line join=round] (140.84, 33.48) --
	(140.84,138.54);

\path[draw=drawColor,line width= 0.6pt,line join=round] (274.60, 33.48) --
	(274.60,138.54);

\path[draw=drawColor,line width= 0.6pt,line join=round] (408.37, 33.48) --
	(408.37,138.54);

\path[draw=drawColor,line width= 0.6pt,line join=round] (542.13, 33.48) --
	(542.13,138.54);
\definecolor{drawColor}{gray}{0.90}

\path[draw=drawColor,line width= 0.2pt,line join=round] ( 43.93, 38.25) --
	(572.16, 38.25);

\path[draw=drawColor,line width= 0.2pt,line join=round] ( 43.93, 62.13) --
	(572.16, 62.13);

\path[draw=drawColor,line width= 0.2pt,line join=round] ( 43.93, 86.01) --
	(572.16, 86.01);

\path[draw=drawColor,line width= 0.2pt,line join=round] ( 43.93,109.89) --
	(572.16,109.89);

\path[draw=drawColor,line width= 0.2pt,line join=round] ( 43.93,133.76) --
	(572.16,133.76);

\path[draw=drawColor,line width= 0.2pt,line join=round] ( 73.96, 33.48) --
	( 73.96,138.54);

\path[draw=drawColor,line width= 0.2pt,line join=round] (207.72, 33.48) --
	(207.72,138.54);

\path[draw=drawColor,line width= 0.2pt,line join=round] (341.48, 33.48) --
	(341.48,138.54);

\path[draw=drawColor,line width= 0.2pt,line join=round] (475.25, 33.48) --
	(475.25,138.54);
\definecolor{fillColor}{gray}{0.35}

\path[fill=fillColor] ( 67.94, 38.25) rectangle ( 79.98,133.76);

\path[fill=fillColor] ( 81.31, 38.25) rectangle ( 93.35, 62.51);

\path[fill=fillColor] ( 94.69, 38.25) rectangle (106.73, 55.37);

\path[fill=fillColor] (108.07, 38.25) rectangle (120.11, 59.50);

\path[fill=fillColor] (121.44, 38.25) rectangle (133.48, 65.90);

\path[fill=fillColor] (134.82, 38.25) rectangle (146.86, 55.35);

\path[fill=fillColor] (148.20, 38.25) rectangle (160.23, 56.14);

\path[fill=fillColor] (161.57, 38.25) rectangle (173.61, 53.12);

\path[fill=fillColor] (174.95, 38.25) rectangle (186.99, 55.27);

\path[fill=fillColor] (188.33, 38.25) rectangle (200.36, 56.43);

\path[fill=fillColor] (201.70, 38.25) rectangle (213.74, 53.09);

\path[fill=fillColor] (215.08, 38.25) rectangle (227.12, 55.74);

\path[fill=fillColor] (228.45, 38.25) rectangle (240.49, 54.78);

\path[fill=fillColor] (241.83, 38.25) rectangle (253.87, 56.15);

\path[fill=fillColor] (255.21, 38.25) rectangle (267.25, 52.48);

\path[fill=fillColor] (268.58, 38.25) rectangle (280.62, 57.41);

\path[fill=fillColor] (281.96, 38.25) rectangle (294.00, 52.62);

\path[fill=fillColor] (295.34, 38.25) rectangle (307.37, 57.48);

\path[fill=fillColor] (308.71, 38.25) rectangle (320.75, 56.36);

\path[fill=fillColor] (322.09, 38.25) rectangle (334.13, 52.85);

\path[fill=fillColor] (335.47, 38.25) rectangle (347.50, 55.26);

\path[fill=fillColor] (348.84, 38.25) rectangle (360.88, 57.15);

\path[fill=fillColor] (362.22, 38.25) rectangle (374.26, 53.50);

\path[fill=fillColor] (375.59, 38.25) rectangle (387.63, 51.99);

\path[fill=fillColor] (388.97, 38.25) rectangle (401.01, 53.37);

\path[fill=fillColor] (402.35, 38.25) rectangle (414.39, 50.39);

\path[fill=fillColor] (415.72, 38.25) rectangle (427.76, 49.69);

\path[fill=fillColor] (429.10, 38.25) rectangle (441.14, 53.88);

\path[fill=fillColor] (442.48, 38.25) rectangle (454.51, 52.37);

\path[fill=fillColor] (455.85, 38.25) rectangle (467.89, 48.48);

\path[fill=fillColor] (469.23, 38.25) rectangle (481.27, 47.75);

\path[fill=fillColor] (482.61, 38.25) rectangle (494.64, 50.87);

\path[fill=fillColor] (495.98, 38.25) rectangle (508.02, 50.88);

\path[fill=fillColor] (509.36, 38.25) rectangle (521.40, 51.68);

\path[fill=fillColor] (522.73, 38.25) rectangle (534.77, 50.32);

\path[fill=fillColor] (536.11, 38.25) rectangle (548.15, 48.48);
\definecolor{drawColor}{RGB}{0,0,0}

\path[draw=drawColor,line width= 0.6pt,dash pattern=on 4pt off 4pt ,line join=round] ( 43.93, 41.42) -- (572.16, 41.42);
\definecolor{drawColor}{gray}{0.50}

\path[draw=drawColor,line width= 0.6pt,line join=round,line cap=round] ( 43.93, 33.48) rectangle (572.16,138.54);
\end{scope}
\begin{scope}
\path[clip] (  0.00,  0.00) rectangle (578.16,144.54);
\definecolor{drawColor}{RGB}{0,0,0}

\node[text=drawColor,anchor=base east,inner sep=0pt, outer sep=0pt, scale=  0.96*2] at ( 38.53, 34.95) {0.00};

\node[text=drawColor,anchor=base east,inner sep=0pt, outer sep=0pt, scale=  0.96*2] at ( 38.53, 58.82) {0.25};

\node[text=drawColor,anchor=base east,inner sep=0pt, outer sep=0pt, scale=  0.96*2] at ( 38.53, 82.70) {0.50};

\node[text=drawColor,anchor=base east,inner sep=0pt, outer sep=0pt, scale=  0.96*2] at ( 38.53,106.58) {0.75};

\node[text=drawColor,anchor=base east,inner sep=0pt, outer sep=0pt, scale=  0.96*2] at ( 38.53,130.46) {1.00};
\end{scope}
\begin{scope}
\path[clip] (  0.00,  0.00) rectangle (578.16,144.54);
\definecolor{drawColor}{RGB}{0,0,0}

\path[draw=drawColor,line width= 0.6pt,line join=round] ( 40.93, 38.25) --
	( 43.93, 38.25);

\path[draw=drawColor,line width= 0.6pt,line join=round] ( 40.93, 62.13) --
	( 43.93, 62.13);

\path[draw=drawColor,line width= 0.6pt,line join=round] ( 40.93, 86.01) --
	( 43.93, 86.01);

\path[draw=drawColor,line width= 0.6pt,line join=round] ( 40.93,109.89) --
	( 43.93,109.89);

\path[draw=drawColor,line width= 0.6pt,line join=round] ( 40.93,133.76) --
	( 43.93,133.76);
\end{scope}
\begin{scope}
\path[clip] (  0.00,  0.00) rectangle (578.16,144.54);
\definecolor{drawColor}{RGB}{0,0,0}

\path[draw=drawColor,line width= 0.6pt,line join=round] ( 73.96, 30.48) --
	( 73.96, 33.48);

\path[draw=drawColor,line width= 0.6pt,line join=round] (207.72, 30.48) --
	(207.72, 33.48);

\path[draw=drawColor,line width= 0.6pt,line join=round] (341.48, 30.48) --
	(341.48, 33.48);

\path[draw=drawColor,line width= 0.6pt,line join=round] (475.25, 30.48) --
	(475.25, 33.48);
\end{scope}
\begin{scope}
\path[clip] (  0.00,  0.00) rectangle (578.16,144.54);
\definecolor{drawColor}{RGB}{0,0,0}

\node[text=drawColor,anchor=base,inner sep=0pt, outer sep=0pt, scale=  0.96*2] at ( 73.96, 21.46-5) {0};

\node[text=drawColor,anchor=base,inner sep=0pt, outer sep=0pt, scale=  0.96*2] at (207.72, 21.46-5) {10};

\node[text=drawColor,anchor=base,inner sep=0pt, outer sep=0pt, scale=  0.96*2] at (341.48, 21.46-5) {20};

\node[text=drawColor,anchor=base,inner sep=0pt, outer sep=0pt, scale=  0.96*2] at (475.25, 21.46-5) {30};
\end{scope}
\begin{scope}
\path[clip] (  0.00,  0.00) rectangle (578.16,144.54);
\definecolor{drawColor}{RGB}{0,0,0}

\node[text=drawColor,anchor=base,inner sep=0pt, outer sep=0pt, scale=  0.96*2] at (308.04,  8.40-8.4) {lag};
\end{scope}
\end{tikzpicture}}
\label{Fig:oil_log_returns_sq_acf}}\\
	\subfloat[][log-returns]{
			\resizebox{0.48\textwidth}{!}{\input{tikz/oil_log_returns.tex}}
\label{Fig:oil_log_returns}}
	\subfloat[][residuals]{
			\resizebox{0.48\textwidth}{!}{\input{tikz/oil_residuals.tex}}
\label{Fig:oil_residuals}}\\
	\subfloat[][centred log-returns]{
			\resizebox{0.48\textwidth}{!}{\input{tikz/oil_centered_log_returns.tex}}
\label{Fig:oil_centered_log_returns}}
	\subfloat[][autocorrelation of the squared residuals]{
			\resizebox{0.48	\textwidth}{!}{\begin{tikzpicture}[x=1pt,y=1pt]
\definecolor{fillColor}{RGB}{255,255,255}
\path[use as bounding box,fill=fillColor,fill opacity=0.00] (0,0) rectangle (578.16,144.54);
\begin{scope}
\path[clip] (  0.00,  0.00) rectangle (578.16,144.54);
\definecolor{drawColor}{RGB}{255,255,255}
\definecolor{fillColor}{RGB}{255,255,255}

\path[draw=drawColor,line width= 0.6pt,line join=round,line cap=round,fill=fillColor] (  0.00, -0.00) rectangle (578.16,144.54);
\end{scope}
\begin{scope}
\path[clip] ( 43.93, 33.48) rectangle (572.16,138.54);
\definecolor{fillColor}{RGB}{255,255,255}

\path[fill=fillColor] ( 43.93, 33.48) rectangle (572.16,138.54);
\definecolor{drawColor}{gray}{0.98}

\path[draw=drawColor,line width= 0.6pt,line join=round] ( 43.93, 51.22) --
	(572.16, 51.22);

\path[draw=drawColor,line width= 0.6pt,line join=round] ( 43.93, 74.81) --
	(572.16, 74.81);

\path[draw=drawColor,line width= 0.6pt,line join=round] ( 43.93, 98.39) --
	(572.16, 98.39);

\path[draw=drawColor,line width= 0.6pt,line join=round] ( 43.93,121.97) --
	(572.16,121.97);

\path[draw=drawColor,line width= 0.6pt,line join=round] (140.84, 33.48) --
	(140.84,138.54);

\path[draw=drawColor,line width= 0.6pt,line join=round] (274.60, 33.48) --
	(274.60,138.54);

\path[draw=drawColor,line width= 0.6pt,line join=round] (408.37, 33.48) --
	(408.37,138.54);

\path[draw=drawColor,line width= 0.6pt,line join=round] (542.13, 33.48) --
	(542.13,138.54);
\definecolor{drawColor}{gray}{0.90}

\path[draw=drawColor,line width= 0.2pt,line join=round] ( 43.93, 39.43) --
	(572.16, 39.43);

\path[draw=drawColor,line width= 0.2pt,line join=round] ( 43.93, 63.01) --
	(572.16, 63.01);

\path[draw=drawColor,line width= 0.2pt,line join=round] ( 43.93, 86.60) --
	(572.16, 86.60);

\path[draw=drawColor,line width= 0.2pt,line join=round] ( 43.93,110.18) --
	(572.16,110.18);

\path[draw=drawColor,line width= 0.2pt,line join=round] ( 43.93,133.76) --
	(572.16,133.76);

\path[draw=drawColor,line width= 0.2pt,line join=round] ( 73.96, 33.48) --
	( 73.96,138.54);

\path[draw=drawColor,line width= 0.2pt,line join=round] (207.72, 33.48) --
	(207.72,138.54);

\path[draw=drawColor,line width= 0.2pt,line join=round] (341.48, 33.48) --
	(341.48,138.54);

\path[draw=drawColor,line width= 0.2pt,line join=round] (475.25, 33.48) --
	(475.25,138.54);
\definecolor{fillColor}{gray}{0.35}

\path[fill=fillColor] ( 67.94, 39.43) rectangle ( 79.98,133.76);

\path[fill=fillColor] ( 81.31, 39.43) rectangle ( 93.35, 56.05);

\path[fill=fillColor] ( 94.69, 39.43) rectangle (106.73, 46.10);

\path[fill=fillColor] (108.07, 39.43) rectangle (120.11, 44.82);

\path[fill=fillColor] (121.44, 39.43) rectangle (133.48, 44.04);

\path[fill=fillColor] (134.82, 39.43) rectangle (146.86, 41.80);

\path[fill=fillColor] (148.20, 39.43) rectangle (160.23, 42.09);

\path[fill=fillColor] (161.57, 39.43) rectangle (173.61, 42.16);

\path[fill=fillColor] (174.95, 39.43) rectangle (186.99, 44.15);

\path[fill=fillColor] (188.33, 39.43) rectangle (200.36, 45.73);

\path[fill=fillColor] (201.70, 39.43) rectangle (213.74, 39.68);

\path[fill=fillColor] (215.08, 39.43) rectangle (227.12, 43.07);

\path[fill=fillColor] (228.45, 39.43) rectangle (240.49, 39.73);

\path[fill=fillColor] (241.83, 39.43) rectangle (253.87, 41.02);

\path[fill=fillColor] (255.21, 39.43) rectangle (267.25, 43.54);

\path[fill=fillColor] (268.58, 39.43) rectangle (280.62, 41.20);

\path[fill=fillColor] (281.96, 39.43) rectangle (294.00, 39.51);

\path[fill=fillColor] (295.34, 39.43) rectangle (307.37, 43.21);

\path[fill=fillColor] (308.71, 39.43) rectangle (320.75, 40.22);

\path[fill=fillColor] (322.09, 39.43) rectangle (334.13, 41.88);

\path[fill=fillColor] (335.47, 39.43) rectangle (347.50, 41.54);

\path[fill=fillColor] (348.84, 39.43) rectangle (360.88, 42.21);

\path[fill=fillColor] (362.22, 39.43) rectangle (374.26, 42.84);

\path[fill=fillColor] (375.59, 39.43) rectangle (387.63, 41.42);

\path[fill=fillColor] (388.97, 39.43) rectangle (401.01, 41.29);

\path[fill=fillColor] (402.35, 39.43) rectangle (414.39, 41.39);

\path[fill=fillColor] (415.72, 39.32) rectangle (427.76, 39.43);

\path[fill=fillColor] (429.10, 39.43) rectangle (441.14, 40.93);

\path[fill=fillColor] (442.48, 39.43) rectangle (454.51, 41.83);

\path[fill=fillColor] (455.85, 39.43) rectangle (467.89, 42.58);

\path[fill=fillColor] (469.23, 39.43) rectangle (481.27, 40.73);

\path[fill=fillColor] (482.61, 39.43) rectangle (494.64, 39.84);

\path[fill=fillColor] (495.98, 38.62) rectangle (508.02, 39.43);

\path[fill=fillColor] (509.36, 38.25) rectangle (521.40, 39.43);

\path[fill=fillColor] (522.73, 39.43) rectangle (534.77, 41.12);

\path[fill=fillColor] (536.11, 39.43) rectangle (548.15, 42.75);
\definecolor{drawColor}{RGB}{0,0,0}

\path[draw=drawColor,line width= 0.6pt,dash pattern=on 4pt off 4pt ,line join=round] ( 43.93, 42.56) -- (572.16, 42.56);
\definecolor{drawColor}{gray}{0.50}

\path[draw=drawColor,line width= 0.6pt,line join=round,line cap=round] ( 43.93, 33.48) rectangle (572.16,138.54);
\end{scope}
\begin{scope}
\path[clip] (  0.00,  0.00) rectangle (578.16,144.54);
\definecolor{drawColor}{RGB}{0,0,0}

\node[text=drawColor,anchor=base east,inner sep=0pt, outer sep=0pt, scale=  0.96*2] at ( 38.53, 36.13) {0.00};

\node[text=drawColor,anchor=base east,inner sep=0pt, outer sep=0pt, scale=  0.96*2] at ( 38.53, 59.71) {0.25};

\node[text=drawColor,anchor=base east,inner sep=0pt, outer sep=0pt, scale=  0.96*2] at ( 38.53, 83.29) {0.50};

\node[text=drawColor,anchor=base east,inner sep=0pt, outer sep=0pt, scale=  0.96*2] at ( 38.53,106.88) {0.75};

\node[text=drawColor,anchor=base east,inner sep=0pt, outer sep=0pt, scale=  0.96*2] at ( 38.53,130.46) {1.00};
\end{scope}
\begin{scope}
\path[clip] (  0.00,  0.00) rectangle (578.16,144.54);
\definecolor{drawColor}{RGB}{0,0,0}

\path[draw=drawColor,line width= 0.6pt,line join=round] ( 40.93, 39.43) --
	( 43.93, 39.43);

\path[draw=drawColor,line width= 0.6pt,line join=round] ( 40.93, 63.01) --
	( 43.93, 63.01);

\path[draw=drawColor,line width= 0.6pt,line join=round] ( 40.93, 86.60) --
	( 43.93, 86.60);

\path[draw=drawColor,line width= 0.6pt,line join=round] ( 40.93,110.18) --
	( 43.93,110.18);

\path[draw=drawColor,line width= 0.6pt,line join=round] ( 40.93,133.76) --
	( 43.93,133.76);
\end{scope}
\begin{scope}
\path[clip] (  0.00,  0.00) rectangle (578.16,144.54);
\definecolor{drawColor}{RGB}{0,0,0}

\path[draw=drawColor,line width= 0.6pt,line join=round] ( 73.96, 30.48) --
	( 73.96, 33.48);

\path[draw=drawColor,line width= 0.6pt,line join=round] (207.72, 30.48) --
	(207.72, 33.48);

\path[draw=drawColor,line width= 0.6pt,line join=round] (341.48, 30.48) --
	(341.48, 33.48);

\path[draw=drawColor,line width= 0.6pt,line join=round] (475.25, 30.48) --
	(475.25, 33.48);
\end{scope}
\begin{scope}
\path[clip] (  0.00,  0.00) rectangle (578.16,144.54);
\definecolor{drawColor}{RGB}{0,0,0}

\node[text=drawColor,anchor=base,inner sep=0pt, outer sep=0pt, scale=  0.96*2] at ( 73.96, 21.46-5) {0};

\node[text=drawColor,anchor=base,inner sep=0pt, outer sep=0pt, scale=  0.96*2] at (207.72, 21.46-5) {10};

\node[text=drawColor,anchor=base,inner sep=0pt, outer sep=0pt, scale=  0.96*2] at (341.48, 21.46-5) {20};

\node[text=drawColor,anchor=base,inner sep=0pt, outer sep=0pt, scale=  0.96*2] at (475.25, 21.46-5) {30};
\end{scope}
\begin{scope}
\path[clip] (  0.00,  0.00) rectangle (578.16,144.54);
\definecolor{drawColor}{RGB}{0,0,0}

\node[text=drawColor,anchor=base,inner sep=0pt, outer sep=0pt, scale=  0.96*2] at (308.04,  8.40-8.4) {lag};
\end{scope}
\end{tikzpicture}}
\label{Fig:oil_residuals_sq_acf}}\\
	\caption{Change-point analysis on the daily OPEC Reference Basket oil price in USD from  1 January, 2003 to 15 July, 2016. Figure~\ref{Fig:oil_price}: price series $P_{t}$ (thin grey), locations of the change-points detected with NOT (vertical dotted lines) and NMCD (vertical dashed lines). Figure~\ref{Fig:oil_log_returns_sq_acf}: autocorrelation function of $Y_{t}^2$. Figure~\ref{Fig:oil_log_returns}:  log-returns $Y_{t}=100 \log\left(P_{t}/P_{t-1}\right)$ (thin grey), the fitted piecewise-constant mean via NOT, $\hat{f}_{t}$  (thick red). Figure~\ref{Fig:oil_residuals}: estimated residuals via NOT, $\hat{\varepsilon}_{t}=(Y_t-\hat{f}_{t})/\hat{\sigma}_{t}$. Figure~\ref{Fig:oil_centered_log_returns}: the centred log-returns $|Y_t-\hat{f}_{t}|$ (thin grey), fitted piecewise-constant volatility $\hat{\sigma}_{t}$ (thick red). Figure~\ref{Fig:oil_residuals_sq_acf}: autocorrelation of $\hat{\varepsilon}_{t}^2$. The exact locations of the change-points detected via NOT are given in Table~\ref{Table:oil_price_changepoints}.\label{Fig:oil_analysis}}
\end{figure} 

\begin{table}
\caption{\label{Table:oil_price_changepoints}Change-points detected using NOT and NMCD methods in the daily OPEC Reference Basket oil price data from 1 January 2003 to 15 July 2016, with some of them dated.} 

\centering
\footnotesize
\fbox{
\begin{tabular}{p{0.18\textwidth}|p{0.17\textwidth}|p{0.52\textwidth}}
	 NOT  &  NMCD &  Event that coincides\\
	\hline
	 29 April 2003  & N/A & Invasion of Iraq\\
	 1 September 2008 & 28 August 2008 & critical stage of the subprime mortgage crisis\\
	 27 January 2009 & 22 January 2009 & tensions in the Gaza Strip\\
	 1 October 2009 & 23 October 2009 & \\
	 12 November 2012 & 12 October 2012 & beginning of a period of low volatility\\
	 30 September 2014 & 1 October 2014 & \\
	 5 January 2016 & 21 January 2016 & beginning of a sell-off leading the price to 12-year low\\
	 N/A & 22 February 2016& \\
\end{tabular}}
\end{table}

\section{Proofs}
\subsection{Some useful lemmas}
\label{Sec:lemmas}
\subsubsection{The piecewise-constant case}
\begin{Lemma}
	\label{Lem:basic}
	Let $g(x,y) = \frac{xy}{x+y}$ and suppose that $\min (x,y) > 0$. Then 
	\begin{align*}
	g(x,y)\geq \frac{1}{2}\min(x,y).
	\end{align*}
\end{Lemma}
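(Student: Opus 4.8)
The plan is to exploit the symmetry of $g$ to reduce the two-variable claim to a trivial one-variable comparison. First I would observe that $g(x,y)=\frac{xy}{x+y}=g(y,x)$ is symmetric in its arguments, so without loss of generality I may assume that the minimum is attained at $x$, i.e. $0 < x \le y$. The hypothesis $\min(x,y)>0$ then guarantees $x>0$ and $x+y>0$, which legitimises dividing by these quantities in the steps below.

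Next I would rewrite the target inequality $g(x,y)\ge \tfrac12\min(x,y)=\tfrac12 x$. Since $x>0$, dividing through by $x$ reduces it to $\frac{y}{x+y}\ge \tfrac12$; clearing the positive denominator $x+y$ turns this into $2y\ge x+y$, i.e. $y\ge x$. The latter holds by the assumption $x\le y$. Reversing this chain of equivalences yields $g(x,y)\ge \tfrac12 x=\tfrac12\min(x,y)$, as claimed.

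There is essentially no genuine obstacle here: the statement is an elementary algebraic inequality. The only points meriting (minor) care are the appeal to symmetry to fix which argument realises the minimum, and the verification that both $x$ and $x+y$ are strictly positive before dividing by them — each of which is immediate from $\min(x,y)>0$.
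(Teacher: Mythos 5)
Your proposal is correct and follows essentially the same route as the paper: after a without-loss-of-generality ordering of $x$ and $y$, both arguments reduce to the observation that $x+y\le 2\max(x,y)$, so that $g(x,y)\ge \frac{xy}{2\max(x,y)}=\tfrac12\min(x,y)$. Your chain of equivalences is just a rearrangement of the paper's one-line bound, and your care about positivity of the denominators is fine but not a substantive difference.
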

\begin{proof}
	Without loss of generality, assume that $x \ge y$. Then $g(x,y) \ge \frac{xy}{2x} \ge y/2 = \min(x,y)/2$.
\end{proof}
\begin{Lemma}
	\label{Lem:cusum_at_cp_jump}
	Suppose $\fb=(f_{1},\ldots,f_{T})'$ is piecewise-constant vector as in Scenario \ref{Scen:change_in_mean}, and $\tau_{1},\ldots,\tau_{q}$ are the locations of the change-points. Suppose $1\leq s < e\leq T$, such that $\tau_{j-1} <  s \leq \tau_{j} <  e \leq \tau_{j+1}$  for some $j=1\ldots, q$. Let $\eta = \min\{\tau_{j} - s +1, e - \tau_{j}\}$ and $\Delta_j^\fb = |f_{\tau_{j}+1}-f_{\tau_{j}}|$. Then
	\begin{align*}
	\cont{s}{e}{\tau_{j}}{\fb} = \max_{s\leq b <  e}\cont{s}{e}{b}{\fb} \; \begin{cases} \geq \frac{1}{\sqrt{2}} \eta^{1/2} \Delta_j^\fb,\\  \leq \eta^{1/2} \Delta_j^\fb. \end{cases}
	\end{align*}
\end{Lemma}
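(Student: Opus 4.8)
The plan is to first rewrite the contrast in its mean-difference form. Since $\psib_{s,e}^{b}$ is orthonormal to $\mathbf{1}_{s,e}$, factoring the expression in \eqref{Eq:contrast_const} gives
\[
\cont{s}{e}{b}{\fb} = \sqrt{\frac{(b-s+1)(e-b)}{l}}\,\left| \frac{1}{b-s+1}\sum_{t=s}^{b} f_{t} - \frac{1}{e-b}\sum_{t=b+1}^{e} f_{t} \right|,
\]
which makes transparent both the shift-invariance of the contrast and its dependence only on the two segment means. Using this invariance, I would replace $\fb|_{[s,e]}$ by the signal taking the value $0$ on $[s,\tau_{j}]$ and $\Delta := f_{\tau_{j}+1}-f_{\tau_{j}}$ on $[\tau_{j}+1,e]$, so that $|\Delta| = \Delta_j^\fb$ and, thanks to the hypothesis $\tau_{j-1} < s \le \tau_{j} < e \le \tau_{j+1}$, the two segment means are explicit.

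Next I would locate the maximiser by splitting into the two cases $s \le b \le \tau_{j}$ and $\tau_{j} \le b \le e-1$. Writing $n_{1} = \tau_{j}-s+1$ and $n_{2} = e-\tau_{j}$, a short computation of the two segment means gives, for $b \le \tau_{j}$,
\[
\cont{s}{e}{b}{\fb} = \sqrt{\frac{b-s+1}{l(e-b)}}\, n_{2}\,\Delta_j^\fb,
\]
and, for $b \ge \tau_{j}$,
\[
\cont{s}{e}{b}{\fb} = \sqrt{\frac{e-b}{l(b-s+1)}}\, n_{1}\,\Delta_j^\fb.
\]
In the first range $(b-s+1)/(e-b)$ is increasing in $b$ and in the second $(e-b)/(b-s+1)$ is decreasing in $b$; hence each expression is maximised at $b=\tau_{j}$, and the two agree there. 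This establishes $\cont{s}{e}{\tau_{j}}{\fb} = \max_{s \le b < e}\cont{s}{e}{b}{\fb}$, with common value $\sqrt{n_{1}n_{2}/(n_{1}+n_{2})}\,\Delta_j^\fb = \sqrt{g(n_{1},n_{2})}\,\Delta_j^\fb$, where $g$ is the function of Lemma~\ref{Lem:basic}.

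Finally, the two bounds follow from elementary estimates on $g$. The lower bound is exactly Lemma~\ref{Lem:basic}: $g(n_{1},n_{2}) \ge \tfrac{1}{2}\min(n_{1},n_{2}) = \tfrac{1}{2}\eta$, so $\cont{s}{e}{\tau_{j}}{\fb} \ge \tfrac{1}{\sqrt{2}}\eta^{1/2}\Delta_j^\fb$. For the upper bound I would note $g(n_{1},n_{2}) = n_{1}n_{2}/(n_{1}+n_{2}) \le n_{1}n_{2}/\max(n_{1},n_{2}) = \min(n_{1},n_{2}) = \eta$, giving $\cont{s}{e}{\tau_{j}}{\fb} \le \eta^{1/2}\Delta_j^\fb$. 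I expect the only point requiring care — the main obstacle — to be the monotonicity argument identifying $b=\tau_{j}$ as the maximiser; everything else is routine algebra. This is precisely where the single-change-point hypothesis is essential, since it is what reduces the problem to a two-level signal and makes the piecewise-monotone behaviour in $b$ explicit.
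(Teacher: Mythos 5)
Your proposal is correct and follows essentially the same route as the paper's proof: you derive the same explicit three-case formula for $\cont{s}{e}{b}{\fb}$ (the paper's equation obtained by ``simple algebra''), identify $b=\tau_j$ as the maximiser via the same piecewise-monotonicity observation, and obtain the lower and upper bounds from Lemma~\ref{Lem:basic} and the elementary inequality $\frac{n_1 n_2}{n_1+n_2} \le \min(n_1,n_2)$, exactly as the paper does. The only cosmetic difference is your intermediate mean-difference/shift-invariance reduction, which the paper bypasses by computing the inner product directly.
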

\begin{proof}
	For any $s\leq b < e$, by simple algebra, we have
	\begin{align}
	\cont{s}{e}{b}{\fb} =
	\begin{cases}
	\sqrt{\frac{b-s+1}{l(e-b)}}(e-\tau_{j}) |f_{\tau_{j}+1}-f_{\tau_{j}}|, &b \le  \tau_{j};\\
	\sqrt{\frac{(\tau_{j}-s+1)(e-\tau_{j})}{l}} |f_{\tau_{j}+1}-f_{\tau_{j}}|, &b=\tau_{j};\\
	\sqrt{\frac{e-b}{l(b-s+1)}}(\tau_{j}-s+1) |f_{\tau_{j}+1}-f_{\tau_{j}}|, &b\ge \tau_{j},
	\end{cases}\label{Eq:cusum_one_cpt}
	\end{align}
	where $l = s - e + 1$.
	Now $\cont{s}{e}{\tau_{j}}{\fb}=\max_{s\leq b \le  e}\cont{s}{e}{b}{\fb}$ follows from the fact that $\cont{s}{e}{b}{\fb}$ is increasing (as a function of $b$) for $1 \leq b \le \tau_j$ and decreasing for $\tau_j \le b \le  e$. To prove the lower bound, we set $\eta_L=\tau_{j}-s+1$ and $\eta_R=e-\tau_{j}$ and observe that $\eta_L \geq \eta$ and $\eta_R\geq \eta$. Therefore by  Lemma~\ref{Lem:basic}, $\frac{\eta_L \eta_R}{\eta_L+\eta_R} \geq \frac{\eta}{2} $. Noting that $l=\eta_L+\eta_R$ we bound
	\begin{align*}
	\cont{s}{e}{\tau_j}{\fb}  = \sqrt{\frac{(\tau_{j}-s+1)(e-\tau_{j})}{l}} |f_{\tau_{j}+1}-f_{\tau_{j}}| \begin{cases} \geq (\eta/2)^{1/2} \Delta_j^\fb;\\
	\leq \eta^{1/2}\Delta_j^\fb. \end{cases}
	\end{align*}
	which completes the proof.
\end{proof}

\begin{Lemma}
	\label{Lem:cusum_two_cp_jump}
	Suppose $\fb=(f_{1},\ldots,f_{T})'$ is piecewise-constant vector as in Scenario \ref{Scen:change_in_mean}, and $\tau_{1},\ldots,\tau_{q}$ are the locations of the change-points. Suppose $1\leq s < e\leq T$ such that $\tau_{j-1} < s  \leq \tau_{j}$ and $\tau_{j+1} <  e \leq \tau_{j+2}$  for some $j=1\ldots, q-1$. Then
	\begin{align*}
	\max_{s\leq b < e}\cont{s}{e}{b}{\fb}  \le  (\tau_j - s + 1)^{1/2} \Delta_j^\fb + (e - \tau_{j+1})^{1/2} \Delta_{j+1}^{\fb}
	\end{align*}
	where $\Delta_j^\fb = |f_{\tau_{j}+1}-f_{\tau_{j}}|$.
\end{Lemma}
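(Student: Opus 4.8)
The plan is to exploit the linearity of the functional $\vb\mapsto\ip{\vb}{\psib_{s,e}^{b}}$ together with the fact (used throughout Scenario~\ref{Scen:change_in_mean}) that $\psib_{s,e}^{b}$ is orthonormal to the constant vector $\mathbf{1}_{s,e}$. Since $\psib_{s,e}^{b}$ vanishes outside $[s,e]$, only the restriction $\fb|_{[s,e]}$ is relevant, and on this interval $\fb$ takes three constant values $v_1,v_2,v_3$ on $[s,\tau_{j}]$, $[\tau_{j}+1,\tau_{j+1}]$ and $[\tau_{j+1}+1,e]$ respectively, so that $\Delta_{j}^{\fb}=|v_2-v_1|$ and $\Delta_{j+1}^{\fb}=|v_3-v_2|$. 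I would decompose this restriction as a constant vector (a multiple of $\mathbf{1}_{s,e}$) plus two single-jump pieces: let $\fb^{(1)}$ equal $0$ on $[s,\tau_{j}]$ and $v_2-v_1$ on $[\tau_{j}+1,e]$, and let $\fb^{(2)}$ equal $0$ on $[s,\tau_{j+1}]$ and $v_3-v_2$ on $[\tau_{j+1}+1,e]$. Each of $\fb^{(1)},\fb^{(2)}$ is a piecewise-constant vector whose only change-point inside $[s,e]$ is, respectively, $\tau_{j}$ and $\tau_{j+1}$, of jump magnitude $\Delta_{j}^{\fb}$ and $\Delta_{j+1}^{\fb}$.

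Because the constant part is annihilated by $\ip{\cdot}{\psib_{s,e}^{b}}$, linearity and the triangle inequality give, for every $b$,
\begin{align*}
\cont{s}{e}{b}{\fb} = \abs{\ip{\fb^{(1)}+\fb^{(2)}}{\psib_{s,e}^{b}}} \le \cont{s}{e}{b}{\fb^{(1)}} + \cont{s}{e}{b}{\fb^{(2)}}.
\end{align*}
Taking the maximum over $b$ and bounding the maximum of a sum by the sum of the maxima, it suffices to control $\max_{b}\cont{s}{e}{b}{\fb^{(i)}}$ for $i=1,2$ separately. Now each $\fb^{(i)}$ carries a single change-point in $[s,e]$, so Lemma~\ref{Lem:cusum_at_cp_jump} applies directly (regarding $\fb^{(i)}$ as a signal with that single change-point). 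Its upper bound yields $\max_{b}\cont{s}{e}{b}{\fb^{(1)}} \le \eta_{1}^{1/2}\Delta_{j}^{\fb}$ with $\eta_{1}=\min\{\tau_{j}-s+1,\,e-\tau_{j}\}$, and $\max_{b}\cont{s}{e}{b}{\fb^{(2)}} \le \eta_{2}^{1/2}\Delta_{j+1}^{\fb}$ with $\eta_{2}=\min\{\tau_{j+1}-s+1,\,e-\tau_{j+1}\}$.

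The final step is simply to weaken each bound by replacing $\eta_{1}^{1/2}$ with $(\tau_{j}-s+1)^{1/2}$ and $\eta_{2}^{1/2}$ with $(e-\tau_{j+1})^{1/2}$, using $\eta_{1}\le\tau_{j}-s+1$ and $\eta_{2}\le e-\tau_{j+1}$, which produces exactly the claimed inequality. There is no genuine obstacle here; the only points requiring a little care are verifying that the decomposition is exact and that invoking Lemma~\ref{Lem:cusum_at_cp_jump} on $\fb^{(1)}$ and $\fb^{(2)}$ is legitimate — in particular that each summand has precisely one interior change-point in $[s,e]$, which is guaranteed by the hypotheses $\tau_{j-1}<s\le\tau_{j}$ and $\tau_{j+1}<e\le\tau_{j+2}$. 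I would note in passing that this linearity-plus-triangle-inequality decomposition extends verbatim to intervals containing any number of change-points, which is presumably the reason for isolating it as a lemma.
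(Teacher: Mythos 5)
Your proposal is correct, but it takes a genuinely different route from the paper. You decompose the restricted signal as a constant plus two single-jump vectors, kill the constant via the orthogonality $\ip{\mathbf{1}_{s,e}}{\psib_{s,e}^{b}}=0$, apply the triangle inequality, and then invoke the upper bound of Lemma~\ref{Lem:cusum_at_cp_jump} once per jump; the verification that the decomposition is exact and that each summand has a single interior change-point is the only thing to check, and your hypotheses do guarantee it. The paper instead argues via projections: writing $b^{*}$ for the maximiser, it observes that $0\le\|\fb-\ip{\fb}{\psib_{s,e}^{b^{*}}}\psib_{s,e}^{b^{*}}-\ip{\fb}{\mathbf{1}_{s,e}}\mathbf{1}_{s,e}\|^{2}$, bounds the constant-fit residual by the (suboptimal) choice of the middle level $f_{\tau_j+1}$ as the constant, and so obtains the intermediate bound $\bigl\{(\tau_j-s+1)(\Delta_j^\fb)^2+(e-\tau_{j+1})(\Delta_{j+1}^\fb)^2\bigr\}^{1/2}$ before relaxing to the stated sum via $\sqrt{a^2+b^2}\le a+b$. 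Your superposition argument is more modular — it reuses the single-change-point lemma and, as you note, extends verbatim to any number of change-points — but it is tied to the additive jump structure of Scenario~\ref{Scen:change_in_mean}. The paper's residual-sum-of-squares argument yields a marginally sharper $\ell_2$-type intermediate bound and, more importantly, is the template that transfers directly to the continuous piecewise-linear case (Lemma~\ref{Lem:cusum_two_cp_kink}), where an analogous clean decomposition into orthogonally-annihilated single-kink pieces is not available.
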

\begin{proof}
	Suppose that $b^* = \argmax_{s\leq b < e}\cont{s}{e}{b}{\fb}$. Then
	\begin{align*}
	0 &\le \|	\fb -  \langle\fb, \psib_{s,e}^{b^*}\rangle \psib_{s,e}^{b^*} - \langle \fb,  \mathbf{1}_{s,e} \rangle  \mathbf{1}_{s,e} \|^2	 =  \| \fb -  \langle \fb,  \mathbf{1}_{s,e} \rangle  \mathbf{1}_{s,e} \|^2 -  \langle\fb, \psib_{s,e}^{b^*}\rangle ^2 \\
	& \le \|\fb -  f_{\tau_j+1} \sqrt{e-s+1} \mathbf{1}_{s,e}   \|^2 -  \langle\fb, \psib_{s,e}^{b^*}\rangle ^2 \\
	& = (\tau_j - s + 1) (\Delta_j^\fb)^2 + (e - \tau_{j+1}) (\Delta_{j+1}^\fb)^2 -  \Big(\max_{s\leq b < e}\cont{s}{e}{b}{\fb}\Big)^2.
	\end{align*}
	It then follows that
	\[
	\max_{s\leq b < e}\cont{s}{e}{b}{\fb} \le \sqrt{(\tau_j - s + 1) (\Delta_j^\fb)^2 + (e - \tau_{j+1}) (\Delta_{j+1}^\fb)^2} \le  (\tau_j - s + 1)^{1/2} \Delta_j^\fb + (e - \tau_{j+1})^{1/2} \Delta_{j+1}^{\fb}.
	\]
\end{proof}	

\begin{Lemma}
	\label{Lem:cusum_cp_jump_distance}
	Suppose $\fb=(f_{1},\ldots,f_{T})'$ is piecewise-constant vector as in Scenario \ref{Scen:change_in_mean}. Pick any interval $[s,e]\subset [1,T]$ such that $[s,e-1]$ contains exactly one change-point $\tau_{j}$. Let $\rho = |\tau_{j} -b|$, $\Delta_j^\fb = |f_{\tau_{j}+1}-f_{\tau_{j}}|$, $\eta_L = \tau_{j}-s+1$ and $\eta_R = e-\tau_{j}$. Then,
	\[
	\| \psib_{s,e}^b \langle\fb,\psib_{s,e}^b \rangle - \psib_{s,e}^{\tau_j} \langle\fb,\psib_{s,e}^{\tau_j}\rangle\|_2^2 = (\cont{s}{e}{\tau_j}{\fb})^2-(\cont{s}{e}{b}{\fb})^2.
	\]
	Moreover, 
	\begin{enumerate}
		\item for any $\tau_j \leq b < e$, $(\cont{s}{e}{\tau_j}{\fb})^2-(\cont{s}{e}{b}{\fb})^2 = \frac{ \rho \; \eta_L}{\rho+\eta_L} (\Delta_j^\fb)^2$;
		\item for any $s \leq b < \tau_j$, $(\cont{s}{e}{\tau_j}{\fb})^2-(\cont{s}{e}{b}{\fb})^2 = \frac{ \rho \; \eta_R}{\rho+\eta_R} (\Delta_j^\fb)^2$.
	\end{enumerate}
\end{Lemma}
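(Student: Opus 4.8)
The plan is to treat the three assertions in sequence: establish the norm identity first, then deduce the two explicit difference-of-squares formulas by direct substitution. The structural fact underlying everything is that, on the interval $[s,e]$, the signal $\fb$ is a two-level step function whose only jump sits at $\tau_j$; hence its restriction $\fb|_{[s,e]}$ lies in the two-dimensional span of $\mathbf{1}_{s,e}$ and $\psib_{s,e}^{\tau_j}$, both of which are constant on $[s,\tau_j]$ and on $[\tau_j+1,e]$.

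For the norm identity I would first record the factorization
\begin{align*}
\ip{\fb}{\psib_{s,e}^{b}} = \ip{\fb}{\psib_{s,e}^{\tau_j}}\,\ip{\psib_{s,e}^{\tau_j}}{\psib_{s,e}^{b}}.
\end{align*}
This follows by writing $\fb|_{[s,e]} = \alpha\,\mathbf{1}_{s,e} + \beta\,\psib_{s,e}^{\tau_j}$ for suitable scalars $\alpha,\beta$ (possible by the span observation above) and using that $\psib_{s,e}^{b}\perp\mathbf{1}_{s,e}$, so the $\mathbf{1}_{s,e}$-component contributes nothing to $\ip{\fb}{\psib_{s,e}^{b}}$; since $\|\psib_{s,e}^{\tau_j}\|_2=1$ we have $\beta=\ip{\fb}{\psib_{s,e}^{\tau_j}}$, giving the display. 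Writing $a=\ip{\fb}{\psib_{s,e}^{b}}$, $c=\ip{\fb}{\psib_{s,e}^{\tau_j}}$ and $\gamma=\ip{\psib_{s,e}^{\tau_j}}{\psib_{s,e}^{b}}$, and expanding the squared norm using the unit length of both contrast vectors, I obtain $\|a\psib_{s,e}^{b}-c\psib_{s,e}^{\tau_j}\|_2^2 = a^2+c^2-2ac\gamma$; substituting $a=c\gamma$ collapses this to $c^2-a^2 = (\cont{s}{e}{\tau_j}{\fb})^2-(\cont{s}{e}{b}{\fb})^2$, which is the first claim.

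For the two remaining formulas I would simply insert the closed forms of $\cont{s}{e}{b}{\fb}$ from \eqref{Eq:cusum_one_cpt} in Lemma~\ref{Lem:cusum_at_cp_jump}. With $l=\eta_L+\eta_R$ one has $(\cont{s}{e}{\tau_j}{\fb})^2 = \frac{\eta_L\eta_R}{l}(\Delta_j^\fb)^2$. In case~1, where $\tau_j\le b<e$, I substitute $b=\tau_j+\rho$, so that $e-b=\eta_R-\rho$ and $b-s+1=\eta_L+\rho$, giving $(\cont{s}{e}{b}{\fb})^2 = \frac{(\eta_R-\rho)\eta_L^2}{l(\eta_L+\rho)}(\Delta_j^\fb)^2$; subtracting and pulling out the common factor $\eta_L/l$ telescopes the numerator to $\rho(\eta_L+\eta_R)=\rho l$, yielding $\frac{\rho\,\eta_L}{\rho+\eta_L}(\Delta_j^\fb)^2$. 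Case~2, where $s\le b<\tau_j$, is the mirror image under $\eta_L\leftrightarrow\eta_R$ (substituting $b=\tau_j-\rho$), producing $\frac{\rho\,\eta_R}{\rho+\eta_R}(\Delta_j^\fb)^2$.

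The only genuinely conceptual step is the factorization identity in the second paragraph; once it is in hand, the norm expansion and the two algebraic reductions are routine. I expect the main care to be needed in keeping the $\eta_L,\eta_R,\rho$ bookkeeping correct across the two cases and in checking that $b$ stays within the ranges for which the relevant branch of \eqref{Eq:cusum_one_cpt} applies.
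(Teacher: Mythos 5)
Your proposal is correct and follows essentially the same route as the paper's own proof: the factorization $\ip{\fb}{\psib_{s,e}^{b}} = \ip{\fb}{\psib_{s,e}^{\tau_j}}\ip{\psib_{s,e}^{\tau_j}}{\psib_{s,e}^{b}}$ obtained from decomposing $\fb|_{[s,e]}$ in the orthonormal span of $\mathbf{1}_{s,e}$ and $\psib_{s,e}^{\tau_j}$, the resulting collapse of the expanded squared norm to $c^2-a^2$, and direct substitution into the closed-form CUSUM expression for the two cases (the paper likewise treats only $b\ge\tau_j$ explicitly and invokes symmetry for the other). No gaps.
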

\begin{proof}
	First, we note that since there is only one change-point in $[s,e-1]$, the restriction of $\fb$ on $[s,e]$, i.e. $\fb|_{[s,e]} = (0,\ldots,0,f_{s},\ldots,f_{e},0,\ldots,0)'$ can be decomposed into
	\[
	\fb|_{[s,e]} = \psib_{s,e}^{\tau_j} \langle\fb, \psib_{s,e}^{\tau_j}\rangle +  \mathbf{1}_{s,e} \langle\fb, \mathbf{1}_{s,e}\rangle,
	\]
	where we also used the fact that $\psib_{s,e}^{\tau_j}$ and $\mathbf{1}_{s,e}$ are orthonormal. Note that $\psib_{s,e}^{b}$ and $\mathbf{1}_{s,e}$ are also orthonormal, it follows that
	\[
	\langle\fb,\psib_{s,e}^b \rangle = \langle\fb|_{[s,e]},\psib_{s,e}^b \rangle = \big\langle \psib_{s,e}^{\tau_j}\langle\fb, \psib_{s,e}^{\tau_j}\rangle +  \mathbf{1}_{s,e} \langle\fb, \mathbf{1}_{s,e}\rangle  ,\psib_{s,e}^b \big\rangle = \langle \psib_{s,e}^{\tau_j}, \psib_{s,e}^{b}\rangle\langle\fb, \psib_{s,e}^{\tau_j}\rangle.
	\]
	Therefore,
	\[
	\langle\fb,\psib_{s,e}^b \rangle ^2 =  \langle\fb,\psib_{s,e}^b \rangle \langle \psib_{s,e}^{\tau_j}, \psib_{s,e}^{b}\rangle\langle\fb, \psib_{s,e}^{\tau_j}\rangle,
	\]
	and thus
	\begin{align*}
	\langle\fb,\psib_{s,e}^{\tau_j} \rangle ^2 - \langle\fb,\psib_{s,e}^b \rangle ^2 &= \langle\fb,\psib_{s,e}^{\tau_j} \rangle ^2 + \langle\fb,\psib_{s,e}^b \rangle ^2 - 2\langle\fb,\psib_{s,e}^b \rangle \langle \psib_{s,e}^{\tau_j}, \psib_{s,e}^{b}\rangle\langle\fb, \psib_{s,e}^{\tau_j}\rangle \\
	&= \|  \psib_{s,e}^b \langle\fb, \psib_{s,e}^b \rangle - \psib_{s,e}^{\tau_j} \langle\fb, \psib_{s,e}^{\tau_j}\rangle\|_2^2.
	\end{align*}
	Here in the above final step, we used the fact that $\|\psib_{s,e}^{\tau_j}\|^2_2 = \|\psib_{s,e}^{b}\|^2_2 = 1$.
	
	Second, for the sake of brevity, we only prove the case of $b\ge \tau_{j}$. Let $l = e -s +1$, $x=b-s+1$, and thus $\rho = x-\eta_L$.
	Using \eqref{Eq:cusum_one_cpt}, we get
	\begin{align*}
	(\cont{s}{e}{\tau_j}{\fb})^2-(\cont{s}{e}{b}{\fb})^2&= \left(\frac{\eta_L(l-\eta_L)}{l} -\frac{\eta_L^2(l-x)}{lx} \right) |f_{\tau_{j}+1}-f_{\tau_{j}}|^2 \\
	&= \frac{\eta_L(x-\eta_L)}{x}(\Delta_j^\fb)^2 =  \left(\frac{\rho \eta_L }{\eta_L + \rho}\right)(\Delta_j^\fb)^2.
	\end{align*}
\end{proof}

\subsubsection{The piecewise-linear continuous case}
\begin{Lemma}
	\label{Lem:cusum_at_cp_kink}
	Suppose $\fb=(f_{1},\ldots,f_{T})'$ is piecewise-linear vector as in Scenario \ref{Scen:change_in_slope}, and $\tau_{1},\ldots,\tau_{q}$ are the locations of the change-points. Suppose $1\leq s < e\leq T$, such that $\tau_{j-1} \le  s < \tau_{j} <  e \leq \tau_{j+1}$  for some $j=1\ldots, q$. Let $\eta = \min\{\tau_{j} - s, e - \tau_{j}\}$ and $\Delta_j^\fb = |2f_{\tau_{j}}-f_{\tau_{j}-1}-f_{\tau_{j}+1}|$. Then
	\begin{align*}
	\cont{s}{e}{\tau_{j}}{\fb} = \max_{s < b <  e}\cont{s}{e}{b}{\fb} \; \begin{cases} \geq \frac{1}{\sqrt{24}}\eta^{3/2}\Delta_j^\fb,\\  \leq \frac{1}{\sqrt{3}}(\eta+1)^{3/2} \Delta_j^\fb. \end{cases}
	\end{align*}
\end{Lemma}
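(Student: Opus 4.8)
The plan is to mirror the structure of Lemmas~\ref{Lem:cusum_cp_jump_distance} and \ref{Lem:cusum_at_cp_jump}, exploiting the orthonormality of $\mathbf{1}_{s,e}$, $\gammab_{s,e}$ and $\phib_{s,e}^{b}$. First I would observe that, because $\tau_{j}$ is the unique kink of $\fb$ strictly inside $[s,e]$ and the continuity constraint forces $f_{\tau_j}$ to lie on both adjacent lines, the restriction $\fb|_{[s,e]}$ is an exactly continuous piecewise-linear function with a single knot at $\tau_{j}$; such functions form a three-dimensional space spanned precisely by the orthonormal triple $\mathbf{1}_{s,e}, \gammab_{s,e}, \phib_{s,e}^{\tau_j}$. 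Hence $\fb|_{[s,e]} = c_0\mathbf{1}_{s,e} + c_1\gammab_{s,e} + c_2\phib_{s,e}^{\tau_j}$, and since each $\phib_{s,e}^{b}$ is orthogonal to both $\mathbf{1}_{s,e}$ and $\gammab_{s,e}$, for every $b$ one has $\ip{\fb}{\phib_{s,e}^{b}} = c_2\ip{\phib_{s,e}^{\tau_j}}{\phib_{s,e}^{b}}$. Cauchy--Schwarz (both vectors being unit-norm) then gives $\cont{s}{e}{b}{\fb} \le |c_2| = \cont{s}{e}{\tau_j}{\fb}$, establishing that the maximum over $b$ is attained at $b=\tau_j$, exactly as in the piecewise-constant case.

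It remains to bound $\cont{s}{e}{\tau_j}{\fb} = |c_2|$. Writing $\eta_L = \tau_j - s$ and $\eta_R = e - \tau_j$ so that $\eta = \min(\eta_L,\eta_R)$, and using that the contrast is invariant under adding any affine function on $[s,e]$ (since $\phib_{s,e}^{b}$ is orthogonal to $\mathbf{1}_{s,e}$ and $\gammab_{s,e}$), I would subtract the left line from $\fb$, reducing $\fb|_{[s,e]}$ to $\Delta_j^\fb\,\tilde{\phi}_{s,e}^{\tau_j}$, i.e. to the ramp $t\mapsto (t-\tau_j)_+$ scaled by the slope change $\Delta_j^\fb$. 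Consequently $c_2 = \ip{\fb}{\phib_{s,e}^{\tau_j}}$ equals $\Delta_j^\fb$ times the $\ell_2$-distance from this ramp to $\mathrm{span}\{\mathbf{1}_{s,e},\gammab_{s,e}\}$, which is the root residual sum of squares of the best straight-line fit to the ramp over $[s,e]$.

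For the upper bound, I would not minimise but simply exhibit one affine function: the line coinciding with the longer of the two segments. The residual then lives only on the shorter segment, where it equals $\Delta_j^\fb(t-\tau_j)$, so $\cont{s}{e}{\tau_j}{\fb}^2 \le (\Delta_j^\fb)^2\sum_{i=1}^{\eta}i^2 = (\Delta_j^\fb)^2\,\tfrac{\eta(\eta+1)(2\eta+1)}{6} \le (\Delta_j^\fb)^2\,\tfrac{(\eta+1)^3}{3}$, the last step being the elementary inequality $\eta(2\eta+1)\le 2(\eta+1)^2$. For the lower bound, I would use that the residual of the best affine fit over $[s,e]$ dominates the residual over any subinterval; choosing the symmetric window $W = \{\tau_j-\eta,\ldots,\tau_j+\eta\}$ (which fits inside $[s,e]$ by the definition of $\eta$) and decomposing $(u)_+ = \tfrac12 u + \tfrac12|u|$ to remove the affine part, a direct evaluation of the remaining power sums yields the closed form $\tfrac{\eta(\eta+1)(\eta^2+\eta+1)}{12(2\eta+1)}$ for the squared residual of the ramp on $W$. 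The polynomial inequality $2(\eta+1)(\eta^2+\eta+1)\ge\eta^2(2\eta+1)$ then gives $\ge \eta^3/24$, so $\cont{s}{e}{\tau_j}{\fb}\ge \tfrac{1}{\sqrt{24}}\eta^{3/2}\Delta_j^\fb$.

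The conceptual work is entirely in the first two paragraphs, namely the reduction of the contrast at $\tau_j$ to a ramp-against-lines least-squares residual via the orthonormality of the basis. The only remaining obstacle is the bookkeeping of the power sums and the verification of the two elementary polynomial inequalities; both inequalities carry ample slack (the differences are $3\eta+2$ and $3\eta^2+4\eta+2$ respectively), so I expect no real difficulty there.
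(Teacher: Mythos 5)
Your proof is correct, and while its first half coincides with the paper's, the derivation of the two bounds takes a genuinely different route. For the claim that the maximum is attained at $b=\tau_j$, both arguments rest on the same fact --- that $\fb|_{[s,e]}$ lies in the span of the orthonormal triple $\mathbf{1}_{s,e},\gammab_{s,e},\phib_{s,e}^{\tau_j}$; the paper compares squared norms of projections onto the competing triples indexed by $b$ and by $\tau_j$, whereas you read off the coefficient $c_2$ and apply Cauchy--Schwarz to $\ip{\phib_{s,e}^{\tau_j}}{\phib_{s,e}^{b}}$, which is the same inequality in different clothing. The bounds are where you diverge: the paper computes the exact closed form $\cont{s}{e}{\tau_{j}}{\fb}=\big\{\eta_L(\eta_{L}+1)\eta_{R}(\eta_{R}+1)(2\eta_{L}\eta_{R}+\eta_{L}+\eta_{R}+2)/(6l(l^{2}-1))\big\}^{1/2}\Delta_j^\fb$ and then massages it algebraically, applying the bound $xy/(x+y)\ge\min(x,y)/2$ of Lemma~\ref{Lem:basic} three times for the lower bound and the inequality $2\eta_L\eta_R+\eta_L+\eta_R+2\le 2(\eta_L+1)(\eta_R+1)$ for the upper. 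You instead keep the quantity in variational form as the root residual sum of squares of the best affine fit to the ramp $(t-\tau_j)_+$ on $[s,e]$, and bound it above by exhibiting a test line (residual supported on the shorter arm, giving $\sum_{i=1}^{\eta}i^2\le(\eta+1)^3/3$) and below by restricting the sum of squares to the symmetric window $\{\tau_j-\eta,\ldots,\tau_j+\eta\}$, where the split $(u)_+=\tfrac12 u+\tfrac12|u|$ reduces the problem to a constant fit of $|u|$. I have checked your closed form $\eta(\eta+1)(\eta^2+\eta+1)/\{12(2\eta+1)\}$, the monotonicity-in-the-index-set step, and both polynomial inequalities; all are correct, and your computation agrees with the paper's exact formula on test cases. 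What your route buys is that the normalising constants $\alpha_{s,e}^{b},\beta_{s,e}^{b}$ never need to be computed and the symmetrisation makes the lower bound nearly mechanical; what it gives up is the exact expression for $\cont{s}{e}{\tau_j}{\fb}$ on a general asymmetric interval, which the paper displays but does not actually need beyond these two bounds. Both arguments deliver the same constants $1/\sqrt{24}$ and $1/\sqrt{3}$.
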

\begin{proof} 
	First, we show that $\cont{s}{e}{b}{\fb}$ is maximised at $b = \tau_j$. Using the notation from the proof of Lemma~\ref{Lem:cusum_cp_jump_distance}, we have that
	\[
	\fb|_{[s,e]} = \phib_{s,e}^{\tau_j} \langle \fb, \phib_{s,e}^{\tau_j} \rangle +   \gammab_{s,e} \langle \fb, \mathbf{1}_{s,e} \rangle + \mathbf{1}_{s,e} \langle \fb, \mathbf{1}_{s,e} \rangle.
	\]
	Therefore, it follows that
	\begin{align}
	\label{Eq:cusum_at_cp_kink_decomp1}	
	\| 	\fb|_{[s,e]}  \|_2^2 = \langle\fb, \phib_{s,e}^{\tau_j}\rangle^2 + \langle\fb, \gammab_{s,e}\rangle^2 + \langle\fb, \mathbf{1}_{s,e}\rangle^2.
	\end{align}
	For any $b \in \{s+1,\ldots, \tau_j - 1, \tau_j + 1, \ldots, e-1\}$, it is clear that $\fb|_{[s,e]}$ does not lie in the span of $\phib_{s,e}^b$, $\gammab_{s,e}$ and $\mathbf{1}_{s,e}$. Consequently, by projecting  $\fb|_{[s,e]}$ onto these three bases, we have that 
	\begin{align}
	\label{Eq:cusum_at_cp_kink_decomp2}	
	\| 	\fb|_{[s,e]}  \|^2 > \langle\fb, \phib_{s,e}^{b}\rangle^2 + \langle\fb, \gammab_{s,e}\rangle^2 + \langle\fb, \mathbf{1}_{s,e}\rangle^2.
	\end{align}
	Comparing (\ref{Eq:cusum_at_cp_kink_decomp2}) with (\ref{Eq:cusum_at_cp_kink_decomp1}) entails that $| \langle\fb, \phib_{s,e}^{\tau_j}\rangle \big | > \big | \langle\fb, \phib_{s,e}^{b}\rangle \big |$ for any $b \neq \tau_j$.
	
	Secondly, set $\eta_{L}=\tau_{j}-s$ and $\eta_{R}=e-\tau_{j}$. After some calculation, we get that	
	\begin{align*}
	\cont{s}{e}{\tau_j}{\fb} = \left\{\frac{\eta_L(\eta_{L}+1)\eta_{R}(\eta_{R}+1)(2\eta_{L}\eta_{R}+\eta_{L}+\eta_{R}+2)}{6l(l^2-1)}\right\}\Delta_j^\fb,
	\end{align*}
	where $l = e -s +1$. Also, we have $\eta_L \geq \eta$, $\eta_R\geq \eta$ and $l=\eta_L+\eta_R + 1$. To prove the lower bound, we observe that
	\begin{align*}
	&\left\{\frac{\eta_L(\eta_{L}+1)\eta_{R}(\eta_{R}+1)(2\eta_{L}\eta_{R}+\eta_{L}+\eta_{R}+2)}{6l(l^2-1)}\right\} \\
	&\geq \left\{\frac{1}{6}\frac{(\eta_{L}+1)\eta_{R}}{l}\frac{\eta_{L}(\eta_{R}+1)}{l}\frac{2 \min(\eta_{L},\eta_{R})\{\max(\eta_{L},\eta_{R})+1\}}{l}\right\} 
	\geq \left\{\frac{\eta^3}{24}\right\},
	\end{align*}
	where the last inequality is obtained applying Lemma~\ref{Lem:basic} three times. For the upper bound, we notice that 
	$2\eta_{L}\eta_{R}+\eta_{L}+\eta_{R}+2 \leq 2 (\eta_{L}+1)(\eta_{R}+1)$ which implies
	\begin{align*}
	\left\{\frac{\eta_L(\eta_{L}+1)\eta_{R}(\eta_{R}+1)(2\eta_{L}\eta_{R}+\eta_{L}+\eta_{R}+2)}{6l(l^2-1)}\right\} 
	\leq \left\{\frac{1}{3}\frac{\eta_{L}\eta_{R}(\eta_{L}+1)^2(\eta_{R}+1)^2}{(l-1)l^2}\right\}
	\leq \left\{\frac{(\eta+1)^3}{3}\right\}.
	\end{align*}
	
\end{proof}

\begin{Lemma}
	\label{Lem:cusum_two_cp_kink}
	Suppose $\fb=(f_{1},\ldots,f_{T})'$ is piecewise-linear vector as in Scenario \ref{Scen:change_in_slope}, and $\tau_{1},\ldots,\tau_{q}$ are the locations of the change-points. Suppose $1\leq s < e\leq T$ such that $\tau_{j-1} \leq s  \leq \tau_{j}$ and $\tau_{j+1} \le  e \leq \tau_{j+2}$  for some $j=1\ldots, q-1$. Then
	\begin{align*}
	\max_{s\leq b < e}\cont{s}{e}{b}{\fb}  \le  \frac{1}{\sqrt{3}} (\tau_j - s + 1)^{3/2} \Delta_j^\fb + \frac{1}{\sqrt{3}} (e - \tau_{j+1} + 1)^{3/2} \Delta_{j+1}^{\fb},
	\end{align*}
	where $\Delta_j^\fb = |2f_{\tau_{j}}-f_{\tau_{j}-1}-f_{\tau_{j}+1}|$.
\end{Lemma}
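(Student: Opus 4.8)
The plan is to mirror the orthogonal-projection argument used for Lemma~\ref{Lem:cusum_two_cp_jump}, replacing the one-dimensional ``best constant'' fit by the two-dimensional ``best line'' fit that is natural in Scenario~\ref{Scen:change_in_slope}. Let $b^* = \argmax_{s \le b < e}\cont{s}{e}{b}{\fb}$. Since $\phib_{s,e}^{b^*}$, $\gammab_{s,e}$ and $\mathbf{1}_{s,e}$ are mutually orthonormal (as established in Section~\ref{Sec:scen:change_in_slope}), expanding the non-negative squared residual
\begin{align*}
0 \le \norm{\fb|_{[s,e]} - \ip{\fb}{\phib_{s,e}^{b^*}}\phib_{s,e}^{b^*} - \ip{\fb}{\gammab_{s,e}}\gammab_{s,e} - \ip{\fb}{\mathbf{1}_{s,e}}\mathbf{1}_{s,e}}^2
\end{align*}
and collecting terms yields
\begin{align*}
\Big(\max_{s \le b < e}\cont{s}{e}{b}{\fb}\Big)^2 = \ip{\fb}{\phib_{s,e}^{b^*}}^2 \le \min_{a_0,a_1\in\R}\norm{\fb|_{[s,e]} - a_0\mathbf{1}_{s,e} - a_1\gammab_{s,e}}^2,
\end{align*}
i.e. the squared contrast is dominated by the error of the best linear $\ell_2$-approximation of $\fb$ on $[s,e]$. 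Hence it suffices to exhibit one convenient line and bound the resulting residual.

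The key choice is to take $g_t$ to be the affine function coinciding with the middle segment of $\fb$, namely the piece on $[\tau_j+1,\tau_{j+1}]$, extended to all of $[s,e]$; being linear, $g|_{[s,e]}\in\mathrm{span}(\mathbf{1}_{s,e},\gammab_{s,e})$, so the minimum above is at most $\norm{\fb|_{[s,e]} - g|_{[s,e]}}^2$. The continuity constraint of Scenario~\ref{Scen:change_in_slope} is what makes this work: it forces $\fb$ and $g$ to agree at $\tau_j$ and $\tau_{j+1}$ (and throughout $[\tau_j,\tau_{j+1}]$), so that $\fb_t-g_t$ is supported only on $[s,\tau_j]$ and $[\tau_{j+1}+1,e]$. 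On the left block $\fb-g$ is a line vanishing at $\tau_j$, hence equals $(\theta_{j,2}-\theta_{j+1,2})(t-\tau_j)$, and a short computation identifies the slope jump with the second-difference quantity, $|\theta_{j,2}-\theta_{j+1,2}| = |2f_{\tau_j}-f_{\tau_j-1}-f_{\tau_j+1}| = \Delta_j^\fb$ (the identifiability assumption $\tau_j-\tau_{j-1}\ge d=2$ guarantees $f_{\tau_j-1}$ lies on segment $j$); symmetrically the right block contributes $\Delta_{j+1}^\fb(t-\tau_{j+1})$.

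Summing squares then reduces everything to the elementary estimate $\sum_{k=0}^{n}k^2 = n(n+1)(2n+1)/6 \le (n+1)^3/3$, giving
\begin{align*}
\norm{\fb|_{[s,e]} - g|_{[s,e]}}^2 \le \tfrac{1}{3}(\tau_j-s+1)^3(\Delta_j^\fb)^2 + \tfrac{1}{3}(e-\tau_{j+1}+1)^3(\Delta_{j+1}^\fb)^2,
\end{align*}
and taking square roots together with $\sqrt{a^2+b^2}\le a+b$ for $a,b\ge 0$ produces exactly the claimed bound. The only genuinely delicate points are bookkeeping ones: verifying the slope-jump/second-difference identification, and checking that the boundary cases $s=\tau_{j-1}$ and $e=\tau_{j+2}$ cause no trouble — here continuity again rescues us, since the value of $\fb$ at a change-point equals the value of the adjacent segment's line, so $\fb|_{[s,\tau_j]}$ and $\fb|_{[\tau_{j+1}+1,e]}$ really are single affine pieces. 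I expect the choice of $g$ as the extended middle segment, and the use of continuity to localise $\fb-g$, to be the conceptual crux; the remaining calculations are routine.
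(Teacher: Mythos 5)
Your proposal is correct and follows essentially the same route as the paper's proof: both bound $\langle\fb,\phib_{s,e}^{b^*}\rangle^2$ by the residual of the best linear $\ell_2$ fit on $[s,e]$, evaluate that residual via the line extending the middle segment (localised by continuity to the two outer blocks), and finish with $\sum_{k\le n} k^2 \le (n+1)^3/3$ and $\sqrt{a^2+b^2}\le a+b$. The only difference is presentational: you make explicit the comparison line $g$ and the slope-jump/boundary bookkeeping that the paper compresses into a single displayed identity.
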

\begin{proof}
	Suppose that $b^* = \argmax_{s\leq b \leq e}\cont{s}{e}{b}{\fb}$. Then
	\begin{align*}
	0 &\le \|	\fb|_{[s,e]} -  \langle\fb, \phib_{s,e}^{b^*}\rangle \phib_{s,e}^{b^*} - \langle \fb,  \gammab_{s,e}\rangle  \gammab_{s,e} - \langle \fb,  \mathbf{1}_{s,e} \rangle  \mathbf{1}_{s,e} \|^2	 =  \| \fb|_{[s,e]} -  \langle \fb,  \gammab_{s,e}\rangle  \gammab_{s,e} - \langle \fb,  \mathbf{1}_{s,e} \rangle  \mathbf{1}_{s,e} \|^2 -  \langle\fb, \phib_{s,e}^{b^*}\rangle ^2 \\
	& = \frac{1}{6}(\tau_j - s)(\tau_j - s+1)(2\tau_j - 2s+1) (\Delta_j^\fb)^2 +  \frac{1}{6}(e - \tau_{j+1})(e - \tau_{j+1}+1)(2e - 2\tau_{j+1}+1) (\Delta_{j+1}^\fb)^2 \\
	&\qquad -  \Big(\max_{s\leq b < e}\cont{s}{e}{b}{\fb}\Big)^2.
	\end{align*}
	It then follows that
	\begin{align*}
	\max_{s\leq b < e}\cont{s}{e}{b}{\fb} &\le \left\{(\tau_j - s + 1)^3 (\Delta_j^\fb)^2/3 + (e - \tau_{j+1}+1)^3 (\Delta_{j+1}^\fb)^2/3 \right\} \\
	&  \le  \frac{1}{\sqrt{3}} (\tau_j - s + 1)^{3/2} \Delta_j^\fb + \frac{1}{\sqrt{3}} (e - \tau_{j+1} + 1)^{3/2} \Delta_{j+1}^{\fb}.
	\end{align*}
\end{proof}	

\begin{Lemma}
	\label{Lem:cusum_cp_kink_distance}
	Suppose $\fb=(f_{1},\ldots,f_{T})'$ is piecewise-linear vector as in Scenario \ref{Scen:change_in_slope}, and $\tau_{1},\ldots,\tau_{q}$ are the locations of the change-points. Suppose $1\leq s < e\leq T$, such that $\tau_{j-1} \le  s < \tau_{j} <  e \leq \tau_{j+1}$  for some $j=1\ldots, q$.  Let $\rho = |\tau_{j} -b|$, $\eta_L = \tau_{j} - s$, $\eta_R = e - \tau_{j}$ and $\Delta_j^\fb = |2f_{\tau_{j}}-f_{\tau_{j}-1}-f_{\tau_{j}+1}|$. Then,
	\begin{align}
	\label{Eq:cusum_cp_kink_distance_1}
	\| \phib_{s,e}^b \langle\fb,\phib_{s,e}^b \rangle - \phib_{s,e}^{\tau_j} \langle\fb,\phib_{s,e}^{\tau_j}\rangle\|_2^2 = (\cont{s}{e}{\tau_j}{\fb})^2-(\cont{s}{e}{b}{\fb})^2.
	\end{align}
	Moreover, 
	\begin{enumerate}
		\item for any $\tau_j \le b < e$, $(\cont{s}{e}{\tau_j}{\fb})^2-(\cont{s}{e}{b}{\fb})^2 \ge \frac{1}{63} \min(\rho,\eta_L)^3 (\Delta^{\fb}_j)^2$;
		\item for any $s < b \le \tau_j$, $(\cont{s}{e}{\tau_j}{\fb})^2-(\cont{s}{e}{b}{\fb})^2 \ge \frac{1}{63} \min(\rho,\eta_R)^3 (\Delta^{\fb}_j)^2$.
	\end{enumerate}
\end{Lemma}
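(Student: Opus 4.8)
The plan is to mirror the structure of the proof of Lemma~\ref{Lem:cusum_cp_jump_distance}, replacing the orthonormal pair $(\psib_{s,e}^b,\mathbf{1}_{s,e})$ by the orthonormal triple $(\phib_{s,e}^b,\gammab_{s,e},\mathbf{1}_{s,e})$. First I would establish the identity \eqref{Eq:cusum_cp_kink_distance_1}. Since $[s,e-1]$ contains the single change-point $\tau_j$, the restriction $\fb|_{[s,e]}$ is piecewise-linear with exactly one kink, at $\tau_j$, so it lies in the span of the three mutually orthonormal vectors $\phib_{s,e}^{\tau_j}$, $\gammab_{s,e}$ and $\mathbf{1}_{s,e}$; that is, $\fb|_{[s,e]} = \ip{\fb}{\phib_{s,e}^{\tau_j}}\phib_{s,e}^{\tau_j} + \ip{\fb}{\gammab_{s,e}}\gammab_{s,e} + \ip{\fb}{\mathbf{1}_{s,e}}\mathbf{1}_{s,e}$, exactly as in the proof of Lemma~\ref{Lem:cusum_at_cp_kink}. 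Because $\phib_{s,e}^b$ is orthonormal to both $\gammab_{s,e}$ and $\mathbf{1}_{s,e}$, taking the inner product of this decomposition with $\phib_{s,e}^b$ gives $\ip{\fb}{\phib_{s,e}^b} = \ip{\phib_{s,e}^{\tau_j}}{\phib_{s,e}^b}\,\ip{\fb}{\phib_{s,e}^{\tau_j}}$. Substituting this into $\norm{\phib_{s,e}^b\ip{\fb}{\phib_{s,e}^b} - \phib_{s,e}^{\tau_j}\ip{\fb}{\phib_{s,e}^{\tau_j}}}_2^2$ and expanding, while using $\norm{\phib_{s,e}^{\tau_j}}_2 = \norm{\phib_{s,e}^b}_2 = 1$, collapses the cross term and yields $\ip{\fb}{\phib_{s,e}^{\tau_j}}^2 - \ip{\fb}{\phib_{s,e}^b}^2 = (\cont{s}{e}{\tau_j}{\fb})^2 - (\cont{s}{e}{b}{\fb})^2$. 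This is line-for-line the piecewise-constant argument.

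For the lower bounds I would first record the consequence of the identity just derived, namely
\[
(\cont{s}{e}{\tau_j}{\fb})^2 - (\cont{s}{e}{b}{\fb})^2 = \ip{\fb}{\phib_{s,e}^{\tau_j}}^2\left(1 - \ip{\phib_{s,e}^{\tau_j}}{\phib_{s,e}^b}^2\right),
\]
so that the task splits into bounding $\ip{\fb}{\phib_{s,e}^{\tau_j}}^2 = (\cont{s}{e}{\tau_j}{\fb})^2$ from below and controlling the overlap $\ip{\phib_{s,e}^{\tau_j}}{\phib_{s,e}^b}^2$. The first factor is already handled by the explicit formula and lower bound of Lemma~\ref{Lem:cusum_at_cp_kink}. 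For the second factor I would compute $\ip{\phib_{s,e}^{\tau_j}}{\phib_{s,e}^b}$ in closed form directly from the defining expression \eqref{Eq:basis_function_kink}; equivalently, one may compute $(\cont{s}{e}{b}{\fb})^2$ itself and subtract. By the reflection $t \mapsto s+e-t$, which interchanges the roles of $\eta_L$ and $\eta_R$, leaves the contrast invariant, and maps the regime $b \ge \tau_j$ to $b \le \tau_j$, it suffices to treat the case $\tau_j \le b < e$ (so that $\rho = b - \tau_j < \eta_R$) and obtain the bound with $\min(\rho,\eta_L)$; the case $s < b \le \tau_j$ then follows with $\min(\rho,\eta_R)$.

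Having the two closed forms in hand, I would assemble the explicit rational expression for $(\cont{s}{e}{\tau_j}{\fb})^2 - (\cont{s}{e}{b}{\fb})^2$ as a function of $\eta_L$, $\eta_R$, $\rho$ and $l = \eta_L + \eta_R + 1$, and then bound it from below exactly as in the lower-bound part of Lemma~\ref{Lem:cusum_at_cp_kink}: repeatedly replacing factors of the form $\frac{xy}{x+y}$ by $\tfrac12\min(x,y)$ via Lemma~\ref{Lem:basic}, and discarding nonnegative terms, until only a constant multiple of $\min(\rho,\eta_L)^3(\Delta_j^\fb)^2$ survives. The main obstacle is precisely this last step: the intermediate expression is a cumbersome ratio of degree-six polynomials in $(\eta_L,\eta_R,\rho)$, and the difficulty is to group its factors so that each application of Lemma~\ref{Lem:basic} is valid (all arguments positive) and so that the surviving constant is no worse than $1/63$. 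Tracking this constant through the several applications of Lemma~\ref{Lem:basic}, while ensuring that the discarded terms genuinely have the right sign, is where all the care is needed; the algebra leading to the closed forms in the previous paragraph, by contrast, is routine.
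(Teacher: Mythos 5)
Your identity \eqref{Eq:cusum_cp_kink_distance_1} is proved exactly as the paper does it (the paper simply notes that the argument of Lemma~\ref{Lem:cusum_cp_jump_distance} carries over verbatim to the orthonormal triple), and your reduction by reflection to the case $\tau_j \le b < e$ also matches. The gap is in the lower bound, where your plan stalls precisely at the step you yourself flag as the obstacle. The factorisation $(\cont{s}{e}{\tau_j}{\fb})^2-(\cont{s}{e}{b}{\fb})^2 = \ip{\fb}{\phib_{s,e}^{\tau_j}}^2\big(1-\ip{\phib_{s,e}^{\tau_j}}{\phib_{s,e}^b}^2\big)$ is correct but unhelpful as a starting point: Lemma~\ref{Lem:cusum_at_cp_kink} controls the first factor only through $\min(\eta_L,\eta_R)$, whereas the target $\frac{1}{63}\min(\rho,\eta_L)^3(\Delta_j^\fb)^2$ does not involve $\eta_R$ at all, so in the regime $\eta_R\ll\eta_L$ with $\rho$ comparable to $\eta_R$ you would additionally need a quantitative lower bound on $1-\ip{\phib_{s,e}^{\tau_j}}{\phib_{s,e}^b}^2$ that you do not supply. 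The alternative you sketch -- writing the difference as an explicit ratio of high-degree polynomials in $(\eta_L,\eta_R,\rho)$ and hoping it groups into factors amenable to Lemma~\ref{Lem:basic} with constant no worse than $1/63$ -- is exactly the computation you admit you cannot yet organise, and it carries a spurious $\eta_R$-dependence that must somehow be shown harmless.

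The paper's proof avoids all of this with one further inequality that is absent from your proposal: since $\phib_{s,e}^b\ip{\fb}{\phib_{s,e}^b}+\gammab_{s,e}\ip{\fb}{\gammab_{s,e}}+\mathbf{1}_{s,e}\ip{\fb}{\mathbf{1}_{s,e}}$ is itself a particular piecewise-linear vector with a single kink at $b$ on $[s,e]$, its squared distance to $\fb|_{[s,e]}$ (which equals the left-hand side of \eqref{Eq:cusum_cp_kink_distance_1}) is at least the residual sum of squares of the best \emph{unconstrained} linear fit to $\fb$ on the sub-interval $[s,b]$ alone, i.e.\ at least $\min_{a_0,a_1\in\R}\norm{\fb|_{[s,b]}-a_0\mathbf{1}_{s,b}-a_1\gammab_{s,b}}_2^2$. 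For $b\ge\tau_j$ this sub-interval contains the single kink $\tau_j$ with margins $\eta_L$ and $\rho$, so $\eta_R$ disappears from the problem entirely; an explicit two-variable computation, the observation $(d+2\eta_L+1)(d+2\rho+1)\le 4d^2$ with $d=b-s+1$, and the symmetry in $(\eta_L,\rho)$ then deliver $\frac{1}{63}\min(\rho,\eta_L)^3(\Delta_j^\fb)^2$. Without this localisation step, or some substitute for it, your argument is incomplete.
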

\begin{proof} 
	The proof of (\ref{Eq:cusum_cp_kink_distance_1}) is very similar to that shown in Lemma~\ref{Lem:cusum_cp_jump_distance}, so is omitted for brevity. In the following, we only deal with the case of $\tau_j \le b < e$. Note that
	\begin{align*}
	&\| \phib_{s,e}^b \langle\fb,\phib_{s,e}^b \rangle - \phib_{s,e}^{\tau_j} \langle\fb,\phib_{s,e}^{\tau_j}\rangle\|_2^2 = \big\| \phib_{s,e}^b \langle\fb,\phib_{s,e}^b \rangle + \gammab_{s,e} \langle\fb,\gammab_{s,e} \rangle + \mathbf{1}_{s,e} \langle \fb, \mathbf{1}_{s,e} \rangle - \fb|_{[s,e]} \big\|_2^2  \\
	&\ge \min_{a_0,a_1 \in \R} \big\|\fb|_{[s,b]} - a_0 \mathbf{1}_{s,b} - a_1 \gammab_{s,b} \big\|_2^2 + \min_{a_0,a_1 \in \R}\big\|\fb|_{[b+1,e]} - a_0 \mathbf{1}_{b+1,e} - a_1 \gammab_{b+1,e} \big\|_2^2 \\
	&\ge \min_{a_0,a_1 \in \R} \big\|\fb|_{[s,b]} - a_0 \mathbf{1}_{s,b} - a_1 \gammab_{s,b} \big\|_2^2.
	\end{align*}
	Recalling the definitions of $\alpha_{s,b}^{\tau_j}$ and $\beta_{s,b}^{\tau_j}$ in (\ref{Eq:basis_function_kink}), and writing $d = b-s+1$. After some calculations (similar to what has already been carried out in deriving $\phib_{s,e}^b$), we obtain that
	\begin{align*}
	&\min_{a_0,a_1 \in \R} \big\|\fb|_{[s,b]} - a_0 \mathbf{1}_{s,b} - a_1 \gammab_{s,b} \big\|_2^2 
	= \Big[(3\eta_L+\rho+2) \alpha_{s,b}^{\tau_j} \beta_{s,b}^{\tau_j} + (3\rho+\eta_L+2)\alpha_{s,b}^{\tau_j} (\beta_{s,b}^{\tau_j})^{-1}\Big]^{-2} (\Delta^{\fb}_j)^2 \\
	&= \frac{1}{6}(\Delta^{\fb}_j)^2 d(d^2-1)\big[1+\rho\eta_L + (\rho+1)(\eta_L+1)\big] \times \\
	&\qquad \bigg[(d+2\eta_L+1)^2\frac{\rho(\rho+1)}{\eta_L(\eta_L+1)}+(d+2\rho+1)^2\frac{\eta_L(\eta_L+1)}{\rho(\rho+1)} + 2(d+2\eta_L+1)(d+2\rho+1)\bigg]^{-1}.
	\end{align*} 
	Notice that the above equation is symmetric with respect to $\eta_L$ and $\rho$. Without loss of generality, here we proceed by assuming that $\eta_L \ge \rho$. Since $(d + 2\eta_L+1) + (d + 2\rho + 1) = 4d$, it follows that $(d + 2\eta_L+1)(d + 2\rho + 1) \le 4d^2$. Therefore,
	\begin{align*}
	&\min_{a_0,a_1 \in \R} \big\|\fb|_{[s,b]} - a_0 \mathbf{1}_{s,b} - a_1 \gammab_{s,b} \big\|_2^2 \\
	& \ge \frac{1}{6}(\Delta^{\fb}_j)^2 d(d^2-1) [2(\eta_L+1)\rho] \bigg[(3d)^2 + (2d)^2 \frac{(\eta_L+1)^2}{\rho^2} + 8 d^2\bigg]^{-1} \\
	& \ge \frac{1}{6}(\Delta^{\fb}_j)^2 d^2(d-1) [2(\eta_L+1) \rho] \bigg[21 d^2 \frac{(\eta_L+1)^2}{\rho^2} \bigg]^{-1} \ge \frac{1}{63}\rho^3 (\Delta^{\fb}_j)^2,
	\end{align*}
	where in the last step, we used the fact that $\frac{d-1}{\eta_L+1} \ge 1$ for $\rho \ge 1$ (and note that the last above-displayed equation also holds if $\rho = 0$).
	
	Finally, we remark that the case of $s < b \le \tau_j$ can also be handled by symmetry. 
\end{proof}

\begin{Lemma}
	\label{Lem:cusum_cp_kink_distance_2}
	Suppose $\fb=(f_{1},\ldots,f_{T})'$ is piecewise-linear vector as in Scenario \ref{Scen:change_in_slope}, and $\tau_{1},\ldots,\tau_{q}$ are the locations of the change-points. Suppose $1\leq s < e\leq T$, such that $\tau_{j-1} \le  s < \tau_{j} <  e \leq \tau_{j+1}$  for some $j=1\ldots, q$.  Let $\rho = |\tau_{j} -b|$, $\eta_L = \tau_{j} - s$, $\eta_R = e - \tau_{j}$ and $\Delta_j^\fb = |2f_{\tau_{j}}-f_{\tau_{j}-1}-f_{\tau_{j}+1}|$. Then, for any $b$ satisfying $\tau_j- \min (\eta_L, \eta_R)/2 < b < \tau_j + \min (\eta_L, \eta_R)/2$, we have that 
	\[
	(\cont{s}{e}{\tau_j}{\fb})^2-(\cont{s}{e}{b}{\fb})^2 \ge \frac{(\Delta^{\fb}_j)^2}{96} \big\{ \min(\eta_L,\eta_R) -1 \big\}  \rho^2.
	\]
\end{Lemma}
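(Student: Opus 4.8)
The plan is to reuse the exact orthogonal decomposition established in Lemma~\ref{Lem:cusum_cp_kink_distance}. Indeed, the identity \eqref{Eq:cusum_cp_kink_distance_1} gives
\[
(\cont{s}{e}{\tau_j}{\fb})^2-(\cont{s}{e}{b}{\fb})^2 = \| \phib_{s,e}^b \langle\fb,\phib_{s,e}^b \rangle - \phib_{s,e}^{\tau_j} \langle\fb,\phib_{s,e}^{\tau_j}\rangle\|_2^2,
\]
and the chain of lower bounds derived in the proof of that lemma reduces this to the explicit expression
\[
\frac{1}{6}(\Delta^{\fb}_j)^2 d(d^2-1)\big[1+\rho\eta_L + (\rho+1)(\eta_L+1)\big] Q^{-1},
\]
where $d = b - s + 1$ (in the case $\tau_j \le b < e$) and $Q$ is the bracketed quadratic form in $(d+2\eta_L+1)$ and $(d+2\rho+1)$ appearing there. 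So I would simply start from this same closed-form quantity rather than re-deriving it.

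\textbf{The key difference.} The previous lemma cubed the \emph{smaller} of $\rho$ and $\eta_L$; here the extra hypothesis $\tau_j - \min(\eta_L,\eta_R)/2 < b < \tau_j + \min(\eta_L,\eta_R)/2$ forces $\rho < \min(\eta_L,\eta_R)/2$, i.e. $\rho$ is strictly the smaller quantity and in fact $\rho \le \eta_L/2$ (and $\rho \le \eta_R/2$). The aim is a bound of the form $\frac{(\Delta^{\fb}_j)^2}{96}\{\min(\eta_L,\eta_R)-1\}\rho^2$, which keeps one power of the \emph{large} scale $\min(\eta_L,\eta_R)$ and only two powers of $\rho$ — a different and, near $b=\tau_j$, sharper accounting than the $\rho^3$ bound. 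First I would treat the case $\tau_j \le b < e$ (handling $s < b \le \tau_j$ at the end by the same symmetry remark used before). Using $\rho \le \eta_L/2$, one has $d = \eta_L - \rho + (\text{shift})$ comparable to $\eta_L$, and the denominator $Q$, which is dominated by the term $(d+2\rho+1)^2(\eta_L+1)(\eta_L+1)/\{\rho(\rho+1)\}$, is of order $d^2 (\eta_L+1)^2/\rho^2$. The numerator carries a factor $d(d^2-1)\cdot(\eta_L+1)\rho$ (bounding $[1+\rho\eta_L+(\rho+1)(\eta_L+1)] \gtrsim (\eta_L+1)\rho$ as in the previous lemma). Dividing, the $(\eta_L+1)$ and one $d^2$ cancel, leaving something proportional to $(d-1)\rho^2 \gtrsim \eta_L \rho^2$, which is exactly the desired shape. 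The work is then to track the constant carefully so that it comes out at least $1/96$.

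\textbf{Main obstacle and constants.} The genuinely delicate part is the constant bookkeeping in bounding $Q$ from above under the regime $\rho \le \min(\eta_L,\eta_R)/2$: unlike in Lemma~\ref{Lem:cusum_cp_kink_distance}, where the crude bound $(d+2\eta_L+1)(d+2\rho+1) \le 4d^2$ sufficed, here I must retain the explicit factor $\min(\eta_L,\eta_R)-1$ rather than discard it, so the inequalities $\frac{d-1}{\eta_L+1}\ge 1$ and the quadratic-form comparisons have to be done with enough slack to land at the stated $1/96$ but no looser. Concretely I expect to use $\rho \le \eta_L/2$ to write $d \ge \eta_L/2$ and $(d+2\rho+1) \le 2(d+\eta_L/2)$, then bound $Q \le C\, d^2 (\eta_L+1)^2/\rho^2$ with an explicit $C$; choosing the split of constants between numerator and denominator so the product reaches $96^{-1}$ is the one step I would grind through by hand. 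The case $\rho = 0$ is trivial (the left side is nonnegative and the right side vanishes), and the symmetric case $s < b \le \tau_j$ follows verbatim with $\eta_R$ replacing $\eta_L$, so no separate argument is needed there.
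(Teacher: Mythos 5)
There is a genuine gap, and it is fatal to the proposed route. The closed-form expression you want to ``simply start from'' is, in the proof of Lemma~\ref{Lem:cusum_cp_kink_distance}, the quantity $\min_{a_0,a_1 \in \R} \|\fb|_{[s,b]} - a_0 \mathbf{1}_{s,b} - a_1 \gammab_{s,b} \|_2^2$, i.e.\ the residual sum of squares of an \emph{unconstrained} line fitted to the left segment $[s,b]$ alone. It is reached only after the chain of inequalities discards both the right segment $[b+1,e]$ and the continuity constraint at $b$. For $1 \le \rho \ll \eta_L$ this quantity is genuinely of order $\rho^3 (\Delta_j^\fb)^2$, not $\eta_L\rho^2(\Delta_j^\fb)^2$: the bracket $[1+\rho\eta_L+(\rho+1)(\eta_L+1)]$ is $\asymp \eta_L\rho$, so the numerator is $\asymp \eta_L^4\rho$, while the dominant term of $Q$ is $(d+2\rho+1)^2\eta_L(\eta_L+1)/\{\rho(\rho+1)\} \asymp \eta_L^4/\rho^2$; the ratio is $\asymp \rho^3$, consistent with the direct heuristic that fitting one line to a signal that is zero except for a ramp of height $\rho$ over its last $\rho$ points costs about $\sum_{i\le\rho} i^2 \asymp \rho^3$. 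Your claimed cancellation ``leaving something proportional to $(d-1)\rho^2$'' is an arithmetic slip. Since $\rho^3 < \frac{1}{96}\{\min(\eta_L,\eta_R)-1\}\rho^2$ throughout most of the regime $\rho < \min(\eta_L,\eta_R)/2$, no amount of constant bookkeeping can rescue a lower bound that starts from this quantity.

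The missing idea is that the extra factor $\min(\eta_L,\eta_R)$ comes precisely from the interaction between the two segments, which your starting point has already thrown away. The paper's proof keeps the full constrained fit $\min_{a_0,a_1,a_2}\|\tilde{\fb}|_{[s,e]} - a_0\mathbf{1}_{s,e} - a_1\gammab_{s,e} - a_2\phib_{s,e}^b\|_2^2$ (after normalising $\fb$ so that the kink has unit slope change) and parametrises the fitted piecewise-linear function by its value $m$ at $b$. The truth wants the fit to be $\approx 0$ at $b$ as seen from the left of $\tau_j$ and $\approx \rho$ at $b$ as seen from the right of $b$; whatever $m$ is chosen, one side pays, giving a lower bound of the form $c_1\eta_L m^2 + c_2(\eta_R-1)(\rho-m)^2 \ge \frac{1}{2}\min(c_1\eta_L, c_2(\eta_R-1))\rho^2$. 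If you want to salvage your approach you would have to retain at least the sum of the two segmentwise fits \emph{with a common value at $b$}, which is essentially the paper's argument rather than a shortcut around it.
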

\begin{proof} Here we focus on the scenario where $b > \tau_j$. By Lemma~\ref{Lem:cusum_cp_kink_distance}, 
	\begin{align*}
	(\cont{s}{e}{\tau_j}{\fb})^2-(\cont{s}{e}{b}{\fb})^2  &= 	\| \phib_{s,e}^b \langle\fb,\phib_{s,e}^b \rangle - \phib_{s,e}^{\tau_j} \langle\fb,\phib_{s,e}^{\tau_j}\rangle\|_2^2 =  \min_{a_0, a_1, a_2 \in \R} \big\|\fb|_{[s,e]} - a_0 \mathbf{1}_{s,e} - a_1 \gammab_{s,e} - a_2  \phib_{s,e}^{b} \big\|_2^2\\
	&= (\Delta^{\fb}_j)^2 \min_{a_0, a_1, a_2 \in \R} \big\|\tilde{\fb}|_{[s,e]} - a_0 \mathbf{1}_{s,e} - a_1 \gammab_{s,e} - a_2  \phib_{s,e}^{b}\big\|_2^2,
	\end{align*}
	where $\tilde{\fb}|_{[s,e]} := (0,\ldots,0,1,\ldots,e-\tau_j,0,\ldots,0)'$, in which ``$1$'' appears at the $(\tau_j+1)$-th position. In the following, our aim is that bound the residual sum of squares of fitting $ \tilde{\fb}|_{[s,e]}$ using a piecewise-linear and continuous function with only one kink at $b$ on $[s,e]$. Assuming that the fitted value of this vector at the $b$-th position is $m$, then, we have that 
	\begin{align*}
	&\min_{a_0, a_1, a_2 \in \R} \big\|\tilde{\fb}|_{[s,e]} - a_0 \mathbf{1}_{s,e} - a_1 \gammab_{s,e} - a_2  \phib_{s,e}^{b}\big\|_2^2\\
	&\quad	\ge \Big(\frac{2m}{\eta_L+2\rho}\Big)^2 \times \frac{1}{6} \frac{\eta_L}{2}\Big(\frac{\eta_L}{2}+1\Big) (\eta_L+1)+ \Big\{\frac{2(\rho-m)}{e-b}\Big\}^2 \times   \frac{1}{6}\Big(\frac{e-b}{2}-1\Big)\frac{e-b}{2} (e-b-1).\\
	\end{align*}
	Since  $b < \tau_j + \eta_R/2$, it follows that $e-b > \eta_R/2$. Moreover, the fact of $\rho < \min(\eta_L,\eta_R)/2$ yields $\eta_L + 2\rho \le 2\eta_L$. Plugging these two inequalities into the previous equation, we have that 
	\begin{align*}
	&\min_{a_0, a_1, a_2 \in \R} \big\|\tilde{\fb}|_{[s,e]} - a_0 \mathbf{1}_{s,e} - a_1 \gammab_{s,e} - a_2  \phib_{s,e}^{b}\big\|_2^2\\
	& \quad \ge \frac{m^2\eta_L}{24} + (\rho-m)^2\frac{\eta_R - 1}{48}		\ge\frac{1}{2} \min\Big(\frac{\eta_L}{24},\frac{\eta_R-1}{48}\Big) \rho^2
	\end{align*}
	Consequently,
	\[
	(\cont{s}{e}{\tau_j}{\fb})^2-(\cont{s}{e}{b}{\fb})^2 \ge \frac{(\Delta^{\fb}_j)^2}{96} \big\{ \min(\eta_L,\eta_R) -1 \big\}  \rho^2.
	\]
	By symmetry, the scenario of $b < \tau_j$ can be dealt with in a similar fashion. Finally, we remark that the constants here are not sharp, as we will only use this lemma to establish rate-type results later.
\end{proof}	

\subsection{Proof of Theorem~\ref{Thm:consistency_gaussian_const}}
\label{Sec:proof_ext}
Here we informally discuss our proof strategy, which could be generalised to other scenarios. 
\begin{itemize}
	\item Intuitively speaking, lemmas from Appendix~\ref{Sec:lemmas} deal with noiseless versions of the change-point estimation problems. In order to apply these results to show the consistency of estimated number of change-points, we need to control $\|\cont{s}{e}{b}{\Yb} - \cont{s}{e}{b}{\fb}\|$ for every $(s,e,b)$, which can be achieved using Bonferroni in Step~One. 
	\item Note that for any fixed interval with start-point $s$ and end-point $e$, to decide whether $b_1$ or $b_2$ is a more suitable change-point candidate inside this interval, we only need to look at the value of $\cont{s}{e}{b_1}{\Yb} - \cont{s}{e}{b_2}{\Yb}$. Therefore, when establishing the convergence rate of the estimated change-point location , we control the distance between $\cont{s}{e}{b_1}{\Yb} - \cont{s}{e}{b_2}{\Yb}$ and its noiseless analogue $\cont{s}{e}{b_1}{\fb} - \cont{s}{e}{b_2}{\fb}$ (after proper normalisation) for all tuples $(s,e,b_1,b_2)$ in Step~Two. 
	\item In Step~Three, we show that given a properly chosen threshold and a large enough $M$, both bounds in Step~One and Step~Two hold, and for each change-point $\tau_j$, there exists an interval from $F_T^M$ that contains only this change-point and both its start- and end- points are sufficiently far away from other change-points. Since we are dealing with the narrowest-over-threshold intervals, the actual intervals that our NOT algorithm pick must have length no longer than the ones we considered in Step~Three, thus could only contain precisely one change-point. 
	\item So in Step~Four, it suffices to investigate a single change-point detection problem, where we can use lemmas from Appendix~\ref{Sec:lemmas} and the bound in Step~Two to establish the convergence rate for its location estimation. 
	\item Finally, in Step~Five, we show that after detecting all the change-points, the NOT algorithm stops with no further detection. This is because the remaining elements $[s,e]\in F_T^M$ to be considered either have no change-point inside, or have one/two change-points that are very close to its start- or/and end- points, thus their corresponding $\max_{b} \cont{s}{e}{b}{\Yb}$ cannot exceed the given threshold in views of the property of its noiseless analogue and the bound from Step~One.
\end{itemize}

Now we proceed to the technical details.
\label{Sec:proof_of_consistency_theorem_const}
\begin{proof} 
	We shall prove the following more specific result, which in turn implies (\ref{Eq:consistency_gaussian_const}).
	\begin{align}
	\Pb{\hat{q}=q,\; \max_{j=1,\ldots,q} \Big(|\hat{\tau}_{j}-\tau_{j}|(\Delta_j^\fb)^2 \Big) \leq C_3 \log T} \geq 1- T^{-1}/(6\sqrt{\pi}) - T\delta_{T}^{-1}(1-\delta_{T}^{2}T^{-2}/36)^{M},
	\end{align}	
	
	\subsubsection*{Step One.}
	
	Let $\varepsilonb = (\varepsilon_1,\ldots,\varepsilon_T)'$ and $\lambda_T = \sqrt{ 8 \log T}$. Define the set
	\[
	A_T = \Big \{ \max_{s,b,e: 1\le s \le b < e \le T} |\cont{s}{e}{b}{\varepsilonb}| \le \lambda_T \Big \}.
	\]
	Note that for any $ 1\le s \le b < e \le T$, $\cont{s}{e}{b}{\varepsilonb}$ follows a standard normal distribution. Therefore, using the Bonferroni bound, we get
	\[
	\Pb{A_T^c} \leq \frac{T^3}{6} \; \frac{2 e^{-(\sqrt{8 \log T}) ^2/2}}{ \sqrt{8 \log T} \sqrt{2\pi}}  \le \frac{T^{-1}}{12 \sqrt{\pi}}.
	\]
	Moreover, because $\cont{s}{e}{b}{\Yb} - \cont{s}{e}{b}{\fb} = \cont{s}{e}{b}{\varepsilonb}$, so $A_T$ also implies that
	\[
	\Big \{ \max_{s,b,e: 1\le s \le b < e \le T} |\cont{s}{e}{b}{\Yb} -\cont{s}{e}{b}{\fb}| \le \lambda_T \Big \}.
	\]
	We remark that though the constant in $\lambda_T$ (i.e. $\sqrt{8}$) does not appear sharp (as it is rooted in the simple Bonferroni bound), it is sufficient for our purpose of establishing consistency and rate-type results later. We refer the readers to \citet{DS2001} and \citet{RW2010} for possible improvement over this constant.
	
	\subsubsection*{Step Two.}
	
	Define the set 
	\[
	B_T = \Big \{ \max_{j=1,\ldots, q} \;\; \max_{\substack{\tau_{j-1} < s \le \tau_j \\ \tau_j < e \le \tau_{j+1} \\ s \le b < e}} \frac{\Big|\Big\langle \psib_{s,e}^b \langle\fb,\psib_{s,e}^b \rangle - \psib_{s,e}^{\tau_j} \langle\fb,\psib_{s,e}^{\tau_j} \rangle  ,\varepsilonb\Big\rangle\Big|}{\|\psib_{s,e}^b \langle\fb,\psib_{s,e}^b \rangle - \psib_{s,e}^{\tau_j} \langle\fb,\psib_{s,e}^{\tau_j} \rangle\|_2} \le \lambda_T \Big \}.
	\]
	Again, for any $ 1\le s \le b < e \le T$, $\frac{|\langle \psib_{s,e}^b \langle\fb,\psib_{s,e}^b \rangle - \psib_{s,e}^{\tau_j} \langle\fb,\psib_{s,e}^{\tau_j} \rangle  ,\varepsilonb\rangle|}{\|\psib_{s,e}^b \langle\fb,\psib_{s,e}^b \rangle - \psib_{s,e}^{\tau_j} \langle\fb,\psib_{s,e}^{\tau_j} \rangle\|_2}$ follows a standard normal distribution, so using a similar argument, we get 
	\[
	\Pb{B_T^c} \le \frac{T^{-1}}{12 \sqrt{\pi}}.
	\]
	
	\subsubsection*{Step Three.}
	
	To fix the ideas, for $j=1,\ldots,q$, we define intervals 
	\begin{align}
	\Ic_{j}^{L} &= (\tau_{j} -\delta_T/3, \tau_{j} - \delta_T/6) \label{Eq:intervals_arnd_cpts_left}\\
	\Ic_{j}^{R} &= (\tau_{j} +\delta_T/6, \tau_{j}+\delta_T/3)\label{Eq:intervals_arnd_cpts_right}
	\end{align}
	Note that these intervals all contain at least one integer as long as $\delta_T > 6$. This is always true for sufficiently large $T$, as it follows from Conditions 1 and 2 that $\delta_T > \Cl \log T / \fl$. Recall that $F_{T}^{M}$ is the set of $M$ randomly drawn intervals with endpoints in $\{1,\ldots, T\}$. Denote by $[s_{1},e_{1}], \ldots, [s_{M},e_{M}]$ the elements of $F_{T}^{M}$ and let 
	\begin{align}
	D^{M}_{T}  = \Big\{\forall j=1,\ldots,q, \exists k \in \{1,\ldots,M\}, \; \mbox{s.t.} \; s_k \times e_k \in \Ic_{j}^{L}\times\Ic_{j}^{R}\Big\} \label{Eq:event_D_M_T}.
	\end{align}
	We have that 
	\begin{align*}
	\Pb{(D^{M}_{T})^{c}} &\leq \sum_{j=1}^{q}\Pi_{m=1}^{M}\Big(1-\Pb{s_m \times e_m \in \Ic_{j}^{L}\times\Ic_{j}^{R}}\Big) \\
	&\le q  \left(1-\frac{\delta_{T}^{2}}{6^2 T^{2}}\right)^{M} \leq \frac{T}{\delta_{T}} \left(1 - \frac{\delta_{T}^{2}}{36 T^{2}}\right)^{M}.
	\end{align*}
	Therefore, $\Pb{A_{T}\cap B_{T}\cap D^{M}_T} \geq 1-  T^{-1}/(6 \sqrt{\pi}) - T \delta_{T}^{-1}(1-\delta_T^{2}T^{-2}/36)^M.$ 
	
	In the rest of the proof, we assume that  $A_T, B_T$ and $D^{M}_T$ all hold. We give the constants as follows:
	\[\Cl=\sqrt{6}\big(2\sqrt{C_3} + 4\sqrt{2}\big)+1, \quad C_1 = 2\sqrt{C_3} + 2\sqrt{2}, \quad C_2 = \frac{1}{\sqrt{6}} - \frac{2 \sqrt{2}}{\Cl}, \quad C_3 = 32\sqrt{2}+48.\]
	These constants could be further refined by applying the Bonferroni bound more carefully. See also our remark at the end of Step~One. But since our main aim is to establish the rate, we chose not to pursue this direction further. In addition, here we set $\Cl$ in such a way that $\Cl C_2 > C_1$ (as well as $C_2 > 0$). This means that given $\delta_T^{1/2}\fl_{T} \geq \Cl \sqrt{\log T}$, one have that $C_2 \delta_{T}^{1/2} \fl_{T} > C_1 \sqrt{\log T }$, i.e. we can select $\zeta_{T} \in [C_1 \sqrt{\log T }, C_2 \delta_{T}^{1/2} \fl_{T})$.
	
	\subsubsection*{Step Four.}
	
	We focus on a generic interval $[s,e]$ such that
	\begin{align}
	\exists j \in \{1,\ldots,q\}, \; \exists k \in \{1,\ldots,M\}, \; \mbox{s.t.}\; [s_{k},e_{k}]\subset [s,e] \mbox{ and } s_k \times e_k \in\Ic_{j}^{L}\times\Ic_{j}^{R} \label{Eq:main_th_proof_there_exists_interval}
	\end{align}
	Fix such an interval $[s,e]$ and let $j \in \{1,\ldots,q\}$ and $k \in \{1,\ldots,M\}$ be such that \eqref{Eq:main_th_proof_there_exists_interval} is satisfied. 
	Let $b_k^{*}=\argmax_{s_{k}\leq b\le  e_{k}} \cont{s_k}{e_k}{b}{\Yb}$. By construction, $[s_k,e_k]$ satisfies $\tau_j-s_{k}+1 \ge \delta_{T}/6$ and $e_{k}-\tau_j > \delta_{T}/6$. 
	Denote by 
	\begin{align*}
	\Mc_{s,e}&=\left\{m:[s_{m},e_{m}]\in F_{T}^{M}, [s_{m},e_{m}]\subset[s,e] \right\};\\
	\Oc_{s,e}&=\{m\in\Mc_{s,e}:\max_{s_m \leq b < e_m} \cont{s_m}{e_m}{b}{\Yb} > \zeta_{T}\}
	\end{align*}
	Our first aim is to show that $\Oc_{s,e}$ is non-empty. This follows from Lemma~\ref{Lem:cusum_at_cp_jump} and the calculation below.
	\begin{align*}
	\cont{s_k}{e_k}{b^{*}_k}{\Yb} &\geq \cont{s_k}{e_k}{\tau_{j}}{\Yb} \\
	&\geq \cont{s_k}{e_k}{b^{*}_k}{\fb}-\lambda_{T} \geq \left(\frac{\delta_{T}}{6}\right)^{1/2} |f_{\tau_{j}+1}-f_{\tau_{j}}|- \lambda_{T}\geq \left(\frac{\delta_{T}}{6}\right)^{1/2} \fl_{T} - \lambda_{T} \\
	& = \left( \frac{1}{\sqrt{6}} - \frac{\lambda_{T}}{\delta_{T}^{1/2}\fl_{T}}\right)\delta_{T}^{1/2}\fl_{T}\geq \left(\frac{1}{\sqrt{6}} - \frac{2 \sqrt{2}}{\Cl} \right)\delta_{T}^{1/2}\fl_{T}  =  C_2 \delta_{T}^{1/2}\fl_{T} > \zeta_T.
	\end{align*}

	Let $m^{*}=\argmin_{m\in\Oc_{s,e}} (e_{m}-s_{m}+1)$ and $b^{*} = \argmax_{s_{m^*} \leq b <e_{m^*}} \cont{s_{m^*}}{e_{m^*}}{b}{\Yb}$. Observe that $[s_{m^*},e_{m^*})$ must contain at least one change-point. Indeed, if that was not the case, we would have $\cont{s_{m^*}}{e_{m^*}}{b}{\fb}=0$ and
	\begin{align*}
	\cont{s_{m^*}}{e_{m^*}}{b^{*}}{\Yb}=	|\cont{s_{m^*}}{e_{m^*}}{b^{*}}{\Yb} - 	\cont{s_{m^*}}{e_{m^*}}{b^{*}}{\fb}| \le \lambda_{T} \le \zeta_{T}
	\end{align*}
	which contradicts $\cont{s_{m^*}}{e_{m^*}}{b^{*}}{\Yb} >  \zeta_{T}$. On the other hand, $[s_{m^*},e_{m^*})$ cannot contain more than one change-points, because $e_{m^*}- s_{m^*} + 1 \le  e_{k}- s_{k} + 1 \le \delta_{T}$, as we picked the \emph{narrowest}-over-threshold interval. 
	
	Without loss of generality, assume $\tau_{j} \in [s_{m^*}, e_{m^*}]$. Denote by  $\eta_L=\tau_{j}-s_{m^*}+1$, $\eta_R = e_{m^*}-\tau_{j}$ and $\eta_{T}= (C_1 - \sqrt{8})^2 (\Delta_j^\fb)^{-2}\log T$, where $\Delta_j^\fb= |f_{\tau_{j}+1}-f_{\tau_{j}}|$. We claim that $\min(\eta_L, \eta_R) > \eta_{T}$, because $\min(\eta_L, \eta_R) \le  \eta_{T}$ and Lemma~\ref{Lem:cusum_at_cp_jump} result in  
	\begin{align*}	
	\cont{s_{m^*}}{e_{m^*}}{b^{*}}{\Yb} &\leq \cont{s_{m^*}}{e_{m^*}}{b^{*}}{\fb} +\lambda_{T} \leq  	\cont{s_{m^*}}{e_{m^*}}{\tau_{j}}{\fb} +\lambda_{T} \leq \eta_{T}^{1/2}\Delta_j^\fb+ \lambda_{T} \\
	&= (C_1 - \sqrt{8}+ \sqrt{8}) \sqrt{\log T} = C_1\sqrt{\log T} \leq \zeta_{T},
	\end{align*}
	which contradicts  $\cont{s_{m^*}}{e_{m^*}}{b^{*}}{\Yb} > \zeta_{T}$.  
	
	We are now in the position to prove $|b^* - \tau_j| \le  C_3 \log T / (\Delta_j^\fb)^2$. The arguments we use here are simpler and slightly more general than Lemma A.3 of \citet{fryzlewicz2014wild}. Our aim is to find $\epsilon_T$ such that for any $b \in \{s_{m^*}, s_{m^*}+1,\ldots,e_{m^*}-1\}$ with $|b-\tau_j| > \epsilon_{T}$, we always have
	\begin{align}
	(\cont{s_{m^*}}{e_{m^*}}{\tau_j}{\Yb})^2 -  (\cont{s_{m^*}}{e_{m^*}}{b}{\Yb})^2 > 0. 
	\label{Eq:main_th_proof_interval_rate1}
	\end{align}
	This would then imply that $|b^* - \tau_j| \le \epsilon_T$. By expansion and rearranging the terms (using the fact that $f_t = Y_t + \varepsilon_t$), we see that (\ref{Eq:main_th_proof_interval_rate1}) is equivalent to 
	\begin{align}
	\notag
	\langle\fb, \psib_{s_{m^*},e_{m^*}}^{\tau_j} \rangle^2 - \langle\fb, \psib_{s_{m^*},e_{m^*}}^{b} \rangle^2  >  &\langle \varepsilonb, \psib_{s_{m^*},e_{m^*}}^{b} \rangle^2  -  \langle \varepsilonb, \psib_{s_{m^*},e_{m^*}}^{\tau_j} \rangle^2 \\
	& + 2 \Big\langle \varepsilonb, \psib_{s_{m^*},e_{m^*}}^b  \langle\fb,\psib_{s_{m^*},e_{m^*}}^b \rangle - \psib_{s_{m^*},e_{m^*}}^{\tau_j} \langle\fb,\psib_{s_{m^*},e_{m^*}}^{\tau_j} \rangle \Big\rangle.
	\label{Eq:main_th_proof_interval_rate2}
	\end{align}
	In the following, we assume that $b \ge \tau_j$. The case that $b < \tau_j$ can be handled in a similar fashion. By Lemma~\ref{Lem:cusum_cp_jump_distance}, we  have
	\begin{align*}
	\langle\fb, \psib_{s_{m^*},e_{m^*}}^{\tau_j} \rangle^2 - \langle\fb, \psib_{s_{m^*},e_{m^*}}^{b} \rangle^2 = (\cont{s_{m^*}}{e_{m^*}}{\tau_j}{\fb})^2 -  (\cont{s_{m^*}}{e_{m^*}}{b}{\fb})^2 = \frac{|b -\tau_j| \eta_L}{|b -\tau_j| +  \eta_L} (\Delta_j^\fb)^2 := \kappa.
	\end{align*}
	In addition, since $A_T$ and $B_T$ hold, we have that
	\begin{align*}
	&\langle \varepsilonb, \psib_{s_{m^*},e_{m^*}}^{b} \rangle^2  -  \langle \varepsilonb, \psib_{s_{m^*},e_{m^*}}^{\tau_j} \rangle^2  \le \lambda_T^2, \\
	&2 \Big\langle \varepsilonb, \psib_{s_{m^*},e_{m^*}}^b \langle\fb,\psib_{s_{m^*},e_{m^*}}^b \rangle - \psib_{s_{m^*},e_{m^*}}^{\tau_j} \langle\fb,\psib_{s_{m^*},e_{m^*}}^{\tau_j} \rangle \Big\rangle  \\ 
	& \qquad \qquad  \qquad \le 2\| \psib_{s_{m^*},e_{m^*}}^b \langle\fb,\psib_{s_{m^*},e_{m^*}}^b \rangle - \psib_{s_{m^*},e_{m^*}}^{\tau_j} \langle\fb,\psib_{s_{m^*},e_{m^*}}^{\tau_j}\rangle\|_2 \lambda_T 
	= 2 \kappa^{1/2} \lambda_T,
	\end{align*}
	where the last equality also comes from Lemma~\ref{Lem:cusum_cp_jump_distance}. Consequently, (\ref{Eq:main_th_proof_interval_rate2}) can be deducted from the stronger inequality $\kappa - 2\lambda_T\kappa^{1/2} - \lambda_T^2 > 0$. This quadratic inequality is implied by $\kappa > (\sqrt{2}+1)^2\lambda_T^2$, and could be restricted further to  
	\begin{align}
	\frac{2|b -\tau_j| \eta_L}{|b -\tau_j| +  \eta_L} \ge \min(|b -\tau_j|, \eta_L) > (32\sqrt{2}+48) (\Delta_j^\fb)^{-2}\log T = C_3  (\Delta_j^\fb)^{-2}\log T.
	\label{Eq:main_th_proof_interval_rate3}
	\end{align}
	But since 
	\[
	\eta_L \ge \eta_T = (C_1 - \sqrt{8})^2 (\Delta_j^\fb)^{-2}\log T = (2\sqrt{C_3})^2  (\Delta_j^\fb)^{-2}\log T > C_3 (\Delta_j^\fb)^{-2}\log T,
	\]
	we see that (\ref{Eq:main_th_proof_interval_rate3}) is equivalent to $|b -\tau_j| > C_3 (\Delta_j^\fb)^{-2}\log T$. To sum up, 
	$|b^{*} -\tau_j| (\Delta_j^\fb)^{2} > C_3 \log T$ would result in (\ref{Eq:main_th_proof_interval_rate1}), a contradiction. So we have proved that $|b^{*} -\tau_j| (\Delta_j^\fb)^2 \leq C_3  \log T$.
	
	\subsubsection*{Step Five.}
	Using the arguments given above which are valid on the event $A_{T}\cap B_{T}\cap D_{T}^M$, we can now proceed with the proof of the theorem as follows. At the start of Algorithm~\ref{Alg:not_algorithm} we have $s=1$ and $e=T$ and, provided that $q \ge 1$, condition \eqref{Eq:main_th_proof_there_exists_interval} is satisfied. Therefore the algorithm detects a change-point $b^{*}$ in that interval such that $|b^{*}-\tau_{j}| \leq C_3 \log T  (\Delta_j^\fb)^{-2}$. By construction, we also have that $|b^{*}-\tau_{j}| < 2/3 \delta_{T}$. This in turn implies that for all $l=1,\ldots,q$ such that $\tau_{l}\in [s,e]$ and $l\neq j$ we have either $\Ic_{l}^{L}, \Ic_{l}^{R} \subset [s,b^{*}]$ or $\Ic_{l}^{L}, \Ic_{l}^{R} \subset [b^{*}+1,e]$. Therefore \eqref{Eq:main_th_proof_there_exists_interval} is satisfied within each segment containing at least one change-point. Note that before all $q$ change-points are detected, each change-point will not be detected twice. To see this, we suppose that $\tau_j$ has already been detected by $b$, then for all intervals $[s_k,e_k] \subset [\tau_j - C_3 \log T  (\Delta_j^\fb)^{-2} + 1, \tau_j + 2/3\delta_T] \cup [\tau_j - 2/3\delta_T, \tau_j + C_3 \log T  (\Delta_j^\fb)^{-2}]$, Lemma~\ref{Lem:cusum_at_cp_jump}, together with the event $A_T$, guarantees that 
	\[
	\max_{s_k \leq b < e_k}\cont{s_k}{e_k}{b}{\Yb} \le \max_{s\leq b <  e}\cont{s_k}{e_k}{b}{\fb} + \lambda_T \le \sqrt{ C_3 \log T  (\Delta_j^\fb)^{-2}} \Delta_j^\fb + \lambda_T \leq C_1 \sqrt{\log T} \le \zeta_{T}.
	\] 
	Once all the change-points are detected, we then only need to consider $[s_k,e_k]$ such that
	\[
	[s_k,e_k] \subset [\tau_j - C_3 \log T  (\Delta_j^\fb)^{-2}+1, \tau_{j+1} + C_3 \log T  (\Delta_{j+1}^{f})^{-2}]
	\]
	for $j = 0,\ldots,q$, where we set $\Delta_{0}^{f} = \Delta_{q+1}^{f} = \infty$ for notational convenience. It follows from Lemma~\ref{Lem:cusum_two_cp_jump} (within $A_T$) that
	\begin{align*}
	\max_{s_k \leq b < e}\cont{s_k}{e_k}{b}{\Yb} &\le \max_{s\leq b <  e}\cont{s_k}{e_k}{b}{\fb} + \lambda_T \\
	&\le \sqrt{ C_3 \log T  (\Delta_j^\fb)^{-2}} \Delta_j^\fb + \sqrt{ C_3 \log T  (\Delta_{j+1}^{f})^{-2}} \Delta_{j+1}^{\fb} +  \lambda_T \\
	& < (2\sqrt{C_3} + \sqrt{8}) \sqrt{\log T} = C_1 \sqrt{\log T} \le \zeta_{T}.
	\end{align*}
	Hence the algorithm terminates and no further change-points are detected. 
\end{proof}

\subsection{Proof of Theorem~\ref{Thm:consistency_gaussian_linear}}
\label{Sec:proof_of_consistency_theorem_slope}
\begin{proof}
	The proof proceeds in analogy to the proof of Theorem~\ref{Thm:consistency_gaussian_const}. In five steps we shall establish the following result,
	\begin{align}
	\Pb{\hat{q}=q,\; \max_{j=1,\ldots,q} \Big(|\hat{\tau}_{j}-\tau_{j}|(\Delta_{j}^{\fb})^{2/3}  \Big) \leq C_{3} (\log T)^{1/3}} \geq 1- T^{-1}/(6\sqrt{\pi}) - T\delta_{T}^{-1}(1-\delta_{T}^{2}T^{-2}/36)^{M},
	\end{align}	
	which in turn implies \eqref{Eq:consistency_gaussian_linear}.
	
	\subsubsection*{Step One and Step Two}
	We define the following two events 
	\begin{align*}
	A_T &= \Big \{ \max_{s,b,e: 1\le s \le b < e \le T} |\cont{s}{e}{b}{\varepsilonb}| \le \lambda_T \Big \},\\
	B_T &= \Big \{ \max_{j=1,\ldots, q} \;\; \max_{\substack{\tau_{j-1} < s \le \tau_j \\ \tau_j < e \le \tau_{j+1} \\ s \le b < e}} \frac{\Big|\Big\langle \phib_{s,e}^b \langle\fb,\phib_{s,e}^b \rangle - \phib_{s,e}^{\tau_j} \langle\fb,\phib_{s,e}^{\tau_j} \rangle  ,\varepsilonb\Big\rangle\Big|}{\|\phib_{s,e}^b \langle\fb,\phib_{s,e}^b \rangle - \phib_{s,e}^{\tau_j} \langle\fb,\phib_{s,e}^{\tau_j} \rangle\|_2} \le \lambda_T \Big \}, 
	\end{align*}
	where $\lambda_{T} = \sqrt{8\log T}$. Arguments as those used in Step One and Step Two of the proof of Theorem~\ref{Thm:consistency_gaussian_const} show that $\Pb{A_T^c} \le \frac{T^{-1}}{12 \sqrt{\pi}}$ and $\Pb{B_T^c} \le \frac{T^{-1}}{12 \sqrt{\pi}}$.
	
	\subsubsection*{Step Three}
	In the rest of the proof, we assume that $A_T$, $B_T$ and $D_T^M$ all hold, where the last event is given by \eqref{Eq:event_D_M_T}. Exactly as in the proof of Theorem~\ref{Eq:consistency_gaussian_const}, we show that $\Pb{A_{T}\cap B_{T}\cap D^{M}_T} \geq 1-  T^{-1}/(6 \sqrt{\pi}) - T \delta_{T}^{-1}(1-\delta_T^{2}T^{-2}/36)^M.$ 
	
	We give the constants as follows:
	\[\Cl = 72\Big(4\sqrt{2}+2\sqrt{\frac{2}{3}} C_{3}^{3/2}\Big)+1 , \quad C_1 = 2 \sqrt{\frac{2}{3}} C_{3}^{3/2} + 2\sqrt{2}, \quad C_2 = \frac{1}{72} - \frac{2 \sqrt{2}}{\Cl}, \quad C_3 = 2 \sqrt[3]{7} \left(3 \left(1+\sqrt{2}\right)\right)^{2/3}.\]
	Here we set $\Cl$ in such a way that $\Cl C_2 > C_1$ (which also implies that $C_2 >0$). Consequently, given $\delta_T^{3/2}\fl_{T} \geq \Cl \sqrt{\log T}$ it is possible to select $\zeta_{T}\in \left[C_{1}\sqrt{\log T}, C_{2}\delta_T^{3/2}\fl_{T}\right)$.  
	
	Again, these constants could be further refined. But since our main aim is to establish the rate, we chose not to pursue this direction here. 
	
	\subsubsection*{Step Four}
	Consider a generic interval $[s,e]$ satisfying 
	\begin{align}
	\exists j \in \{1,\ldots,q\}, \; \exists k \in \{1,\ldots,M\}, \; \mbox{s.t.}\; [s_{k},e_{k}]\subset [s,e] \mbox{ and } s_k \times e_k \in\Ic_{j}^{L}\times\Ic_{j}^{R} \label{Eq:th_linear_case_proof_there_exists_interval}
	\end{align}
	and define events 
	\begin{align*}
	\Mc_{s,e}&=\left\{m:[s_{m},e_{m}]\in F_{T}^{M}, [s_{m},e_{m}]\subset[s,e] \right\},\\
	\Oc_{s,e}&=\{m\in\Mc_{s,e}:\max_{s_m \leq b < e_m} \cont{s_m}{e_m}{b}{\Yb} > \zeta_{T}\}.
	\end{align*}
	Let $b_k^{*}=\argmax_{s_{k}\leq b\le  e_{k}} \cont{s_k}{e_k}{b}{\Yb}$. We have
	\begin{align*}
	\cont{s_k}{e_k}{b^{*}_k}{\Yb} &\geq \cont{s_k}{e_k}{\tau_{j}}{\Yb} \\
	&\geq \cont{s_k}{e_k}{b^{*}_k}{\fb}-\lambda_{T} \geq \frac{1}{\sqrt{24}}\left(\delta_{T}/6\right)^{3/2} \Delta_j^\fb- \lambda_{T}\geq \frac{1}{72}\delta_{T}^{3/2} \fl_{T} - \lambda_{T} \\
	& = \left( \frac{1}{72} - \frac{\lambda_{T}}{\delta_{T}^{3/2}\fl_{T}}\right)\delta_{T}^{1/2}\fl_{T}\geq \left(\frac{1}{72} - \frac{2 \sqrt{2}}{\Cl} \right)\delta_{T}^{3/2}\fl_{T}  =  C_2 \delta_{T}^{3/2}\fl_{T} > \zeta_T,
	\end{align*}
	where the third inequality above  follows from Lemma~\ref{Lem:cusum_at_cp_kink}, therefore $\Oc_{s,e}$ is non-empty.
	
	Let $m^{*} = \argmin_{m\in\Oc_{s,e}} (e_m - s_m+1)$ and $b^{*} = \argmax_{s_{m^*} \leq b < e_{m^*}} \cont{s_{m^*}}{e_{m^*}}{b}{\Yb}$. Arguing exactly as in Step~Four in the proof of Theorem~\ref{Thm:consistency_gaussian_const}, we show that $[s_{m^*}, e_{m^*})$ must contain exactly one change-point. Without loss of generality, assume that  $\tau_{j}\in [s_{m^*}, e_{m^*})$. Let $\eta_{L}=\tau_{j} -s_{m^*}$, $\eta_{R}=e_{m^{*}}-\tau_{j}$ and $\eta_{T}=\left(\sqrt{3}(C_{1}-\sqrt{8})\sqrt{\log T} (\Delta_{j}^{\fb})^{-1})\right)^{2/3}-1$.  We observe that $\min(\eta_L, \eta_R) > \eta_{T}$, as  $\min(\eta_L, \eta_R) \leq \eta_{T}$ and Lemma~\ref{Lem:cusum_at_cp_kink} implies that 
	\begin{align*}	
	\cont{s_{m^*}}{e_{m^*}}{b^{*}}{\Yb} &\leq \cont{s_{m^*}}{e_{m^*}}{b^{*}}{\fb} +\lambda_{T} \leq  	\cont{s_{m^*}}{e_{m^*}}{\tau_{j}}{\fb} +\lambda_{T} \leq \frac{1}{\sqrt{3}}(\eta_{T}+1)^{3/2}\Delta_j^\fb+ \lambda_{T} \\
	&= (C_1 - \sqrt{8}+ \sqrt{8}) \sqrt{\log T} = C_1\sqrt{\log T} \leq \zeta_{T},
	\end{align*}
	contradicting $\cont{s_{m^*}}{e_{m^*}}{b^{*}}{\Yb} > \zeta_{T}$.
	
	We are now in the position to prove that $|b^{*} - \tau_{j}| \leq  C_{3} (\Delta_{j}^{\fb})^{-2/3} (\log T)^{1/3} := \epsilon_{T}$. Let $b\in\{s_{m^{*}}+1,\ldots, e_{m^{*}}-2\}$.  We claim that  when $|b-\tau_j|>\epsilon_{T}$,
	\begin{align}
	(\cont{s_{m^*}}{e_{m^*}}{\tau_{j}}{\Yb})^{2}-(\cont{s_{m^*}}{e_{m^*}}{b}{\Yb})^{2} > 0. \label{Eq:linear_th_proof_interval_rate1}
	\end{align}
	Since inequality \eqref{Eq:linear_th_proof_interval_rate1} does not hold for $b=b^{*}$, so proving this claim consequently demonstrates that $|b^* - \tau_j|\leq \epsilon_{T}$.  
	
	Without loss of generality, we consider the case of $b>\tau_{j}$. Using arguments as those in Step~Four of the proof of Theorem~\ref{Thm:consistency_gaussian_const} we can show that \eqref{Eq:linear_th_proof_interval_rate1} is implied by $\kappa > (\sqrt{2}+1)^2\lambda_T^2$, where $\kappa=(\cont{s_{m^*}}{e_{m^*}}{\tau_{j}}{\fb})^{2}-(\cont{s_{m^*}}{e_{m^*}}{b}{\fb})^{2}$.  By Lemma~\ref{Lem:cusum_cp_kink_distance}, $\kappa > (\sqrt{2}+1)^2\lambda_T^2$ is implied by
	\begin{align*}
	\min\left(|b-\tau_j|, \eta_{L}\right) > \left(63 (\Delta_{j}^{\fb})^{-2} \cdot 8 (\sqrt{2}+1)^{2}\log T\right) ^ {1/3} = C_{3} (\Delta_{j}^{\fb})^{-2/3} (\log T)^{1/3}
	\end{align*} 
	However, for sufficiently large $T$, 
	\begin{align*}
	\eta_L > \eta_{T} & =  (\sqrt{3}(C_{1}-\sqrt{8}))^{2/3} (\Delta_{j}^{\fb})^{-2/3}(\log T)^{1/3} -1 >  (C_{1}-\sqrt{8})^{2/3} (\Delta_{j}^{\fb})^{-2/3}(\log T)^{1/3}\\
	& =   (C_{3}^{3/2}+\sqrt{8}-\sqrt{8})^{2/3} (\Delta_{j}^{\fb})^{-2/3} = C_{3} (\Delta_{j}^{\fb})^{-2/3} (\log T)^{1/3} = \epsilon_{T},
	\end{align*}
	hence $|b-\tau_{j}| >  \epsilon_{T}$ implies \eqref{Eq:linear_th_proof_interval_rate1}, so it must hold that $|b^{*}-\tau_{j}| \leq \epsilon_{T}$. 
	
	\subsubsection*{Step Five}
	
	Using the arguments given above which are valid on the event $A_{T}\cap B_{T}\cap D_{T}^M$, we can now proceed with the proof of the theorem as follows. At the start of Algorithm~\ref{Alg:not_algorithm} we have $s=1$ and $e=T$ and, provided that $q \ge 1$, condition \eqref{Eq:main_th_proof_there_exists_interval} is satisfied. Therefore the algorithm detects a change-point $b^{*}$ in that interval such that $|b^{*}-\tau_{j}| \leq C_{3} (\Delta_{j}^{\fb})^{-2/3} (\log T)^{1/3}$. By construction, we also have that $|b^{*}-\tau_{j}| < 2/3 \delta_{T}$. This in turn implies that for all $l=1,\ldots,q$ such that $\tau_{l}\in [s,e]$ and $l\neq j$ we have either $\Ic_{l}^{L}, \Ic_{l}^{R} \subset [s,b^{*}]$ or $\Ic_{l}^{L}, \Ic_{l}^{R} \subset [b^{*}+1,e]$. Therefore \eqref{Eq:main_th_proof_there_exists_interval} is satisfied within each segment containing at least one change-point. Note that before all $q$ change-points are detected, each change-point will not be detected twice. To see this, we suppose that $\tau_j$ has already been detected by $b$, then for all intervals $[s_k,e_k] \subset [\tau_j - \epsilon_T + 1, \tau_j - \epsilon_T + 2/3\delta_T + 1] \cup [\tau_j + \epsilon_T - 2/3\delta_T, \tau_j + \epsilon_T]$, Lemma~\ref{Lem:cusum_at_cp_kink}, together with the event $A_T$, guarantees that
	\begin{align*}
	\max_{s_k \leq b < e_k}\cont{s_k}{e_k}{b}{\Yb} &\le \max_{s\leq b <  e} \cont{s_k}{e_k}{b}{\fb} + \sqrt{8 \log T} \le \frac{1}{\sqrt{3}}( C_{3} (\Delta_{j}^{\fb})^{-2/3} (\log T)^{1/3}+1)^{3/2} \Delta_j^\fb + \sqrt{8 \log T}\\ 
	&\leq (2 \sqrt{\frac{2}{3}} C_{3}^{3/2} +\sqrt{8}) \sqrt{\log T} = C_1\sqrt{\log T}  \le \zeta_{T}
	\end{align*}  
	Once all the change-points are detected, we then only need to consider $[s_k,e_k]$ such that
	\[
	[s_k,e_k] \subset [\tau_j - C_{3} (\Delta_{j}^{\fb})^{-2/3} (\log T)^{1/3}+1, \tau_{j+1} + C_{3} (\Delta_{j+1}^{\fb})^{-2/3} (\log T)^{1/3}]
	\]
	for $j = 0,\ldots,q$, where we set $\Delta_{0}^{f} = \Delta_{q+1}^{f} = \infty$ for notational convenience. It follows from Lemma~\ref{Lem:cusum_two_cp_kink} (within $A_T$) that
	\begin{align*}
	\max_{s_k \leq b < e}\cont{s_k}{e_k}{b}{\Yb} &\le \max_{s\leq b <  e}\cont{s_k}{e_k}{b}{\fb} + \sqrt{8 \log T} \\
	&\le \frac{1}{\sqrt{3}} (C_{3} (\Delta_{j}^{\fb})^{-2/3} (\log T)^{1/3})^{3/2} \Delta_j^\fb + \frac{1}{\sqrt{3}} (C_{3} (\Delta_{j}^{\fb})^{-2/3} (\log T)^{1/3})^{3/2} \Delta_{j+1}^{\fb}+ \sqrt{8 \log T}\\
	& = (\frac{2}{\sqrt{3}}C_3^{3/2} + \sqrt{8}) \sqrt{\log T} \leq C_1 \sqrt{\log T} \le \zeta_{T}.
	\end{align*}
	Hence the algorithm terminates and no further change-points are detected. 
\end{proof}

\subsection{Proof of Theorem~\ref{Thm:consistency_gaussian_const_sSIC}}
\label{Sec:proof_of_consistency_theorem_sSIC_constant}
\begin{proof}
	Recall that $\{\varepsilon_t\}_{t=1}^T$ are i.i.d. $N(0,\sigma_0^2)$ with $\sigma_0=1$. For any candidate $\Tc(\zeta^{(k)})$ on the NOT solution path,  the sSIC criterion function in \ref{Scen:change_in_mean} can be written as 
	\[
	T \hat{\sigma}^2_k +  (\hat{q}_k+1) \log^\alpha(T) + \mathrm{constant}
	\]
	where $\hat{\sigma}^2_k$ is the estimated variance of the noise (i.e. the residual sum of squares divided by $T$) based on $\Tc(\zeta^{(k)})$, and $\hat{q}_k$ is the estimated number of change-points. 
	
	We now divide our proof into three parts. 
	
	\subsubsection*{Part I. About a particular model candidate on the NOT solution path}
	By Theorem~\ref{Thm:consistency_gaussian_const}, we know that with arbitrarily high probability for sufficiently large $T$, there exists $k^*$ such that $\Tc(\zeta^{(k^*)})$ on the NOT solution path is a ``good'' candidate with $\hat{\tau}_{1},\ldots,\hat{\tau}_{\hat{q}_{k^*}} \in \Tc(\zeta^{(k^*)})$ satisfying $\hat{q}_{k^*} = q$ and $\max_{i=1}^{q}|\hat{\tau}_{i}-\tau_i| \le C' \log T$ for some $C' > 0$. In the rest of the proof, for presentational convenience, we condition on the event that such $k^*$ does exist throughout our analysis. 
	
	In addition, recall that  $\mathbf{1}_{s,e} = \big(\mathbf{1}_{s,e}(1),\ldots,\mathbf{1}_{s,e}(T)\big)'$ with 
	\begin{align} 
	\mathbf{1}_{s,e}(t) = \begin{cases}	(e-s+1)^{-1/2}, & t=s,\ldots, e\\ 0, &\mbox{otherwise} \end{cases},
	\end{align}
	and define the set
	\[
	E_T = \Big \{ \max_{s,e: 1\le s \le e \le T} |\langle \mathbf{1}_{s,e}, {\varepsilonb}\rangle| \le \sqrt{6 \log T} \Big \}.
	\]
	Using an argument similar to Step One of the proof of Theorem~\ref{Thm:consistency_gaussian_const}, we see that $\Pb{E_T^c} = O(T^{-1})$. Since we are only interested in proving a certain type of probabilistic statement for $T \rightarrow \infty$, here we could also assume that $E_T$ holds.
	
	Let $\{\hat{f}_t\}_{t=1}^T$ be the fitted values using the candidate on the solution path with $\hat{\tau}_{1},\ldots,\hat{\tau}_{\hat{q}_{k^*}} \in \Tc(\zeta^{(k^*)})$ , and define $\tilde{f}_t = f_{\tau_j}$ for $t = \hat{\tau}_j,\ldots,\hat{\tau}_{j+1}-1$ for every $j = 0, 1,\ldots q$. Here for notational convenience, we suppressed the dependence of $\{\hat{f}_t\}_{t=1}^T$ and $\{\tilde{f}_t\}_{t=1}^T$ on $k^*$. It is easy to see that $f_t- \tilde{f}_t$ is piecewise-constant, only non-zero for $t$ between the true location of the change-point $\tau_j$ and its estimation $\hat{\tau}_j$, and exactly zero elsewhere. Write $\tilde{\fb}=(\tilde{f}_1,\ldots,\tilde{f}_T)'$. Then
	\begin{align*}
	T \hat{\sigma}^2_{k^*} &= \sum_{t=1}^T(\varepsilon_t+f_t-\hat{f}_t)^2  \le \sum_{t=1}^T(\varepsilon_t+f_t-\tilde{f}_t)^2 =  \sum_{t=1}^T \varepsilon_t^2 + 2\langle \varepsilonb, \fb -\tilde{\fb} \rangle + \|\fb -\tilde{\fb}\|^2 \\
	&=  \sum_{t=1}^T \varepsilon_t^2 +  4 q \bar{C}  \sqrt{6\log T} \sqrt{C'\log T}  + q (2\bar{C})^2 C' \log T  \\
	&= \sum_{t=1}^T\varepsilon_t^2 +  ( 4q \bar{C} \sqrt{6C'} + 4q C' \bar{C}^2 )\log T
	\end{align*}
	where the second last step follows from $E_T$, linearity of the inner product, and the fact that $\max_{i=1}^{q}|\hat{\tau}_{i}-\tau_i| \le C' \log T$. Consequently, it follows that $\Pb{\hat{\sigma}^2_{k^*} < 1+ \delta} = 1$ for any $\delta >0$ as $T \rightarrow \infty$.

	\subsubsection*{Part II. Estimation of the number of change-points}
	In this part, we prove that for NOT with the sSIC, $\Pb{\hat{q}=q} \rightarrow 1$ as $T \rightarrow \infty$. We accomplish this by showing separately that (i) $\Pb{\hat{q}<q} \rightarrow 0$ and (ii) $\Pb{\hat{q}>q} \rightarrow 0$. 
	
	First, we note that it follows from Lemma~3 of \citet{Yao1988} that there exists $\delta >0$ such that as $T \rightarrow \infty$,
	\[
	\min_{k: \hat{q}_k < q} \Pb{\hat{\sigma}^2_k > 1+ \delta} \rightarrow 1.
	\]
	This means that for all $k$ with $\hat{q}_k < q$,
	\[
	\mbox{sSIC}(k) - \mbox{sSIC}(k^*) = T(\hat{\sigma}_k^2 - \hat{\sigma}_{k^*}^2) + (\hat{q}_k - q) \log^\alpha(T) \ge \delta T -  q\log^\alpha(T) > 0
	\] 
	for large enough $T$, which implies $\Pb{\hat{q}<q} \rightarrow 0$.
	
	Second, for all $k$ with $\hat{q}_k > q$ and $\hat{\tau}_{1},\ldots,\hat{\tau}_{\hat{q}_{k}} \in \Tc(\zeta^{(k)})$, we consider a ``saturated oracle'' candidate model with $\hat{q}_k+q$ change-points at $\hat{\tau}_{1},\ldots,\hat{\tau}_{\hat{q}_k},\tau_1,\ldots,\tau_q$ respectively. We reorder these $\hat{q}_k+q$ locations as $0 = \mathring{\tau}_{0} < \mathring{\tau}_{1} \le \ldots \le  \mathring{\tau}_{\hat{q}_k+q} <  \mathring{\tau}_{\hat{q}_k+q+ 1} = T$, and denote the estimated variance of the errors corresponding this saturated oracle candidate by $\mathring{\sigma}_k^2$. Since for each $j = 0, \ldots, \hat{q}_k+q$, $f_t$ is constant over $\{ 1+\mathring{\tau}_j,\ldots,\mathring{\tau}_{j+1} \}$,  it then follows that
	\begin{align*} 
	T \hat{\sigma}^2_{k} \ge T \mathring{\sigma}^2_{k} &= \sum_{j=0}^{\hat{q}_k+q} \sum_{t=1+\mathring{\tau}_j}^{\mathring{\tau}_{j+1}}\Big\{\varepsilon_t - \frac{1}{\mathring{\tau}_{j+1}-\mathring{\tau}_{j}}\sum_{b=1+\mathring{\tau}_j}^{\mathring{\tau}_{j+1}}\varepsilon_b\Big\}^2 \\
	&= \sum_{t=1}^T \varepsilon_t^2 - \sum_{j=0}^{\hat{q}_k+q} \langle \varepsilonb,\mathbf{1}_{1+\mathring{\tau}_j,\mathring{\tau}_{j+1}} \rangle^2 \ge \sum_{t=1}^T \varepsilon_t^2 - 6 (q+\hat{q}_k+1) \log T,
	\end{align*}
	where the last line again follows from $E_T$. This means that for all $k$ with $\hat{q}_k > q$,
	\begin{align*}
	\mbox{sSIC}(k) - \mbox{sSIC}(k^*) &\ge T(\mathring{\sigma}_k^2 - \hat{\sigma}_{k^*}^2) + (\hat{q}_k - q) \log^\alpha(T) \\
	& \ge \Big\{\sum_{t=1}^T \varepsilon_t^2 - 6 (q+\hat{q}_k+1) \log T\Big\} - \Big\{\sum_{t=1}^T\varepsilon_t^2  + ( 4q \bar{C} \sqrt{6C'} + 4q C' \bar{C}^2 )\log T\Big\} \\
	& \qquad+  (\hat{q}_k-q) \log^\alpha(T) \\
	&= (\hat{q}_k - q) \{\log^\alpha(T)  - 6 \log T\} - (12q + 4q \bar{C} \sqrt{6C'} + 4q C' \bar{C}^2+6) \log T  \\
	& \ge \{\log^\alpha(T)  - 6 \log T\} - (12q + 4q \bar{C} \sqrt{6C'} + 4q C' \bar{C}^2+6) \log T  > 0
	\end{align*}
	for large enough $T$, which implies $\Pb{\hat{q}>q} \rightarrow 0$.
	
	In conclusion, we have established $\Pb{\hat{q}=q} \rightarrow 1$.
	
	\subsubsection*{Part III. Estimation of the change-point locations}
	
	In view of the conclusion of Part II, in the rest of the proof we could assume that $E_T$ holds and $\hat{q} = q$. Suppose that the model picked via NOT with the sSIC is $\hat{\tau}_{1},\ldots,\hat{\tau}_{q} \in \Tc(\zeta^{(\hat{k})})$. Furthermore, let
	\[
	j^* = \argmax_{j=1,\ldots,q} \min_{i = 1,\ldots,q} |\hat{\tau}_{i} - {\tau}_{j}| \quad\mbox{ and }\quad C:=  \frac{\min\big(\lfloor\delta_T/2\rfloor, \min_{i = 1,\ldots,q}  |\hat{\tau}_{i} - {\tau}_{j^*}|\big)}{\log T}.
	\]
	Our aim is to show that $C$ is finite (more precisely, has an upper bound independent of $T$). Now consider a 	``near-saturated oracle'' candidate model with $2q+1$ change-points at 
	\[
	\{\hat{\tau}_{1},\ldots,\hat{\tau}_{q}, {\tau}_{1},\ldots,{\tau}_{j^*-1},{\tau}_{j^*+1},\ldots,\hat{\tau}_{q},{\tau}_{j^*}-C\log T, {\tau}_{j^*}+C\log T\} 
	\]
	with the corresponding estimated variance of the errors denoted as $\dot{\sigma}^2_{\hat{k}}$. So here instead of adding all the true change-points to the set of estimated change-points as before (which generates the so-called ``saturated oracle''), we add all true change-points apart from $\tau_{j^*}$, and replace it by ${\tau}_{j^*}\pm C\log T$.
	
	Note that by construction (i.e. via $\delta_T$ in the definition of $C$), $f_t$ is constant on $\{{\tau}_{j^*}-C\log T+1,\ldots, {\tau}_{j^*}\}$ and $\{{\tau}_{j^*}+1,\ldots, {\tau}_{j^*}+C\log T\}$. In addition, $\Delta_{j^*}^\fb = |f_{\tau_{j^*}+1}-f_{\tau_{j^*}}| \ge \fl_{T}$. Write 
	\[
	\bar{\varepsilon}_* = \frac{1}{2C\log T}\sum_{t={\tau}_{j^*}-C\log T+1}^{{\tau}_{j^*}+C\log T}{\varepsilon_t}.
	\]
	
	Without loss of generality, assume that $f_{\tau_{j^*}+1} > f_{\tau_{j^*}}$. Now using the argument similar to that in Part II, we see that
	\begin{align*} 
	T \hat{\sigma}^2_{\hat{k}} \ge  T\dot{\sigma}^2_{\hat{k}} &\ge \sum_{t=1}^{{\tau}_{j^*}-C\log T} \varepsilon_t^2 +   \sum_{t={\tau}_{j^*}+C\log T+1}^{T} \varepsilon_t^2 -    (2q) 6\log T \\
	&\qquad + \sum_{t={\tau}_{j^*}-C\log T+1}^{\tau_{j^*}} (\varepsilon_t-\Delta_{j^*}^\fb/2- \bar{\varepsilon}_*)^2  +  \sum_{t={\tau}_{j^*}+1}^{{\tau}_{j^*}+C\log T} (\varepsilon_t+\Delta_{j^*}^\fb/2- \bar{\varepsilon}_*)^2  \\
	& =  \sum_{t=1}^{T} \varepsilon_t^2 -  12q \log T  + \Delta_{j^*}^\fb \Big(\sum_{t={\tau}_{j^*}+1}^{{\tau}_{j^*}+C\log T} \varepsilon_t - \sum_{t={\tau}_{j^*}-C\log T+1}^{\tau_{j^*}}\varepsilon_t\Big) \\
	&\qquad +  (\Delta_{j^*}^\fb /2)^2 (2C \log T) - (2C\log T) \bar{\varepsilon}_*^2\\
	& =  \sum_{t=1}^{T} \varepsilon_t^2 -   12q \log T  + \Delta_{j^*}^\fb  \sqrt{C \log T} \Big\{\langle \varepsilonb ,\mathbf{1}_{{\tau}_{j^*}+1,\tau_{j^*}+C\log T}\rangle - \langle \varepsilonb , \mathbf{1}_{{\tau}_{j^*}-C\log T+1,\tau_{j^*}}\rangle \Big\}\\
	& \qquad  +  (\Delta_{j^*}^\fb /2)^2 (2C \log T) -  \langle \varepsilonb , \mathbf{1}_{{\tau}_{j^*}-C\log T+1,\tau_{j^*}+C\log T}\rangle^2\\
	& \ge \sum_{t=1}^{T} \varepsilon_t^2 -   \{6 (2q+1) + 2 \sqrt{6C}\Delta_{j^*}^\fb\} \log T + (\Delta_{j^*}^\fb /2)^2 (2C \log T)
	\end{align*} 
	However,
	\[
	T \hat{\sigma}^2_{\hat{k}} \le T \hat{\sigma}^2_{k^*} \le  \sum_{t=1}^T\varepsilon_t^2 +  ( 4q \bar{C} \sqrt{6C'} + 4q C' \bar{C}^2 )\log T
	\]
	Combining the above two inequalities, and after some algebraic manipulations, we get
	\[
	2q \bar{C} \sqrt{6C'} + 2q C' \bar{C}^2  \ge   C (\Delta_{j^*}^\fb /2)^2  - 3 (2q+1) -  \sqrt{6C}\Delta_{j^*}^\fb,
	\]
	and thus
	\[
	2q \bar{C} \sqrt{6C'} + 2q C' \bar{C}^2 +  3 (2q+1) +6 \ge (\sqrt{C} \Delta_{j^*}^\fb /2 - \sqrt{6})^2,
	\]
	which entails
	\[
	C \le  4\Big[\Big\{2q \bar{C} \sqrt{6C'} + 2q C' \bar{C}^2 +  3 (2q+1) +6 \Big\}^{1/2}+ \sqrt{6}\Big]^2/{\Cl}_2^2.
	\]
	
	Finally, we remark that since $\delta_T = \min_{j=1,\ldots,q+1}(\tau_{j}-\tau_{j-1})\ge T{\Cl}_1$, for sufficiently large $T$,
	\[
	C\log T \ge \min\Big(\lfloor\delta_T/2\rfloor,\max_{j=1,\ldots,q} \min_{i = 1,\ldots,q} |\hat{\tau}_{i} - {\tau}_{j}|\Big) =  \max_{j=1,\ldots,q} |\hat{\tau}_{j} - {\tau}_{j}|.
	\]
	Therefore, $\Pb{\max_{j=1,\ldots,q} |\hat{\tau}_{j}-\tau_{j}| \leq C \log T} \rightarrow 1$, as required.
\end{proof}

\subsection{Proof of Theorem~\ref{Thm:consistency_gaussian_linear_sSIC}}
\label{Sec:consistency_gaussian_linear_sSIC}
First, we strengthen Theorem~\ref{Thm:consistency_gaussian_linear} in the scenario where the true signal has finitely many kinks (with spacings of $O(T)$).
\begin{Lemma}
	\label{lem:consistency_gaussian_linear}
	Under the assumptions of Theorem~\ref{Thm:consistency_gaussian_linear_sSIC},  there exist constants $C'$ and $\tilde{C}$ such that by setting $\zeta_{T} = \tilde{C} \sqrt{T}$ and $M \geq 36 {\Cl}_1^{-2} \log ({\Cl}_1^{-1} T)$, we have that
	\begin{align}
	\Pb{\hat{q}=q,\; \max_{j=1,\ldots,q} |\hat{\tau}_{j}-\tau_{j}| \leq C' \sqrt{T\log T}} \rightarrow 1,
	\end{align}
	as $T \rightarrow \infty$.
\end{Lemma}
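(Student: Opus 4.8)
The plan is to follow the five-step architecture of the proof of Theorem~\ref{Thm:consistency_gaussian_linear}, changing only the calibration of the threshold and the localisation argument in Step Four. Steps One and Two carry over verbatim: with $\lambda_T = \sqrt{8\log T}$ I would reuse the events $A_T$ (uniform control of $\cont{s}{e}{b}{\varepsilonb}$) and $B_T$ (uniform control of the normalised noise projections onto $\phib_{s,e}^b\langle\fb,\phib_{s,e}^b\rangle - \phib_{s,e}^{\tau_j}\langle\fb,\phib_{s,e}^{\tau_j}\rangle$), each failing with probability $O(T^{-1})$. Step Three is likewise unchanged: using the intervals $\Ic_j^L,\Ic_j^R$ of \eqref{Eq:intervals_arnd_cpts_left}--\eqref{Eq:intervals_arnd_cpts_right}, the event $D_T^M$ of \eqref{Eq:event_D_M_T} holds once $M \ge 36 T^2\delta_T^{-2}\log(T^2\delta_T^{-1})$, and since here $\delta_T \ge {\Cl}_1 T$ this is implied by $M \ge 36{\Cl}_1^{-2}\log({\Cl}_1^{-1}T)$.

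The substance lies in Step Four. Because $s_k\in\Ic_j^L$ and $e_k\in\Ic_j^R$ force $\eta_L,\eta_R$ to be of order $\delta_T$, which is of order $T$, the lower bound of Lemma~\ref{Lem:cusum_at_cp_kink} gives $\cont{s_k}{e_k}{\tau_j}{\fb} \ge \frac{1}{\sqrt{24}}(\delta_T/6)^{3/2}\fl_{T}$, a quantity of order $T^{3/2}\cdot T^{-1} = T^{1/2}$; so for $\tilde C$ chosen small enough (strictly below $\frac{1}{\sqrt{24}}({\Cl}_1/6)^{3/2}{\Cl}_2$) the set $\Oc_{s,e}$ is nonempty when $\zeta_T = \tilde C\sqrt T$. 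Let $[s_{m^*},e_{m^*}]$ be the narrowest over-threshold interval detecting $\tau_j$ and put $\eta = \min(\eta_L,\eta_R)$ for it. The key new step is a two-sided width estimate: the upper bound of Lemma~\ref{Lem:cusum_at_cp_kink} together with $A_T$ gives
\[
\tfrac{1}{\sqrt3}(\eta+1)^{3/2}\Delta_j^\fb \ge \cont{s_{m^*}}{e_{m^*}}{\tau_j}{\fb} \ge \zeta_T - \lambda_T ,
\]
whose right-hand side is of order $\sqrt T$, so $(\Delta_j^\fb)^2\eta^3$ is bounded below by a constant multiple of $T$; combined with $\eta \le \delta_T \le T$ this yields the crucial estimate $(\Delta_j^\fb)^2\eta \ge c\,T/\delta_T^2 \ge c'\,T^{-1}$ for constants $c,c'>0$. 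I would then repeat the localisation reduction of Theorem~\ref{Thm:consistency_gaussian_linear}, but invoking the \emph{quadratic} separation of Lemma~\ref{Lem:cusum_cp_kink_distance_2} in place of the cubic Lemma~\ref{Lem:cusum_cp_kink_distance}: for $|b-\tau_j|=\rho$ with $\rho<\eta/2$,
\[
\kappa := (\cont{s_{m^*}}{e_{m^*}}{\tau_j}{\fb})^2 - (\cont{s_{m^*}}{e_{m^*}}{b}{\fb})^2 \ge \tfrac{(\Delta_j^\fb)^2}{96}(\eta-1)\rho^2 ,
\]
and the same reduction to $\kappa > (\sqrt2+1)^2\lambda_T^2$ shows $(\cont{s_{m^*}}{e_{m^*}}{\tau_j}{\Yb})^2 - (\cont{s_{m^*}}{e_{m^*}}{b}{\Yb})^2 > 0$ once $\rho^2$ exceeds a constant multiple of $\log T/\{(\Delta_j^\fb)^2\eta\}$. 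Hence $|b^*-\tau_j| \le \epsilon_{T,j}$ with $\epsilon_{T,j} = C''\{\log T/((\Delta_j^\fb)^2\eta)\}^{1/2}$, and $(\Delta_j^\fb)^2\eta \ge c'T^{-1}$ converts this into the uniform bound $\epsilon_{T,j} \le C'\sqrt{T\log T}$. One checks $\epsilon_{T,j} < \eta/2$ for large $T$ (equivalently $(\Delta_j^\fb)^2\eta^3 > 4(C'')^2\log T$, which holds since $(\Delta_j^\fb)^2\eta^3 \ge cT$), so Lemma~\ref{Lem:cusum_cp_kink_distance_2} applies on the entire relevant range, while for $\rho\ge\eta/2$ the cubic Lemma~\ref{Lem:cusum_cp_kink_distance} gives an even larger $\kappa$.

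Step Five (absence of double- and over-detection) needs re-examination under the enlarged threshold. After $\tau_j$ has been located to within $\epsilon_{T,j}$, any remaining interval with $\tau_j$ within $\epsilon_{T,j}$ of an endpoint has minimal distance at most $\epsilon_{T,j}$, so by Lemma~\ref{Lem:cusum_at_cp_kink} its noiseless contrast is at most $\frac{1}{\sqrt3}(\epsilon_{T,j}+1)^{3/2}\Delta_j^\fb$; feeding in the detection lower bound $\eta \ge c(T/(\Delta_j^\fb)^2)^{1/3}$ shows this is of order at most $(\log T)^{3/4}T^{-1/4} = o(\sqrt T)$, hence below $\zeta_T$ even after adding $\lambda_T$. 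The analogous bound for intervals straddling two already-separated change-points follows from Lemma~\ref{Lem:cusum_two_cp_kink}, exactly as in Theorem~\ref{Thm:consistency_gaussian_linear}. Together these give $\hat q = q$ and the stated rate on $A_T\cap B_T\cap D_T^M$. The main obstacle is precisely the Step-Four width-tracking: one must show that the larger threshold $\tilde C\sqrt T$ forces the detecting interval to be wide in the sharp sense $(\Delta_j^\fb)^2\eta \ge c'T^{-1}$ \emph{uniformly} over change-points whose jump sizes $\Delta_j^\fb$ and surrounding spacings may trade off against one another, and then to match this against the quadratic (rather than cubic) separation so that the gain is realised as the improved $\sqrt{T\log T}$ rate instead of the $T^{2/3}(\log T)^{1/3}$ rate of Theorem~\ref{Thm:consistency_gaussian_linear}.
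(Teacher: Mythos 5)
Your proposal is correct and follows essentially the same route as the paper's proof: the enlarged threshold $\tilde{C}\sqrt{T}$ forces the detecting interval to be wide, and that width is fed into the quadratic separation of Lemma~\ref{Lem:cusum_cp_kink_distance_2} (in place of the cubic Lemma~\ref{Lem:cusum_cp_kink_distance}) to yield the $\sqrt{T\log T}$ rate. The only, harmless, differences are that you obtain the key estimate $(\Delta_j^\fb)^2\eta \ge c'T^{-1}$ from the two-sided contrast bound combined with $\eta \le T$, whereas the paper bounds $\eta$ below by a constant multiple of $T$ directly via $\Delta_j^\fb \le 4\bar{C}/({\Cl}_1 T)$, and that your re-derivation of Step Five is unnecessary in the paper's version because $\hat{q}=q$ is inherited from Theorem~\ref{Thm:consistency_gaussian_linear}.
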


\begin{proof}
	Let $\Cl, C_1, C_2, C_3 > 0$ be the constants upon applying Theorem~\ref{Thm:consistency_gaussian_linear}. For simplicity, here we shall take
	\[
	\widetilde{C} = C_2  {\Cl}_1^{3/2}  {\Cl}_2 /2 \quad \mbox{ and } \quad C'= \frac{32\sqrt{6}(\sqrt{2}+1)}{\Cl_2 \big\{\sqrt{3}\Cl_1\widetilde{C}/\bar{C}\big\}^{1/3}}
	\] 
	
	First, we verify that the conditions in Theorem~\ref{Thm:consistency_gaussian_linear} are satisfied. Specifically, we note that under the additional assumptions of Theorem~\ref{Thm:consistency_gaussian_linear_sSIC}, for sufficiently large $T$,
	\begin{enumerate}
		\item $\delta_{T}^{3/2} \fl_{T} \ge  {\Cl}_1^{3/2}  {\Cl}_2 \sqrt{T} > \Cl \sqrt{\log T}$,
		\item $\zeta_T = \widetilde{C} \sqrt{T} \in [C_1 \sqrt{\log T}, C_2 \delta_{T}^{3/2} \fl_{T})$,
		\item $M  \geq 36 {\Cl}_1^{-2} \log ({\Cl}_1^{-1} T) \ge 36 (T/\delta_T)^{2} \log \{(T/\delta_T)T\}$.
	\end{enumerate}	
	This means that
	\[
	\Pb{\hat{q}=q,\; \max_{j=1,\ldots,q} |\hat{\tau}_{j}-\tau_{j}| \leq C_3 {\Cl}_2^{-2/3} (T^2 \log T)^{1/3} } \rightarrow 1.
	\]
	
	Second, to strengthen the convergence rate of $\max_{j=1,\ldots,q} |\hat{\tau}_{j}-\tau_{j}|$, we make some minor modifications to Step~Four in the proof of Theorem~\ref{Thm:consistency_gaussian_linear}. 
	
	We still let $m^{*} = \argmin_{m\in\Oc_{s,e}} (e_m - s_m+1)$ and $b^{*} = \argmax_{s_{m^*} \leq b < e_{m^*}} \cont{s_{m^*}}{e_{m^*}}{b}{\Yb}$, where $[s_{m^*}, e_{m^*})$ must contain exactly one change-point. Again, we consider  $\tau_{j}\in [s_{m^*}, e_{m^*})$, and let $\eta_{L}=\tau_{j} -s_{m^*}$ and $\eta_{R}=e_{m^{*}}-\tau_{j}$.  
	Note that 
	\[
	\max_{j = 1,\ldots,q} \Delta_j^\fb \le \frac{4 \max_{i=1,\ldots,T} |f_i| }{\delta_T} \le \frac{4\bar{C}}{\Cl_1}\frac{1}{T} 
	\]
	By setting $\eta_T = \big\{\sqrt{3}\Cl_1\widetilde{C}/(8\bar{C})\big\}^{2/3}T-1$ (\emph{different} from the proof of Theorem~\ref{Thm:consistency_gaussian_linear}), we observe that $\min(\eta_L,\eta_R) > \eta_T$ for sufficiently large $T$ (satisfying $8 \log T < \widetilde{C}^2T/4$). It is because otherwise $\min(\eta_L, \eta_R) \leq \eta_{T}$ and Lemma~\ref{Lem:cusum_at_cp_kink} would imply that 
	\begin{align*}	
	\cont{s_{m^*}}{e_{m^*}}{b^{*}}{\Yb} &\leq \cont{s_{m^*}}{e_{m^*}}{b^{*}}{\fb} +\lambda_{T} \leq  	\cont{s_{m^*}}{e_{m^*}}{\tau_{j}}{\fb} +\lambda_{T} \leq \frac{1}{\sqrt{3}}(\eta_{T}+1)^{3/2}\frac{4\bar{C}}{\Cl_1}\frac{1}{T} + \lambda_{T} \\
	&= \frac{\widetilde{C}}{2}\sqrt{T} + \sqrt{8 \log T}  < \widetilde{C}\sqrt{T} = \zeta_T,
	\end{align*}
	which leads to a contradiction.
	
	We are now in the position to prove that $|b^{*} - \tau_{j}| \leq  C' \sqrt{T\log T} := \epsilon_{T}$. Note that in view of Theorem~\ref{Thm:consistency_gaussian_linear}, it suffices to only consider 
	\[
	b\in\Big\{s_{m^{*}}+1,\ldots, e_{m^{*}}-1\Big\} \cap \Big\{\tau_{j} - \lceil C_{3} (\Delta_{j}^{\fb})^{-2/3} (\log T)^{1/3}\rceil, \ldots, \tau_{j} + \lceil C_{3} (\Delta_{j}^{\fb})^{-2/3} (\log T)^{1/3}\rceil\Big\}
	\]
	Our aim is to show that given $|b-\tau_j|>\epsilon_{T}$ (as well as $|b-\tau_j| \le  C_{3} (\Delta_{j}^{\fb})^{-2/3} (\log T)^{1/3}$, according to Theorem~\ref{Thm:consistency_gaussian_linear}),
	\begin{align}
	\label{Eq:linear_th_proof_interval_rate1_strong}
	(\cont{s_{m^*}}{e_{m^*}}{\tau_{j}}{\Yb})^{2}-(\cont{s_{m^*}}{e_{m^*}}{b}{\Yb})^{2} > 0.
	\end{align}
	Inequality \eqref{Eq:linear_th_proof_interval_rate1_strong} does not hold for $b=b^{*}$, so proving this claim demonstrates that $|b^* - \tau_j|\leq \epsilon_{T}$.  
	
	Using arguments as those in Step~Four of the proof of Theorem~\ref{Thm:consistency_gaussian_const} (or Theorem~\ref{Thm:consistency_gaussian_linear}), we can show that \eqref{Eq:linear_th_proof_interval_rate1_strong} is implied by $\kappa > (\sqrt{2}+1)^2\lambda_T^2$, where $\kappa=(\cont{s_{m^*}}{e_{m^*}}{\tau_{j}}{\fb})^{2}-(\cont{s_{m^*}}{e_{m^*}}{b}{\fb})^{2}$. By Lemma~\ref{Lem:cusum_cp_kink_distance_2}, $\kappa > (\sqrt{2}+1)^2\lambda_T^2$ is implied by
	\begin{align}
	\label{Eq:linear_th_proof_interval_rate1_strong_2}
	\frac{(\Delta^{\fb}_j)^2}{96} \big\{ \min(\eta_L,\eta_R) -1 \big\}  |b-\tau_j|^2 > (\sqrt{2}+1)^2\lambda_T^2,
	\end{align}
	In view of the fact that 
	\begin{align*}
	\min(\eta_L,\eta_R) - 1 > \eta_T -1 = \big\{\sqrt{3}\Cl_1\widetilde{C}/(8\bar{C})\big\}^{2/3}T-2 > \big\{\sqrt{3}\Cl_1\widetilde{C}/(8\bar{C})\big\}^{2/3}T/2
	\end{align*}
	for sufficiently large $T$,  \eqref{Eq:linear_th_proof_interval_rate1_strong_2} is further implied by 
	\[
	|b-\tau_j| > \frac{16\sqrt{3}(\sqrt{2}+1)\sqrt{\log T}}{\Cl_2/T \big\{\sqrt{3}\Cl_1\widetilde{C}/(8\bar{C})\big\}^{1/3} \sqrt{T/2} } = \frac{32\sqrt{6}(\sqrt{2}+1)}{\Cl_2 \big\{\sqrt{3}\Cl_1\widetilde{C}/\bar{C}\big\}^{1/3}}\sqrt{T\log T} = C'\sqrt{T\log T}.
	\]
	In conclusion, $|b-\tau_{j}| >  \epsilon_{T}$ implies \eqref{Eq:linear_th_proof_interval_rate1_strong}, leading to a contradiction. So it must hold that $|b^{*}-\tau_{j}| \leq \epsilon_{T}$ for  large $T$. 
	
	Finally, since $\Pb{\hat{q}=q}\rightarrow 1$, we have that
	\[
	\Pb{\hat{q}=q,\; \max_{j=1,\ldots,q} |\hat{\tau}_{j}-\tau_{j}| \leq C' \sqrt{T\log T}} \rightarrow 1,
	\]
	as required. 
	
\end{proof}

Now we are in the position to prove Theorem~\ref{Thm:consistency_gaussian_linear_sSIC}.
\begin{proof}
	The proof proceeds in analogy to the proof of Theorem~\ref{Thm:consistency_gaussian_const_sSIC}. In the following, we present details of the main steps. 
	
	Again, thanks to the standard Gaussianity of the noise, for any candidate $\Tc(\zeta^{(k)})$ on the NOT solution path,  the sSIC criterion function in \ref{Scen:change_in_slope} can be written as 
	\[
	T \hat{\sigma}^2_k +  (\hat{q}_k+2) \log^\alpha(T) + \mathrm{constant}
	\]
	where $\hat{\sigma}^2_k$ is the estimated variance of the noise (i.e. the residual sum of squares divided by $T$) based on $\Tc(\zeta^{(k)})$, and $\hat{q}_k$ is the estimated number of kinks. 
	
	\subsubsection*{Part I. About a particular model candidate on the NOT solution path}
	By Lemma~\ref{lem:consistency_gaussian_linear}, we know that with arbitrarily high probability for sufficiently large $T$, there exists $k^*$ such that $\Tc(\zeta^{(k^*)})$ on the NOT solution path is a ``good'' candidate with $\hat{\tau}_{1},\ldots,\hat{\tau}_{\hat{q}_{k^*}}  \in \Tc(\zeta^{(k^*)})$ satisfying $\hat{q}_{k^*} = q$ and $\max_{i=1}^{q}|\hat{\tau}_{i}-\tau_i| \le C' \sqrt{T\log T}$ for some $C' > 0$. In the rest of the proof, for presentational convenience, we assume the existence of such $k^*$. 
	
	Define the set
	\[
	E_T = \Big \{ \max_{s,e: 1\le s \le e \le T} \max\Big(|\langle \gammab_{s,e}, {\varepsilonb}\rangle|,|\langle \mathbf{1}_{s,e}, {\varepsilonb}\rangle|\Big) \le \sqrt{6 \log T} \Big \}.
	\]
	Using the Bonferroni bound, we see that $\Pb{E_T^c} = O(T^{-1})$. Again, in the following, we could assume that $E_T$ holds.
	
	Let $\{\hat{f}_t\}_{t=1}^T$ be the fitted values using the candidate on the solution path with $\hat{\tau}_{1},\ldots,\hat{\tau}_{\hat{q}_{k^*}} \in \Tc(\zeta^{(k^*)})$ , and define $\tilde{f}_1 = \hat{f}_1$, $\tilde{f}_{t+1} = \tilde{f}_{t}+(f_{\tau_j+1}-f_{\tau_j})$  for $t = \hat{\tau}_j,\ldots,\hat{\tau}_{j+1}-1$ for every $j = 0, 1,\ldots q$. Again, here for notational convenience, we suppressed the dependence of $\{\hat{f}_t\}_{t=1}^T$ and $\{\tilde{f}_t\}_{t=1}^T$ on $k^*$. It is easy to see that $f_t- \tilde{f}_t$ is piecewise-linear and continuous, with at most $2q$ kinks and
	\[
	\max_{t=1,\ldots,T}|f_t- \tilde{f}_t|\le q \max_j(\Delta_j^\fb) C'\sqrt{T \log T} \le \frac{4\bar{C}}{\Cl_1 T} C'q\sqrt{T \log T} =  \frac{4q\bar{C}C'}{\Cl_1} \sqrt{\log T/T}.
	\]
	
	Write $\tilde{\fb}=(\tilde{f}_1,\ldots,\tilde{f}_T)'$, then $\|\fb -\tilde{\fb}\|^2 \le  (4q\bar{C}C'/\Cl_1)^2 \log T$. Furthermore, it is easy to verify (under $E_T$) that 
	\begin{align*}
	T \hat{\sigma}^2_{k^*} &= \sum_{t=1}^T(\varepsilon_t+f_t-\hat{f}_t)^2  \le \sum_{t=1}^T(\varepsilon_t+f_t-\tilde{f}_t)^2 =  \sum_{t=1}^T \varepsilon_t^2 + 2\langle \varepsilonb, \fb -\tilde{\fb} \rangle + \|\fb -\tilde{\fb}\|^2 \\
	&=  \sum_{t=1}^T \varepsilon_t^2 + M \log T 
	\end{align*}
	for some constant $M$ that does not depend on $T$. Consequently, as $T \rightarrow \infty$, it follows that $\Pb{\hat{\sigma}^2_{k^*} < 1+ \delta} = 1$ for any $\delta >0$. 
	
	\subsubsection*{Part II. Estimation of the number of change-points}
	Our aim in this part is to show that $\Pb{\hat{q}=q} \rightarrow 1$ as $T \rightarrow \infty$. We accomplish this by showing separately that (i) $\Pb{\hat{q}<q} \rightarrow 0$ and (ii) $\Pb{\hat{q}>q} \rightarrow 0$. 
	
	First, we note that it follows from Lemma~5.3~and~5.4 of \cite{LWZ1997} that there exists $\delta >0$ such that as $T \rightarrow \infty$,
	\[
	\min_{k: \hat{q}_k < q} \Pb{\hat{\sigma}^2_k > 1+ \delta} \rightarrow 1.
	\]
	This means that for all $k$ with $\hat{q}_k < q$,
	\[
	\mbox{sSIC}(k) - \mbox{sSIC}(k^*) = T(\hat{\sigma}_k^2 - \hat{\sigma}_{k^*}^2) + (\hat{q}_k - q) \log^\alpha(T) \ge \delta T -  q\log^\alpha(T) > 0
	\] 
	for large enough $T$, which implies $\Pb{\hat{q}<q} \rightarrow 0$.
	
	Second, for all $k$ with $\hat{q}_k > q$ and $\hat{\tau}_{1},\ldots,\hat{\tau}_{\hat{q}_{k}} \in \Tc(\zeta^{(k)})$, we consider a ``saturated oracle'' candidate model with $\hat{q}_k+q$ kinks at $\hat{\tau}_{1},\ldots,\hat{\tau}_{\hat{q}_k},\tau_1,\ldots,\tau_q$ respectively. We reorder these $\hat{q}_k+q$ locations as $0 = \mathring{\tau}_{0} < \mathring{\tau}_{1} \le \ldots \le  \mathring{\tau}_{\hat{q}_k+q} <  \mathring{\tau}_{\hat{q}_k+q+ 1} = T$, and denote by $\mathring{\sigma}_k^2$ the estimated variance of the errors corresponding to a piecewise-linear model with features at these locations but \textbf{without} the continuity constraint (so effectively the way of estimating this quantity under Scenario~\ref{Scen:change_in_mean_and_slope}). Let $\varepsilonb = (\varepsilon_1,\ldots,\varepsilon_T)'$, 
	\[
	\Gammab_{s,e} := [\mathbf{1}_{s,e}, \gammab_{s,e}] \quad \mbox{and} \quad \mathbf{H}(s,e) = \Gammab_{s,e}  (\Gammab_{s,e}' \Gammab_{s,e})^{-1}  \Gammab_{s,e}'
	\]
	for $1 \le s \le e \le T$, where $\Gammab_{s,e}$ is a $T \times 2$ matrix and $\mathbf{H}(s,e)$ is a $T \times T$ matrix. Furthermore, denote by $\mathbf{I}(s,e)$ a $T\times T$ matrix with 1 on the $(s,s)$-th to the $(e,e)$-th entries and zero elsewhere. Here both $ \mathbf{H}(s,e)$ and $\mathbf{H}(s,e)-\mathbf{I}(s,e)$ are idempotent matrices.
	
	Then the residual sum of squares for fitting a linear line on $\{\mathring{\tau}_{j}+1,\ldots,\mathring{\tau}_{j+1}\}$ (on which $f_t$ is linear as well) is
	\[
	(\fb+\varepsilonb)' \{\mathbf{I}(\mathring{\tau}_{j}+1,\mathring{\tau}_{j+1}) - \mathbf{H}(\mathring{\tau}_{j}+1,\mathring{\tau}_{j+1})\}(\fb+\varepsilonb) = \varepsilonb' \{\mathbf{I}(\mathring{\tau}_{j}+1,\mathring{\tau}_{j+1}) - \mathbf{H}(\mathring{\tau}_{j}+1,\mathring{\tau}_{j+1})\} \varepsilonb.
	\]
	It then follows that
	\begin{align*} 
	T \hat{\sigma}^2_{k} \ge T \mathring{\sigma}^2_{k} &= \sum_{j=0}^{\hat{q}_k+q} \varepsilonb' \{\mathbf{I}(\mathring{\tau}_{j}+1,\mathring{\tau}_{j+1}) - \mathbf{H}(\mathring{\tau}_{j}+1,\mathring{\tau}_{j+1})\} \varepsilonb. \\
	&= \sum_{t=1}^T \varepsilon_t^2 - \sum_{j=0}^{\hat{q}_k+q} \varepsilonb' \mathbf{H}(\mathring{\tau}_{j}+1,\mathring{\tau}_{j+1}) \varepsilonb.
	\end{align*}
	Note that $\varepsilonb' \mathbf{H}(s,e) \varepsilonb$ follows a $\chi_2^2$ distribution. For any $Z \sim \chi_2^2$, $\Pb{Z>z} \le e^{-z/2}$. Therefore, by defining the set
	\[
	G_T = \Big \{ \max_{s,e: 1\le s \le e \le T} \varepsilonb' \mathbf{H}(s,e) \varepsilonb \le 6 \log T \Big \},
	\]
	we have that $\Pb{G_T^c} = O(T^{-1})$  using the Bonferroni bound. Now assume that $G_T$ holds, it follows that
	\[
	T \hat{\sigma}^2_{k} \ge \sum_{t=1}^T \varepsilon_t^2  - 6(\hat{q}_k+q+1)\log T
	\]
	This means that for all $k$ with $\hat{q}_k > q$,
	\begin{align*}
	\mbox{sSIC}(k) - \mbox{sSIC}(k^*) &\ge T(\mathring{\sigma}_k^2 - \hat{\sigma}_{k^*}^2) + (\hat{q}_k - q) \log^\alpha(T) \\
	& \ge  (\hat{q}_k - q) \log^\alpha(T)  - \{6(\hat{q}_k+q+1)+M\}\log T \\
	& = (\hat{q}_k - q) \{\log^\alpha(T)  - 6 \log T\} - (12q+6+M) \log T \\
	& \ge  \log^\alpha(T) -(12q+12+M) \log T >  0
	\end{align*}
	for large enough $T$, which in turn implies $\Pb{\hat{q}>q} \rightarrow 0$.
	
	In conclusion, we have established that $\Pb{\hat{q}=q} \rightarrow 1$.
	
	\subsubsection*{Part III. Estimation of the change-point locations}
	In view of the conclusion of Part II, in the rest of the proof we could assume that $A_T \cap B_T \cap D_T \cap E_T \cap G_T$ holds and $\hat{q} = q$. 
	
	Suppose that the model picked via NOT with the sSIC is $\hat{\tau}_{1},\ldots,\hat{\tau}_{q} \in \Tc(\zeta^{(\hat{k})})$. Comparing the residual sum of squares of this candidate with $\Tc(\zeta^{(k^*)})$ yields that $\hat{\tau}_{j} \in \{\tau_j-\lfloor\delta_T/6\rfloor+1,\ldots,\tau_j+\lfloor\delta_T/6\rfloor-1\}$. It is because otherwise one could find an interval of length roughly $\delta_T/3$ (so of $O(T)$) with a true kink in the middle of but with no kinks in its estimates, leading to $\hat{\sigma}^2 \rightarrow 1+ \delta$ (see Lemma~5.3~and~5.4 of \cite{LWZ1997}), and thus a contradiction. Moreover, it is easy to see that $\hat{\tau}_{j}$ is the only estimated kink over $\{\tau_j-\lceil\delta_T/3\rceil-1,\ldots,\tau_j+\lceil\delta_T/3\rceil+1\}$ for every $j=1,\ldots,q$.
	
	Let
	\[
	j^* = \argmax_{j=1,\ldots,q} |\hat{\tau}_{j} - {\tau}_{j}| .
	\]
	Now consider a 	``near-saturated oracle'' candidate model with $2q+1$ kinks at 
	\[
	\{\hat{\tau}_{1},\ldots,\hat{\tau}_{q}, {\tau}_{1},\ldots,{\tau}_{j^*-1},{\tau}_{j^*+1},\ldots,\hat{\tau}_{q},{\tau}_{j^*} - \lceil\delta_T/3\rceil-1, {\tau}_{j^*}+\lceil\delta_T/3\rceil+1\} 
	\]
	with the corresponding estimated variance of the errors denoted as $\dot{\sigma}^2_{\hat{k}}$. So again, instead of adding all the true kinks to the set of estimated kinks as before (which generates the so-called ``saturated oracle''), we add all true kinks apart from $\tau_{j^*}$, and replace it by ${\tau}_{j^*}\pm (\lceil\delta_T/3\rceil+1)$.
	
	Note that $\dot{\sigma}^2_{\hat{k}}$ is no smaller than the estimated variance of the errors from a model with the features at the same $2q+1$ locations, but with the continuity constraint only enforced at $\hat{\tau}_{j^*}$. More precisely, in the rest of the proof we could effectively follow a model with Scenario~\ref{Scen:change_in_slope} over $\{\tau_{j^*}-\lceil\delta_T/3\rceil,\ldots,\tau_{j^*}+\lceil\delta_T/3\rceil+1\}$ and Scenario~\ref{Scen:change_in_mean_and_slope} elsewhere.
	
	In addition, for any $1 \le s\le b \le e \le T$, 
	\begin{align*} 
	&\|	\Yb|_{[s,e]} -  \langle\Yb, \phib_{s,e}^{b}\rangle \phib_{s,e}^{b} - \langle \Yb,  \gammab_{s,e}\rangle  \gammab_{s,e} - \langle \Yb,  \mathbf{1}_{s,e} \rangle  \mathbf{1}_{s,e} \|^2	 \\
	&=  \| \Yb|_{[s,e]} -  \langle \Yb,  \gammab_{s,e}\rangle  \gammab_{s,e} - \langle \Yb,  \mathbf{1}_{s,e} \rangle  \mathbf{1}_{s,e} \|^2 -  \langle\Yb, \phib_{s,e}^{b}\rangle ^2 \\
	&=  \| \Yb|_{[s,e]} -  \langle \Yb,  \gammab_{s,e}\rangle  \gammab_{s,e} - \langle \Yb,  \mathbf{1}_{s,e} \rangle  \mathbf{1}_{s,e} \|^2 -  (\cont{s}{e}{b}{\Yb})^2
	\end{align*}
	Applying this result on $s = \tau_{j^*}-\lceil\delta_T/3\rceil$, $e=\tau_{j^*}+\lceil\delta_T/3\rceil+1$ and $b = \tau_{j^*}$ or $\hat{\tau}_{j^*}$, and using the argument similar to that in Part II, we obtain that
	\begin{align*} 
	T \hat{\sigma}^2_{\hat{k}} \ge  T\dot{\sigma}^2_{\hat{k}} &\ge \sum_{t=1}^{\tau_{j^*}-\lceil\delta_T/3\rceil-1} \varepsilon_t^2 +   \sum_{t={\tau}_{j^*}+\lceil\delta_T/3\rceil+2}^{T} \varepsilon_t^2 -   (2q) 6\log T \\
	&\qquad + (\cont{s}{e}{{\tau}_{j^*}}{\Yb})^2-(\cont{s}{e}{\hat{\tau}_{j^*}}{\Yb})^2  + \Big(\sum_{\tau_{j^*}-\lceil\delta_T/3\rceil}^{{\tau}_{j^*}+\lceil\delta_T/3\rceil+1} \varepsilon_t^2 - 12 \log T\Big),  
	\end{align*}
	where $\sum_{\tau_{j^*}-\lceil\delta_T/3\rceil}^{{\tau}_{j^*}+\lceil\delta_T/3\rceil+1} \varepsilon_t^2 - 12 \log T$ is the lower-bound of the residual sum of squares for fitting a piecewise-linear function over $\{\tau_{j^*}-\lceil\delta_T/3\rceil,\ldots,{\tau}_{j^*}+\lceil\delta_T/3\rceil+1\}$ with only one feature at $\tau_{j^*}$.
	Consequently, it follows from the argument in Step Four of the proof of Theorem~\ref{Thm:consistency_gaussian_const} that
	\begin{align*} 
	T \hat{\sigma}^2_{\hat{k}} & \ge  \sum_{t=1}^{T} \varepsilon_t^2 -   6 (2q+2) \log T   + (\cont{s}{e}{{\tau}_{j^*}}{\fb})^2-(\cont{s}{e}{\hat{\tau}_{j^*}}{\fb})^2 - 2 \sqrt{8 \log T}\sqrt{(\cont{s}{e}{{\tau}_{j^*}}{\fb})^2-(\cont{s}{e}{\hat{\tau}_{j^*}}{\fb})^2} - 8 \log T \\
	& =  \sum_{t=1}^{T} \varepsilon_t^2 -   6 (2q+2) \log T   + \Big(\sqrt{(\cont{s}{e}{{\tau}_{j^*}}{\fb})^2-(\cont{s}{e}{\hat{\tau}_{j^*}}{\fb})^2} - \sqrt{8 \log T}\Big)^2  - 16 \log T \\
	& \ge  \sum_{t=1}^{T} \varepsilon_t^2 -   (12q+28)\log T   + \Big(\frac{\Cl_2}{\sqrt{96} T}(\Cl_1T/3+1-1)^{1/2} |\hat{\tau}_{j^*} - \tau_{j^*}| - \sqrt{8 \log T}\Big)^2\\
	& =  \sum_{t=1}^{T} \varepsilon_t^2 -   (12q+28)\log T   +  \Big(\sqrt{\frac{\Cl_1\Cl_2^2}{288T}} |\hat{\tau}_{j^*} - \tau_{j^*}| -  \sqrt{8 \log T}\Big)^2,
	\end{align*} 
	where we used the fact that $|\hat{\tau}_{j^*} - \tau_{j^*}| < \delta_T/6 = \frac{1}{2}\frac{\delta_T}{3}$ and Lemma~\ref{Lem:cusum_cp_kink_distance_2} in the second last line above.
	
	However,
	\[
	T \hat{\sigma}^2_{\hat{k}} \le T \hat{\sigma}^2_{k^*} \le  \sum_{t=1}^T\varepsilon_t^2 +  M \log T
	\]
	Combining the above two inequalities, and after some algebraic manipulations, we get
	\[
	|\hat{\tau}_{j^*} - \tau_{j^*}| \le \sqrt{\frac{288}{\Cl_1\Cl_2^2}}(\sqrt{M+12q+28}+\sqrt{8}) \sqrt{T \log T} =: C \sqrt{T \log T}
	\]
	
	Therefore, $\Pb{\max_{j=1,\ldots,q} |\hat{\tau}_{j}-\tau_{j}| \leq C \sqrt{T \log T}} \rightarrow 1$, as required.
\end{proof}

\subsection{Proof of Corollary~\ref{Cor:consistency_gaussian_const}}
\label{Sec:proof_of_consistency_cor_const}
\begin{proof}
	Without loss of generality, we assume that $\sigma_0=1$. In addition, we set $P := \sum_{k=-\infty}^\infty |\rho_k|$, where $\rho_k$ is the auto-correlation function of $\{\varepsilon_t\}$.
	
	We modify our proof of Theorem~\ref{Thm:consistency_gaussian_const} in the following way:
	
	\subsubsection*{Step One and Two}
	Let $\lambda_T = \sqrt{8 P \log T}$ and define the set $A_T$ as before. Denote the autocorrelation matrix of $\{\varepsilon_t\}$ by $\mathbf{P}_T = [\rho_{i-j}]_{i,j=1,\ldots,T}$ (which is also the autocovariance matrix, since $\varepsilon_t$ has unit-variance). Then since $\mathbf{P}_T$ is symmetric, we have that
	\[
	\|\mathbf{P}_T\|_\infty = \|\mathbf{P}_T\|_1 = \max_{j} \sum_i |P_{ij}| \le P,
	\]
	where $\|\cdot\|_\infty$ and $\|\cdot\|_1$ are the operator norms of a matrix. Consequently, by H\"{o}lder's inequality, $\|\mathbf{P}_T\|_2 \le \sqrt{\|\mathbf{P}_T\|_1 \; \|\mathbf{P}_T\|_\infty} \le P$, i.e., the largest eigenvalue of $\mathbf{P}_T$ is bounded above by $P$, which is irrelevant of $T$.
	
	For any $s,b,e$ such that $1 \le s < b < e \le T$, since $\big\langle \psib_{s,e}^b  ,\varepsilonb\big\rangle$ has a normal distribution, with zero-mean and 
	\[
	\Var(\big\langle \psib_{s,e}^b  ,\varepsilonb\big\rangle) =  (\psib_{s,e}^b)^T \mathbf{P}_T \psib_{s,e}^b \le P \|\psib_{s,e}^b\|_2^2 \le P,
	\]
	we have that
	\[
	\Pb{|\cont{s}{e}{b}{\varepsilonb}| \ge \lambda_T}  = \Pb{|\cont{s}{e}{b}{\varepsilonb}|/\sqrt{P} \ge \sqrt{8\log T}} \le \frac{2 e^{-8\log T/2}}{\sqrt{8\log T} \sqrt{2\pi}}.
	\]
	It follows from the Bonferroni bound that $\Pb{A_T^c} \le {12 \sqrt{\pi} T}^{-1}$.
	
	Using the same argument as above, we can show that $\frac{|\langle \psib_{s,e}^b \langle\fb,\psib_{s,e}^b \rangle - \psib_{s,e}^{\tau_j} \langle\fb,\psib_{s,e}^{\tau_j} \rangle  ,\varepsilonb\rangle|}{\|\psib_{s,e}^b \langle\fb,\psib_{s,e}^b \rangle - \psib_{s,e}^{\tau_j} \langle\fb,\psib_{s,e}^{\tau_j} \rangle\|_2}$ is normal distributed, with zero-mean and variance bounded above by $P$ for any $ 1\le s \le b < e \le T$. Thus, $\Pb{B_T^c} \le {12 \sqrt{\pi} T}^{-1}$.
	
	\subsubsection*{Step Three, Four and Five}
	The rest of the proof goes through by changing the constants as 
	\[\Cl=\sqrt{6}\big(2\sqrt{C_3} + \sqrt{32P}\big)+1 , \quad C_1 = 2\sqrt{C_3} + \sqrt{8P}, \quad C_2 = \frac{1}{\sqrt{6}} - \frac{\sqrt{8P}}{\Cl}, \quad C_3 = (32\sqrt{2}+48)P\]
	and setting
	\[
	\eta_{T}= (C_1 - \sqrt{8P})^2.
	\]
\end{proof} 

Finally, we remark that the proof of Corollary~\ref{Cor:consistency_gaussian_linear} is similar to that of Corollary~\ref{Cor:consistency_gaussian_const}, so is omitted for brevity.

\end{document}